\documentclass[11pt]{article}

\usepackage{amsmath, amsfonts, amssymb}
\usepackage{amsthm}
\usepackage{algorithm}
\usepackage{algorithmic}

\allowdisplaybreaks

\usepackage{microtype}
\usepackage{dsfont}
\usepackage{appendix}
\usepackage{enumitem}

\usepackage{natbib}

\usepackage{chngcntr}

\usepackage{xcolor}
\usepackage{graphicx}
\usepackage{float}
\usepackage{subfigure}
\usepackage{multirow}

\usepackage{geometry}
\usepackage{titlesec}

\titleformat{\paragraph}
{\normalfont\normalsize\bfseries}{\theparagraph}{1em}{}
\titlespacing*{\paragraph}
{0pt}{3.25ex plus 1ex minus .2ex}{1.5ex plus .2ex}

\usepackage{hyperref}
\hypersetup{
            colorlinks={true}, 
            linkcolor={blue}, 
            citecolor=blue,
            pdfencoding=auto, 
            psdextra
            }
            
\usepackage[capitalise,compress]{cleveref}

\numberwithin{equation}{section}

\newtheorem{definition}{Definition}[section]
\newtheorem{assumption}{Assumption}[section]
\newtheorem{lemma}{Lemma}[section]
\newtheorem{theorem}{Theorem}[section]
\newtheorem{proposition}{Proposition}[section]
\newtheorem{remark}{Remark}[section]
\newtheorem{corollary}{Corollary}[section]
\newtheorem{example}{Example}[section]

\Crefname{assumption}{Assumption}{Assumptions}
\Crefname{definition}{Definition}{Definitions}
\Crefname{lemma}{Lemma}{Lemmas}

\newlist{assumpenum}{enumerate}{1} 
\setlist[assumpenum]{label=(\roman*), ref=\theassumption~(\roman*)}
\crefalias{assumpenumi}{assumption} 

\newlist{lemmaenum}{enumerate}{1} 
\setlist[lemmaenum]{label=(\roman*), ref=\thelemma~(\roman*)}
\crefalias{lemmaenumi}{lemma} 

\newlist{propenum}{enumerate}{1} 
\setlist[propenum]{label=(\roman*), ref=\theproposition~(\roman*)}
\crefalias{propenumi}{proposition} 

\newcommand{\indep}{\rotatebox[origin=c]{90}{$\models$}}

\DeclareMathOperator{\argmin}{\mathop{\mathrm{argmin}}}

\makeatother
\usepackage{subfiles} 
\usepackage{import}

\title{A Sensitivity Analysis of the Surrogate Index Approach for Estimating Long-Term Treatment Effects\thanks{An earlier version of this paper: ``Distortion Surrogate Indices: 
 Estimating Long-Term Treatment Effects More Robustly'' was presented at ``Stochastic Dominance and Quantile-based Methods in Financial Econometrics Workshop" in honour of
Prof. Oliver Linton and Prof. Yoon-Jae Whang in June 2025, and ``IAER Econometrics Workshop 2025'' at Dongbei University of Finance and Economics in June 19-20, 2025. The current version of the paper was presented at ``CUHK Workshop: 
Recent Advances in Econometrics'' at Chinese University of Hong Kong in December 19, 2025. We thank the workshop participants for useful feedback. }}

\author{Yanqin Fan\thanks{Department of Economics, University of Washington. Email: fany88@uw.edu}, Carlos A. Manzanares\thanks{Amazon. Email: manzcarl@amazon.com}, 
Hyeonseok Park\thanks{Institute for Advanced Economic Research, Dongbei University of Finance and Economics. Email: hynskpark21@dufe.edu.cn}, and Yuan Qi\thanks{Department of Economics, University of Washington. Email: ayqi@uw.edu}}

\date{\today}

\begin{document}

\maketitle

\begin{abstract}
    This paper develops a sensitivity analysis of the surrogacy assumption for the surrogate index approach in  \cite{athey2025surrogate}. We introduce ``Weighted Surrogate Indices (WSIs)," the analog of the surrogate index under the surrogacy assumption. We show that under comparability, the ATE on WSI identifies the ATE on the long-term outcome when a copula of the treatment and the long-term outcome conditional on baseline covariates and surrogates is known. When the copula is unknown, we establish the identified set of the ATE on the long-term outcome. Furthermore, we construct debiased estimators of the ATE for any given copula and develop asymptotically valid inference in both point-identified and partially identified cases. Using data from a poverty alleviation program in Pakistan, we demonstrate the importance of sensitivity checks as well as the usefulness of our approach.

\vspace{10pt}
\textbf{Keywords}: Copula; Debiased Estimation; Partial Identification;  Weighted Surrogate Index.
\end{abstract}

\newpage
\section{Introduction}

\indent In a variety of applications, researchers are interested in the effect of a treatment on outcomes, where the outcomes take a substantial amount of time to mature. If so, it may be useful to measure the effect of treatment on a set of intermediate or ``surrogate" outcomes, which are predictive of long-term outcomes. For example, when measuring the effect of a job training program on long-term employment and earnings outcomes, short-term employment and earnings may serve as useful surrogates. 

\indent A literature has characterized the assumptions required to identify average treatment effects of interventions through surrogates when researchers have access to two data samples, an experimental and an observational data sample (\cite{athey2025surrogate}). In this setting, the experimental data sample contains observations labeled with treatment assignments and surrogates. The observational data sample contains observations labeled with surrogates and long-term outcomes of interest. Both samples contain information on a set of pre-treatment characteristics. The goal is to identify the average treatment effect (ATE) on long-term outcomes of interest, even though long-term outcomes and treatments are observed in different data samples. 

\indent \cite{athey2025surrogate} introduce the Surrogate Index (SI) and show that under four assumptions, the ATE of the long-term outcome is identified as the ATE of the SI. \textit{The SI is the conditional expectation of the long-term outcome given the surrogates and pre-treatment characteristics in the observational data sample}. A critical assumption in \cite{athey2025surrogate} is the Prentice Criterion (\cite{prentice1989SurrogateEndpoints}) or the surrogacy assumption. The surrogacy assumption requires that, conditional on surrogates and pre-treatment characteristics, treatment assignment is independent of long-term outcomes. If this assumption holds, the effect of the treatment is fully mediated by the surrogates. However, the surrogacy assumption is often not satisfied (\citet{freedman1992}). In many cases, the ATE is only partially mediated by the surrogates. For example, in medical contexts, low-density lipoprotein cholesterol (LDL-c) is often used as a surrogate for long-term cardiovascular disease (CVD), given that low levels of LDL-c are correlated with better CVD outcomes. However, some hormone replacement therapies have been found to lower LDL-c but also increase CVD through other causal pathways (\cite{yetley2017SurrogateDiseaseMarkers}). In education, smaller class size may lead to changes in non-cognitive traits not fully captured by standardized exams ({\cite{heckman2006EffectsCognitiveNoncognitive}}). In online advertising, increasing customer ad-load may generate short-term ad revenue but also frustrate customers, leading to long-term purchase drops not well-proxied by short-term ad-revenue (\cite{hohnhold2015FocusingLongterm}). 

\indent When the surrogacy assumption fails, researchers have two available strategies.\footnote{When treatment is observed in the observational data set, the surrogacy assumption is not needed, see \cite{athey2025ExperimentalSelectionCorrectionEstimator}, \cite{chen_ritzwoller_2023}, 
\cite{obradovic2024TemporalLinks}, and
\cite{park2024DataCombinationUnderDynamicSelection}.} First, they can find different surrogates or add more of them. Better surrogates or adding proxies of confounding surrogates may meet the surrogacy assumption, see \cite{cai2024SurrogatesRepresentationLearning}. However, there are many applications where this strategy fails (\cite{bernard2023EstimatingLongtermTreatment}). Second, they can evaluate worst-case bounds on the ATE leading to the worst-case identified set, see \cite{athey2025surrogate} for binary outcomes. The worst-case identified set shows how much ATEs can vary when the surrogate assumption fails.

\indent This paper contributes a third strategy. This strategy extends the scope of applications of the SI approach in \cite{athey2025surrogate} from full mediation to partial mediation.  This is accomplished by modeling the joint distribution of the long-term outcome and the treatment via a copula, conditional on surrogates and pre-treatment covariates. 
When the copula is the independence copula, the surrogacy assumption holds. Conversely, a general non-independence copula implies failure of the surrogacy assumption. It characterizes partial mediation of the surrogate variables on the effect of treatment on the long-term outcome.

\indent We extend the identification result in \cite{athey2025surrogate} from full mediation (the independence copula) to partial mediation (any non-independence copula). Specifically,  we introduce ``weighted surrogate indices (WSIs)'' and show that the ATE on the long-term outcome is identified as the ``ATE on WSIs'', see \cref{Eq:WSI-ATE}. The WSIs reduce to the SI in \cite{athey2025surrogate} when the copula is the independence copula.   Numerically, we provide evidence of the sensitivity of the ATE to the strength of the global dependence between treatment and long-term outcome (which parameterizes the copula). We also show the robustness of the ATE to the shape of the copula. 
Using the identification result, we establish the sign of the surrogacy bias for stochastically monotone copulas.

\indent Our approach subsumes point identification of the ATE under the surrogacy assumption in  \cite{athey2025surrogate} and the worst case bounds as special cases. For example, at one extreme, when the long-term outcome and treatment are independent (conditional on surrogates and pre-treatment variables), the surrogacy assumption holds and our result reduces to the point identification of the ATE in \cite{athey2025surrogate}. At the other extreme, when the copula varies between the lower and upper bound copulas, our approach leads to the worst-case bounds on the ATE for any outcome type including binary outcomes as studied in \cite{athey2025surrogate}. Researchers can evaluate the risk of the surrogacy assumption failing on ATEs for their application by evaluating the worst-case bounds. 

\indent More importantly, our identification result allows researchers to explicitly model scenarios in-between these two extremes. This is especially helpful when an application allows researchers to rule out certain relationships \textit{a priori}. For example, in an educational setting, it may be reasonable to assume smaller class sizes are non-negatively related to future earnings after conditioning on short-term test score surrogates. These surrogates do not fully control for improvements in non-cognitive measures, which also improve future earnings. This can be achieved by varying the copula from an independence copula to the upper bound copula. Restricting the joint distribution of class size and future earnings in this way (conditional on short-term test scores) reduces the range of relevant ATE bounds. 

We develop a complete set of estimation and inference methods for both the point identified case, when the copula is known, as well as the partially identified case, when the copula is unknown. Specifically, we construct doubly-robust estimators of the ATE for any given copula including the worst-case bounds. We establish the asymptotic normality of our doubly-robust estimator for any given copula. We also establish the joint normality of the doubly-robust estimators of the worst-case bounds. The orthogonal moment functions for  non-independence copulas including the lower and upper Fr\'echet copulas are more complicated than those in \citet{chen_ritzwoller_2023} and \cite{athey2025surrogate}. To verify the general conditions in \citet{chernozhukov2018doubleML}, we adapt the technical proofs in 
\citet{chen_ritzwoller_2023} , \citet{dorn2024dvds}, and \citet{semenova2025generalizedleebounds} to our setting. 

Empirically, using data from a poverty alleviation program in Pakistan (obtained from \cite{banerjee2015MultifacetedProgramCauses}), we demonstrate that relaxing the surrogacy assumption can substantially alter conclusions. In particular, this often reverses the sign of the estimated treatment effect assuming surrogacy, which highlights the importance of these robustness checks. 

The rest of the paper is organized as follows. Section \ref{sec:setup} reviews the setup and identification of the ATE using the surrogacy assumption. Section \ref{sec:iden-copula} introduces weighted surrogate indices, showing the conditions under which the ATE is point identified. It also presents a numerical illustration of how the ATE changes as a function of Kendall's tau, which describes the concordance relationship between the treatment and long-term outcome (conditional on surrogates). Section \ref{sec:PartialId} focuses on the worst-case bounds, their debiased estimation, and inference results for the ATE when the surrogacy assumption fails. Section \ref{sec:est-copula} generalizes results in Section \ref{sec:PartialId} to a general copula and partial identification. Section \ref{sec:empirical} applies the proposed framework to a household poverty alleviation dataset \citep{banerjee2015MultifacetedProgramCauses}, conducting sensitivity analysis and deriving partial identification bounds. Section \ref{sec:conclusion} concludes. The technical details for the main result and the proof of the results in Section \ref{sec:est-copula} are presented in a supplementary appendix.

\section{Setup and Identification Under Surrogacy Assumption}
\label{sec:setup}
To be self-contained, 
this section formally reviews the setup and three critical assumptions used in \cite{athey2025surrogate} to identify the ATE on the primary (long-term) outcome. This identification is achieved using two different datasets 1) an experimental data sample, which contains information on treatment assignment and baseline characteristics (but not the primary outcome), and 2) an observational data sample, which contains information on baseline characteristics and the primary outcome (but not treatment assignment). We also restate the surrogate index representation of the ATE in  \cite{athey2025surrogate}.

Let $W_i$ denote the treatment status (binary) of unit $i$, $Y_i(0),Y_i(1)$ denote the potential primary (long-term) outcomes,  $S_i(0),S_i(1)$ denote the potential short-term outcomes (surrogacy variables), and $X_i$ denote a vector of baseline covariates that are not affected by treatment. Further, let $Y_i:=W_iY_i(1)+(1-W_i)Y_i(0)$ and $S_i:=W_iS_i(1)+(1-W_i)S_i(0)$. 

\begin{assumption}\label{assumption: Random Sample}
We have a single random sample of size $N$ drawn from the joint distribution of $(P_i, X_i, S_i,W_i, Y_i)$, where we observe for each unit in the sample $Z_i$, where 
$Z_i:=(P_i, X_i, S_i,\mathds{1}_{P_i=E}W_i, \mathds{1}_{P_i=O}Y_i)$. 
\end{assumption}

\cref{assumption: Random Sample} states that sample information comes from two data sets: one labeled observational data that contains observations on  $(P_i=O, X_i, S_i, Y_i)$ and the other labeled experimental data that contains observations on $(P_i=E, X_i, S_i,W_i)$. 

\begin{remark}
We point out that \cref{assumption: Random Sample} can be replaced with the following assumption:  $\{X_i, S_i, Y_i\}_{P_i =O}$ and $\{X_i, W_i, S_i\}_{P_i =E}$ are two independent samples and each is a random sample. This makes it clear that treatment status is not found in the observational data sample. 
\end{remark}

The experimental data satisfy the following assumption. 
\begin{assumption}[Unconfounded Treatment Assignment/Strong Ignorability]\label{assumption: Unconfounded}
  (i) \[W_i \ \indep \ (Y_i(0),Y_i(1),S_i(0),S_i(1))|X_i,P_i=E;  \] 
  (ii) $0<\rho(x)<1$ for all $x\in \mathcal{X}$, where $\rho(x):=\mathbb{P}(W_i=1|X_i=x,P_i=E)$ is the propensity score.
\end{assumption}

Let $\tau$ denote the ATE on the primary outcome in the population from which the experimental sample is drawn:
\begin{equation}
    \tau:=\mathbb{E}\left[Y_i(1)-Y_i(0)\mid P_i=E\right].
\end{equation}
We are interested in identifying $\tau$ using the sample information satisfying \cref{assumption: Random Sample}.

\cite{athey2025surrogate} adopts two additional assumptions. 

\begin{assumption}[Surrogacy]\label{assumption: Surrogacy}
    (i) $W_i \ \indep \ Y_i|S_i,X_i, P_i=E;$
    (ii) $ 0<\rho(s,x)<1$ for all $s\in\mathcal{S}, x\in \mathcal{X}$ and $0<\mathbb{P}(P_i=E)<1$, where $\rho(s,x):=\mathbb{P}(W_i=1|S_i=s,X_i=x,P_i=E)$ is the surrogacy score. 
\end{assumption}
\Cref{assumption: Surrogacy} (surrogacy) implies that the conditional distribution of $Y_i$ given $(W_i, S_i,X_i, P_i=E)$ is the same as that of  $Y_i$ given $ (S_i,X_i, P_i=E)$. Thus, given $X_i$, the surrogate variable $S_i$ completely mediates the effect of $W_i$ on the primary outcome $Y_i$ for the experimental sample. For expositional simplicity, we refer to this scenario as full mediation.

\begin{assumption}[Comparability of Samples]\label{assumption:Comparability}
    (i) $P_i \ \indep \ Y_i|S_i,X_i;$
    (ii) $\varphi(s,x)<1$ for all $s\in\mathcal{S}, x\in \mathcal{X}$, where $\varphi(s,x)=\mathbb{P}(P_i=E|S_i=s,X_i=x)$.
\end{assumption}

\begin{definition}[Surrogate Index, \cite{athey2025surrogate}]
The surrogate index (SI) is the conditional expectation of the primary outcome given the surrogate outcomes and the pre-treatment variables, conditional on the sample:
\[\mu(s,x,p)=\mathbb{E}\left[Y_i|S_i=s,X_i=x,P_i=p\right], \mbox{ } p=O,E.\]
\end{definition}

The SI $\mu(S_i,X_i,O)$ is identified from the observational data. 
Theorem 1 in \cite{athey2025surrogate} shows that \cref{assumption: Random Sample}-\cref{assumption:Comparability} imply that $\tau$ is identified as the ATE on the SI, $\mu(S_i,X_i,O)$. We restate this result in the lemma below. The proof relies on the following expressions:
\begin{align}\label{eq: unconfounded}
\mathbb{E}[Y_i(1)|P_i=E]&=\mathbb{E}\left[\frac{1}{\rho(X_i)}\mathbb{E}\left[Y_iW_i|S_i,X_i,P_i=E \right]\mid P_i=E\right],\\
\mathbb{E}[Y_i(0)|P_i=E]
& =\mathbb{E}\left[\frac{1}{1-\rho(X_i)}\mathbb{E}\left[Y_i(1-W_i)|S_i,X_i,P_i=E \right]|P_i=E\right].
\end{align}
\begin{lemma}[\cite{athey2025surrogate}]\label{lemma:IndeC}
    Under \cref{assumption: Random Sample}-\cref{assumption:Comparability}, $\tau$ is identified as 
\begin{align*}
     \tau& =
        \mathbb{E}\left[ \mu(S_i,X_i,O)\frac{W_i }{\rho(X_i)} \mid P_i = E \right] 
  -  \mathbb{E}\left[  \mu(S_i,X_i,O)\left(\frac{1-W_i }{1-\rho(X_i)}\right)\mid P_i = E \right]\\
  &=  \mathbb{E}\left[\frac{\rho(S_i,X_i) }{\rho(X_i)(1-\rho(X_i))} \mu(S_i,X_i,O)  \mid P_i = E \right] 
  -  \mathbb{E}\left[ \frac{ 1 }{1-\rho(X_i)}\mu(S_i,X_i,O)\mid P_i = E \right].
\end{align*}
\end{lemma}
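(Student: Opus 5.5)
The plan is to start from the two displayed expressions \eqref{eq: unconfounded} for $\mathbb{E}[Y_i(1)\mid P_i=E]$ and $\mathbb{E}[Y_i(0)\mid P_i=E]$, and then replace the inner conditional expectations $\mathbb{E}[Y_iW_i\mid S_i,X_i,P_i=E]$ and $\mathbb{E}[Y_i(1-W_i)\mid S_i,X_i,P_i=E]$ by quantities involving only $\mu(S_i,X_i,O)$ and $\rho(S_i,X_i)$.

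First I verify the displayed expressions. By \cref{assumption: Unconfounded}(i), $\mathbb{E}[Y_i(1)\mid X_i,P_i=E]=\mathbb{E}[Y_i(1)W_i\mid X_i,P_i=E]/\rho(X_i)$. On $\{W_i=1\}$ we have $Y_i(1)W_i=Y_iW_i$. Iterated expectations over $S_i$ given $(X_i,P_i=E)$ then give the first display, using \cref{assumption: Unconfounded}(ii) to divide. The second display follows symmetrically.

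Next I factor the inner term. Conditional on $(S_i,X_i,P_i=E)$, \cref{assumption: Surrogacy}(i) gives $W_i\indep Y_i$, so
\[
\mathbb{E}[Y_iW_i\mid S_i,X_i,P_i=E]=\rho(S_i,X_i)\,\mu(S_i,X_i,E),
\]
and analogously
\[
\mathbb{E}[Y_i(1-W_i)\mid S_i,X_i,P_i=E]=(1-\rho(S_i,X_i))\,\mu(S_i,X_i,E).
\]
\cref{assumption:Comparability}(i) then gives $\mu(s,x,E)=\mu(s,x,O)$. The positivity conditions $\varphi(s,x)<1$ and $\mathbb{P}(P_i=E)>0$ make both conditional means well defined on the support of $(S_i,X_i)\mid P_i=E$. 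They also ensure $\mu(\cdot,\cdot,O)$ is identified from the observational sample at those points.

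The rest is rearrangement. Substituting gives
\[
\tau=\mathbb{E}\!\left[\frac{\rho(S_i,X_i)}{\rho(X_i)}\mu(S_i,X_i,O)-\frac{1-\rho(S_i,X_i)}{1-\rho(X_i)}\mu(S_i,X_i,O)\,\Big|\,P_i=E\right].
\]
To get the first line of the lemma, I write $\rho(S_i,X_i)=\mathbb{E}[W_i\mid S_i,X_i,P_i=E]$ and undo the iterated expectation, noting that $\mu(S_i,X_i,O)/\rho(X_i)$ is a function of $(S_i,X_i)$. To get the second line, I combine coefficients:
\[
\frac{\rho(S_i,X_i)}{\rho(X_i)}+\frac{\rho(S_i,X_i)}{1-\rho(X_i)}=\frac{\rho(S_i,X_i)}{\rho(X_i)(1-\rho(X_i))}.
\]
The remaining term is $-\mu(S_i,X_i,O)/(1-\rho(X_i))$.

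The only delicate step is the transfer from $\mu(\cdot,\cdot,E)$ to $\mu(\cdot,\cdot,O)$. It requires the support of $(S_i,X_i)$ in the experimental sample to be covered by the observational sample, which the overlap condition $\varphi<1$ guarantees. I also need integrability of $Y_i$ so that all conditional expectations exist; I would state this as implicit.
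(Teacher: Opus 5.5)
Your proof is correct and follows essentially the same route as the paper's argument, which is sketched at the start of the proof of \cref{theorem: known copula}: start from the inverse-propensity representations of $\mathbb{E}[Y_i(w)\mid P_i=E]$, use surrogacy to factor $\mathbb{E}[Y_iW_i\mid S_i,X_i,P_i=E]=\rho(S_i,X_i)\,\mu(S_i,X_i,E)$, transfer to $\mu(S_i,X_i,O)$ by comparability, and combine coefficients. The only cosmetic difference is that the paper first writes $\mathbb{E}[Y_i(1-W_i)\mid S_i,X_i,P_i=E]=\mathbb{E}[Y_i\mid S_i,X_i,P_i=E]-\mathbb{E}[Y_iW_i\mid S_i,X_i,P_i=E]$, which gives the second displayed form directly.
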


\begin{remark} 
Subsequent work such as \cite{yang2024TargetingLongTermOutcomes} extends the SI approach in \cite{athey2025surrogate} to CATE estimation and policy learning using SI $\mu(S_i,X_i,O)$. Specifically, the CATE denoted as $\tau(x)$ is identified as 
\begin{align*}
     \tau(x)=\mathbb{E}\left[\mu(S_i,x, O)\mid W_i=1,X_i=x, P_i=E \right]-\mathbb{E}\left[\mu(S_i,x, O)\mid W_i=0,X_i=x, P_i=E \right]
\end{align*} 
and the first-best policy is $\mathds{1}(\tau(x)>0)$.
\end{remark}

\section{Point Identification Without Surrogacy Assumption}
\label{sec:iden-copula}

The surrogate index $\mu(S_i,X_i,O)$ identified from the observational data plays a critical role in \cite{athey2025surrogate}: under \cref{assumption: Random Sample}-\cref{assumption:Comparability}, $\tau$ is identified as the ATE on the SI, $\mu(S_i,X_i,O)$.  \cite{athey2025surrogate} discuss the plausibility and implications of violation of either \cref{assumption: Surrogacy} or \cref{assumption:Comparability}. Under \cref{assumption:Comparability}, \cite{athey2025surrogate} state that 
without \cref{assumption: Surrogacy}, the ATE on the SI may not identify the ATE on the primary outcome. Motivated by this and the abundant empirical evidence on the possible failure of \cref{assumption: Surrogacy}, we propose a new framework to study identification of the ATE on the primary outcome relaxing the surrogacy assumption.

By the conditional version of Sklar's Theorem, there exists a copula $C_o(u,v|S_i,X_i, P_i=E)$, $(u,v)\in [0,1]^2$, such that
\begin{align} \label{eq:model}
 (Y_i, W_i)|S_i, X_i, P_i=E&\sim    C_{o}(F_Y(\cdot|S_i,X_i.P_i=E), F_W(\cdot|S_i,X_i,P_i=E)|S_i,X_i.P_i=E).
\end{align}
\begin{assumption} \label{assumption:knowncopula}
Suppose that $C_o$ in \cref{eq:model} is known.
\end{assumption}
 Since $W$ is discrete, $C_o$ is not unique. However, as we show later, for continous outcomes, the identification result depends on the unique sub-copula only. In the special case that 
 the sub-copula is the independence sub-copula, \cref{assumption:knowncopula} is equivalent to \cref{assumption: Surrogacy} and the surrogate $S_i$ fully mediates the effect of $W_i$ on $Y_i$ given $X_i$. 

\begin{example}[Families of Copulas]
    (i) A Gaussian copula with constant correlation $\vartheta$ takes the following form: 
    \[C_o(u,v|S_i,X_i, P_i=E) = \Phi_\vartheta(\Phi^{-1}(u), \Phi^{-1}(v)) = C_o(u,v|P_i=E),\]
    where $\Phi(\cdot)$ is the cumulative distribution function of the standard normal distribution and $\Phi_\vartheta(\cdot, \cdot)$ is the cumulative distribution function of the standard bivariate normal distribution with correlation $\vartheta\in [0,1]$.

    (ii)
Archimedean copulas are a widely used class of copulas defined by a generator function $\phi$:
\[
C(u_1, \dots, u_n) = \phi\left(\phi^{-1}(u_1) + \cdots + \phi^{-1}(u_n)\right),
\]
where $\phi: [0, \infty) \to [0, 1]$ is a strictly decreasing, convex function satisfying $\phi(0) = 1$ and $\phi(\infty) = 0$. Notable examples of Archimedean copulas are Clayton with generator function $\phi(t) = (1 + t)^{-1/\vartheta}$ for $\vartheta\in[-1,\infty)\setminus\{0\}$; Gumbel copula with $\phi(t) = \exp(-t^{1/\vartheta})$ for $\vartheta\in[1,\infty)$; and Frank copula with $\phi(t) = -\frac{1}{\vartheta} \log\left(1 - (1 - e^{-\vartheta}) e^{-t}\right)$ for $\vartheta\in\mathbb{R}\setminus\{0\}$.
    
    (iii)
Let $C_-(u,v):=\max(u+v-1,0)$ and $C_+(u,v):=\min(u,v)$ denote the Fr\'echet-Hoeffding lower and upper bound copulas. When $C_o=C_-$ or $C_o=C_+$, \cref{assumption:knowncopula} implies that $Y_i$ and $W_i$ are comonotonically dependent on each other given $S_i,X_i$ so that $S_i$ does not mediate any effect of $W_i$ on $Y_i$ given $X_i$.
\end{example}

In the rest of this section, we establish point identification of the ATE on the primary outcome under \cref{assumption:knowncopula}, extending the identification under full mediation or \cref{assumption: Surrogacy} in \cite{athey2025surrogate} to known partial mediation or \cref{assumption:knowncopula}. 

For notational convenience, we sometimes ignore the conditional arguments in the copula and write $C_o(u,v)$ instead of $C_o(u,v|S_i,X_i, P_i=E)$ and denote the ATE as $\tau_{C_o}$ to indicate its dependence on the copula $C_o$.

\subsection{Weighted Surrogate Indices}

Our identification strategy relies on the Weighted Surrogate Index (WSI) introduced in the following. 

\begin{definition}[Weighted Surrogate Indices] \label{definition:DSIP}
Let $(U,V)\sim C_o(u,v)$, where $C_o$ is defined in \cref{assumption:knowncopula}. For $\alpha\in (0,1)$, let $C_o(\alpha|u) := \Pr(V \le \alpha |U = u)$.
We define two Weighted Surrogate Indices associated with $w=0,1$ for each $p=E,O$ as
\begin{align*}
   \mu_{C_o, w}(S_i,X_i,p) = \int_{0}^{1} F_Y^{-1}(u|S_i,X_i, P_i=p)\sigma_{C_o, w}(u; 1- \rho(X_i,S_i)) d u, 
\end{align*}
where 
$    \sigma_{C_o, w}(u; \alpha) =\left(w - C_o(\alpha|u)\right)/(w-\alpha).$

\end{definition}
The WSIs in \cref{definition:DSIP} extend the SI in \cite{athey2025surrogate}. We show in \cref{theorem: known copula} that the ATE of the primary outcome is identified as ATE of the WSIs defined in \cref{Eq:WSI-ATE}. 
When the outcomes are continuous and $C_o$ is smooth in $u$,
the  WSIs depend only on the unique sub-copula. To see this, we note that $\mu_{C_o, 1}$ ($\mu_{C_o, 0}$) depends on $C_o(1-\rho(S_i, X_i)|u) = \partial_u C_o(u, 1-\rho(S_i, X_i)$ (or $C_o(u, 1-\rho(S_i, X_i)$) only: $C_o(u, 1-\rho(S_i, X_i)$ is unique because the sup-copula of $Y$ and $W$ given $S_i, X_i, P_i = E$ is unique.
 When the sub-copula is the independence sub-copula, $\sigma_{C_o, w}(u;\alpha)  = 1$  and 
    \[\mu_{C_o, w}(S_i,X_i,O) =\int_{0}^{1} F_Y^{-1}(u|S_i,X_i, P_i=O) d u=\mu(S_i,X_i,O), \mbox{ for } w=0,1\]   
where $\mu(S_i,X_i,O)$ is the SI in \cite{athey2025surrogate}. For a non-independence sub-copula, the weights $\sigma_{C_o, 1}(u;\alpha)$ and $\sigma_{C_o, 0}(u;\alpha)$ are generally non-constant functions of $u\in[0,1]$ and are not equal leading to two different surrogate indices $\mu_{C_o, w}(S_i,X_i,O)$ for $w=0,1$.

\begin{remark}\label{remark:DRM}
It is interesting to observe that WSIs are closely related to two distinct classes of functions, one in finance and risk management and the other in social choice and welfare. Specifically, let 
\begin{align*}
    r := \int_0^1 F_Y^{-1}(u) \sigma(u) du.
\end{align*}
When $\sigma(\cdot)$ is a non-decreasing function, $r$ is a Distortion Risk Measure, where higher ranked $Y$'s are given higher weight, see
 \cite{pichler2015} and \cite{pflug2006SubdifferentialRepresentationsRisk}. When $\sigma(\cdot)$ is a non-increasing function, $r$ is known as Rank-dependent Social Welfare Function, where higher ranked $Y$'s are given lower weight, see \cite{yaari1987DualTheoryChoice}. 
For a general copula $C_o$, the weight $\sigma_{C_o, w}(u;\alpha)$ in WSIs may not be monotone.  
 \end{remark}
 
\begin{example}[AVaR and WSIs for Upper and Lower Bound Copulas] \label{ex:SpecificCopulas}
When $\sigma(u) = \mathds{1}(u \in (\alpha, 1])/(1-\alpha)$ for some $\alpha\in [0,1)$, $r$ is the well-known AVaR of $Y$ at level $\alpha$:
 \[AVaR_{\alpha}(Y)=\frac{1}{1-\alpha}\int_\alpha^1 F_Y^{-1}(u)du.\]
It is insightful to examine the WSIs when the outcome and treatment are perfectly dependent on each other conditional on the surrogates and pre-treatment covariates. 
When $C_o=C_+$, for any $\alpha\in (0,1)$, it holds that 
\begin{align*}
    \sigma_{C_{+}, 1}(u;\alpha) = \frac{\mathds{1}(u \in (\alpha, 1])}{1-\alpha} \mbox{ and }
    \sigma_{C_{+}, 0}(u;\alpha) = \frac{\mathds{1}(u \in [0, \alpha])}{\alpha}.
\end{align*}
 Thus for $w=1$, the top $\rho(X_i,S_i)$ percentile of the conditional distribution of $Y_i$ is given equal positive weight and the bottom $1-\rho(X_i,S_i)$ percentile is given zero weight; for $w=0$, the bottom $1-\rho(X_i,S_i)$ percentile is given equal positive weight and the top $\rho(X_i,S_i)$ is given zero weight. Consequently,  
\begin{align*}
\mu_{C_+, 1}(S_i,X_i,O)
&=\int_{0}^{1} F_Y^{-1}(u|S_i,X_i, P_i=O)\frac{\mathds{1}(u \in (1- \rho(X_i,S_i), 1])}{\rho(X_i,S_i)} d u\\
&=AVaR_{1- \rho(X_i,S_i)}(Y_i\mid S_i,X_i,P_i=O) \mbox{ and}\\
\mu_{C_+, 0}(S_i,X_i,O)&=-AVaR_{\rho(X_i,S_i)}(-Y_i\mid S_i,X_i,P_i=O).
 \end{align*}
 
 Similarly, for $C_o=C_-$, 
\begin{align*}
    \sigma_{C_-, 1}(u;\alpha) = \frac{\mathds{1}(u\in[0,1-\alpha])}{1-\alpha}\mbox{ and }
    \sigma_{C_-, 0}(u;\alpha) = \frac{\mathds{1}(u\in(1-\alpha,1])}{\alpha}.
\end{align*}
As a result, $\mu_{C_-, 1}(S_i,X_i,O)$ is the conditional mean of $Y_i$ for the bottom $\rho(X_i,S_i)$ percentile of the conditional distribution of $Y_i$ and $\mu_{C_-, 0}(S_i,X_i,O)$ is the conditional mean of $Y_i$ for the top $1-\rho(X_i,S_i)$ percentile.
\end{example}

\subsection{Identification of ATE}

Theorem \ref{theorem: known copula} below shows that ``the ATE on WSIs'' defined on the right hand side of \cref{Eq:WSI-ATE} identifies the ATE of the primary outcome thus extending \cref{lemma:IndeC} or Theorem 1 in \cite{athey2025surrogate} under the surrogacy assumption to partial mediation. 

Consider $\mathbb{E}[Y_i(1)|P_i=E]$. Under \Cref{assumption: Unconfounded} (unconfoundedness), \cref{eq: unconfounded} implies that  
\begin{equation*}
\mathbb{E}[Y_i(1)|P_i=E]=\mathbb{E}\left[\frac{\rho(X_i,S_i)}{\rho(X_i)}\mathbb{E}\left[Y_i|W_i=1,S_i,X_i,P_i=E \right]\mid P_i=E\right].
\end{equation*}
Similarly,
\[\mathbb{E}[Y_i(0)|P_i=E]
 =\mathbb{E}\left[\frac{1-\rho(X_i,S_i)}{1-\rho(X_i)}\mathbb{E}\left[Y_i|W_i=0,S_i,X_i,P_i=E \right]|P_i=E\right].\]
Below we extend Proposition 2 (iii) in \cite{athey2025surrogate} under the surrogacy assumption to any copula $C_o$.
\begin{proposition}\label{proposition:DSIPI}
 Under \cref{assumption:Comparability}, it holds that $$ \mu_{C_o, w}(S_i,X_i,O)=\mu_{C_o, w}(S_i,X_i,E)=\mathbb{E}[Y_i | W_i=w, X_i, S_i, P_i=E] \mbox{ for } w = 0, 1.$$ 
\end{proposition}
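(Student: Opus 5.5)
The proof has two parts: first the equality of the two weighted surrogate indices across samples, then the identification of $\mu_{C_o,w}(S_i,X_i,E)$ as the conditional mean of $Y_i$ given $W_i=w$.

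\textbf{Step 1 (comparability).} By \cref{assumption:Comparability}(i), $P_i \indep Y_i\mid S_i,X_i$. Hence the conditional distribution of $Y_i$ given $(S_i,X_i,P_i=E)$ equals that given $(S_i,X_i,P_i=O)$; part (ii) guarantees the conditioning on $P_i=O$ is well defined. The conditional quantile functions $F_Y^{-1}(u\mid S_i,X_i,P_i=p)$ therefore coincide for $p=E,O$. The weight $\sigma_{C_o,w}(u;1-\rho(X_i,S_i))$ does not depend on $p$, since $\rho$ and $C_o$ are defined on the experimental population. It follows that $\mu_{C_o,w}(S_i,X_i,O)=\mu_{C_o,w}(S_i,X_i,E)$.

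\textbf{Step 2 (copula representation, main step).} Condition on $(S_i,X_i,P_i=E)$ and write $\alpha:=1-\rho(X_i,S_i)=\Pr(W_i=0\mid S_i,X_i,P_i=E)$, so that $F_W(w)=\alpha$ on $[0,1)$. By \eqref{eq:model}, I represent $(Y_i,W_i)$ as $(F_Y^{-1}(U),F_W^{-1}(V))$ with $(U,V)\sim C_o$. Here $F_W^{-1}(V)=\mathds{1}(V>\alpha)$, so $\{W_i=0\}=\{V\le\alpha\}$ and $\{W_i=1\}=\{V>\alpha\}$.

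For $w=0$, I condition on $U$ and use uniformity of $U$:
\[
\mathbb{E}[Y_i\mathds{1}(W_i=0)]=\mathbb{E}\big[F_Y^{-1}(U)\Pr(V\le\alpha\mid U)\big]=\int_0^1F_Y^{-1}(u)\,C_o(\alpha\mid u)\,du.
\]
Dividing by $\Pr(W_i=0)=\alpha$ gives $\int_0^1 F_Y^{-1}(u)\,(0-C_o(\alpha\mid u))/(0-\alpha)\,du$, which is exactly $\mu_{C_o,0}$.

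For $w=1$, the same argument uses $\Pr(V>\alpha\mid U=u)=1-C_o(\alpha\mid u)$ and divides by $1-\alpha$. This yields the weight $(1-C_o(\alpha\mid u))/(1-\alpha)=\sigma_{C_o,1}(u;\alpha)$. Since $\rho(S_i,X_i)\in(0,1)$, both denominators are nonzero.

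\textbf{Main obstacle.} The delicate point is justifying the representation $Y_i=F_Y^{-1}(U)$ jointly with $W_i=\mathds{1}(V>\alpha)$ and a single $(U,V)\sim C_o$ when $Y_i$ may be discrete or $C_o$ is non-unique. I would invoke the standard construction behind Sklar's theorem: if $(U,V)\sim C_o$, then $(F_Y^{-1}(U),F_W^{-1}(V))$ has joint cdf $C_o(F_Y,F_W)$, which by \eqref{eq:model} equals the conditional law of $(Y_i,W_i)$. The expectation above then depends only on this joint law, which gives the result without any uniqueness of $C_o$.

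Integrability holds because $|\sigma_{C_o,w}|\le 1/\min(\alpha,1-\alpha)$ and $\mathbb{E}|Y_i|<\infty$ is implicitly assumed.
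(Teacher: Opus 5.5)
Your proof is correct and takes essentially the same route as the paper. Both represent $(Y_i,W_i)$ conditionally as $(F_Y^{-1}(U),\mathds{1}(V>1-\rho(X_i,S_i)))$ with $(U,V)\sim C_o$, integrate out $V$ given $U$ to obtain the weight, and use comparability to replace the experimental conditional quantile with the observational one. The differences are cosmetic: you treat $w=0$ directly via $\Pr(V\le\alpha\mid U)$, whereas the paper obtains it by subtracting $\mathbb{E}[Y_iW_i\mid\cdot]$ from $\mathbb{E}[Y_i\mid\cdot]$, and your explicit Sklar-construction remark justifies the change of variables that the paper uses without comment.
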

\cref{proposition:DSIPI} implies that $\mathbb{E}\left[Y_i|W_i=w,S_i,X_i,P_i=E\right]=\mu_{C_o, w}(S_i,X_i,E)$ and under comparability, $\mu_{C_o, w}(S_i,X_i,E)$ is identified as $\mu_{C_o, w}(S_i,X_i,O)$ for any copula $C_o$ and hence \cref{theorem: known copula} holds. 

\begin{theorem}[Known Sub-copula]\label{theorem: known copula}
    Suppose \cref{assumption: Random Sample}, \cref{assumption: Unconfounded}, \cref{assumption:Comparability}, and \cref{assumption:knowncopula} hold. Then $\tau_{C_o}$ is identified as 
\begin{align}\label{Eq:WSI-ATE}
     \tau_{C_o} 
& =\mathbb{E}\left[ \mu_{C_o, 1}(S_i,X_i,O)\frac{W_i }{\rho(X_i)} \mid P_i = E \right] 
  -  \mathbb{E}\left[  \mu_{C_o, 0}(S_i,X_i,O)\left(\frac{1-W_i }{1-\rho(X_i)}\right)\mid P_i = E \right]\\
        &=
        \mathbb{E}\left[ \frac{\rho(X_i,S_i)}{\rho(X_i)[1-\rho(X_i)]}   \mu_{C_o, 1}(S_i,X_i,O)  \mid P_i = E \right]  -  \mathbb{E}\left[ \frac{1}{1-\rho(X_i)} \mu(S_i,X_i,O) \mid P_i = E \right].\nonumber
\end{align}
 When the conditional distribution of $Y_i$ given $S_i,X_i,P_i=O$ is degenerate,  $   \tau_{C_o}=   \tau_{\Pi} $ and is thus identified even if $C_o$ is unknown.
\end{theorem}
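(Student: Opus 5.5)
The plan is to take the two unconfoundedness representations displayed just before \cref{proposition:DSIPI} as the starting point:
\[\mathbb{E}[Y_i(1)|P_i=E]=\mathbb{E}\Big[\tfrac{\rho(X_i,S_i)}{\rho(X_i)}\mathbb{E}[Y_i|W_i=1,S_i,X_i,P_i=E]\,\Big|\,P_i=E\Big],\]
together with its analogue for $w=0$. \cref{proposition:DSIPI} replaces each inner conditional mean by $\mu_{C_o,w}(S_i,X_i,O)$. Under \cref{assumption:knowncopula} this is a known functional of the observational conditional quantile function $F_Y^{-1}(\cdot|S_i,X_i,O)$ and of the surrogacy score $\rho(X_i,S_i)$, both identified from the data.

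The first line of \cref{Eq:WSI-ATE} then follows by the law of iterated expectations. Since $\mathbb{E}[W_i|S_i,X_i,P_i=E]=\rho(X_i,S_i)$ and $\mu_{C_o,w}$ is $(S_i,X_i)$-measurable, we have $\mathbb{E}[\rho(X_i,S_i)\mu_{C_o,1}/\rho(X_i)|P_i=E]=\mathbb{E}[W_i\mu_{C_o,1}/\rho(X_i)|P_i=E]$, and similarly with $1-W_i$ for $w=0$.

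For the second line, I need an identity linking the two WSIs to the SI. Since $\mathbb{E}[Y_i|S_i,X_i,E]=\rho\,\mathbb{E}[Y_i|W_i=1,\cdot]+(1-\rho)\,\mathbb{E}[Y_i|W_i=0,\cdot]$, \cref{proposition:DSIPI} gives
\[\rho\,\mu_{C_o,1}+(1-\rho)\,\mu_{C_o,0}=\mu(S_i,X_i,E)=\mu(S_i,X_i,O),\]
where the last equality uses comparability. This identity can also be checked directly from the weights. With $\alpha=1-\rho$, we get $\rho\,\sigma_{C_o,1}+(1-\rho)\,\sigma_{C_o,0}=(1-C_o(\alpha|u))+C_o(\alpha|u)=1$, because $w-\alpha$ equals $\rho$ for $w=1$ and $-(1-\rho)$ for $w=0$. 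Substituting $(1-\rho)\mu_{C_o,0}=\mu-\rho\,\mu_{C_o,1}$ into the $w=0$ term and collecting the $\mu_{C_o,1}$ terms, the coefficient on $\rho\,\mu_{C_o,1}$ is $1/\rho(X_i)+1/(1-\rho(X_i))=1/[\rho(X_i)(1-\rho(X_i))]$. This yields the stated display.

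For the degenerate case, suppose $F_Y(\cdot|S_i,X_i,O)$ is a point mass at some $m(S_i,X_i)$. Then $F_Y^{-1}(u|\cdot)=m$ for all $u\in(0,1)$, so $\mu_{C_o,w}=m\int_0^1\sigma_{C_o,w}(u;\alpha)\,du$. The integral equals one for every copula, since $\int_0^1 C_o(\alpha|u)\,du=\Pr(V\le\alpha)=\alpha$ by uniformity of $V$, giving $\int_0^1\sigma_{C_o,w}\,du=(w-\alpha)/(w-\alpha)=1$. Hence $\mu_{C_o,w}=\mu$ for both $w$, and the expression coincides with \cref{lemma:IndeC}, i.e.\ $\tau_{C_o}=\tau_\Pi$, whatever $C_o$ is.

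The main obstacle is not in this theorem itself but in relying on \cref{proposition:DSIPI}, whose content is that the conditional quantile representation with copula-based weights reproduces $\mathbb{E}[Y_i|W_i=w,\cdot]$; I take that as given. The only delicate point here is justifying the weight identity and the unit-integral property when $W$ is discrete and $C_o$ is not unique. I would handle this by noting that both arguments use only $C_o(\alpha|u)$ at $\alpha=1-\rho$, which is pinned down by the sub-copula, and that $\int_0^1 C_o(\alpha|u)\,du=C_o(1,\alpha)=\alpha$ holds for any copula.
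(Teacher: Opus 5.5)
Your proposal is correct and follows essentially the paper's route: the unconfoundedness decomposition of $\tau$, \cref{proposition:DSIPI} together with comparability to replace the inner conditional means, and the identity $\rho(S_i,X_i)\,\mu_{C_o,1}+(1-\rho(S_i,X_i))\,\mu_{C_o,0}=\mu(S_i,X_i,O)$ to pass between the two displayed forms. The only difference in this part is order: the paper derives the second line first and then the first, while you do the reverse. Your argument that $\int_0^1\sigma_{C_o,w}(u;\alpha)\,du=1$ settles the degenerate case, and this is a genuine addition, since the paper's proof never verifies that final claim.
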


When the outcomes are continuous and $C_o$ is smooth in $u$, the WSIs depend on the unique sub-copula only and $\tau_{C_o}$ is identified from the sub-copula. 
Equipped with the WSIs $\left(\mu_{C_o, 1}(S_i,X_i,O), \mu_{C_o, 0}(S_i,X_i,O)\right)$, \cref{theorem: known copula} allows to identify ATE of the primary outcome regardless of full or partial mediation of $S_i$ on the effect of $W_i$ on $Y_i$ as long as $C_o$ is known which includes the lower and upper bound copulas.

\begin{remark} 
Analogously to \cite{yang2024TargetingLongTermOutcomes}, under \cref{assumption: Random Sample}, \cref{assumption: Unconfounded}, \cref{assumption:Comparability}, and \cref{assumption:knowncopula}, the CATE denoted as $\tau_{C_o}(x)$ is identified as
\begin{align*}
     \tau_{C_o}(x)=\mathbb{E}\left[\mu_{C_o,1}(S_i,x, O)\mid W_i=1,X_i=x, P_i=E \right]-\mathbb{E}\left[\mu_{C_o, 0}(S_i,x, O)\mid W_i=0,X_i=x, P_i=E \right]
\end{align*} 
and the first-best policy is $\mathds{1}(\tau_{C_o}(x)>0)$.
\end{remark}

\subsection{Surrogacy Bias---Stochastically Monotone Copulas}
Theorem 4 (ii) in \cite{athey2025surrogate} provides an expression for the surrogacy bias. For a copula $C_o$, it follows directly from \cref{theorem: known copula} that 
\begin{equation}\label{eq:SB}
\tau_{C_o}-\tau_{\Pi}=   \mathbb{E}\left[ \frac{\rho(X_i,S_i)}{\rho(X_i)[1-\rho(X_i)]}    \left\{\mu_{C_o, 1}(S_i,X_i,O)-\mu(S_i,X_i,O)\right\}  \mid P_i = E \right].
\end{equation}

For the class of stochastically monotone copulas, we will establish the sign of the surrogacy bias. 
\begin{definition}\label{definition:Sincreasing}
    Let $(U,V)\sim C_o$. Then $V$ is stochastically increasing (decreasing) in $U$, if $C_o(\alpha|u) = \Pr(V \le \alpha |U = u)$ decreases (increases) in $u$ for every $\alpha$.
\end{definition}
Commonly used copulas are monotone copulas. For example, the Gaussian and Frank copulas are monotonically increasing when $\vartheta > 0$ and monotonically decreasing when $\vartheta < 0$. The Clayton copula is monotonically increasing for $\vartheta > 0$ and monotonically decreasing for $\vartheta \in (-1,0)$. The Gumbel copula is monotonically increasing. \cref{definition:Sincreasing} implies that when $V$ is stochastically increasing (decreasing) in $U$, $\sigma_{C_o, 1}(u; \alpha)$ is an increasing (decreasing) function of $u\in [0,1]$ for every $\alpha$ and $\sigma_{C_o, 0}(u; \alpha)$ is a decreasing (increasing) function of $u\in [0,1]$ for every $\alpha$. As a result, we expect $\tau_{C_o}$ to be larger (smaller) than $\tau_{\Pi}$ when $V$ is stochastically increasing (decreasing) in $U$. We show in the rest of this section that this is indeed the case. 

\begin{definition}[Concordance Order]
    The copula $C_1$ is smaller than the copula $C_2$ in concordance order denoted as $C_1\prec_c C_2$ iff $C_1(u,v)\leq C_2(u,v)$ for all $(u,v)\in [0,1]^2$.
\end{definition}

 It follows from the proof of \cref{proposition:DSIPI} that for any copula $C$, 
 \begin{align*}
\mu_{C, 1}(s,x,O)
=\frac{1}{\rho(s,x) }\int\int ywdC(F_Y(y|s,x, E),F_W(w|s,x, E)|s,x, E).
\end{align*}
This and Theorem 1 of \cite{cambanis_1976_inequalities} imply
the statement for $\mu_{C, 1}$ in \cref{proposition:DSIP} below.
The statement for $\mu_{C, 0}$ follows from that of $\mu_{C, 1}$ and the following relation:  $$(1-\rho(s,x))\mu_{C, 0}(s,x,O)= \mathbb{E}[Y] - \rho(s,x)\mu_{C, 1}(s,x,O).$$

\begin{proposition}\label{proposition:DSIP}
 For any $(s,x)\in \mathcal{S}\otimes\mathcal{X}$, if $C_1(.,.|s,x, E)\prec_c C_2(.,.|s, x, E)$, then $ \mu_{C_1, 1}(s,x,O) \leq  \mu_{C_2, 1}(s,x,O) $
and $ \mu_{C_1, 0}(s,x,O) \geq  \mu_{C_2, 0}(s,x,O) $. 
\end{proposition}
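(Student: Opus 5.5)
Fix $(s,x)$ and suppress the conditioning on $(s,x,E)$. Write $\rho=\rho(s,x)$, $\alpha=1-\rho$, and $F=F_Y(\cdot\,|s,x,O)$. By \cref{proposition:DSIPI} this is the same as $F_Y(\cdot\,|s,x,E)$. Plan: reduce $\mu_{C,1}$ to a linear functional of the copula evaluated on a one-dimensional slice, then compare $C_1$ and $C_2$ pointwise on that slice. This is a direct analogue of the Cambanis--Simons--Stout inequality: the product $yw$ is supermodular and $y\mapsto y$, $w\mapsto w$ are nondecreasing, so a larger copula in concordance order yields a larger $\mathbb{E}[YW]$.

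\textbf{Step 1 (representation).} From \cref{definition:DSIP} with $w=1$, we have $\sigma_{C,1}(u;\alpha)=(1-C(\alpha|u))/\rho$, hence
\[
\rho\,\mu_{C,1}=\int_0^1 F^{-1}(u)\bigl(1-C(\alpha|u)\bigr)\,du .
\]
Let $(U,V)\sim C$, set $Y=F^{-1}(U)$, and set $W=\mathds{1}(V>\alpha)$. This $W$ is Bernoulli($\rho$) with the correct conditional law, and the display above equals $\mathbb{E}[YW]$. To compare copulas, I integrate by parts in $u$, using $C(\alpha|u)=\partial_u C(u,\alpha)$ and $\int_0^1(1-C(\alpha|u))\,du=\rho$. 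Writing $F^{-1}(u)=F^{-1}(u_0)+\int_{u_0}^{u}dF^{-1}$, this gives
\[
\rho\,\mu_{C,1}=c_0+\int_0^1\bigl(C(u,\alpha)-u\alpha\bigr)\,dF^{-1}(u)\;+\;(\text{terms not depending on }C),
\]
where the boundary terms involve only the marginals, since $C(0,\alpha)=0$ and $C(1,\alpha)=\alpha$. This is the Hoeffding-type identity $\mathrm{Cov}(Y,W)=\int (C(F(y),\alpha)-F(y)\alpha)\,dy$ restricted to the two-point $W$. I can equivalently prove it directly from $\mathrm{Cov}(Y,W)=\iint [P(Y\le y,W\le w)-P(Y\le y)P(W\le w)]\,dy\,dw$, with $w$ ranging only over $[0,1)$ where $P(W\le w)=\alpha$.

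\textbf{Step 2 (comparison).} $F^{-1}$ is nondecreasing, so $dF^{-1}$ is a nonnegative measure. If $C_1\prec_c C_2$, then $C_1(u,\alpha)\le C_2(u,\alpha)$ for all $u$, and Step 1 gives $\rho\,\mu_{C_1,1}\le\rho\,\mu_{C_2,1}$. Dividing by $\rho>0$ yields the first claim.

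\textbf{Step 3 ($w=0$).} Since $\sigma_{C,0}(u;\alpha)=C(\alpha|u)/\alpha$, adding the two weights shows $\rho\,\sigma_{C,1}+\alpha\,\sigma_{C,0}\equiv1$. Therefore
\[
\rho\,\mu_{C,1}+(1-\rho)\,\mu_{C,0}=\int_0^1F^{-1}(u)\,du=\mu(s,x,O),
\]
which does not depend on $C$. Combined with Step 2 and $1-\rho>0$ (by \cref{assumption: Surrogacy}(ii) or positivity of the surrogacy score), this gives $\mu_{C_1,0}\ge\mu_{C_2,0}$.

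\textbf{Main obstacle.} The delicate step is Step 1, specifically the integration by parts with possibly unbounded $F^{-1}$ near $0$ and $1$. There the boundary terms must vanish, which needs $\mathbb{E}|Y|<\infty$ and the bound $|C(u,\alpha)-u\alpha|\le\min(u,1-u)$. Smoothness in $u$ of $C(u,\alpha)$ is also needed to identify $C(\alpha|u)$ with $\partial_uC$; it holds a.e. by Lipschitz continuity of copulas. To avoid these technicalities, I would first prove the claim for bounded $Y$ and then pass to the limit by truncation $F^{-1}\wedge M\vee(-M)$ with dominated convergence. Alternatively, I would cite Theorem 1 of \cite{cambanis_1976_inequalities} for the supermodular function $(y,w)\mapsto yw$, as the text preceding the proposition suggests.
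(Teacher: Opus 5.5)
Your proof is correct, and its outer structure matches the paper's. The paper also starts from $\rho(s,x)\,\mu_{C,1}(s,x,O)=\iint yw\,dC(F_Y,F_W)$, which is your Step~1. It also gets the $w=0$ claim from $(1-\rho)\mu_{C,0}=\mu(s,x,O)-\rho\,\mu_{C,1}$, which is your Step~3.

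You differ in the comparison step. The paper simply cites Theorem~1 of \cite{cambanis_1976_inequalities} for the supermodular function $(y,w)\mapsto yw$. You instead prove the needed special case by hand through the Hoeffding-type identity
\[
\rho\,\mu_{C,1}(s,x,O)=\rho\,\mu(s,x,O)+\int_0^1\bigl(C(u,\alpha)-\alpha u\bigr)\,dF^{-1}(u),\qquad \alpha=1-\rho(s,x).
\]
You should state it in exactly this form rather than as ``$c_0+\dots+$(terms not depending on $C$)''.

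Your route buys two things. It is self-contained, and because $W$ is binary only the slice $C(\cdot,1-\rho(s,x))$ enters. So the weaker hypothesis $C_1(u,1-\rho(s,x))\le C_2(u,1-\rho(s,x))$ for all $u$ already suffices, which fits the paper's remark that the WSIs depend only on the sub-copula.

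The cost is endpoint bookkeeping that the cited theorem packages for you, but it is light and you do not even need truncation. Since $|C(u,\alpha)-\alpha u|\le\min(u,1-u)$, and $\mathbb{E}|Y|<\infty$ together with monotonicity of $F^{-1}$ gives $uF^{-1}(u)\to0$ as $u\downarrow0$ and $(1-u)F^{-1}(u)\to0$ as $u\uparrow1$, the boundary terms vanish and the integral converges absolutely.

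A minor point: equating $F_Y(\cdot|s,x,O)$ with $F_Y(\cdot|s,x,E)$ comes from \cref{assumption:Comparability}, not from \cref{proposition:DSIPI}. Your Steps 1--3 never use this identification anyway, since the WSI is defined through the $O$-quantile function.
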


  If $\{C_o(\cdot|u)\}$ is stochastically increasing in $u$, then $\Pi\prec_c C_o$, see Sections 2.8 and 8.3 of \cite{joe2014DependenceModelingCopulas}. The following corollary follows from \cref{proposition:DSIP} and \cref{eq:SB}.

\begin{corollary}[The Sign of the Surrogacy Bias]
      Suppose \cref{assumption: Random Sample}, \cref{assumption: Unconfounded}, and \cref{assumption:Comparability} hold. If $\{C_o(\cdot|u)\}$ is stochastically increasing (decreasing) in $u$, then $\tau_{C_o}-\tau_{\Pi}> 0 \mbox{ }(<0)$. 
\end{corollary}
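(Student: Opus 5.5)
The argument combines the surrogacy-bias identity \cref{eq:SB} with the pointwise comparison in \cref{proposition:DSIP}. I treat the stochastically increasing case; the decreasing case is symmetric, with inequalities reversed.

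\emph{Step 1: a pointwise sign for the WSI gap.} If $\{C_o(\cdot|u)\}$ is stochastically increasing in $u$, then $\Pi\prec_c C_o$ (Sections 2.8 and 8.3 of \cite{joe2014DependenceModelingCopulas}). Apply \cref{proposition:DSIP} with $C_1=\Pi$ and $C_2=C_o$ at each $(s,x)$. This gives $\mu_{\Pi,1}(s,x,O)\le \mu_{C_o,1}(s,x,O)$. Since $\sigma_{\Pi,1}\equiv 1$, we have $\mu_{\Pi,1}(s,x,O)=\mu(s,x,O)$. So the integrand of \cref{eq:SB}, namely $\mu_{C_o,1}(S_i,X_i,O)-\mu(S_i,X_i,O)$, is nonnegative almost surely.

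\emph{Step 2: a weak inequality for the bias.} The weight $\rho(X_i,S_i)/\{\rho(X_i)[1-\rho(X_i)]\}$ is strictly positive, by \cref{assumption: Unconfounded}(ii) and positivity of the surrogacy score. Substituting Step 1 into \cref{eq:SB} therefore yields $\tau_{C_o}-\tau_\Pi\ge 0$. For the decreasing case, $C_o\prec_c\Pi$, and the same reasoning gives $\le 0$.

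\emph{Step 3: strictness, the main obstacle.} The weak inequality does not automatically become strict. When the conditional law of $Y$ given $(S,X,O)$ is degenerate, \cref{theorem: known copula} itself says $\tau_{C_o}=\tau_\Pi$. The same happens when $C_o=\Pi$, which is trivially stochastically increasing. So strictness needs implicit nondegeneracy conditions, which I would state explicitly:
\begin{itemize}
\item $C_o$ differs from $\Pi$ on a set of positive probability, with $V$ strictly stochastically increasing in $U$;
\item $Y\mid S,X,P=O$ is nondegenerate with positive probability.
\end{itemize}

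Under these conditions I would prove a strict pointwise inequality using the representation
\[
\mu_{C_o,1}-\mu=\int_0^1 F_Y^{-1}(u)\,\{\sigma_{C_o,1}(u;\alpha)-1\}\,du .
\]
The weight satisfies $\int_0^1\{\sigma_{C_o,1}(u;\alpha)-1\}\,du=0$, because $\int_0^1 C_o(\alpha|u)\,du=\alpha$. Since $C_o(\alpha|\cdot)$ is decreasing, $\sigma_{C_o,1}(\cdot;\alpha)-1$ is nondecreasing. It is not identically zero, because $C_o\neq\Pi$.

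Let $u^*$ be a point where $\sigma_{C_o,1}(u;\alpha)-1$ changes sign. Because the weight integrates to zero, we can write
\[
\mu_{C_o,1}-\mu=\int_0^1 \{F_Y^{-1}(u)-F_Y^{-1}(u^*)\}\{\sigma_{C_o,1}(u;\alpha)-1\}\,du .
\]
Both factors in the integrand are $\ge 0$ for $u>u^*$ and $\le 0$ for $u<u^*$, so the product is nonnegative. It is strictly positive on a set of positive measure once $F_Y^{-1}$ is nonconstant there.

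Taking expectations against the strictly positive weight then gives $\tau_{C_o}-\tau_\Pi>0$.
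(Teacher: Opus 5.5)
Your Steps 1--2 are exactly the paper's proof. The paper invokes Joe's result that stochastic increase gives $\Pi\prec_c C_o$, applies \cref{proposition:DSIP} with $C_1=\Pi$, and substitutes into \cref{eq:SB}. As you observe, this yields only $\tau_{C_o}-\tau_{\Pi}\ge 0$ (resp.\ $\le 0$), and the paper asserts the strict inequality without further argument. You are right that the strict version fails as stated, for two reasons. First, $C_o=\Pi$ is (weakly) stochastically increasing and decreasing at once. Second, a degenerate $Y_i\mid S_i,X_i,P_i=O$ forces $\tau_{C_o}=\tau_{\Pi}$ by \cref{theorem: known copula}. So your extra hypotheses are genuinely needed; they are not an artifact of your method.

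Your Step 3 is a Chebyshev (rearrangement) type argument, and it is sound. It is also self-contained: it proves the weak inequality directly from the monotonicity of $\sigma_{C_o,1}(\cdot;\alpha)$ and $\int_0^1\sigma_{C_o,1}(u;\alpha)\,du=1$. It therefore bypasses both the concordance ordering and the Cambanis et al.\ inequality behind \cref{proposition:DSIP}. To make it airtight, state three things explicitly:
\begin{itemize}
\item The conditions hold jointly, at the same $(s,x)$, on a set of positive probability under the law of $(S_i,X_i)$ given $P_i=E$, since that is the measure in \cref{eq:SB}.
\item On this set $\rho(s,x)\in(0,1)$. This is the overlap part of \cref{assumption: Surrogacy}(ii); it is implicit in defining the WSIs but is not among the corollary's hypotheses.
\item The departure from $\Pi$ is visible at the specific level $\alpha=1-\rho(s,x)$, i.e.\ $C_o(\alpha|\cdot)\neq\alpha$ on a set of positive Lebesgue measure. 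Strict stochastic increase gives this; mere $C_o\neq\Pi$ does not.
\end{itemize}

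Finally, make ``nonconstant there'' precise. A nondecreasing $\sigma_{C_o,1}(\cdot;\alpha)-1$ with zero integral that is not a.e.\ zero is strictly negative on some $(0,a)$ and strictly positive on some $(b,1)$, with $a\le u^*\le b$. If the product vanished a.e., $F_Y^{-1}$ would equal $F_Y^{-1}(u^*)$ a.e.\ on both intervals. By monotonicity it would then be constant on all of $(0,1)$, contradicting nondegeneracy. This also shows that strict monotonicity of $C_o(\alpha|\cdot)$ is more than you need.
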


   Consequently, if $\{C_o(\cdot|u)\}$ is stochastically increasing (decreasing) in $u$, then  the ATE of the SI under-estimates (over-estimates) $\tau_{C_o}$, the ATE of the long term outcome. 

\subsection{A Numerical Illustration}
\label{sec:numerical}
In this section, we use several parametric families of copulas to gauge the sensitivity of $\tau_{C_o}$ on the shape of $C_o$ and the strength of global dependence by varying their parameters.

We consider the Gaussian copula and several copulas from the Archimedean family. For simplicity, we assume that the conditional copula is the same as the unconditional copula. 
For common (bivariate) copulas such as the Gaussian, Clayton, Gumbel, and Frank copulas, the dependency parameter can be expressed in terms of Kendall's tau $\varrho_K$, a rank-based measure of monotonic dependence. 
This parameterization makes dependency strength interpretable and comparable across different copula families. 
Appendix \ref{sec:numerical details} in the supplementary appendix collects the mapping between $\varrho_K$ and $\vartheta$ for the aforementioned copulas.

For illustration, consider the following data generating process for both the experimental and observational data samples:
    \[X_i\sim U[0,1], \quad W_i\sim Bernoulli(\rho), \quad S_i = X_i + W_i + \eta_i^S, \text{ where } \eta_i^S\sim\mathcal{N}(0,1),\]
    \[Y_i = S_i + 0.5X_i + \eta_i^Y, \text{ where } \eta_i^Y\sim\mathcal{N}(0,1).\]
We investigate $\rho\in\{0.1, 0.5, 0.9\}$. Following Assumption \ref{eq:model}, $W_i|S_i,X_i =I\{\epsilon_i > 1 -  \rho(S_i, X_i)\}$, where $\epsilon_i | S_i, X_i, P_i=E \sim U[0, 1]$, and the copula structure is for $\epsilon_i$ and $\eta_i^Y$, independent of $S_i, X_i$. More specifically, we set $C_o(u,v;\vartheta|S_i, X_i, P_i=E) = C_o(u,v;\vartheta|P_i=E)$, for $C_o(u,v;\vartheta|P_i=E)$ being the Gaussian, Clayton, Gumbel, and Frank copulas. For each copula family, we solve for $\vartheta$ for each $\varrho_K$ in the corresponding grid of appropriate $\varrho_K$ values. 
Note that the Clayton copula is more commonly used to model positive dependence, while the Gumbel copula exclusively models positive dependence. Therefore, the range of $\varrho_K$ is adjusted to ensure the corresponding $\vartheta$ falls within its valid range. 

In Figure \ref{Figure: illustrative Archimedean conditional}, we plot how 
$\tau_{C_o}$ in Theorem \ref{theorem: known copula} changes with $\varrho_K$, with computational details in Appendix \ref{sec:numerical details}. Consistent with the DGP, Figure \ref{Figure: illustrative Archimedean conditional} shows that for any $\rho\in(0,1)$, if $C_o(u, v; \vartheta \mid P_i = E)$ is the independence copula ($\varrho_K=0$), the true long-term ATE is $\tau_{\Pi}=1$. The surrogacy bias $\tau_{C_o}-\tau_{\Pi}$ increases in magnitude as $|\varrho_K|$ increases. When $\rho=0.5$, the threshold for $\tau$ to change sign is $\varrho_K\approx-0.55$, which is consistent across the copulas considered here that allow for negative dependence (Gaussian and Frank). In other words, if the negative dependence between $W$ and $Y$ is strong enough with Kendall's tau smaller than $-0.55$, assuming an independence copula will produce the wrong sign for the long-term ATE. For $\rho=0.1$ or $0.9$, the same threshold for the Gaussian copula is around $-0.41$. Note that the form of the copula is not crucial for $\rho = 0.5$. This is intuitive because values of $\rho(s, x)$ tend to be around $0.5$ as well, making the conditional dependence $C_o(1-\rho(s,x)|u)$ in $\sigma_{C_o,1}(\cdot;\cdot)$ relatively insensitive to the copula family. In contrast, when $\rho$ takes more extreme values like $0.1$ or $0.9$, differences in tail dependence across copula families lead to more pronounced variation in conditional behavior, thereby affecting $\tau_{C_o}$. In general, in a randomized controlled trial with relatively balanced treatment and control groups, if $S$ depends only weakly on $W$ (i.e., $S$ carries limited information about $W$), the choice of copula serves mainly as a functional tool for achieving the desired dependence level, while $\varrho_K$ captures the essential dependence information relevant for $\tau_{C_o}$.
\begin{figure}[H]
    \centering
    \subfigure{
        \includegraphics[width=0.318\textwidth]{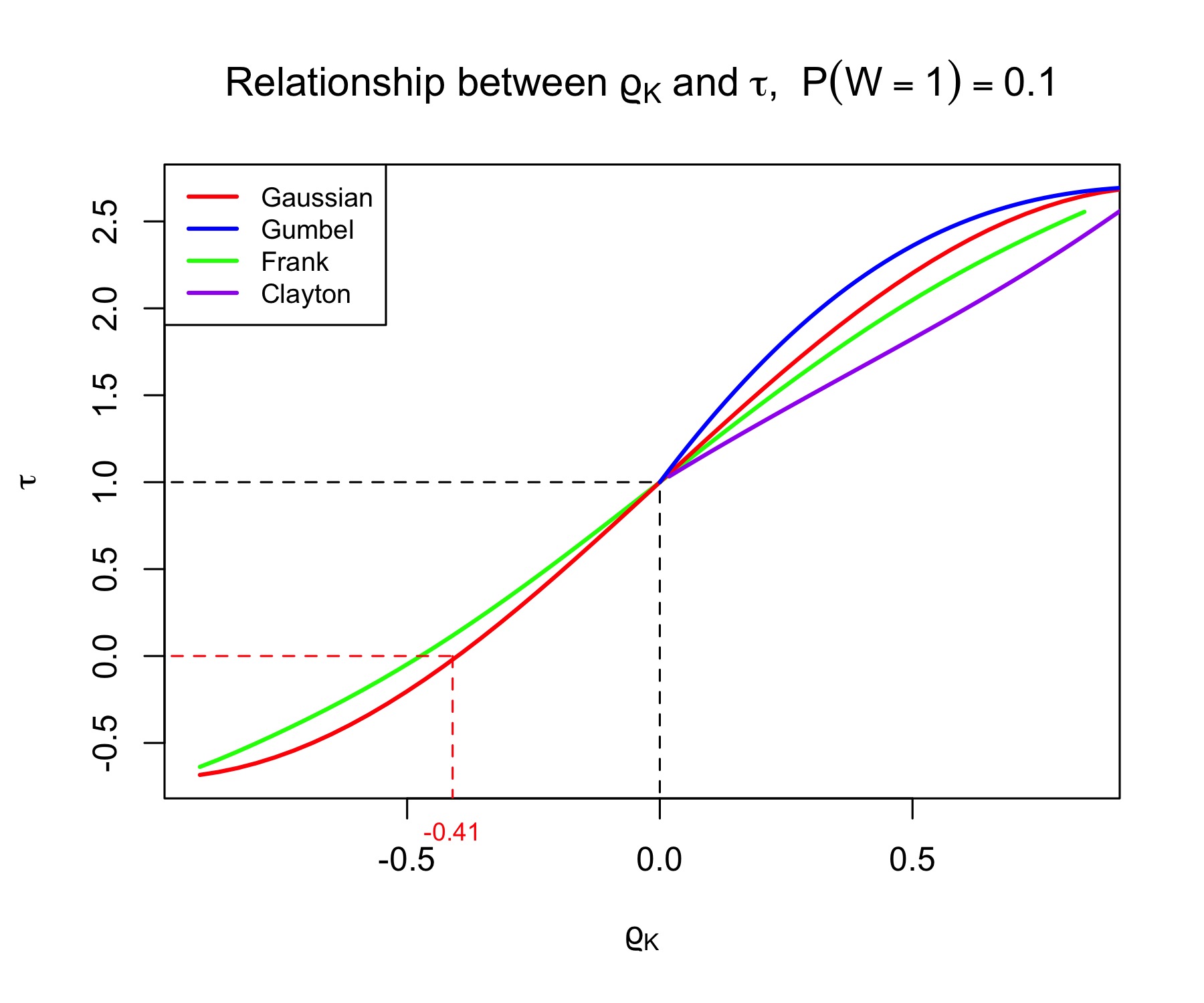}}
    \subfigure{
        \includegraphics[width=0.318\textwidth]{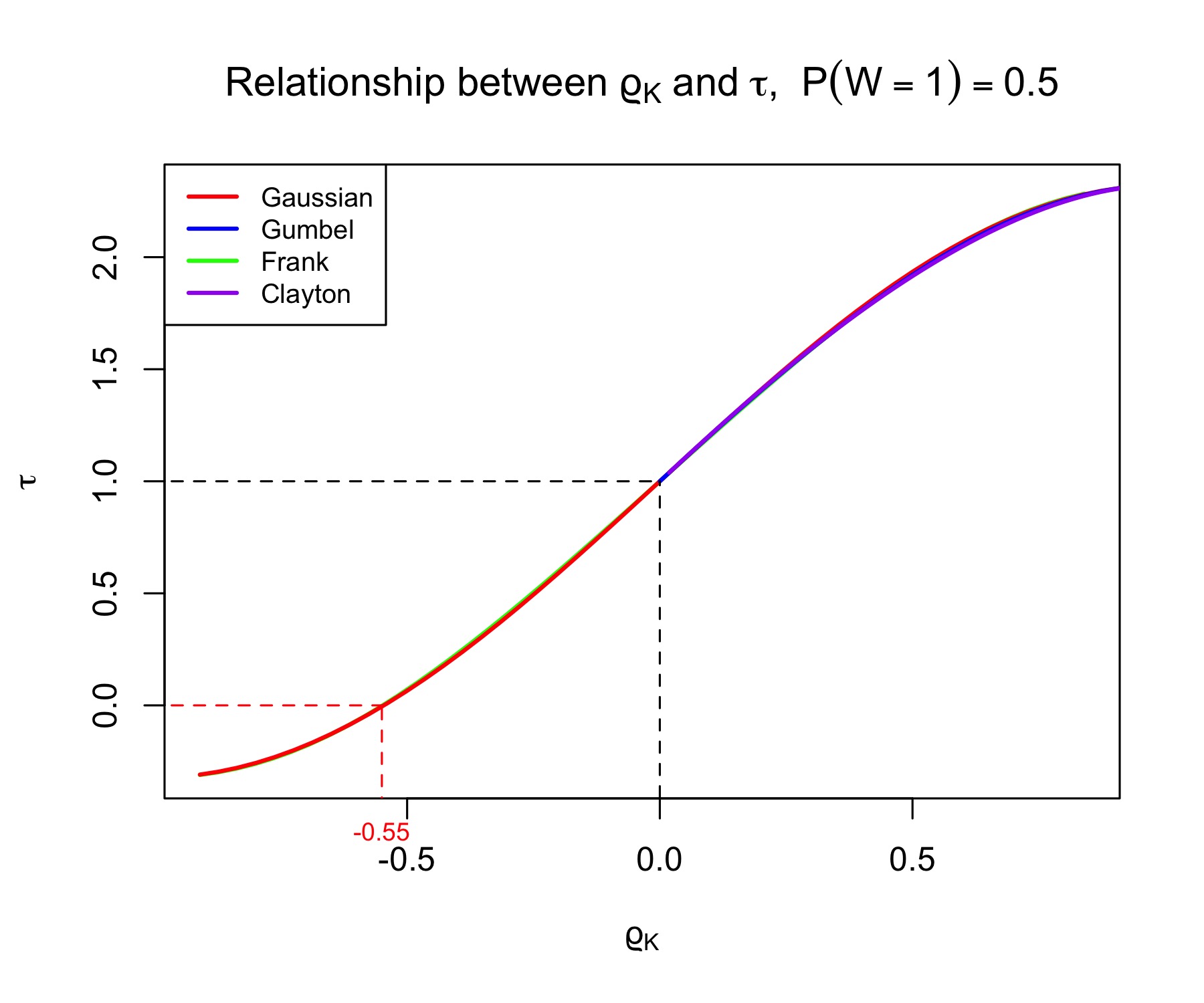}}
        \subfigure{
        \includegraphics[width=0.318\textwidth]{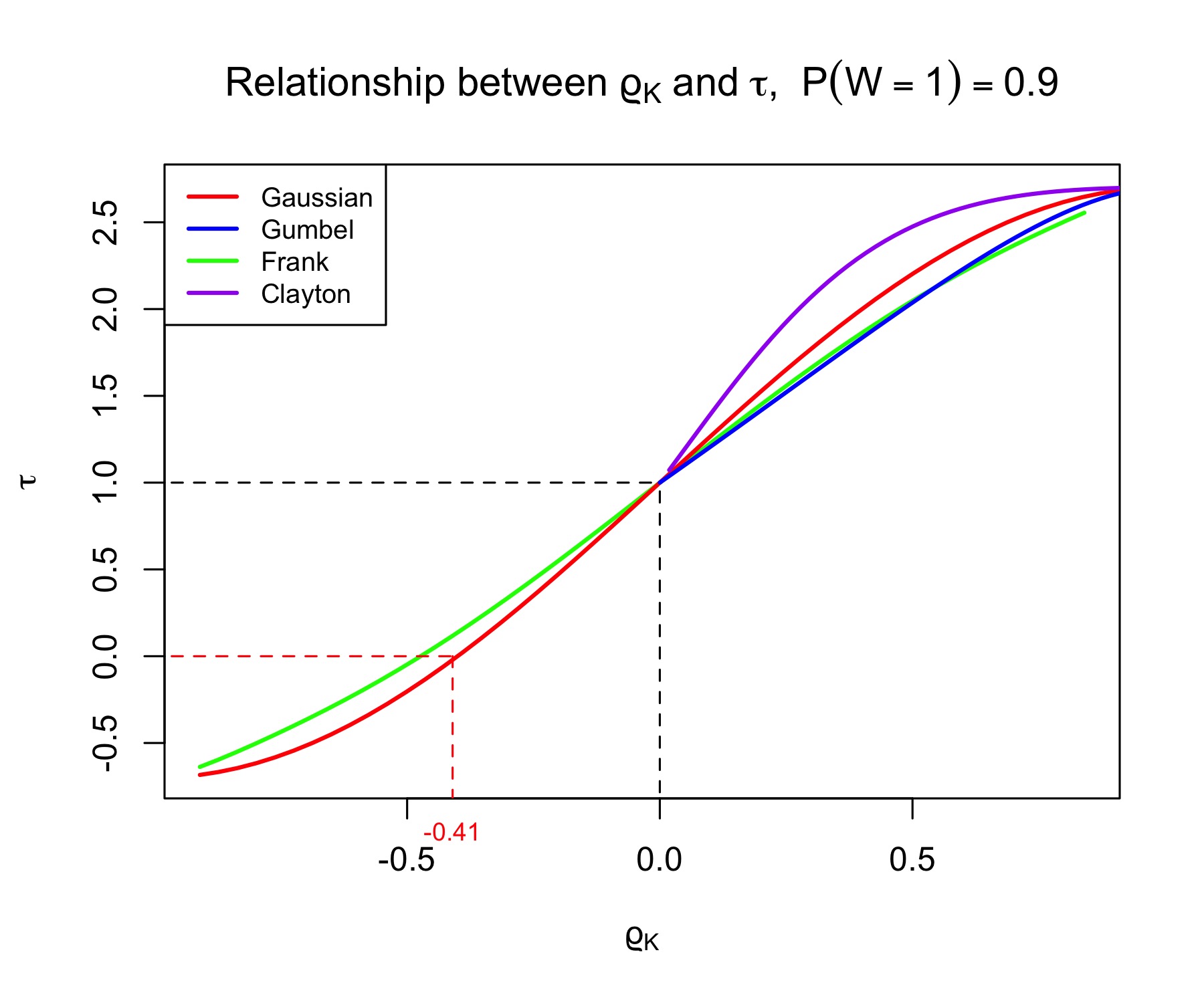}}
        \caption{Relationship between Kendall's tau $(\varrho_K)$ and $\tau$ given $\rho\in\{0.1,0.5,0.9\}$ for the Gaussian, Clayton, Gumbel, and Frank copulas.}
   \label{Figure: illustrative Archimedean conditional}
\end{figure}

\section{Worst-Case Bounds, Debiased Estimation, and Inference}
\label{sec:PartialId}

In practice, the copula $C_o$ is rarely known. \cref{sec:numerical} provides a numerical illustration of the application of \cref{theorem: known copula} to check sensitivity of $\tau_{\Pi}$ to the violation of the surrogacy assumption by letting the copula $C_o$ deviate from the independence copula. \cref{theorem: known copula} can also be used to establish sharp bounds on $\tau_{C_o}$ when $C_o$ is unknown but lies between two known copulas. 

 The following corollary follows immediately from \cref{theorem: known copula} and \cref{proposition:DSIP}.

\begin{corollary}[Identified Set]\label{Coro: identifiedSets}
      Suppose \cref{assumption: Random Sample}, \cref{assumption: Unconfounded}, and \cref{assumption:Comparability} hold. Furthermore, suppose the copula $C_o$ satisfies $C_L(\cdot, \cdot|s,x,E)\prec_c C_o(\cdot, \cdot|s,x,E)\prec_c  C_U(\cdot, \cdot|s,x,E)$ for almost all $(s,x)\in\mathcal{S}\otimes\mathcal{X}$, where  $C_L(\cdot, \cdot |s,x,E)$ and $C_U(\cdot, \cdot | s,x,E)$ are  two known copula functions.
Then $\tau_{C_o}$ is partially identified with the identified set $ [\tau_{C_L}, \tau_{C_U}]$. 
When the conditional distribution of $Y_i$ given $S_i,X_i,P_i=O$ is degenerate,  the identified set is singleton: $\tau_{C_L}=\tau_{C_U}=\tau_{\Pi}$.  
\end{corollary}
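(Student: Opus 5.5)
The plan is to combine the identification formula of \cref{theorem: known copula} with the pointwise monotonicity in \cref{proposition:DSIP}. We then argue that every value in between is attained, which gives sharpness of $[\tau_{C_L},\tau_{C_U}]$.

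\textbf{Step 1: bounds.} Use the second line of \cref{Eq:WSI-ATE}:
\[
\tau_C=\mathbb{E}\left[\frac{\rho(X_i,S_i)}{\rho(X_i)[1-\rho(X_i)]}\mu_{C,1}(S_i,X_i,O)\mid P_i=E\right]-\mathbb{E}\left[\frac{\mu(S_i,X_i,O)}{1-\rho(X_i)}\mid P_i=E\right].
\]
Only the first term depends on $C$, and its weight is nonnegative. For almost all $(s,x)$ we have $C_L\prec_c C_o\prec_c C_U$, so \cref{proposition:DSIP} gives
\[
\mu_{C_L,1}(s,x,O)\le\mu_{C_o,1}(s,x,O)\le\mu_{C_U,1}(s,x,O).
\]
Integrating against the nonnegative weight yields $\tau_{C_L}\le\tau_{C_o}\le\tau_{C_U}$. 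Every copula consistent with the data and the restriction therefore lies in $[\tau_{C_L},\tau_{C_U}]$.

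\textbf{Step 2: sharpness.} $C_L$ and $C_U$ themselves are admissible and produce the endpoints. For interior points, take the mixtures $C_\lambda=(1-\lambda)C_L+\lambda C_U$ for $\lambda\in[0,1]$.
\begin{itemize}
\item Each $C_\lambda$ is a copula, and it satisfies $C_L\prec_c C_\lambda\prec_c C_U$ pointwise.
\item $C_\lambda(\alpha|u)$ is linear in $\lambda$. Hence $\sigma_{C_\lambda,1}$ and $\mu_{C_\lambda,1}$ are affine in $\lambda$, and so is $\tau_{C_\lambda}$.
\end{itemize}
By continuity, $\lambda\mapsto\tau_{C_\lambda}$ sweeps out the whole interval. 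It remains to check that each $C_\lambda$ is observationally compatible with the data. By Sklar's theorem, any copula combined with the identified marginals $F_Y(\cdot|s,x,O)$ (equal to the $E$-marginal by comparability) and $F_W(\cdot|s,x,E)$ defines a valid joint law of $(Y_i,W_i)$ given $(S_i,X_i,P_i=E)$. This law leaves the observable distributions unchanged, because $Y$ and $W$ are never jointly observed. \Cref{assumption: Unconfounded} concerns potential outcomes given $X$. One can construct potential outcomes with $Y_i(w)$ drawn from the implied conditional law of $Y$ given $W=w,S,X$, so nothing is violated.

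\textbf{Step 3: degenerate case.} If $Y_i$ given $(S_i,X_i,P_i=O)$ is degenerate, then $F_Y^{-1}(u|\cdot)$ is constant in $u$. Since $\int_0^1\sigma_{C,w}(u;\alpha)\,du=1$ for any copula (as $\int_0^1 C(\alpha|u)\,du=\alpha$), we get $\mu_{C,w}=\mu$ for all $C$. Hence $\tau_{C_L}=\tau_{C_U}=\tau_\Pi$, as already noted in \cref{theorem: known copula}.

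\textbf{Main obstacle.} I expect Step 2 to be the delicate part. The main risk is that unconfoundedness might impose hidden compatibility restrictions on the copula. A secondary concern is non-uniqueness of the copula for discrete $W$, which is handled by working with the sub-copula values $C(u,1-\rho)$ that enter the WSIs.
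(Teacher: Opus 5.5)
Your argument is correct, and Step 1 is exactly the paper's route: the paper presents the corollary as an immediate consequence of \cref{theorem: known copula} and \cref{proposition:DSIP}, i.e.\ pointwise concordance-monotonicity of the WSIs integrated against the nonnegative weights in \cref{Eq:WSI-ATE}. The paper does not spell out that interior points are attained or that the degenerate case collapses to a singleton; your mixture $C_\lambda=(1-\lambda)C_L+\lambda C_U$ (admissible, observationally equivalent, with $\tau_{C_\lambda}$ affine in $\lambda$) and the identity $\int_0^1\sigma_{C,w}(u;\alpha)\,du=1$ make those two claims explicit.
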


\cref{Coro: identifiedSets} implies that the worst case bounds on $\tau_{C_o}$ are $\tau_{C_-}$ and $ \tau_{C_+}$. As a result, $\tau_{C_-}$ can be interpreted as the smallest ATE when the surrogacy assumption fails. Conversely, $\tau_{C_+}$ is the largest ATE when the surrogacy assumption fails. Another implication is that the ATE under the surrogacy assumption in \cite{athey2025surrogate} is the least ATE among copulas dominating the independence copula in concordance order. It is also the greatest ATE among copulas dominated by the independence copula in concordance order.

\subsection{Comparison with Lemma 1 in \cite{athey2025surrogate}}

 We restate the expressions for 
$\tau_{C_-}, \tau_{C_+}$ in the following proposition which also shows that they are the same as  Lemma 1 (ii) in \cite{athey2025surrogate} for binary outcomes.

\begin{proposition}\label{prop:WCB}
          Suppose \cref{assumption: Random Sample}, \cref{assumption: Unconfounded}, and \cref{assumption:Comparability} hold. 
Then the identified set of $\tau_{C_o}$ is $[\tau_{C_-}, \tau_{C_+}]$, where $\tau_{C_-},\tau_{C_+}\in\mathbb{R}$ are given  by
\begin{align*}
     \tau_{C_-} 
& =\mathbb{E}\left[ \mu_{C_-, 1}(S_i,X_i,O)\frac{W_i }{\rho(X_i)} \mid P_i = E \right] 
  -  \mathbb{E}\left[  \mu_{C_-, 0}(S_i,X_i,O)\left(\frac{1-W_i }{1-\rho(X_i)}\right)\mid P_i = E \right],\\
    \tau_{C_+} 
& =\mathbb{E}\left[ \mu_{C_+, 1}(S_i,X_i,O)\frac{W_i }{\rho(X_i)} \mid P_i = E \right] 
  -  \mathbb{E}\left[  \mu_{C_+, 0}(S_i,X_i,O)\left(\frac{1-W_i }{1-\rho(X_i)}\right)\mid P_i = E \right], 
\end{align*}
in which
 \begin{align*}
 \mu_{C_-,0}(S_i,X_i,O)&=AVaR_{\rho(X_i,S_i)}(Y_i\mid S_i,X_i,P_i=O),\\
 \mu_{C_-,1}(S_i,X_i,O)&=-AVaR_{1-\rho(X_i,S_i)}(-Y_i\mid S_i,X_i,P_i=O),\\
\mu_{C_+,1}(S_i,X_i,O)&=AVaR_{1-\rho(X_i,S_i)}(Y_i\mid S_i,X_i,P_i=O), \mbox{ and }\\
 \mu_{C_+,0}(S_i,X_i,O)&=-AVaR_{\rho(X_i,S_i)}(-Y_i\mid S_i,X_i,P_i=O).
\end{align*}
When the outcome is binary, the identified set $[\tau_{C_-}, \tau_{C_+}]$ is the same as  Lemma 1 (ii) in \cite{athey2025surrogate}.
\end{proposition}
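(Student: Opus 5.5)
The plan has three parts: obtain the identified set, compute the four WSIs for the Fréchet copulas, and then specialize to binary outcomes.

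\textbf{Identified set.} Every copula satisfies the Fr\'echet--Hoeffding bounds $C_-\prec_c C\prec_c C_+$ pointwise in $(s,x)$. Applying \cref{Coro: identifiedSets} with $C_L=C_-$ and $C_U=C_+$ then gives $\tau_{C_o}\in[\tau_{C_-},\tau_{C_+}]$.

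Sharpness (that every point of the interval is attained) takes a separate argument. I would use the mixture copulas $C_\lambda=\lambda C_++(1-\lambda)C_-$, $\lambda\in[0,1]$. The weights $\sigma_{C,w}$ are linear in $C$, so each $\mu_{C_\lambda,w}$ is affine in $\lambda$. Hence $\tau_{C_\lambda}$ is affine in $\lambda$ and sweeps out the whole interval. Each $C_\lambda$ is compatible with the observed data, because the marginals of $Y$ and $W$ given $(S,X,E)$ are left untouched. The displayed formulas for $\tau_{C_\pm}$ are simply the first line of \cref{Eq:WSI-ATE} evaluated at $C_o=C_\pm$.

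Finiteness of $\tau_{C_\pm}$ should follow from $\mathbb{E}|Y|<\infty$ together with the weights being bounded by $1/\rho(s,x)$ and $1/(1-\rho(s,x))$. Some integrability of these ratios is needed here, and I would inherit it from \cref{assumption: Unconfounded}(ii) and the finiteness of $\tau_\Pi$.

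\textbf{The four WSIs.} Write $\alpha=1-\rho(S_i,X_i)$.
\begin{itemize}
\item For $C_+$: $C_+(\alpha|u)=\mathds{1}(u\le\alpha)$. Substituting into \cref{definition:DSIP} gives the weights of \cref{ex:SpecificCopulas}. Thus $\mu_{C_+,1}$ is the average of $F_Y^{-1}$ over $(1-\rho,1]$, which is $AVaR_{1-\rho}(Y)$. Likewise, $\mu_{C_+,0}$ is the average over $[0,1-\rho]$. Using $F_{-Y}^{-1}(u)=-F_Y^{-1}(1-u)$ almost everywhere, this average equals $-AVaR_{\rho}(-Y)$.
\item For $C_-$: $C_-(\alpha|u)=\mathds{1}(u>1-\alpha)$. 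Then $\mu_{C_-,1}$ averages $F_Y^{-1}$ over $[0,\rho]$, giving $-AVaR_{1-\rho}(-Y)$, and $\mu_{C_-,0}$ averages over $(\rho,1]$, giving $AVaR_{\rho}(Y)$.
\end{itemize}
I will carry out these changes of variables carefully, since the indices $\rho$ and $1-\rho$ are easy to swap. I will also use the relation $(1-\rho)\mu_{C,0}=\mu-\rho\mu_{C,1}$ as a consistency check on each pair.

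\textbf{Binary outcomes.} Let $Y\in\{0,1\}$ and $m=\mu(S,X,O)=\Pr(Y=1|S,X,O)$. Then $F_Y^{-1}(u)=\mathds{1}(u>1-m)$, and the AVaRs are explicit:
\[
AVaR_{1-\rho}(Y)=\min(m,\rho)/\rho,\qquad -AVaR_{1-\rho}(-Y)=\max(m+\rho-1,0)/\rho,
\]
with analogous expressions for the $w=0$ indices. Multiplying by $\rho(S,X)$ or $1-\rho(S,X)$ turns these into Fr\'echet bounds on $\Pr(Y=1,W=w|S,X,E)$. Plugging them into the second line of \cref{Eq:WSI-ATE} should reproduce Lemma 1(ii) of \cite{athey2025surrogate}.

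The main obstacle is this last matching step. I expect that lemma to be stated in different notation, possibly as bounds on $\Pr(Y=1|W=w,S,X)$ expressed with $\max/\min$ terms. Aligning the two will require careful algebra, in particular handling the $\min/\max$ cases with the correct endpoints.
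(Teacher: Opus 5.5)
Your treatment of the identified set and of the four AVaR formulas coincides with the paper's Part 1: Fr\'echet--Hoeffding bounds are fed into the identified-set corollary, and then the weights of the upper/lower-bound example are used. Your $\rho$ versus $1-\rho$ bookkeeping is correct. The mixture argument for sharpness is an addition the paper does not make (it just reads the interval off the corollary), and it works. Since $C(\alpha\mid u)=\partial_u C(u,\alpha)$ is linear in $C$, $\tau_{C_\lambda}$ is affine in $\lambda$. Moreover, any coupling of the identified conditional marginals of $Y_i$ and $W_i$ is compatible with unconfoundedness and comparability.

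The gap is the binary-outcome claim, which is the proposition's last assertion. You stop at ``should reproduce'' and leave the comparison with Lemma~1(ii) of \cite{athey2025surrogate} undone. The missing idea is the one the paper imports from Theorem~4(ii) of \cite{athey2025surrogate}. Those bounds are stated as the surrogate-index ATE plus a surrogacy-bias term, so $\tau_C$ should be put in that form. The second line of \cref{Eq:WSI-ATE}, which you use, isolates $\rho(S_i,X_i)\mu_{C,1}$ and hides this structure.

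Abbreviate $\mu=\mu(S_i,X_i,O)$, $\rho=\rho(S_i,X_i)$ and $\Delta_C=\mu_{C,1}(S_i,X_i,O)-\mu_{C,0}(S_i,X_i,O)$. The relation $\mu=\rho\mu_{C,1}+(1-\rho)\mu_{C,0}$, which you list only as a consistency check, gives $\mu_{C,1}=\mu+(1-\rho)\Delta_C$ and $\mu_{C,0}=\mu-\rho\Delta_C$. Substituting these into the first line of \cref{Eq:WSI-ATE} and using \cref{lemma:IndeC} yields, for every copula $C$,
\[
\tau_C=\tau_\Pi+\mathbb{E}\left[\frac{\rho(1-\rho)}{\rho(X_i)\{1-\rho(X_i)\}}\,\Delta_C\mid P_i=E\right].
\]
The matching then reduces to a pointwise identity. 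Your binary formulas give $\Delta_{C_+}=\min\{\mu/\rho,(1-\mu)/(1-\rho)\}$, which is Athey et al.'s $\Delta_S^U$; this is exactly how the paper concludes. For the lower end they give $\Delta_{C_-}=-\min\{\mu/(1-\rho),(1-\mu)/\rho\}$.

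A smaller point concerns finiteness of $\tau_{C_\pm}$. It cannot be inherited from pointwise overlap and $\tau_\Pi<\infty$, since neither controls $\mathbb{E}[|Y_i|/\rho(X_i)\mid P_i=E]$. For every $C$, both $|\rho\,\mu_{C,1}|$ and $|(1-\rho)\mu_{C,0}|$ are bounded by $\mathbb{E}[|Y_i|\mid S_i,X_i,P_i=O]$. So a moment condition such as $\mathbb{E}[|Y_i|/\{\rho(X_i)(1-\rho(X_i))\}\mid P_i=E]<\infty$ is what is needed. The paper also takes $\tau_{C_\pm}\in\mathbb{R}$ for granted, so state this as an added assumption rather than something derived.
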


\cref{prop:WCB} implies that the ATE is partially identified regardless of the outcome type/range and for binary outcomes. The identified interval in \cref{prop:WCB} is the same as that in Lemma 1 (ii) in Section 5.2 of \cite{athey2025surrogate}.  
However, it differs from Lemma 1 (i) in Section 5.2 of \cite{athey2025surrogate} which states that ``If the outcome can take on values on the whole real line, then there is no value for the average treatment effect $\tau$ that can be ruled out.'' Their proof seems to have ignored the fact that  $F_Y(\cdot\mid s,x,E)$ is identified from $F_Y(\cdot\mid s,x,O)$ under \cref{assumption:Comparability}. In contrast, our proof makes use of the identified  $F_Y(\cdot\mid s,x,E)$ and 
the following expression for $\mu_{C_o,1}(S_i,X_i,O)$ in \cref{eq:SB}:
 \begin{align*}
\mu_{C_o, 1}(s,x,O)
=\frac{1}{\rho(s,x) }\int\int ywdC_o(F_Y(y|s,x, E),F_W(w|s,x, E)|s,x, E).
\end{align*}
Since both $F_Y(y|s,x, E)$ and $F_W(w|s,x, E)$ are point identified, Theorem 1 of \cite{cambanis_1976_inequalities} implies that $\mu_{C_o, 1}(s,x,O)$ is partially identified. 

\subsection{ Debiased Estimation}
\label{sec:estimation}
\cite{athey2025surrogate} and \cite{chen_ritzwoller_2023} develop debiased estimation of $\tau_{\Pi}$ for which WSIs reduce to the SI $\mu(S_i,X_i,O)=\mathbb{E}\left(Y_i\mid S_i,X_i,P_i=O\right)$. For the worst-case bounds, the WSIs are related to conditional AVaR of $Y_i$ instead of the conditional mean of $Y_i$ and also depend on the surrogacy score $\rho(S_i,X_i)$. As a result, it is more challenging to derive the orthogonal moment functions for the worst-case bounds $\tau_{C_+}$ and $\tau_{C_-}$. 

To proceed, we make use of the dual representations of the worst-case WSIs in terms of conditional means of the following newly defined functions:
\begin{align*}
 H_U(Y_i, s, \alpha) := s + \frac{1}{\alpha} \left(Y_i - s \right)_{+} \mbox{ and }
    H_L(Y_i, s, \alpha) := s - \frac{1}{\alpha} \left(Y_i - s \right)_{-}.
\end{align*}
Specifically, from the dual form of AVaR  (c.f., \cite{rockafellar2002ConditionalValueatriskGeneral} and \cite{acerbi2002es}), it follows that 
    \begin{align*}
 \mu_{C_-,0}(S_i,X_i,O)&=\mathbb{E}[H_U(Y_i, q_{C_-}(S_i,X_i,O), 1-\rho(S_i, X_i))\mid S_i,X_i,P_i=O],\\
 \mu_{C_-,1}(S_i,X_i,O)&=\mathbb{E}[H_L(Y_i, q_{C_-}(S_i,X_i,O), \rho(S_i, X_i))\mid S_i,X_i,P_i=O],\\
 \mu_{C_+,0}(S_i,X_i,O)&=\mathbb{E}[H_L(Y_i, q_{C_+}(S_i,X_i,O), 1-\rho(S_i, X_i))\mid S_i,X_i,P_i=O], \\
 \mu_{C_+,1}(S_i,X_i,O)&=\mathbb{E}[H_U(Y_i, q_{C_+}(S_i,X_i,O), \rho(S_i, X_i))\mid S_i,X_i,P_i=O],
\end{align*}
where 
\begin{align*}
q_{C_+}(S_i, X_i, O) &:= F_Y^{-1}(1-\rho(X_i, S_i)|S_i, X_i, O)\mbox{ and}\\
    q_{C_-}(S_i, X_i, O) &:= F_Y^{-1}(\rho(X_i, S_i)|S_i, X_i, O).
    \end{align*} 

Let $\varphi(x) := \mathbb{P}(P_i=E | X_i = x)$, $
    \varphi := \mathbb{P}(P_i = E)$, and  for $w = 0, 1$,
    \begin{align*}
        \bar{\mu}_{C_+, w} &:= \mathbb{E}[\mu_{C_{+}, w}(S_i, X_i, O)|W_i=w, X_i, P_i=E] \mbox{ and}\\
        \bar{\mu}_{C_-, w} & := \mathbb{E}[\mu_{C_{-}, w}(S_i, X_i, O)|W_i=w, X_i, P_i=E].
    \end{align*}
Finally, let $\eta$ denote the collection of all the nuisance functions, i.e., 
\begin{align*}
    \eta & = (\mu_{C_{-}, 0}, \mu_{C_{-}, 1}, \mu_{C_{+}, 0}, \mu_{C_{+}, 1}, 
            \bar{\mu}_{C_-, 0}, \bar{\mu}_{C_-, 1}, \bar{\mu}_{C_+, 0}, \bar{\mu}_{C_+, 1}, q_{C_+}, q_{C_-}, \rho(s, x), \rho(x), \varphi(x), \varphi(s, x), \varphi).
\end{align*}
Then $\tau_{C_+}$ satisfies the  moment condition: $ \mathbb{E}[m_{C_+}(Z_i, \tau_{C_+}, \eta)] = 0,$
where 
\begin{align*}
    & m_{C_+}(Z_i, \tau_{C_+}, \eta) \\
    & = \frac{\mathds{1}_{P_i = E}}{\varphi} \Bigg[\frac{W_i}{\rho(X_i)} (\mu_{C_+, 1}(S_i, X_i, O) - \bar{\mu}_{C_+, 1}(1, X_i)) - \frac{1-W_i}{1-\rho(X_i)} (\mu_{C_+, 0}(S_i, X_i, O) - \bar{\mu}_{C_+, 0}(0, X_i))  \Bigg] \\
    & \quad + \frac{\mathds{1}_{P_i = E}}{\varphi} (\bar{\mu}_{C_+, 1}(1, X_i) - \bar{\mu}_{C_+, 0}(0, X_i) - \tau_{C_+}) \\
    & \quad +
    \frac{\mathds{1}_{P_i = O}}{\varphi} \frac{\varphi(S_i, X_i)}{1-\varphi(S_i, X_i)} \Bigg[ \frac{\rho(S_i, X_i)}{\rho(X_i)} (H_U(Y_i, q_{C_+}(S_i, X_i, O), \rho(S_i, X_i)) -\mu_{C_+, 1}(S_i, X_i, O)) \\
    & \quad \quad \quad -
    \frac{1-\rho(S_i, X_i)}{1-\rho(X_i)} (H_L(Y_i, q_{C_+}(S_i, X_i, O), 1 - \rho(S_i, X_i)) - \mu_{C_+, 0}(S_i, X_i, O)) \Bigg] \\
    & \quad +
     \frac{\mathds{1}_{P_i = E}}{\varphi} \frac{1}{\rho(X_i)} \left[q_{C_+}(S_i, X_i, O) - \mu_{C_+, 1}(S_i, X_i, O)\right] \left(W_i - \rho(S_i, X_i)\right) \\
    & \quad +
    \frac{\mathds{1}_{P_i = E}}{\varphi} \frac{1}{1-\rho(X_i)} \left[q_{C_+}(S_i, X_i, O) - \mu_{C_+, 0}(S_i, X_i, O)\right] \left(W_i - \rho(S_i, X_i)\right).
\end{align*}
The first three terms in the orthogonal moment function $ m_{C_+}$ are analogous to those in Equation (4.4) of \cite{athey2025surrogate} and Theorem 3.1 of \cite{chen_ritzwoller_2023}, and the last two terms are new and correct for the effect of estimating $\rho(S_i, X_i)$ in the dual representations of the WSIs. 

Similarly,  $\tau_{C_-}$ satisfies: $ \mathbb{E}[m_{C_-}(Z_i, \tau_{C_-}, \eta)] = 0,$
where
\begin{align*}
    & m_{C_-}(Z_i, \tau_{C_-}, \eta) \\
    & = \frac{\mathds{1}_{P_i = E}}{\varphi} \Bigg[\frac{W_i}{\rho(X_i)} (\mu_{C_-, 1}(S_i, X_i, O) - \bar{\mu}_{C_-, 1}(1, X_i)) - \frac{1-W_i}{1-\rho(X_i)} (\mu_{C_-, 0}(S_i, X_i, O) - \bar{\mu}_{C_-, 0}(0, X_i))  \Bigg] \\
    & \quad + \frac{\mathds{1}_{P_i = E}}{\varphi} (\bar{\mu}_{C_-, 1}(1, X_i) - \bar{\mu}_{C_-, 0}(0, X_i) - \tau_{C_-}) \\
    & \quad +
    \frac{\mathds{1}_{P_i = O}}{\varphi} \frac{\varphi(S_i, X_i)}{1-\varphi(S_i, X_i)} \Bigg[ 
    \frac{\rho(S_i, X_i)}{\rho(X_i)} (H_L(Y_i, q_{C_-}(S_i, X_i, O), \rho(S_i, X_i)) - \mu_{C_-, 1}(S_i, X_i, O)) \\
    & \quad \quad \quad -
    \frac{1-\rho(S_i, X_i)}{1-\rho(X_i)} (H_U(Y_i, q_{C_-}(S_i, X_i, O), 1-\rho(S_i, X_i)) - \mu_{C_-, 0}(S_i, X_i, O)) \Bigg] \\
    & \quad +
     \frac{\mathds{1}_{P_i = E}}{\varphi} \frac{1}{\rho(X_i)} \left[q_{C_-}(S_i, X_i, O) - \mu_{C_-, 1}(S_i, X_i, O)\right] \left(W_i - \rho(S_i, X_i)\right) \\
    & \quad +
     \frac{\mathds{1}_{P_i = E}}{\varphi} \frac{1}{1-\rho(X_i)} \left[q_{C_-}(S_i, X_i, O) - \mu_{C_-, 0}(S_i, X_i, O)\right] \left(W_i - \rho(S_i, X_i)\right).
\end{align*}

Our debiased estimators $\widehat{\tau}_{C_-}$ and $\widehat{\tau}_{C_+}$ are defined as the solutions to $\frac{1}{n}\sum_{i=1}^n m_{C_-}(Z_i, \tau, \widehat{\eta})=0$ and $\frac{1}{n}\sum_{i=1}^n m_{C_+}(Z_i, \tau, \widehat{\eta})=0$, respectively, where $\widehat{\eta}$ is estimated by cross-fitting over $K$ even folds, which ensures  that the nuisance estimators remain independent of the samples to which they are applied \citep{chernozhukov2018doubleML}. 

Our orthogonal moment conditions in debiased estimation allow for greater flexibility in estimating nuisance parameters, enabling the use of both parametric regressions and machine learning methods. For example, the estimation of $\rho(x)$, $\rho(s, x)$, $\varphi(x)$, and $\varphi(s, x)$ readily accommodates methods such as Lasso, random forests, gradient boosting, and neural networks. In contrast, estimating the worst-case WSIs ($\mu_{C_{-}, 0}$, $\mu_{C_{-}, 1}$, $\mu_{C_{+}, 0}$, and $\mu_{C_{+}, 1}$) relies on previously obtained cross-fitted estimates of $\rho(X_i, S_i)$ and is more involved due to the computation of conditional AVaRs. We follow the two-stage, locally robust approach of \cite{olma2021} with our chosen model specifications: the first stage requires estimating a conditional quantile, which we do nonparametrically using quantile forests; the second stage involves fitting a linear sieve model to a generated outcome variable whose derivative with respect to the conditional quantile, evaluated at the truth, is zero. With estimates of the worst-case WSIs as pseudo-outcomes, $\bar{\mu}_{C-, 0}$, $\bar{\mu}_{C-, 1}$, $\bar{\mu}_{C+, 0}$, and $\bar{\mu}_{C+, 1}$ can then be estimated either parametrically or nonparametrically. Finally, $q_{C_+}$ and $q_{C_-}$ are again estimated using quantile forests with cross-fitted $\widehat{\rho}(X_i,S_i)$. More details can be found in Algorithm \ref{alg:debiased} of Appendix \ref{sec:numerical details} in the supplementary materials.

\subsection{Asymptotic Theory}

We adopt conditions similar to those of \cite{chen_ritzwoller_2023}, \cite{dorn2024dvds}, and \cite{semenova2025generalizedleebounds} to establish the asymptotic joint normality of $\widehat{\tau}_{C_-}$ and $\widehat{\tau}_{C_+}$.

\begin{assumption}[Regularity Conditions]
\label{assumption:regularity-wc}
(i) For a continuous outcome, we assume that it's distribution is absolutely continuous with bounded support and conditional density function $f(y|s, x, O)$ that is continuous with respect to y for each $s$ and $x$, and is uniformly bounded above and below by positive absolute constants; For a binary outcome, we assume that the random variable $\left[\mu(S_i, X_i, O) - \rho(S_i, X_i)\right]$ has bounded density.

(ii) There exists some absolute constant $t$ such that for each $w \in \{0, 1\}$ and $b \in \{C_-, C_+\}$, 
either (1) or (2-1, 2-2) holds with probability one:
\begin{gather*}
    \text{(1) } \mathbb{E}[(\bar{\mu}_{b, 1}(1, X_i) - \bar{\mu}_{b, 0}(0, X_i) - \tau_{b})^2|X_i, P_i = E] > t \text{ or }  \\
    \text{(2-1) } \mathrm{Var}\left(\frac{1}{\rho(X_i)}(Y_i - q_{C_+}(S_i, X_i, O))_{+} + \frac{1}{1-\rho(X_i)}(Y_i - q_{C_+}(S_i, X_i, O))_{-}  | S_i, X_i, O)\right) > t \text{ and } \\
     \text{(2-2) } \mathrm{Var}\left(\frac{1}{1-\rho(X_i)}(Y_i - q_{C_-}(S_i, X_i, O))_{+} + \frac{1}{\rho(X_i)}(Y_i - q_{C_-}(S_i, X_i, O)_{-}  | S_i, X_i, O)\right) > t.
    \end{gather*}
\end{assumption}
The assumption of bounded support of $Y$ and boundedness of the conditional density function in  \cref{assumption:regularity-wc} (i) are similar to \cite{semenova2025generalizedleebounds}. Assumption (1) or (2-1, 2-2) in \cref{assumption:regularity-wc} (ii) ensures that the asymptotic variances of $\tau_{C_+}$ and $\tau_{C_-}$ are positive.

\begin{assumption}[Realization Set] \label{assumption:realization-set-wc}
Let $q>2$ be a positive constant and $\epsilon$ be a constant such that $0 < \epsilon < 1/2$, $\omega = (\omega_{\rho}, \omega_{\varphi})$ with $\omega_{\rho} =(\rho(S_i, X_i), \rho(X_i)), \mbox{ }
\omega_{\varphi}=(\varphi(S_i, X_i), \varphi )$, 
and 
\begin{align*}
    \kappa & = (\mu_{C_+, 1}(S_i, X_i, O), \mu_{C_+, 0}(S_i, X_i, O), \mu_{C_-, 1}(S_i, X_i, O), \mu_{C_-, 0}(S_i, X_i, O), 
     \\
     & \quad \quad \bar{\mu}_{C_+, 1}(1, X_i), \bar{\mu}_{C_+, 0}(0, X_i), \bar{\mu}_{C_-, 1}(1, X_i), \bar{\mu}_{C_-, 0}(0, X_i),  q_{C_+}(S_i, X_i, O), q_{C_-}(S_i, X_i, O) ).
\end{align*}
For all probability measures $P$ satisfying Assumptions \ref{assumption: Random Sample}, \ref{assumption: Unconfounded}, and \ref{assumption:Comparability}, 
the following condition holds; for some sequences $\Delta_n \to 0$ and $\delta_n \to 0$ with $\delta_n \ge n^{-1/2}$ with probability $1 - \Delta_n$, the estimator of nuisance parameter belongs to the realization set $R_n$ which contains $\widetilde{\eta}$ such that 
\begin{gather}
    \| \widetilde{\eta} - \eta \|_{P, q} \le C, \mbox{ }
    \mathbb{P}(\epsilon \le \widetilde{\rho}(S_i, X_i) \le 1-\epsilon) = 1, \notag \\
    \mathbb{P}(\epsilon \le \widetilde{\varphi}(S_i, X_i) \le 1-\epsilon) = 1, \mbox{ }
    \|  \widetilde{\eta} - \eta \|_{P, 2} \le \delta_n, \notag \\
    \| \widetilde{\omega} - \omega \|_{P, 2} \times \| \widetilde{\kappa} - \kappa \|_{P, 2} \le \delta_n n^{-1/2}, \notag \\
    \| \widetilde{\omega}_{\rho} - \omega_{\rho} \|_{P, 2} \times \| \widetilde{\omega}_{\varphi} - \omega_{\varphi} \|_{P, 2} \le \delta_n n^{-1/2}, \label{eq:rate-wc-cross-term-rho-varphi}    
\end{gather}
and  for continuous outcomes, 
    \begin{align}
         \| \widetilde{q}_{C_+} - q_{C_+} \|_{P, 2}^2 \le \delta_n n^{-1/2} \label{eq:rate-quantile-L2-sq-wc} ;
    \end{align}
for binary outcomes, 
    \begin{align}
         \| \widetilde{\mu}(S_i, X_i, O) - \mu(S_i, X_i, O) \|_{\infty} \le \delta_n, 
\| \widetilde{\mu}(S_i, X_i, O) - \mu(S_i, X_i, 0) \|_{\infty}^2 \le \delta_n n^{-1/2}. \notag    
    \end{align}

\end{assumption}
Most of the conditions in \cref{assumption:realization-set-wc} are similar to those in \cite{chen_ritzwoller_2023}. Since our moment function has additional correction terms for $\rho$, we impose additional conditions on the rate of the cross-product in \cref{eq:rate-wc-cross-term-rho-varphi}. 
Condition \eqref{eq:rate-quantile-L2-sq-wc} and conditions for the binary case are similar to \cite{dorn2024dvds}.

\begin{theorem} \label{thm:CLT-for-wc}
Suppose Assumptions \ref{assumption: Random Sample}, \ref{assumption: Unconfounded}, \ref{assumption:Comparability}, \ref{assumption:regularity-wc}, and  \ref{assumption:realization-set-wc} hold. Then 
    \begin{align*}
       \sqrt{n}
       \begin{pmatrix}
           \widehat{\tau}_{C_+} - \tau_{C_+} \\
           \widehat{\tau}_{C_-} - \tau_{C_-} 
       \end{pmatrix}
       \to 
       N
       \left(\begin{pmatrix}
           0 \\ 0 
       \end{pmatrix}, 
       \begin{pmatrix}
           \sigma_{C_+}^2 & \rho \sigma_{C_+} \sigma_{C_-} \\
           \rho \sigma_{C_+} \sigma_{C_-} & \sigma_{C_-}^2 
       \end{pmatrix}
       \right),
    \end{align*}
    where $\sigma_{C_+}^2  = \mathbb{E}[m(Z_i, \tau_{C_+}, \eta_{0})^2] > 0$, $\sigma_{C_-}^2 = \mathbb{E}[m(Z_i, \tau_{C_-}, \eta_{0})^2] > 0$, and $-1 \le \rho \le 1$.
\end{theorem}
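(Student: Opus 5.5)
The plan is to verify the high-level conditions of the DML framework of \citet{chernozhukov2018doubleML} (their Theorem 3.1 and Assumptions 3.1--3.2) for the stacked moment vector $m(Z_i,\tau,\eta)=(m_{C_+}(Z_i,\tau_+,\eta),m_{C_-}(Z_i,\tau_-,\eta))'$, which is linear in $\tau=(\tau_+,\tau_-)$ with Jacobian $-\varphi^{-1}\mathds{1}_{P_i=E}I_2$ (whose expectation is $-I_2$). Joint normality then follows from the Cram\'er--Wold device applied to the linear expansion
\[
\sqrt{n}(\widehat\tau-\tau)=\frac{1}{\sqrt n}\sum_{i=1}^n m(Z_i,\tau,\eta)+o_P(1).
\]
The covariance entries are $\mathbb{E}[m_{C_\pm}^2]$ and $\mathbb{E}[m_{C_+}m_{C_-}]$, and $\rho$ is their correlation, so $|\rho|\le 1$ by Cauchy--Schwarz.

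The argument proceeds in steps. \emph{Identification:} $\mathbb{E}[m_{C_\pm}(Z_i,\tau_{C_\pm},\eta)]=0$. The first two terms reproduce \cref{prop:WCB}. The third term has mean zero by the dual AVaR representation and comparability (\cref{assumption:Comparability}), after reweighting from $O$ to $E$ by $\varphi(s,x)/(1-\varphi(s,x))$. The last two terms have mean zero since $\mathbb{E}[W_i-\rho(S_i,X_i)\mid S_i,X_i,E]=0$.

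\emph{Neyman orthogonality:} compute the Gateaux derivative in each nuisance direction, in the following order.
\begin{itemize}
\item Directions $\bar\mu$: these cancel between the first and second terms, using $\mathbb{E}[W_i/\rho(X_i)\mid X_i,E]=1$.
\item Directions $\mu_{b,w}$: these cancel between the first term and the reweighted third term (as in \citet{chen_ritzwoller_2023}). The $\mu$ appearing in the last two terms is multiplied by a mean-zero factor.
\item Direction $q$: this vanishes because $\partial_s\mathbb{E}[H_U(Y,s,\alpha)\mid\cdot]=0$ at the $(1-\alpha)$-quantile, by first-order optimality of the Rockafellar--Uryasev program (continuous case).
\item Directions $\varphi,\rho(x),\varphi(s,x)$: these enter multiplicatively against conditionally mean-zero residuals.
\item Direction $\rho(s,x)$, the new one: differentiating $H_U(Y,q,\rho)$ in $\alpha$ gives $-\alpha^{-2}(Y-q)_+$, whose conditional expectation equals $-\alpha^{-1}(\mu_{C_+,1}-q)$. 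With the factor $\rho(s,x)/\rho(x)$ and the product rule this yields $\rho(x)^{-1}(q-\mu_{C_+,1})$ from the $O$-term. This is exactly offset by differentiating $-\rho(s,x)$ in the fourth term, whose derivative is $-\rho(x)^{-1}[q-\mu_{C_+,1}]$. The $w=0$ part is handled analogously via the fifth term.
\end{itemize}

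\emph{Rate conditions:} bound the remainder
\[
\sup_{r\in(0,1)}\Bigl|\partial_r^2\,\mathbb{E}\bigl[m(Z,\tau,\eta+r(\widetilde\eta-\eta))\bigr]\Bigr|
\]
over the realization set $R_n$ of \cref{assumption:realization-set-wc}. The smooth parts are bilinear, and they are bounded by the products $\|\widetilde\omega-\omega\|\,\|\widetilde\kappa-\kappa\|$ and $\|\widetilde\omega_\rho-\omega_\rho\|\,\|\widetilde\omega_\varphi-\omega_\varphi\|$, each at most $\delta_n n^{-1/2}$. The quantile part is nonsmooth. Its second-order term is $\mathbb{E}\bigl[\int_q^{\widetilde q}(F(t)-F(q))\,dt\bigr]$, bounded by $\bar f\,\|\widetilde q-q\|_{P,2}^2\le\delta_n n^{-1/2}$ using \cref{assumption:regularity-wc}(i). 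One must also control the interaction of $\widetilde q$ with $\widetilde\rho$ through $1/\widetilde\rho$, which stays bounded because $\widetilde\rho\in[\epsilon,1-\epsilon]$.

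The $L^2$-continuity condition $\|m(\widetilde\eta)-m(\eta)\|_{P,2}\lesssim\delta_n^{c}$ follows from Lipschitzness of $(y-s)_+$ in $s$, bounded support, and overlap. Moment bounds follow from $\|\widetilde\eta-\eta\|_{P,q}\le C$. Positivity of $\sigma^2_{C_\pm}$ comes from \cref{assumption:regularity-wc}(ii): either the $E$-sample term or the $O$-sample conditional variance term is bounded away from zero, and these terms are orthogonal since the samples are disjoint.

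For binary outcomes, $q$ takes values in $\{0,1\}$ and the quantile is a step function of $\mu-\rho$. I will follow \citet{dorn2024dvds} and \citet{semenova2025generalizedleebounds}: the remainder is nonzero only on the event where $\widetilde\mu-\widetilde\rho$ and $\mu-\rho$ have different signs. By the bounded density of $\mu-\rho$, this event has probability $O(\|\widetilde\mu-\mu\|_\infty)$, and the bias it contributes is of order $\|\widetilde\mu-\mu\|_\infty^2\le\delta_n n^{-1/2}$.

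I expect the main obstacle to be the second-order remainder in the joint direction $(q,\rho(s,x))$, together with the binary case. In these, the nonsmooth map $s\mapsto(y-s)_\pm$ is evaluated at a quantile level that itself depends on the estimated $\rho$. My first attempt will be to write the remainder as a quantile-misspecification integral plus a cross term in $(\widetilde\rho-\rho)(\widetilde q-q)$. The latter would be bounded by the product-rate condition, if necessary after absorbing $q$ into $\kappa$.
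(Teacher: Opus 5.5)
Your plan is sound, but for continuous outcomes it is not the route the paper takes. You verify Neyman orthogonality direction by direction and then bound $\sup_r|\partial_r^2\mathbb{E}[m(Z_i,\tau,\eta+r(\widetilde\eta-\eta))]|$, i.e.\ Assumptions 3.1--3.2 of \citet{chernozhukov2018doubleML}. The paper deliberately bypasses both steps. \cref{lemma:DML-asym-normality} is a version of their Theorem 3.1 in which orthogonality and the second-derivative rate are replaced by a single bias condition, $\sup_{\widetilde\eta\in R_n}|\mathbb{E}[m(Z_i,\tau,\widetilde\eta)]-\mathbb{E}[m(Z_i,\tau,\eta)]|\le\delta_n n^{-1/2}$. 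This is verified through the exact identity in \cref{lemma:tech-remainer-wc}, which writes the difference as $\mathcal{J}_1+\cdots+\mathcal{J}_8$. Each $\mathcal{J}_j$ is either a product of nuisance errors or one of the quantile terms $\mathcal{J}_3,\mathcal{J}_6$; these are handled by Lemma 2 of \citet{dorn2024dvds} (continuous) or by the margin argument (binary). Your route explains the design of the moment: the two $(W_i-\rho(S_i,X_i))$ terms exist precisely to cancel the $\rho(s,x)$-derivative of the dual terms, as you compute. The paper's route needs no differentiability along the path, which matters because of the kink of $(y-s)_+$ and the density jump at the edge of the bounded support, and it covers continuous and binary outcomes with one lemma. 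Your identification, moment bounds, $L^2$ continuity, and positivity via the disjoint $E$/$O$ split are as in the paper.

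If you follow your route, three points need care.

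First, for binary outcomes orthogonality genuinely fails in the $q$ direction. After reweighting, the Gateaux derivative has integrand proportional to $(\rho(S_i,X_i)-\mu(S_i,X_i,O))(\widetilde q-q)$, which is not zero. Your margin bound must therefore be plugged into a bias-condition version of Theorem 3.1 such as \cref{lemma:DML-asym-normality}, and in that case you are on the paper's route.

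Second, your claim that the smooth second-order terms are controlled by $\|\widetilde\omega-\omega\|_{P,2}\|\widetilde\kappa-\kappa\|_{P,2}$ and $\|\widetilde\omega_\rho-\omega_\rho\|_{P,2}\|\widetilde\omega_\varphi-\omega_\varphi\|_{P,2}$ holds only after combining the $O$-sample term with the two $(W_i-\rho(S_i,X_i))$ terms. Taken term by term, the mixed $(\rho(x),\rho(s,x))$ derivative produces $\|\widetilde\rho(X_i)-\rho(X_i)\|_{P,2}\|\widetilde\rho(S_i,X_i)-\rho(S_i,X_i)\|_{P,2}$, which \cref{assumption:realization-set-wc} does not bound. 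It cancels because both pieces carry the same $1/\widetilde\varphi$ and $1/\widetilde\rho(X_i)$ weights, leaving only the $\varphi(s,x)$-odds mismatch visible in $\mathcal{J}_5,\mathcal{J}_8$. The $(q,\rho(s,x))$ cross term you single out as the main obstacle cancels the same way, and $q$ is already a component of $\kappa$.

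Third, your binary margin event involves $\widetilde\rho$, so its probability is $O(\|\widetilde\mu-\mu\|_\infty+\|\widetilde\rho-\rho\|_\infty)$, and no sup-norm rate on $\widetilde\rho$ is assumed. The paper's version avoids this by comparing $\widetilde\mu$ with the true $\rho$.
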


Our proof strategy builds on \cite{chen_ritzwoller_2023} and \cite{dorn2024dvds}. Departing from \cite{chen_ritzwoller_2023} which checks the orthogonality condition in Assumption 3.1 and the statistical rate for second-order derivative of moment condition $r_N'$ in Assumption 3.2 (c) in \cite{chernozhukov2018doubleML}, similar to \cite{dorn2024dvds},
 we directly verified  the following high-level condition in the proof of Theorem 3.1 of \cite{chernozhukov2018doubleML}:
    \begin{align*}
        \sqrt{n} \| \mathbb{E}[m(W_i, \tau_{C_+}, \hat{\eta})|\mathcal{I}_{-k}] -  \mathbb{E}[m(W_i, \tau_{C_+}, \eta)]\| = o_p(1).
    \end{align*}
Note that this high-level condition is satisfied when 
both the orthogonality condition in Assumption 3.1 in \cite{chernozhukov2018doubleML} and the statistical rate for second-order derivative of moment condition $r_N'$ in Assumption 3.2 (c) in \cite{chernozhukov2018doubleML} hold. 

Let $V$ denote the asymptotic variance-covariance matrix in Theorem \ref{thm:CLT-for-wc}. That is,
\begin{align*}
    V = \mathbb{E}[(m_{C_+}(Z_i, \tau_{C_+}, \eta_0), m_{C_-}(Z_i, \tau_{C_-}, \eta_0)) [(m_{C_+}(Z_i, \tau_{C_+}, \eta_0), m_{C_-}(Z_i, \tau_{C_-}, \eta_0))]^{\top}].
\end{align*}
We provide a consistent estimator of $V$ by following Theorem 3.2 in \cite{chernozhukov2018doubleML}. 
\begin{theorem}
     Suppose Assumptions \ref{assumption: Random Sample}, \ref{assumption: Unconfounded}, \ref{assumption:Comparability}, \ref{assumption:regularity-wc}, and  \ref{assumption:realization-set-wc} hold. In addition, we assume that $\delta_n \ge n^{-[((1-2/q) \wedge 1/2]}$ for all $n \ge 1$. 
     Then, $V$ can be consistently estimated by
    \begin{align*}
        \frac{1}{K} \sum_{k=1}^K \mathbb{E}_{r, k} [(m_{C_+}(Z_i, \hat{\tau}_{C_+}, \hat{\eta}_{k}), m_{C_-}(Z_i, \hat{\tau}_{C_-}, \hat{\eta}_{k})) (m_{C_+}(Z_i, \hat{\tau}_{C_+}, \hat{\eta}_{k}), m_{C_-}(Z_i, \hat{\tau}_{C_-}, \hat{\eta}_{k}))^{\top}],
    \end{align*}
    where $K$ is the number of folds in K-fold cross-fitting, and $r := [n/K]$ is the number of observations in each fold, and $E_{r, k}$ is operator for sample expectation from empirical data in $k$-the fold, i.e., $E_{r, k}[g(X_i)] = r^{-1} \sum_{i \in \mathcal{F}_k} g(X_i)$ where $\mathcal{F}_k$ is the $k$-th fold of indices $\{1, \dotsc, n\}$.
\end{theorem}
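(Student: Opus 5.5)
We plan to follow the argument of Theorem 3.2 in \cite{chernozhukov2018doubleML}, applied jointly to the stacked moment vector $m(Z_i,\tau,\eta):=(m_{C_+}(Z_i,\tau_{+},\eta),m_{C_-}(Z_i,\tau_{-},\eta))^\top$. Both moment functions are linear in $\tau$ with Jacobian $-\mathds{1}_{P_i=E}/\varphi$, whose expectation is $-1$. Because of this, $V$ is itself the asymptotic covariance from \cref{thm:CLT-for-wc}, and no Jacobian needs to be estimated. Writing $\widehat V-V$ as an average over folds, it suffices to show for each fixed $k$ that
\[
\mathbb{E}_{r,k}\big[m(Z_i,\hat\tau,\hat\eta_k)m(Z_i,\hat\tau,\hat\eta_k)^\top\big]-\mathbb{E}\big[m(Z_i,\tau,\eta_0)m(Z_i,\tau,\eta_0)^\top\big]=o_p(1).
\]

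We bound this entrywise by the triangle inequality, splitting it into three pieces. The first is $I_1=(\mathbb{E}_{r,k}-\mathbb{E})[m m^\top](\tau,\eta_0)$. The second is $I_2$, the difference between the fold averages at $(\hat\tau,\hat\eta_k)$ and at $(\tau,\eta_0)$.

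For $I_1$, we apply a law of large numbers using the $L^q$ bound with $q>2$. This bound follows from the bounded support of $Y$ in \cref{assumption:regularity-wc}(i), together with the overlap bounds $\epsilon\le\rho,\varphi\le1-\epsilon$. Consequently the second moments of $m m^\top$ are controlled (or $L^{q/2}$ moments, with a von Bahr–Esseen bound), and $I_1=O_p(n^{-[(1-2/q)\wedge1/2]})$.

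For $I_2$, we use $|a a^\top-bb^\top|\le |a-b|^2+2|b||a-b|$ and Cauchy–Schwarz. This reduces the problem to showing $\mathbb{E}_{r,k}|m(Z_i,\hat\tau,\hat\eta_k)-m(Z_i,\tau,\eta_0)|^2=o_p(1)$, since $\mathbb{E}_{r,k}|m(Z_i,\tau,\eta_0)|^2=O_p(1)$. The $\tau$ part contributes $|\hat\tau-\tau|^2\,\mathbb{E}_{r,k}[\mathds{1}_{P_i=E}/\varphi^2]=O_p(n^{-1})$ by \cref{thm:CLT-for-wc}. For the $\eta$ part, we condition on the auxiliary sample $\mathcal{I}_{-k}$. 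On the event $\hat\eta_k\in R_n$, which has probability at least $1-\Delta_n$, Markov's inequality gives
\[
\mathbb{E}_{r,k}|m(\cdot,\tau,\hat\eta_k)-m(\cdot,\tau,\eta_0)|^2=O_p\big(\sup_{\tilde\eta\in R_n}\mathbb{E}|m(Z_i,\tau,\tilde\eta)-m(Z_i,\tau,\eta_0)|^2\big).
\]

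The main obstacle is bounding $\sup_{\tilde\eta\in R_n}\|m(\cdot,\tau,\tilde\eta)-m(\cdot,\tau,\eta_0)\|_{P,2}$ by a multiple of $\delta_n$. Several ingredients make this work.

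First, we exploit the products and ratios of nuisances. The inverses $1/\rho$, $1/(1-\rho)$ and $\varphi/(1-\varphi)$ are Lipschitz on $[\epsilon,1-\epsilon]$, so each product term is controlled by the sum of the $L^2$ errors of its factors times $L^\infty$ or $L^q$ bounds of the others. Where only $L^q$ bounds are available, we use Hölder's inequality with the $\|\tilde\eta-\eta\|_{P,q}\le C$ bound, which is where the rate condition on $\delta_n$ enters.

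Second, the nonsmooth pieces $H_U(Y,\tilde q,\tilde\rho)$ and $H_L(Y,\tilde q,\tilde\rho)$ are Lipschitz in the location argument $s$ with constant $1+1/\alpha\le1+1/\epsilon$. They are also Lipschitz in $\alpha$ on $[\epsilon,1-\epsilon]$, because $Y$ and the quantile estimates are bounded. For binary outcomes, the bounded-support condition again gives bounded $H$, so the same Lipschitz bounds apply.

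Combining these, the supremum is at most $C\delta_n\to0$. Hence $I_2=o_p(1)$, and therefore $\widehat V\to_p V$.
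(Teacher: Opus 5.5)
Your proposal is correct and follows the paper's route. The paper gives no proof beyond invoking Theorem 3.2 of \cite{chernozhukov2018doubleML}, and that proof is your decomposition: an LLN term at $(\tau,\eta)$, plus a perturbation term handled by $\|aa^\top-bb^\top\|\le\|a-b\|^2+2\|b\|\|a-b\|$, conditioning on $\mathcal{I}_{-k}$, and the bounds \eqref{eq:Lq-m_a}, \eqref{eq:Lq-m}, \eqref{eq:rate-diff-L2}. Since \cref{lemma:DML-verification-wc} already verifies these bounds, your ``main obstacle'' can be cited from its Part 3-2 rather than redone.

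Small corrections:
\begin{itemize}
\item The $\tau$-part carries $\hat\varphi$, not $\varphi$.
\item You need neither a H\"older step nor boundedness of $\tilde q$. Each term of $m$ is a weight bounded by overlap times an expression that is Lipschitz, with bounded coefficients, in $(Y,\tilde\mu,\tilde{\bar\mu},\tilde q)$; for instance $\tilde\rho H_U(Y,\tilde q,\tilde\rho)=\tilde\rho\tilde q+(Y-\tilde q)_+$.
\item The condition $\delta_n\ge n^{-[(1-2/q)\wedge1/2]}$ only absorbs the LLN rate of $I_1$ into $\delta_n$. It is not needed for consistency alone.
\end{itemize}
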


Wald inference on each bound is straightfoward and inference on the true ATE can be carried out by applying the misspecification-adaptive confidence interval in \cite{stoye2020SimpleShortNeverEmpty} which allows for the covariance matrix to be degenerate.

\section{Debiased Estimation and Inference---General Copula}

\label{sec:est-copula}

The debiased estimators of $\tau_{C_+}$ and $\tau_{C_-}$ constructed in \cref{sec:estimation} rely critically on the dual representations of the WSIs (associated with $C_+$ and $C_-$) obtained from the existing dual representation of AVaR. 

For a general copula $C_o$, we establish dual representation for $r$ introduced in \cref{remark:DRM} when $\sigma$ is of bounded variation in \cref{lemma:dual-SpectralRiskMeasure} below. Our proof builds on the proof of 
\cite{pichler2015} and Section 2.4.2. in \cite{pflug2007ModelingMeasuringManagingRIsk}  for non-decreasing function $\sigma(\cdot)$ and the dual representation for AVaR. 

\begin{lemma}
\label{lemma:dual-SpectralRiskMeasure}
Assume that $Y$ is bounded and $\sigma(u)$ is of bounded variation on [0, 1]. Then,
\begin{align}
        r 
        & = \sigma(0) \mathbb{E}[Y] + \int_0^1 \left((1-u) F_Y^{-1}(u) + \mathbb{E}[[Y-F_Y^{-1}(u)]_+]\right) d \sigma(u) \label{eq:dual-SpectralRiskMeasure-sig0-ver2} \\
        & = \sigma(1) \mathbb{E}[Y] - \int_0^1 \left(u F_Y^{-1}(u) - \mathbb{E}[[Y-F_Y^{-1}(u)]_-]\right) d \sigma(u) \label{eq:dual-SpectralRiskMeasure-sig1-ver2}.
    \end{align}
\end{lemma}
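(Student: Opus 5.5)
The plan is to reduce the spectral integral $r=\int_0^1 F_Y^{-1}(u)\sigma(u)\,du$ to an integral against the signed measure $d\sigma$ via integration by parts (Fubini), and then to identify the resulting inner quantity using the AVaR dual representation already invoked in \cref{sec:estimation}. Since $\sigma$ is of bounded variation on $[0,1]$, it defines a finite signed Lebesgue--Stieltjes measure $d\sigma$, and for every $u$ we may write $\sigma(u)=\sigma(0)+\int_{(0,u]}d\sigma(t)$. At a jump point $t=u$ we will take the right-continuous version of $\sigma$. This changes $\sigma$ only on a countable set, so $r$ is unaffected.

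First I will substitute this representation into $r$. Because $Y$ is bounded, $F_Y^{-1}$ is bounded on $(0,1)$, so Fubini applies to the finite signed measure. This gives
\begin{align*}
r=\sigma(0)\int_0^1F_Y^{-1}(u)du+\int_{(0,1]}\Big(\int_t^1F_Y^{-1}(u)du\Big)d\sigma(t).
\end{align*}
The first term equals $\sigma(0)\mathbb{E}[Y]$, using $\int_0^1F_Y^{-1}(u)du=\mathbb{E}[Y]$.

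For the inner integral I will use the Rockafellar--Uryasev/Acerbi identity, at the minimizer $s=F_Y^{-1}(t)$:
\begin{align*}
\int_t^1F_Y^{-1}(u)du=(1-t)F_Y^{-1}(t)+\mathbb{E}[(Y-F_Y^{-1}(t))_+].
\end{align*}
I will verify it directly. Write $\int_t^1F_Y^{-1}(u)du=\int_t^1(F_Y^{-1}(u)-F_Y^{-1}(t))du+(1-t)F_Y^{-1}(t)$, and observe that $\int_t^1(F_Y^{-1}(u)-q)_+du=\int_0^1(F_Y^{-1}(u)-q)_+du=\mathbb{E}[(Y-q)_+]$ for $q=F_Y^{-1}(t)$. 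The middle equality holds because $F_Y^{-1}(u)\le q$ for $u\le t$ and $F_Y^{-1}(u)\ge q$ for $u\ge t$. Relabeling $t$ as $u$ yields \eqref{eq:dual-SpectralRiskMeasure-sig0-ver2}.

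For \eqref{eq:dual-SpectralRiskMeasure-sig1-ver2} I will run the same argument from the other end, using $\sigma(u)=\sigma(1)-\int_{(u,1]}d\sigma(t)$. This gives $r=\sigma(1)\mathbb{E}[Y]-\int\big(\int_0^tF_Y^{-1}(u)du\big)d\sigma(t)$. The lower-tail identity $\int_0^tF_Y^{-1}(u)du=tF_Y^{-1}(t)-\mathbb{E}[(Y-F_Y^{-1}(t))_-]$ follows by the same monotonicity argument with $(\cdot)_-=\max(-\cdot,0)$. Alternatively, it follows from the first identity and $\int_0^t=\mathbb{E}[Y]-\int_t^1$, using $(y-q)_+-(y-q)_-=y-q$.

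I expect the main obstacle to be bookkeeping at atoms of $d\sigma$ and at the endpoints, namely whether the integrals run over $[0,1]$, $(0,1]$ or $[0,1)$, and whether $F_Y^{-1}$ is left- or right-continuous at a jump of $\sigma$. I will handle this by noting that the inner expressions $u\mapsto\int_u^1F_Y^{-1}$ and $u\mapsto\int_0^uF_Y^{-1}$ are continuous in $u$. Any choice of version of the quantile inside them gives the same value, since the identities hold with any $q\in[F_Y^{-1}(u-),F_Y^{-1}(u+)]$. Therefore the integrals against $d\sigma$ are unambiguous, and an atom at $0$ or $1$ contributes consistently with the $\sigma(0)$ or $\sigma(1)$ boundary terms.
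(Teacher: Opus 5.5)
Your proposal is correct and follows essentially the paper's route. The paper first obtains the $\sigma(1)$-form by Stieltjes integration by parts of the BV function $\sigma$ against the continuous $u\mapsto\int_0^u F_Y^{-1}(s)\,ds$, then gets the $\sigma(0)$-form from $\int_0^u=\int_0^1-\int_u^1$, and finally invokes the AVaR dual form; you prove the integration by parts directly via Fubini and verify the AVaR identity yourself by the quantile-monotonicity argument. The only care needed is the endpoint issue you flagged: to keep the original $\sigma(0)$ in the boundary term, write $\sigma(u)=\sigma(0)+\int_{[0,u]}d\sigma$ for a.e.\ $u$ and integrate over $[0,1]$ including any atom at $0$ (equivalently, pair $\sigma(0+)$ with $(0,1]$), which is the convention built into the Riemann--Stieltjes integration by parts the paper cites.
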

 \cref{lemma:dual-SpectralRiskMeasure} extends the dual representation for AVaR to $r$ with a general weight function $\sigma$. However, in contrast to AVaR for which     $F_Y^{-1}(u) \in \argmin_G \{(1-u) G(u) + \mathbb{E}[[Y-G(u)]_+\}$, $F_Y^{-1}$ may not be an argmin of the following minimization problem:
    \begin{align}
        \min_{G}\int_0^1 \left((1-u) G(u) + \mathbb{E}[[Y-G(u)]_+]\right) d \sigma(u). \label{eq:min-dual-copula}
    \end{align}
This is because the sign of the second-order derivative may not be positive in the minimization problem \eqref{eq:min-dual-copula}.

\subsection{Orthogonal Moment Function and Debiased Estimator}

We construct a debiased estimator of $\tau_{C_o}$ from the dual representation in \cref{lemma:dual-SpectralRiskMeasure} under the following assumption.

\begin{assumption}\label{assumption:boundedvariation}
    The outcome variable is continuous variable, and it has a bounded support. In addition, $C_o(\alpha|\cdot)$ is of bounded variation.
\end{assumption}
\cref{lemma:dual-SpectralRiskMeasure} implies that under \cref{assumption:boundedvariation}, 
\begin{align} \label{eq:dual}
    \mu_{C_o, 1}(S_i, X_i, O)
    &= \mathbb{E}[h_{C_o, 1}(Y_i; F_Y^{-1}(\cdot | S_i, X_i, O), 1-\rho(S_i, X_i))|S_i, X_i, P_i=O] \mbox{ and} \nonumber\\
    \mu_{C_o, 0}(S_i, X_i, O)
    & = \mathbb{E}[h_{C_o, 0}(Y_i; F_Y^{-1}(\cdot | S_i, X_i, O), 1-\rho(S_i, X_i))|S_i, X_i, P_i=O],
\end{align}
where 
\begin{align*}
    & h_{C_o, 1}(Y_i; F_Y^{-1}(\cdot | S_i, X_i, O), 1-\rho(S_i, X_i)) \\
    & = 
    \sigma_{C_o, 1}(0; 1- \rho(S_i, X_i)) Y_i\\
     &+ \int_0^1 \left((1-u) F_Y^{-1}(u|S_i, X_i, O) + \left[Y_i - F_Y^{-1}(u|S_i, X_i, O)\right]_{+}\right) d \sigma_{C_o, 1}(u; 1 - \rho(S_i, X_i)) \mbox{ and}\\
    & h_{C_o, 0}(Y_i; F_Y^{-1}(\cdot | S_i, X_i, O), 1-\rho(S_i, X_i)) \\
    & = 
    \sigma_{C_o, 0}(1; 1 - \rho(S_i, X_i)) Y_i\\
     &- \int_0^1 \left(u F_Y^{-1}(u|S_i, X_i, O) - \left[Y - F_Y^{-1}(u|S_i, X_i, O)\right]_{-}\right) d \sigma_{C_o, 0}(u; 1 - \rho(S_i, X_i)).
\end{align*}

Similar to orthogonal moment functions in the worst case bounds in \Cref{sec:estimation},  we construct the following orthogonal moment function for a general $C_o$ using the dual representation of the WSIs in \cref{eq:dual}:
\begin{align}
    & m_{C_o}(Z_i, \tau, \eta) \\
         & = \frac{\mathds{1}_{P_i = E}}{\varphi} \Bigg[\frac{W_i}{\rho(X_i)} (\mu_{C_o, 1}(S_i, X_i, O) - \bar{\mu}_{C_o, 1}(1, X_i))  - \frac{1-W_i}{1-\rho(X_i)} (\mu_{C_o, 0}(S_i, X_i, O) - \bar{\mu}_{C_o, 0}(0, X_i))  \Bigg] \nonumber \\
    & \quad + \frac{\mathds{1}_{P_i = E}}{\varphi} (\bar{\mu}_{C_o, 1}(1, X_i) - \bar{\mu}_{C_o, 0}(0, X_i) - \tau) \nonumber \\
    & \quad +
     \frac{\mathds{1}_{P_i = O}}{\varphi} \frac{\varphi(S_i, X_i)}{1-\varphi(S_i, X_i)} \Bigg[\frac{\rho(S_i, X_i)}{\rho(X_i)} (h_{C_o, 1}(Y_i, F_{Y}^{-1}(\cdot|S_i, X_i, O), 1 - \rho(S_i, X_i)) -\mu_{C_o, 1}(S_i, X_i, O)) \nonumber \\
    & \quad \quad \quad -
    \frac{1-\rho(S_i, X_i)}{1-\rho(X_i)} (h_{C_o, 0}(Y_i, F_Y^{-1}(\cdot|S_i, X_i, O), 1 - \rho(S_i, X_i)) - \mu_{C_o, 0}(S_i, X_i, O)) \Bigg] \nonumber \\
     & \quad +
   \frac{\mathds{1}_{P_i = E}}{\varphi} \frac{1}{\rho(X_i)}  (d_{C_o}(S_i, X_i) - \mu_{C_o, 1}(S_i, X_i, O)) \left(W_i - \rho(S_i, X_i)\right) \nonumber \\
    & \quad +
    \frac{\mathds{1}_{P_i = E}}{\varphi} \frac{1}{1-\rho(X_i)}  (d_{C_o}(S_i, X_i) - \mu_{C_o, 0}(S_i, X_i, O))
    \left(W_i - \rho(S_i, X_i)\right),
    \label{eq:orthogonal moment}
\end{align}
where 
\begin{align*}
    \eta & = (F_Y^{-1}(\cdot|S_i, X_i, O), \rho(S_i, X_i), \rho(X_i), \varphi, \varphi(S_i, X_i), \\
    & \mu_{C_o, 1}(S_i, X_i, O),  \bar{\mu}_{C_o, 1}(1, X_i),  \mu_{C_o, 0}(S_i, X_i, O), \bar{\mu}_{C_o, 0}(0, X_i), d_{C_o}).
\end{align*}
in which
\begin{align*}
    \bar{\mu}_{C_o, w}(w, X_i)& = \mathbb{E}[\mu_{C_o, w}(S_i, X_i, O)|W_i = w, X_i, P_i=E] \mbox{ and}\\
    d_{C_o}(S_i, X_i)& = \int_0^1 F_Y^{-1}(u|S_i, X_i, O) c_o(1-\rho(S_i, X_i)|u) du.
\end{align*}

\subsection{Asymptotic Theory}

\begin{assumption}[Regularity Conditions] \label{assumption:regularity-copula}
(i) There exists some absolute constant $t$ such that with probability 1, either (1) or (2) holds: 
{\footnotesize
\begin{gather*}
    \textrm{(1) } \mathbb{E}[(\bar{\mu}_{C_o, 1}(1, X_i) - \bar{\mu}_{C_o, 0}(0, X_i) - \tau_{C_o})^2|X_i, P_i = E] > t; \\
    \textrm{(2) } \mathrm{Var}\left(\frac
    {\rho(S_i, X_i)}{\rho(X_i)} h_{C_o, 1}(Y_i, F_Y^{-1}, 1-\rho(S_i, X_i)) - \frac
    {1-\rho(S_i, X_i)}{1-\rho(X_i)}  h_{C_o, 0}(Y_i, F_Y^{-1}, 1-\rho(S_i, X_i)\Big|S_i, X_i, O) \right) > t. 
\end{gather*}
}
(ii) $C_o(\alpha|u)$ is continuously twice differentiable with respect to $u, \alpha$, and 
\begin{gather*}
    \sup_{\alpha \in [\epsilon, 1-\epsilon], u \in [0, 1]}\left|\frac{\partial C_o(\alpha|u)}{\partial u}\right|, 
    \sup_{\alpha \in [\epsilon, 1-\epsilon], u \in [0, 1]}\left|\frac{\partial c_o(\alpha|u)}{\partial u}\right|, 
    \sup_{\alpha \in [\epsilon, 1-\epsilon], u \in [0, 1]}\left|\frac{\partial^2 c_o(\alpha|u)}{\partial \alpha \partial u}\right|, 
\end{gather*}
$\sup_{\alpha \in [\epsilon, 1-\epsilon], u \in [0, 1]} |c_o(\alpha|u)|$, and $\sup_{\alpha \in [\epsilon, 1-\epsilon], u \in [0, 1]}\left|\frac{\partial c_o(\alpha|u)}{\partial \alpha}\right|$
are all bounded from above by absolute positive constant for some $\epsilon > 0$. 
\end{assumption}
The conditions in \Cref{assumption:regularity-copula} (i) are identical to those in \Cref{assumption:regularity-wc} (ii): Condition (2) in \Cref{assumption:regularity-copula} (ii) is reduced to Condition (2) in \Cref{assumption:regularity-wc} (ii) when $C_o$ is $C_+$ or $C_-$. 
The conditions in \Cref{assumption:regularity-copula} (iii) imply that $C_o(\alpha|u)$ and $c_o(\alpha|u)$ are absolute continuous with respect to $u$ so \Cref{lemma:dual-SpectralRiskMeasure} is applicable. 

\begin{assumption}[Realization Set] \label{assumption:realization-set-copula}
Let $q>2$ be a constant and $\epsilon$ be a constant such that $0 < \epsilon < 1/2$, $\omega = (\omega_{\rho}, \omega_{\varphi})$ in which $\omega_{\rho} =(\rho(S_i, X_i), \rho(X_i)), \omega_{\varphi}=(\varphi(S_i, X_i), \varphi )$, 
and
\begin{align*}
    \kappa = (\mu_{C_o, 1}, \bar{\mu}_{C_o, 1}, \mu_{C_o, 0}, \bar{\mu}_{C_o, 0}, d_{C_o}(S_i, X_i, O), F_Y^{-1}(\cdot|S_i, X_i, O)).
\end{align*}
and $\eta = (\omega, \kappa)$. 
For all $P$ satisfying Assumptions \ref{assumption: Random Sample}, \ref{assumption: Unconfounded}, and \ref{assumption:Comparability}, 
the following condition holds: for some sequences $\Delta_n \to 0$ and $\delta_n \to 0$ with $\delta_n \ge n^{-1/2}$ with probability $1 - \Delta_n$, the estimator of nuisance parameter belongs to the realization set $R_n$ which contains $\widetilde{\eta}$ such that 
\begin{gather*}
    \| \widetilde{\eta} - \eta \|_{P, q} \le C, 
    \mathbb{P}(\epsilon \le \widetilde{\rho}(S_i, X_i) \le 1-\epsilon) = 1,
    \mathbb{P}(\epsilon \le \widetilde{\varphi}(S_i, X_i) \le 1-\epsilon) = 1, \\
    \| \widetilde{\omega}_{\rho} - \omega_{\rho} \|_{P, 2} \times \| \widetilde{\omega}_{\varphi} - \omega_{\varphi} \|_{P, 2} \le \delta_n n^{-1/2}, 
    \| \hat{\omega} - \omega \|_{P, 2} \times \| \hat{\kappa} - \kappa \|_{P, 2} \le \delta_n n^{-1/2}, \\
 \|\widetilde{\rho}(S_i, X_i) - \rho(S_i, X_i)\|_{P, 2}^2 \le \delta_n n^{-1/2}, \\
 \|\widetilde{F}_Y^{-1}(\cdot|S_i, X_i, O) - F_Y^{-1}(\cdot|S_i, X_i, O) \|_{P, 2}^2 \le \delta_n n^{-1/2}.
\end{gather*}
Here, with abuse of notation, we denote the $L_q$ norm of a function of the form of $G(u|S_i, X_i)$ where $u \in [0, 1]$ by $\| G(\cdot|S_i, X_i) \|_{P, q} = \left(\mathbb{E}[\int_0^1 G^q(u|S_i, X_i) du]\right)^{1/q}$.
\end{assumption}

We verify conditions related to Theorem 3.1 in \cite{chernozhukov2018doubleML}. In particular, we verify \cref{lemma:DML-asym-normality} in the appendix.

\begin{theorem}
\label{thm:CLT-for-copula} 
Under Assumptions \ref{assumption: Random Sample}, \ref{assumption: Unconfounded}, \ref{assumption:Comparability}, \ref{assumption:knowncopula}, \ref{assumption:boundedvariation}, \Cref{assumption:regularity-wc} (i), \ref{assumption:regularity-copula}, and  \ref{assumption:realization-set-copula},  
$\sqrt{n} (\widehat{\tau}_{C_o} - \tau_{C_o}) \to  N(0, \sigma_{C_o}^2)$,
    where $\sigma_{C_o}^2  = \mathbb{E}[m(Z_i, \tau_{C_0}, \eta)^2]$.
\end{theorem}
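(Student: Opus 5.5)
The plan is to reduce \cref{thm:CLT-for-copula} to Theorem 3.1 of \cite{chernozhukov2018doubleML}. The estimator is linear in $\tau$, with Jacobian $-\mathbb{E}[\mathds{1}_{P_i=E}/\varphi]=-1$. So, once the high-level conditions are verified, cross-fitting gives
\[
\sqrt n(\widehat\tau_{C_o}-\tau_{C_o})=n^{-1/2}\sum_i m_{C_o}(Z_i,\tau_{C_o},\eta)+o_p(1),
\]
and the Lindeberg CLT yields $N(0,\sigma^2_{C_o})$. Following the strategy used for \cref{thm:CLT-for-wc}, I would verify these conditions through \cref{lemma:DML-asym-normality} rather than through separate orthogonality and second-derivative conditions. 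The conditions are:
\begin{enumerate}
\item[(a)] the moment identification $\mathbb{E}[m_{C_o}(Z_i,\tau_{C_o},\eta)]=0$;
\item[(b)] the bias condition $\sqrt n\,|\mathbb{E}[m_{C_o}(Z_i,\tau_{C_o},\widetilde\eta)]-\mathbb{E}[m_{C_o}(Z_i,\tau_{C_o},\eta)]|=o(1)$, uniformly over $\widetilde\eta\in R_n$;
\item[(c)] the $L_2$-continuity condition $\|m_{C_o}(\cdot,\widetilde\eta)-m_{C_o}(\cdot,\eta)\|_{P,2}\to0$, together with moment bounds and a nondegenerate variance.
\end{enumerate}

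For (a), I would use \cref{proposition:DSIPI} and \cref{theorem: known copula}. The dual representation \cref{eq:dual}, which follows from \cref{lemma:dual-SpectralRiskMeasure} under \cref{assumption:boundedvariation}, makes the $P_i=O$ bracket conditionally mean zero given $(S_i,X_i)$. The last two terms vanish because $\mathbb{E}[W_i-\rho(S_i,X_i)\mid S_i,X_i,P_i=E]=0$. The remaining terms reproduce \cref{Eq:WSI-ATE}. For (c), I would use bounded support of $Y$, the overlap bounds in \cref{assumption:realization-set-copula}, and the bounds on $C_o$, $c_o$ and their derivatives in \cref{assumption:regularity-copula}(ii). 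These make $m_{C_o}$ Lipschitz in the nuisance components, including $F_Y^{-1}$ in the $\int_0^1\cdot\,du$ norm, so the $\|\cdot\|_{P,q}\le C$ and $\delta_n$ rates give (c). Positivity of $\sigma^2_{C_o}$ follows from \cref{assumption:regularity-copula}(i) by the same decomposition used for the worst-case bounds, since the three blocks of $m_{C_o}$ are orthogonal.

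The core is (b). I would expand $\mathbb{E}[m(\widetilde\eta)]-\mathbb{E}[m(\eta)]$ by iterated expectations, first conditioning on $(S_i,X_i)$ and on $P_i$. The terms involving $(\varphi,\varphi(s,x))$, $(\rho(x),\bar\mu)$ and $(\rho(s,x),\mu)$ telescope, exactly as in \cite{chen_ritzwoller_2023}, into products of errors $(\widetilde\omega-\omega)(\widetilde\kappa-\kappa)$ or $(\widetilde\omega_\rho-\omega_\rho)(\widetilde\omega_\varphi-\omega_\varphi)$. Cauchy–Schwarz and the product rates bound these by $\delta_n n^{-1/2}$. Two nonstandard pieces remain.
\begin{enumerate}
\item[(i)] \emph{Dependence of $h_{C_o,w}$ on $F_Y^{-1}$.} The Gateaux derivative of $\mathbb{E}[h_{C_o,1}(Y;G,\alpha)\mid S,X,O]$ in $G(u)$ is $(1-u)-(1-F_Y(G(u)))$, which vanishes at $G=F_Y^{-1}$. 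So the remainder is $\int_0^1\int_{F_Y^{-1}(u)}^{\widetilde G(u)}(F_Y(t)-u)\,dt\,d\sigma(u)$. This is bounded by $\sup f\cdot|\sigma|_{TV}\,\|\widetilde F^{-1}_Y-F^{-1}_Y\|^2_{P,2}\le C\delta_n n^{-1/2}$, using the bounded density in \cref{assumption:regularity-wc}(i). Unlike the minimization argument for AVaR, this step needs no sign condition on $d\sigma$, which addresses the caveat after \cref{lemma:dual-SpectralRiskMeasure}.
\item[(ii)] \emph{Dependence of the weight $\sigma_{C_o,w}(u;1-\rho(s,x))$ on $\rho$.} The first-order effect of perturbing $\rho$ is $\int F^{-1}_Y(u)\,\partial_\rho\sigma_{C_o,1}\,du$. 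Differentiating $(1-C_o(1-\rho|u))/\rho$ gives $(d_{C_o}-\mu_{C_o,1})/\rho$, and this is cancelled by the correction term $(d_{C_o}-\mu_{C_o,1})(W_i-\rho(S_i,X_i))/\rho(X_i)$. The same holds for $w=0$. The second-order remainder is controlled via the bounded $\partial c_o/\partial\alpha$ and $\partial^2 c_o/\partial\alpha\partial u$ by $C\|\widetilde\rho-\rho\|^2_{P,2}\le C\delta_n n^{-1/2}$.
\end{enumerate}

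I expect (ii) to be the main obstacle. There, $\rho$ enters $h_{C_o,w}$ through the measure $d\sigma_{C_o,w}(u;1-\rho)$, and it also interacts with the estimated $\widetilde F^{-1}_Y$ and $\widetilde d_{C_o}$. Cross terms such as $(\widetilde\rho-\rho)(\widetilde F^{-1}_Y-F^{-1}_Y)$ and $(\widetilde\rho-\rho)(\widetilde d-d)$ must be handled by Cauchy–Schwarz with the $\omega\times\kappa$ product rate. I would first try a Taylor expansion in $\alpha=1-\rho$ inside $\int_0^1 F^{-1}_Y(u)\,\partial_u C_o(\alpha|u)\,du$, after integrating by parts to remove $d\sigma$.
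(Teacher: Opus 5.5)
Your proposal is correct and follows essentially the paper's route: reduce to \cref{lemma:DML-asym-normality}, check the moment and $L_2$ bounds, and control the bias through the expansion of \cref{lemma:tech-remainer-copula}, where your piece (i) is the paper's $\mathcal{J}_3,\mathcal{J}_8$ and your piece (ii) is the cancellation inside $\mathcal{J}_7+\mathcal{J}_{13}$ and $\mathcal{J}_{12}+\mathcal{J}_{14}$. One small correction: in (i) the pointwise quadratic remainder is integrated against $|d\sigma_{C_o,w}|$, so to reach $\|\widetilde F_Y^{-1}-F_Y^{-1}\|_{P,2}^2$ you need the bounded density $\sup_u|\partial_u C_o(\alpha\mid u)|$ from \cref{assumption:regularity-copula}(ii), as the paper uses, not $|\sigma|_{TV}$. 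Total variation alone would not suffice if $\sigma$ had atoms, as for $C_\pm$, where only a pointwise quantile error is controlled.
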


\begin{remark}
(i) When $C_o$ is known, Wald inference can be constructed from \cref{thm:CLT-for-copula}.

(ii)
      Suppose $C_o$ is unknown and satisfies $C_L(\cdot, \cdot|s,x,E)\prec_c C_o(\cdot, \cdot|s,x,E)\prec_c  C_U(\cdot, \cdot|s,x,E)$ for almost all $(s,x)\in\mathcal{S}\otimes\mathcal{X}$, where  $C_L(\cdot, \cdot |s,x,E)$ and $C_U(\cdot, \cdot | s,x,E)$ are  two known copula functions.
Then \cref{Coro: identifiedSets} implies that $\tau_{C_o}$ is partially identified with the identified set $ [\tau_{C_L}, \tau_{C_U}]$. \cref{thm:CLT-for-copula} can be extended to the joint asymptotic normality of $(\widehat{\tau}_{C_L},\widehat{\tau}_{C_U})$ and inference for $\tau_{C_o}$ can be done in the same way as in Section \ref{sec:PartialId}.
\end{remark}
\section{Empirical Application}
\label{sec:empirical}

In the same spirit as \cref{sec:numerical}, this section presents results from a sensitivity analysis using the household poverty alleviation dataset used in \cite{banerjee2015MultifacetedProgramCauses}. In this data set, the treatment program allocated productive assets to randomly selected households. The baseline variables are welfare-related measurements taken before treatment. The short-term outcomes consist of the same set of measurements taken two years after treatment, while the long-term outcome is one of these measurements recorded three years after treatment. We take the data from Pakistan, which has 446 treated units and 408 control units. For illustrative purposes, consider $X$ and $S$ to include the following five welfare indicators: per capita consumption, the food security index, the household asset index, the total amount borrowed, and agricultural income. The long-term outcome, $Y$, represents the household asset value in dollars three years post-treatment. 

The poverty alleviation data is both experimental and observational in that $W_i$ and $Y_i$ are observed for all households in the sample. To illustrate our method, we randomly and evenly split the Pakistan data into two parts, removing $Y_i$ from the experimental sample and $W_i$ from the observational sample. $\tau_{C_{\vartheta}}$ is estimated given $C_{\vartheta}$, where $C_{\vartheta}$ is a Frank copula whose dependence parameter $\vartheta$ is calibrated to match values of Kendall’s tau in the set $\{-0.9, -0.75, -0.5, -0.25, -0.1, 0, 0.1, 0.25, 0.5, 0.75, 0.9\}$. To estimate the nuisance components in the orthogonal moment function (\ref{eq:orthogonal moment}), we employ quantile forests (\texttt{quantile\_forest} from the \texttt{grf} package) to estimate the conditional quantile function $F_{Y}^{-1}(\cdot \mid S_i, X_i, O)$; use Lasso regressions (\texttt{cv.glmnet}) to estimate the conditional expectation of the WSI $\bar{\mu}_{C_o, w}(w, X_i)$; and apply logistic Lasso regressions to estimate the propensity and surrogacy scores $\rho(X_i)$, $\rho(S_i, X_i)$, as well as the selection probability $\varphi(S_i, X_i)$. The nuisance functions are cross-fitted with three data folds. Figure \ref{Figure: empirical 11 frank} shows how $\hat{\tau}_{C_\vartheta}$ varies with Kendall's tau. The estimated worst-case bounds are [$-\$131.8$, $\$148.8$], with 95\% confidence intervals of [$-\$158.9$, $-\$104.8$] for the lower bound and [$\$124.3$, $\$173.4$] for the upper bound.

\begin{figure}[H]
\centering
   \includegraphics[width=0.8\textwidth]{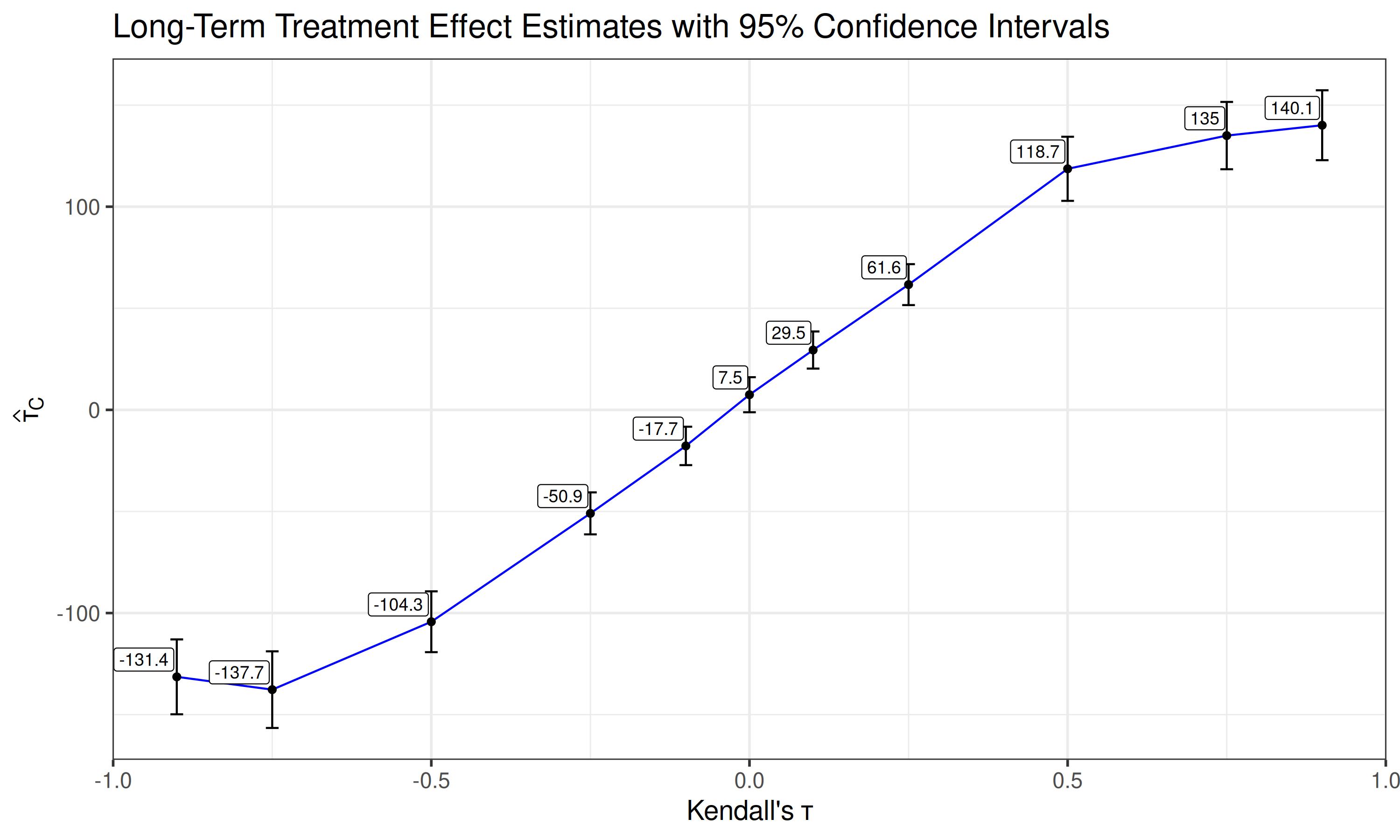}
   \caption{Relationship between $\hat{\tau}_{C_\vartheta}$ and Kendall's tau using the Frank copula for Pakistan.}
   \label{Figure: empirical 11 frank}
\end{figure}

We then repeat the analysis using the Plackett copula. The results, reported in Figure \ref{Figure: empirical 11 plackett}, indicate that the estimated treatment effects are largely insensitive to the choice of copula family: for a given Kendall's tau, the point estimates under the Plackett copula closely align with those obtained using the Frank copula. This robustness is consistent with the structure of the empirical application. In our data, both the estimated propensity score $\widehat\rho(X_i)$ and the estimated surrogacy score $\widehat\rho(S_i,X_i)$ are tightly centered around $0.5$ (with means $\approx0.54$), implying that most observations enter the conditional copula $C_o(1-\widehat\rho(s,x)\mid u)$ in regions far from the tails where copula families differ most sharply in their dependence behavior. Hence, the copula family itself has little influence on $\widehat{\tau}_{C_\vartheta}$. What matters most for the sensitivity analysis is the overall dependence level captured by Kendall's tau. The wide identified interval underscores the need for caution when interpreting the results under the surrogacy assumption. Even moderate departures from the assumed dependence structure can lead to substantial variation in the long-term treatment effect.

\begin{figure}[H]
\centering
   \includegraphics[width=0.8\textwidth]{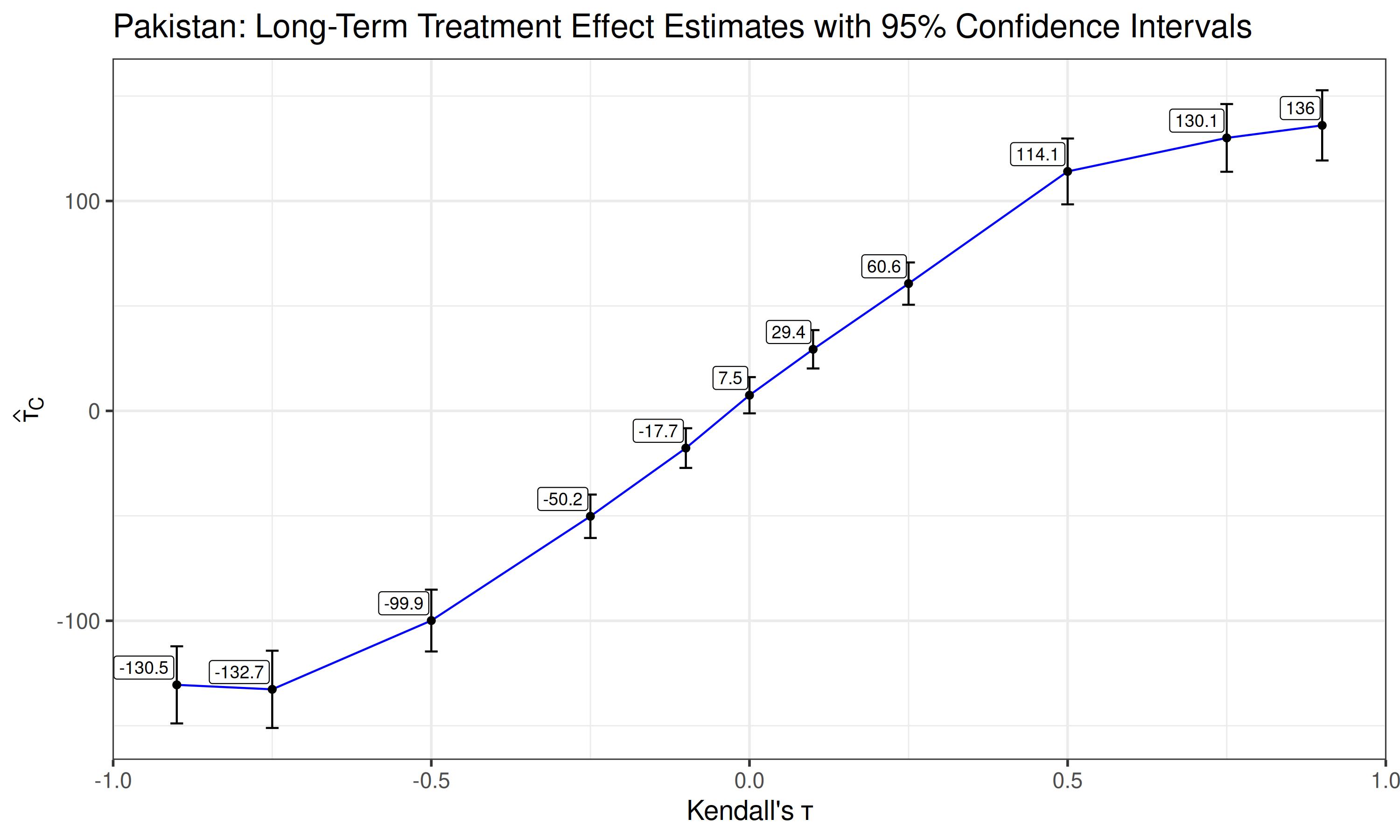}
   \caption{Relationship between $\hat{\tau}_{C_\vartheta}$ and Kendall's tau using the Plackett copula for Pakistan.}
   \label{Figure: empirical 11 plackett}
\end{figure}

To further examine the behavior of the welfare estimate near the surrogacy benchmark, Figure \ref{Figure: pak small} takes a closer look at small positive values of Kendall's tau near zero. Specifically, we conduct a local sensitivity analysis over a fine grid of small positive dependence levels, re-estimating the long-term treatment effect at each value. This zoomed-in analysis allows us to pinpoint the minimum degree of dependence at which the welfare estimate becomes statistically distinguishable from zero. For Pakistan, once Kendall's tau exceeds approximately 0.032, the confidence intervals no longer include zero, and the estimated effect remains significant thereafter. We interpret this Kendall's tau as a practical breakpoint, capturing the minimum strength of conditional dependence between treatment and outcome required for the welfare effect to turn significant.

\begin{figure}[H]
\centering
   \includegraphics[width=0.825\textwidth]{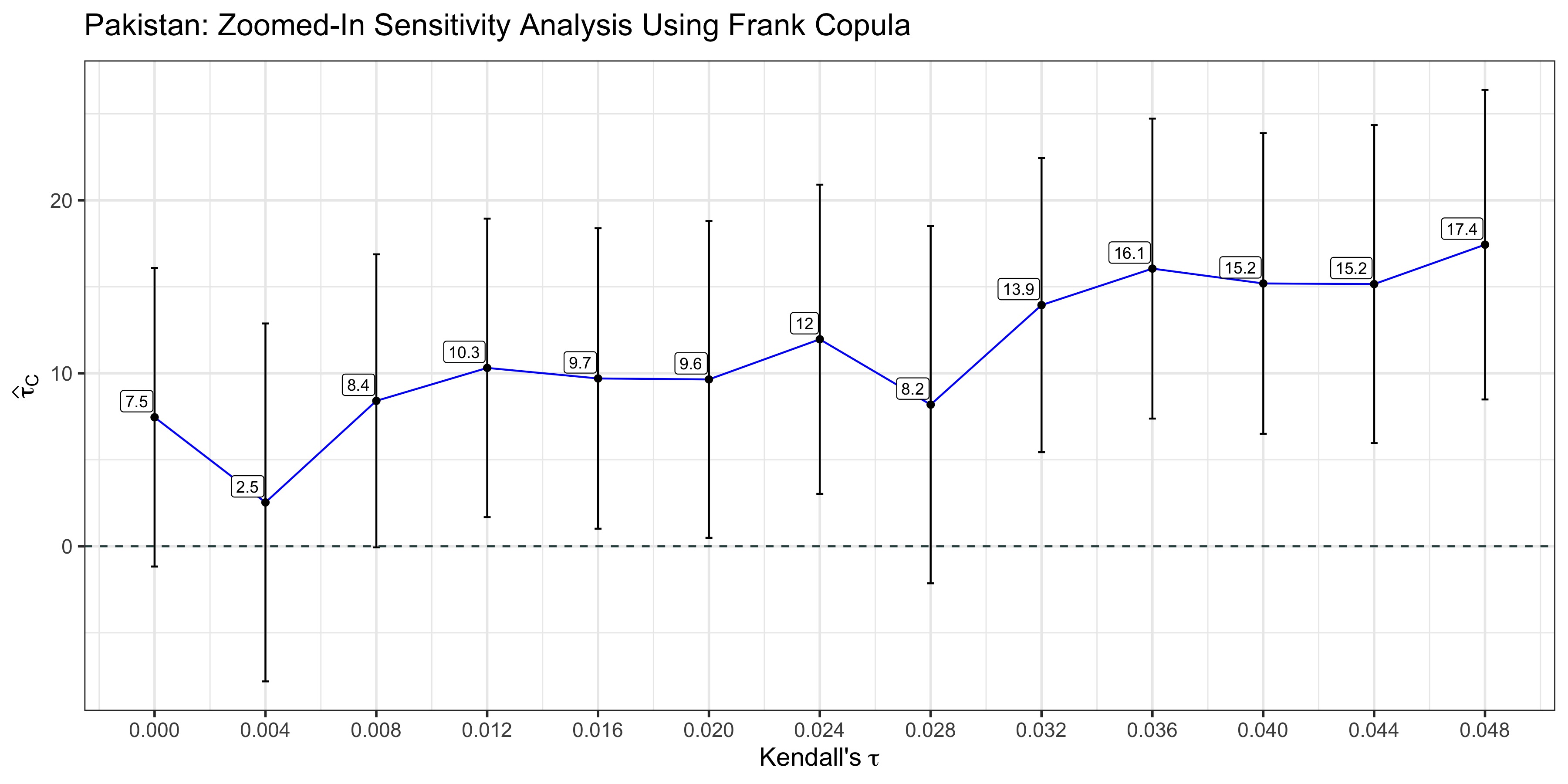}
   \caption{Local sensitivity analysis near the surrogacy benchmark (Kendall's tau $=0$).}
   \label{Figure: pak small}
\end{figure}

\section{Concluding Remarks}
\label{sec:conclusion}
In this paper, we have extended the SI approach for identifying and estimating long-term treatment effects in \citet{athey2025surrogate} from full mediation to partial mediation, substantially broadening the scope of application of the SI approach. Specifically, we develop two methodologies based on our identification result for a known copula: sensitivity analysis and partial identification analysis. The usefulness of both is illustrated via synthetic and real data. Complementing \citet{athey2025surrogate}, we determine the sign of the surrogacy bias for stochastically monotone copulas and establish the worst-case bounds on the true ATE regardless of the type of the primary outcome. Our partial identification result applies to any copula bounds. When applied to copulas that dominate the independence copula in concordance order, the lower bound is the ATE under the surrogacy assumption in \citet{athey2025surrogate}. This gives an alternative interpretation of ATE under the surrogacy assumption as the minimum ATE among all copulas that dominate the independence copula and thus are robust to such deviations from the surrogacy assumption. 

Several extensions are worthwhile and are currently under investigation. First, in a companion article, the authors develop a sensitivity analysis to  the comparability assumption.
Second, the worst-case bounds are often wide, suggesting caution in making the surrogacy assumption. In addition to exploiting prior knowledge on the range of copulas to shrink the identified set, in specific applications, side information such as  exclusion restrictions, monotone IV, monotone treatment response may be available. It is worthwhile exploring the possibility of tightening the worst-case bounds by exploring such information. Third, extensions to multi-valued treatments and 
continuous treatments would broaden the applicability of the SI approach further. Finally, on the technical side, 
it would be worthwhile exploring the possibility of extending the inference results to sub-copulas. 
 
\bibliographystyle{plainnat}
\bibliography{surrogacy}

\appendix

\section{Proofs in \cref{sec:iden-copula}}
\label{sec:proofs-iden-copula}

\subsection{Proof of \cref{theorem: known copula}}

\begin{proof} It follows from the proof of \cref{lemma:IndeC} that 
\begin{align*}
     \tau&=\mathbb{E}\left[Y_i(1)-Y_i(0)\mid P_i=E\right]\\
        &=
        \mathbb{E}\left[ \mathbb{E}\left[Y_iW_i|S_i=s,X_i=x,P_i=E \right] \left(\frac{1}{\rho(X_i)} + \frac{1}{1-\rho(X_i)} \right)  \mid P_i = E \right] \\
        & \quad -  \mathbb{E}\left[ \frac{1}{1-\rho(X_i)} \mathbb{E}\left[Y_i|S_i=s,X_i=x,P_i=E \right]  \mid P_i = E \right]\\
         &=
        \mathbb{E}\left[ \rho(S_i,X_i)\mathbb{E}\left[Y_i|W_i=1,S_i=s,X_i=x,P_i=E \right] \left(\frac{1}{\rho(X_i)} + \frac{1}{1-\rho(X_i)} \right)  \mid P_i = E \right] \\
        & \quad -  \mathbb{E}\left[ \frac{1}{1-\rho(X_i)} \mathbb{E}\left[Y_i|S_i=s,X_i=x,P_i=E \right]  \mid P_i = E \right].
\end{align*}
As a result of \cref{proposition:DSIPI}, we get 
\begin{align*}
     \tau_{C_o}
        &=
        \mathbb{E}\left[ \frac{\rho(X_i,S_i)}{\rho(X_i)[1-\rho(X_i)]}   \mu_{C_o, 1}(Y\mid S_i,X_i)  \mid P_i = E \right] -  \mathbb{E}\left[ \frac{1}{1-\rho(X_i)} \mu(S_i,X_i,O) \mid P_i = E \right]\\
          &=\mathbb{E}\left[ \mu_{C_o, 1}(Y\mid S_i,X_i)\frac{W_i }{\rho(X_i)} \mid P_i = E \right] 
  -  \mathbb{E}\left[  \mu_{C_o, 0}(Y\mid S_i,X_i)\left(\frac{1-W_i }{1-\rho(X_i)}\right)\mid P_i = E \right],
\end{align*}
where the second equality follows from the definition of $\mu_{C_o, w}$.

In conclusion, $\tau_{C_o}$ is identified from the sample information, because 
    (i) $ F_Y^{-1}(u|S_i,X_i, P_i=O)$ is identified from the observational data and 
    (ii) $\rho(X_i,S_i)$ is identified from the experimental data.
\end{proof}

\subsection{Proof of \cref{proposition:DSIPI}}

\begin{proof}
Note that we have
\begin{align*}
    \Pr(Y_i \le y, W_i \le w) = C_o(F_Y(y | S_i, X_i, P_i=E), F_W(w|S_i, X_i, P_i=E)| S_i, X_i, P_i = E).
\end{align*}
Consequently, we get 
\begin{align*}
&\mathbb{E}\left[Y_iW_i|S_i,X_i,P_i=E\right]\\
&=\int\int ywdC_o(F_Y(y|S_i,X_i, P_i=E),F_W(w|S_i,X_i, P_i=E)|S_i,X_i, P_i=E)\\
&=\int_0^1\int_0^1 F_Y^{-1}(u|S_i,X_i, P_i=O)F_W^{-1}(v|S_i,X_i, P_i=E)|S_i,X_i, P_i=E)dC_o(u,v|S_i,X_i, P_i=E)\\
&=\int_{(u,v)\in[0,1]\times(1-\rho(X_i,S_i),1]} F_Y^{-1}(u|S_i,X_i, P_i=O)dC_o(u,v|S_i,X_i, P_i=E)
\\
&=\int_0^1 F_Y^{-1}(u|S_i,X_i, P_i=O) \left(\int_{1-\rho(X_i,S_i)}^1 dC_o(v | u, S_i,X_i, P_i=E)\right) d u \\
&=\int_0^1 F_Y^{-1}(u|S_i,X_i, P_i=O) (1 - C_o(1-\rho(X_i,S_i)|u, S_i,X_i, P_i=E)) d u \\
&=\rho(X_i,S_i) \int_{0}^{1} F_Y^{-1}(u|S_i,X_i, P_i=O)\sigma_{C_o, 1}(u; 1-\rho(X_i,S_i)) d u\\
&= \rho(X_i,S_i) \mu_{C_o, 1}(Y\mid S_i,X_i),
\end{align*}
where we have used \Cref{assumption:Comparability} (comparability). 
Similarly, 
\begin{align*}
&\mathbb{E}\left[Y_i(1-W_i)|S_i,X_i,P_i=E\right]\\
& = \mathbb{E}\left[Y_i|S_i,X_i,P_i=E\right] - \mathbb{E}\left[Y_i W_i|S_i,X_i,P_i=E\right] \\
& = \int_{0}^{1} F_Y^{-1}(u|S_i,X_i, P_i=O)d u - \rho(X_i,S_i) \int_{0}^{1} F_Y^{-1}(u|S_i,X_i, P_i=O)\sigma_{C_o, 1}(u; 1-\rho(X_i,S_i)) d u,\\
& = (1 - \rho(X_i,S_i))
\int_{0}^{1} F_Y^{-1}(u|S_i,X_i, P_i=O)\sigma_{C_o, 0}(u; 1-\rho(X_i,S_i)) d u\\
&=(1- \rho(X_i,S_i)) \mu_{C_o, 0}(Y\mid S_i,X_i).
\end{align*}
\end{proof}

\section{Proofs for \cref{sec:estimation}}

\label{sec:proofs-estimation}

\subsection{Proof of \cref{prop:WCB}}

\begin{proof}[Proof of \cref{prop:WCB}] 

The proof of \cref{prop:WCB} consists of two parts. We will discuss the worst-case bound in Part 1 below and the binary case in Part 2 below.

\textbf{Part 1.} It follows from the Fr\'echet-Hoeffding inequality and \cref{Coro: identifiedSets} that the identified set of $\tau_{C_o}$ is $[\tau_{C_-}, \tau_{C_+}]$ and the conclusion follows from \cref{ex:SpecificCopulas}.
It is easy to see that 
\begin{align*}
    \sigma_{C_{+}, 1}(u;\alpha) = \frac{\mathds{1}(u \in (\alpha, 1])}{1-\alpha} \mbox{ and }
    \sigma_{C_{+}, 0}(u;\alpha) = \frac{\mathds{1}(u \in [0, \alpha])}{\alpha}
\end{align*}
for any $\alpha\in [0,1)$. Thus 
\begin{align*}
\mu_{C_+, 1}(S_i,X_i,O)&=\int_{0}^{1} F_Y^{-1}(u|S_i,X_i, P_i=O)\sigma_{C_+, 1}(u; 1- \rho(X_i,S_i)) d u\\
&=\int_{0}^{1} F_Y^{-1}(u|S_i,X_i, P_i=O)\frac{\mathds{1}(u \in (1- \rho(X_i,S_i), 1])}{\rho(X_i,S_i)} d u\\
&=AVaR_{1- \rho(X_i,S_i)}(Y_i\mid S_i,X_i,P_i=O) \mbox{ and}\\
\mu_{C_+, 0}(S_i,X_i,O)&=-AVaR_{\rho(X_i,S_i)}(-Y_i\mid S_i,X_i,P_i=O).
 \end{align*}
 Similarly, 
\begin{align*}
    \sigma_{C_-, 1}(u;\alpha) = \frac{\mathds{1}(u\in[0,1-\alpha])}{1-\alpha}\mbox{ and }
    \sigma_{C_-, 0}(u;\alpha) = \frac{\mathds{1}(u\in(1-\alpha,1])}{\alpha}.
\end{align*}
As a result, 
\begin{align*}
    \mu_{C_-, 1}(S_i,X_i,O)&=-AVaR_{1-\rho(X_i,S_i)}(-Y_i\mid S_i,X_i,P_i=O) \mbox{ and }\\
    \mu_{C_-, 0}(S_i,X_i,O)&=AVaR_{\rho(X_i,S_i)}(Y_i\mid S_i,X_i,P_i=O).
\end{align*}

\textbf{Part 2.} Note that
 \begin{align*}
     &\mathbb{E}\left[\mu_{C_+,1}(S_i,X_i,O)\frac{\rho(S_i, X_i)}{\rho(X_i)} | P_i = E\right]\\
& =
\mathbb{E}\left[\mathbb{E}\left[\mu_{C_+,1}(S_i,X_i,O)\frac{\rho(S_i, X_i)}{\rho(X_i)} | S_i, X_i, P_i = E\right]  | P_i = E\right] \\
& = 
\mathbb{E}\left[\mathbb{E}\left[\mu_{C_+,1}(S_i,X_i,O)\frac{W_i}{\rho(X_i)} | S_i, X_i, P_i = E\right]  | P_i = E\right] \\
& = 
\mathbb{E}\left[\mathbb{E}\left[\mu_{C_+,1}(S_i,X_i,O)\frac{W_i}{\rho(X_i)} | X_i, P_i = E\right]  | P_i = E\right] \\
& = 
\mathbb{E}\left[\mathbb{E}\left[\mu_{C_+,1}(S_i,X_i,O) | W_i = 1, X_i, P_i = E\right]  | P_i = E\right].
 \end{align*}
Similarly, we have
\begin{align*}
      \mathbb{E}\left[\mu_{C_+,0}(S_i,X_i,O)\frac{1-\rho(S_i, X_i)}{1-\rho(X_i)} | P_i = E\right] = \mathbb{E}\left[\mathbb{E}\left[\mu_{C_+,0}(S_i,X_i,O)| W_i = 0, X_i, P_i = E\right]  | P_i = E\right].
\end{align*}
Then, according to Theorem 4 (ii) in \cite{athey2025surrogate} and its proof on p. 68, we have
\begin{align}\label{eq:WUB}
\tau_{C_+}&=\mathbb{E}\left[\mu_{C_+,1}(S_i,X_i,O)\frac{\rho(S_i, X_i)}{\rho(X_i)}-\mu_{C_+,0}(S_i,X_i,O)\frac{1-\rho(S_i, X_i)}{1-\rho(X_i)}|P_i=E\right] \nonumber\\
& = 
\mathbb{E}[\mu(S_i(1), X_i, O)|P_i=E] - 
\mathbb{E}[\mu(S_i(0), X_i, O)|P_i=E] \nonumber\\
& \quad +
\mathbb{E}\left[
(\mu_{C_+,1}(S_i,X_i,O) - \mu_{C_+,0}(S_i,X_i,O)) 
\left\{
\frac{\rho(S_i, X_i) (1 - \rho(S_i, X_i))}{\rho(X_i) (1 - \rho(X_i))}
\right\}
|P_i=E\right].
\end{align}
When $Y$ is binary, 
 \begin{align*}
     \mu_{C_+,1}(S_i,X_i,O)&=AVaR_{1-\rho(S_i,X_i)}(Y_i\mid S_i,X_i,P_i=O)\\
     &=\frac{1}{\rho(S_i,X_i)}\int_{1-\rho(S_i,X_i)}^1 F_Y^{-1}(u)du\\
     &=\frac{\mu(S_i,X_i,O)}{\rho(S_i,X_i)}\mathds{1}(\mu(S_i,X_i,O)<\rho(S_i,X_i))+\mathds{1}(\mu(S_i,X_i,O)\geq\rho(S_i,X_i)), \mbox{ and}\\
     \mu_{C_+,0}(S_i,X_i,O)&=-AVaR_{\rho(S_i,X_i)}(-Y_i\mid S_i,X_i,P_i=O)\\
     &=\frac{1}{1-\rho(S_i,X_i)}\int_{0}^{1-\rho(S_i,X_i)} F_Y^{-1}(u)du\\
     &=\frac{\mu(S_i,X_i,O)-\rho(S_i,X_i)}{1-\rho(S_i,X_i)}\mathds{1}(\mu(S_i,X_i,O)\geq\rho(S_i,X_i)) \\
     &=- \frac{1-\mu(S_i,X_i,O)}{1-\rho(S_i,X_i)}\mathds{1}(\mu(S_i,X_i,O)\geq\rho(S_i,X_i)) + \mathds{1}(\mu(S_i,X_i,O)\geq\rho(S_i,X_i)). \\
 \end{align*}
Consequently, we obtain that 
  \begin{align*}
     &\quad\ \mu_{C_+,1}(S_i,X_i,O)-\mu_{C_+,0}(S_i,X_i,O)\\
     &=\frac{\mu(S_i,X_i,O)}{\rho(S_i,X_i)}\mathds{1}(\mu(S_i,X_i,O)<\rho(S_i,X_i))+\frac{1-\mu(S_i,X_i,O)}{1-\rho(S_i,X_i)}\mathds{1}(\mu(S_i,X_i,O)\geq\rho(S_i,X_i))\\
     &=\min\left(\frac{\mu(S_i,X_i,O)}{\rho(S_i,X_i)}, \frac{1-\mu(S_i,X_i,O)}{1-\rho(S_i,X_i)}\right).\text{ This equals $\Delta_S^U$ on page 72 of \cite{athey2025surrogate}.}
 \end{align*}
 \end{proof}

\subsection{Proof of \Cref{thm:CLT-for-wc}}

\subsubsection{Introduction and Technical Lemma}

The proof of \Cref{thm:CLT-for-wc} verifies the conditions for Theorem 3.1 of \cite{chernozhukov2018doubleML}. In particular, we would verify the modified version of Assumption 3.2 in \cite{chernozhukov2018doubleML} because the moment function is not differentiable for the binary case. 

Consider the moment function with the following affine form.
\begin{align*}
    m(Z_i, \tau, \eta) = m^a(Z_i, \eta) \tau + m^b(Z_i, \eta).
\end{align*}
We denote by $\{\delta_n\}_{n=1}^{\infty}$ and $\{\Delta_n\}_{n=1}^{\infty}$ the sequences of positive constants converge to zero such that $\delta_n \ge n^{-1/2}$. We restate asymptotic theory for DML estimators in \cite{chernozhukov2018doubleML} with preliminary conditions.

\begin{lemma}[c.f., Theorem 3.1 of \cite{chernozhukov2018doubleML}] \label{lemma:DML-asym-normality}.
For $n \ge 3$ and $P \in \mathcal{P}_N$, 
the true parameter $\tau$ and true nuisance parameter $\eta$ satisfy the moment condition $\mathbb{E}_P[m(W_i, \tau, \eta)]$, and 
the nuisance parameter $\widehat{\eta}_n := \widehat{\eta}_n(\mathcal{F}_{i}^c)$ belongs to the realization set $R_n$ with probability $1 - \Delta_n$, where $R_n$ contains true parameter $\eta$ and is governed by the following conditions.

\begin{enumerate}[label=(\alph*)]
    \item (Identification condition) The singular values of the matrix
    $
        \mathbb{E}_P[m^a(Z_i; \eta)]
    $
    are between positive constants $t$ and $T$.

    \item (Moment conditions) For some $q > 2$, 
    \begin{gather}
        \sup_{\tilde{\eta} \in \mathbb{R}_n} \| m^a(Z_i, \widetilde{\eta}) \|_{P, q} \le T,  \label{eq:Lq-m_a} \\
        \sup_{\tilde{\eta} \in \mathbb{R}_n} \| m(Z_i, \tau, \widetilde{\eta}) \|_{P, q} \le T. \label{eq:Lq-m} 
    \end{gather}
    by some absolute positive constant $T$.

    \item (The statistical rates)
    \begin{gather}
    \sup_{\widetilde{\eta} \in R_n} \| \mathbb{E}_P[m^a(Z_i, \widetilde{\eta})]- \mathbb{E}_P[m^a(Z_i, \eta)] \| \le \delta_n,  \label{eq:rate-diff-m_a} \\
    \sup_{\widetilde{\eta} \in R_n} \| m(Z_i, \tau, \widetilde{\eta})- m(Z_i, \tau, \eta) \|_{P, 2} \le \delta_n, \label{eq:rate-diff-L2}  \\
    \lambda_n := \sup_{\widetilde{\eta} \in R_n} \| \mathbb{E}_P[m(Z_i, \tau, \widetilde{\eta}) - \mathbb{E}_P[m(Z_i, \tau, \eta)] \| \le n^{-1/2} \delta_n \label{eq:rate-diff-m}
\end{gather}
\end{enumerate}

The DML estimator follows the asymptotic normality with variance 
\begin{align*}
    \mathbb{E}_P[m(Z_i, \tau, \eta)m(Z_i, \tau, \eta)^T].
\end{align*}
The asymptotic variance is positive definite when its singular values is bounded below by positive constant.
\end{lemma}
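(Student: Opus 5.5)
The plan is to follow the proof of Theorem 3.1 of \cite{chernozhukov2018doubleML} and keep the whole argument at the level of the affine moment $m(Z_i,\tau,\eta)=m^a(Z_i,\eta)\tau+m^b(Z_i,\eta)$. The one change is that the bias bound \eqref{eq:rate-diff-m} replaces the Neyman orthogonality plus second-order-derivative rate $r_n'$ used there. This is needed because the moment function is not differentiable in the binary case.

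\textbf{Step 1: linearization.} Let $J:=\mathbb{E}_P[m^a(Z_i,\eta)]$ and let $\widehat J:=K^{-1}\sum_k \mathbb{E}_{r,k}[m^a(Z_i,\widehat\eta_k)]$. By definition, $\widehat\tau=-\widehat J^{-1}K^{-1}\sum_k\mathbb{E}_{r,k}[m^b(Z_i,\widehat\eta_k)]$. Using the true moment condition $\mathbb{E}_P[m(Z_i,\tau,\eta)]=0$, this gives the identity
\begin{align*}
\sqrt n(\widehat\tau-\tau)=-\widehat J^{-1}\,\frac{\sqrt n}{K}\sum_{k=1}^K \mathbb{E}_{r,k}[m(Z_i,\tau,\widehat\eta_k)].
\end{align*}
For each fold we work on the event, of probability at least $1-\Delta_n$, that $\widehat\eta_k\in R_n$. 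We then condition on the complementary sample $\mathcal{F}_k^c$, so that $\widehat\eta_k$ is fixed and the observations in fold $k$ are i.i.d.

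\textbf{Step 2: the Jacobian.} Write $\widehat J-J$ as $(\mathbb{E}_{r,k}-\mathbb{E}_P)[m^a(Z_i,\widehat\eta_k)]$ plus $\mathbb{E}_P[m^a(Z_i,\widehat\eta_k)]-J$.
\begin{itemize}
\item The first piece is $O_p(r^{-1/2})$ by conditional Chebyshev, using the $L_q$ bound \eqref{eq:Lq-m_a} with $q>2$.
\item The second piece is at most $\delta_n$ by \eqref{eq:rate-diff-m_a}.
\end{itemize}
Since the singular values of $J$ lie in $[t,T]$ by condition (a), we get $\widehat J^{-1}=J^{-1}+o_p(1)$.

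\textbf{Step 3: the score.} For each $k$, decompose
\begin{align*}
\mathbb{E}_{r,k}[m(Z_i,\tau,\widehat\eta_k)]=\mathbb{E}_{r,k}[m(Z_i,\tau,\eta)]+I_{1,k}+I_{2,k},
\end{align*}
where $I_{1,k}=(\mathbb{E}_{r,k}-\mathbb{E}_P)[m(Z_i,\tau,\widehat\eta_k)-m(Z_i,\tau,\eta)]$ and $I_{2,k}=\mathbb{E}_P[m(Z_i,\tau,\widehat\eta_k)]-\mathbb{E}_P[m(Z_i,\tau,\eta)]$.
\begin{itemize}
\item Conditionally on $\mathcal{F}_k^c$, $I_{1,k}$ has mean zero and second moment at most $r^{-1}\|m(\cdot,\tau,\widehat\eta_k)-m(\cdot,\tau,\eta)\|_{P,2}^2\le r^{-1}\delta_n^2$ by \eqref{eq:rate-diff-L2}. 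Hence $\sqrt n I_{1,k}=O_p(\delta_n)=o_p(1)$, and the conditional statement transfers to an unconditional one by the standard lemma in \cite{chernozhukov2018doubleML}.
\item The bias term satisfies $\sqrt n|I_{2,k}|\le\sqrt n\lambda_n\le\delta_n\to0$ by \eqref{eq:rate-diff-m}.
\end{itemize}
It follows that
\begin{align*}
\sqrt n(\widehat\tau-\tau)=-J^{-1}n^{-1/2}\sum_i m(Z_i,\tau,\eta)+o_p(1).
\end{align*}

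\textbf{Step 4: CLT.} The leading term is a normalized sum of i.i.d.\ mean-zero variables. Its $q>2$ moments are bounded by \eqref{eq:Lq-m}, so Lyapunov's condition holds uniformly in $P\in\mathcal P_N$. Hence the leading term is asymptotically normal with variance $J^{-1}\mathbb{E}_P[mm^T]J^{-1\,T}$. In our applications $m^a=-\mathds{1}_{P_i=E}/\varphi$, so $J=-1$ and the variance reduces to $\mathbb{E}_P[m(Z_i,\tau,\eta)m(Z_i,\tau,\eta)^T]$. Positive definiteness follows whenever its singular values are bounded below, as stated.

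\textbf{Main obstacle.} I expect the hardest part to be uniformity in $P$ and the conditioning argument in Step 3, that is, passing from conditional-on-$\mathcal{F}_k^c$ bounds to unconditional $o_p(1)$ statements on the event $\{\widehat\eta_k\in R_n\}$. I would handle this exactly as in the proof of Theorem 3.1 of \cite{chernozhukov2018doubleML}, using their Lemma 6.1. The genuinely model-specific content, namely verifying \eqref{eq:rate-diff-m} for our non-smooth moments, is assumed here and is deferred to the proofs of \cref{thm:CLT-for-wc} and \cref{thm:CLT-for-copula}.
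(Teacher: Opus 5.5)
Your proposal takes essentially the same route as the paper. You rerun the proof of Theorem 3.1 of \cite{chernozhukov2018doubleML} and replace the orthogonality and second-order bound on the bias term ($\mathcal{I}_{4,k}$ there, your $I_{2,k}$) with the high-level condition \eqref{eq:rate-diff-m}, so that $\sqrt n\,|I_{2,k}|\le \sqrt n\,\lambda_n\le\delta_n$.

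The one step to repair is the CLT in Step 4. Lyapunov's ratio for the standardized sum is $n^{1-q/2}\,\mathbb{E}_P|m(Z_i,\tau,\eta)|^q/\sigma_P^{q}$. This is uniformly small over $\mathcal{P}_N$ only if $\sigma_P^2$ is bounded away from zero. The lemma deliberately drops that requirement (Assumption 3.2(d) of \cite{chernozhukov2018doubleML}) and asserts positive definiteness only conditionally. That is why the paper closes with a CLT that allows a degenerate limit (Theorem 10.2(b) of \cite{potscher1997DynamicNonlinearEconometric}), and you should do the same.

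Separately, your sign $J=-1$ is correct. The paper's $m^a=\mathds{1}_{P_i=E}/\varphi$ has a sign slip, which is harmless because $J^{-1}VJ^{-1\top}=V$ either way.
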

Note that \Cref{eq:rate-diff-m} is a high-level condition that is verified in the proof of Theorem 3.1 of \cite{chernozhukov2018doubleML}. \Cref{eq:rate-diff-m} is satisfied when the near-orthogonality condition and the statistical rate of the second-order derivative in Assumptions 3.1 and 3.2 of \cite{chernozhukov2018doubleML} are satisfied.

\begin{proof}
     It comes directly from the proof of Theorem 3.1 in \cite{chernozhukov2018doubleML}. All conditions except  Condition \eqref{eq:rate-diff-m} and positive-definiteness of variance matrix are the same as Assumption 3.2 in \cite{chernozhukov2018doubleML}. Therefore, all steps except Step 3 and Step 5 in the proof of Theorem 3.1 in \cite{chernozhukov2018doubleML} works. Also, the proofs in Step 2 holds except for $\mathcal{I}_{4, k}$. Therefore, it is enough to discuss $\mathcal{I}_{4, k}$ in the proof of Theorem 3.1.
     
     Condition \eqref{eq:rate-diff-m} directly verified that $\mathcal{I}_{4, k}$ in Equation (A.16) of \cite{chernozhukov2018doubleML} is less than $\delta_n$.
     \begin{align*}
         \mathcal{I}_{4, k} := \sqrt{|\mathcal{F}_k|} \| \mathbb{E}[m(Z_i, \tau, \widehat{\eta}) - \mathbb{E}[m(Z_i, \tau, \eta)] \| = O_{P_n}(\sqrt{n} \lambda_n).
     \end{align*}
     Therefore, the asymptotic normality of DML estimators holds by the central limit theorem that allows degenerate case. (e.g., Theorem 10.2 (b) in \cite{potscher1997DynamicNonlinearEconometric}.) The asymptotic variance is positive definite when its singular values are bounded below by a positive constant.
\end{proof}

\subsubsection{Main Proof}

From \Cref{lemma:DML-asym-normality}, it is sufficient to verify that conditions in \Cref{lemma:DML-asym-normality} holds for $m(W_i, \tau, \eta) = [m_{C_+}(Z_i, \tau_{C_+}, \eta), m_{C_-}(Z_i, \tau_{C_-}, \eta)]^T$, where $\tau = (\tau_{C_+}, \tau_{C_-})^T$.  Note that we can write $m(Z_i, \tau, \eta)$ in the following affine form.
\begin{align*}
    m(Z_i, \tau, \eta) = m^a(Z_i, \eta) \tau + m^b(Z_i, \eta), \text{ where }
    m^a(Z_i, \eta) = \begin{pmatrix}
    \mathds{1}_{P_i = E} / \varphi & 0 \\
    0 & \mathds{1}_{P_i = E} / \varphi
\end{pmatrix}.
\end{align*}
\begin{lemma} \label{lemma:DML-verification-wc}
    Under Assumptions \ref{assumption: Random Sample}, \ref{assumption: Unconfounded}, \ref{assumption:Comparability}, \ref{assumption:regularity-wc}, and  \ref{assumption:realization-set-wc}, Conditions in \Cref{lemma:DML-asym-normality} hold. In addition, \begin{align}
        \mathbb{E}[m_{C_+}^2(Z_i, \tau_{C_+}, \eta)] > c_0 \text{ and } \mathbb{E}[m_{C_-}^2(Z_i, \tau_{C_-}, \eta)] > c_0 \label{eq:pd-variance}
    \end{align} 
    for some positive constant $c_0$ which only depends on $c$ and $\epsilon$.
\end{lemma}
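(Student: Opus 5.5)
We verify conditions (a)–(c) of \Cref{lemma:DML-asym-normality} one at a time for the stacked moment $m=(m_{C_+},m_{C_-})^\top$, treating $m_{C_+}$ and $m_{C_-}$ in parallel. Since the two are symmetric under $Y\mapsto -Y$ and $C_+\leftrightarrow C_-$, we write the argument for $m_{C_+}$ only.

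\textbf{Conditions (a) and (b).} Because $m^a(Z_i,\eta)=\mathds{1}_{P_i=E}/\varphi\cdot I_2$, we have $\mathbb{E}[m^a(Z_i,\eta)]=I_2$, so (a) is immediate. Condition \eqref{eq:rate-diff-m_a} reduces to $|\varphi/\widetilde\varphi-1|\le C|\widetilde\varphi-\varphi|\le C\delta_n$, using $\varphi\in(0,1)$ from \cref{assumption: Surrogacy}(ii). For \eqref{eq:Lq-m_a} and \eqref{eq:Lq-m} we use the realization-set bounds $\epsilon\le\widetilde\rho,\widetilde\varphi\le 1-\epsilon$, the bounded support of $Y$, and $\|\widetilde\eta-\eta\|_{P,q}\le C$. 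These ensure every inverse weight is bounded by $\epsilon^{-1}$. Each of $H_U$ and $H_L$ is Lipschitz in $(Y,s)$ with constant at most $1+\epsilon^{-1}$, so every summand of $m_{C_+}$ has $L_q$-norm bounded by a constant.

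\textbf{Condition \eqref{eq:rate-diff-L2}.} We write $m(\widetilde\eta)-m(\eta)$ as a telescoping sum in which one nuisance component changes at a time. Each piece is bounded by (bounded factor)$\times$$|\widetilde\eta_j-\eta_j|$ via the Lipschitz property. The only exception is the $1/\alpha$ factor in $H_U(Y,s,\alpha)$, which is Lipschitz in $\alpha$ on $[\epsilon,1-\epsilon]$. By Cauchy–Schwarz the total is at most $C\|\widetilde\eta-\eta\|_{P,2}\le C\delta_n$, after enlarging $\delta_n$ by a constant.

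\textbf{Condition \eqref{eq:rate-diff-m} (main obstacle).} We compute $\mathbb{E}[m_{C_+}(Z_i,\tau,\widetilde\eta)]-\mathbb{E}[m_{C_+}(Z_i,\tau,\eta)]$ directly, following \cite{dorn2024dvds}. We first condition on $(S_i,X_i,P_i)$ and use $\mathbb{E}[W_i\mid S_i,X_i,E]=\rho(S_i,X_i)$. Next we apply comparability: by Bayes' rule, $\mathbb{E}[\mathds{1}_{P_i=O}\,\varphi(S,X)/(1-\varphi(S,X))\,g(S,X)]=\mathbb{E}[\mathds{1}_{P_i=E}\,g(S,X)]$.

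After these steps the bias splits into three kinds of terms.
\begin{itemize}
\item[(i)] Cross-products of errors in $(\widetilde\varphi,\widetilde\rho)$ with errors in $\widetilde\kappa$, or with errors in each other. These are bounded by the product-rate conditions, giving $\delta_n n^{-1/2}$.
\item[(ii)] Terms involving the AVaR dual functional $s\mapsto\mathbb{E}[H_U(Y,s,\alpha)\mid S,X,O]$ at $s=\widetilde q$. This functional has zero derivative at $s=q_{C_+}$. For continuous $Y$ with density bounded above, its second-order remainder is at most $C\alpha^{-1}|\widetilde q-q|^2$, which is controlled by \eqref{eq:rate-quantile-L2-sq-wc}. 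In the binary case we instead use $\|\widetilde\mu-\mu\|_\infty$ together with the bounded density of $\mu-\rho$: mis-selecting the branch of the $\min$ happens on a set of probability $O(\delta_n)$, and the loss on that set is $O(\|\widetilde\mu-\mu\|_\infty)$, so the product is $\le\delta_n n^{-1/2}$.
\item[(iii)] Errors from plugging $\widetilde\rho(S,X)$ into $\alpha$. The derivative of $\alpha\mapsto\mathbb{E}H_U(Y,q,\alpha)$ equals $-(\mu_{C_+,1}-q)/\alpha$. This first-order term is exactly cancelled in expectation by the last two correction terms, which is how they were designed. What remains is second order in $\widetilde\rho-\rho$, or a product of $\widetilde\rho-\rho$ with $\widetilde q-q$ or $\widetilde\mu-\mu$, and is again bounded by the rate conditions.
\end{itemize}
The delicate part is the bookkeeping in (iii): each first-order $\rho$-term must actually cancel once $\widetilde q$ and $\widetilde\mu$ replace the true $q$ and $\mu$. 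We will carry this out explicitly.

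\textbf{Variance bound \eqref{eq:pd-variance}.} The summands of $m_{C_+}$ are supported on $\{P_i=E\}$ or $\{P_i=O\}$, and within the $E$-part they are conditionally orthogonal given $X$ or $(S,X)$. So $\mathbb{E}[m_{C_+}^2]$ is at least either the $E$-part term $\varphi^{-1}\mathbb{E}[(\bar\mu_{C_+,1}-\bar\mu_{C_+,0}-\tau)^2\mid E]$ or the $O$-part conditional-variance term. These are bounded below through \cref{assumption:regularity-wc}(ii)(1), respectively (2-1)–(2-2), together with the $\epsilon$-bounds. This yields $c_0$ depending only on $t$ and $\epsilon$.
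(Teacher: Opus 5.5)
Your proposal follows the paper's proof essentially step for step: (a) from $\mathbb{E}[m^a]=I_2$; (b) and \eqref{eq:rate-diff-L2} from the $\epsilon$-bounds, boundedness of $Y$ and Lipschitz arguments; \eqref{eq:rate-diff-m} from an explicit bias decomposition; and the variance bound from the $E$/$O$ split. In that decomposition your (i) is the paper's $\mathcal{J}_1,\mathcal{J}_2,\mathcal{J}_4,\mathcal{J}_7$, your (ii) is $\mathcal{J}_3,\mathcal{J}_6$ (handled via Lemma 2 of \cite{dorn2024dvds} or the bounded-density argument), and your (iii) is $\mathcal{J}_5,\mathcal{J}_8$.

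Two small corrections. In (iii), no $(\widetilde{\rho}-\rho)^2$ remainder arises, because $\rho H_U(Y,q,\rho)=\rho q+(Y-q)_+$ and $(1-\rho)H_L(Y,q,1-\rho)=(1-\rho)q-(Y-q)_-$ are exactly linear in $\rho$. The only leftover after cancelling against $m_6,m_7$ is the $\varphi(S,X)$-error times $\rho(S,X)$-error product, which is controlled by \eqref{eq:rate-wc-cross-term-rho-varphi}. This matters because \cref{assumption:realization-set-wc}, unlike \cref{assumption:realization-set-copula}, gives no bound on $\|\widetilde{\rho}-\rho\|_{P,2}^2$, so your claim that such a term would be ``bounded by the rate conditions'' would not hold. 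In the binary case, the bounded density gives the mis-selection set probability $O(\|\widetilde{\mu}-\mu\|_\infty)$. You need that rate, not merely $O(\delta_n)$, for the product to fall below $\delta_n n^{-1/2}$.
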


\begin{proof}[Proof of \Cref{lemma:DML-verification-wc} ]
    The proof is motivated by the proofs of Theorem 4.1 of \cite{chen_ritzwoller_2023} and Theorem 2 of \cite{dorn2024dvds}. In this proof, we focus on $m_{C_+}(Z_i, \tau_{C_+}, \eta)$ except for the verification of Condition (a). This is because we can have the similar result for $m_{C_-}(Z_i, \tau_{C_-}, \eta)$, and we can use $(a^2 + b^2)^{1/2} \le (|a| + |b|)$ to conclude results for $m(Z_i, \tau, \eta)$.
    
    We introduce some notation for the reader's convenience. Let $m_{C_+}^a(Z_i, \eta) = \mathds{1}_{P_i = E} / \varphi$, and
    {\footnotesize
\begin{align*}
    m_1(Z_i, \tau_{C_+}, \eta) & = \frac{\mathds{1}_{P_i = E}}{\varphi} \frac{W_i}{\rho(X_i)} (\mu_{C_+, 1}(S_i, X_i, O) - \bar{\mu}_{C_+, 1}(1, X_i)), \\
    m_2(Z_i, \tau_{C_+}, \eta) & = - \frac{\mathds{1}_{P_i = E}}{\varphi} \frac{1-W_i}{1-\rho(X_i)} (\mu_{C_+, 0}(S_i, X_i, O) - \bar{\mu}_{C_+, 0}(0, X_i)), \\
    m_3(Z_i, \tau_{C_+}, \eta) & =  \frac{\mathds{1}_{P_i = E}}{\varphi} (\bar{\mu}_{C_+, 1}(1, X_i) - \bar{\mu}_{C_+, 0}(0, X_i) - \tau_{C_+}), \\
    m_4(Z_i, \tau_{C_+}, \eta) & = \frac{\mathds{1}_{P_i = O}}{\varphi} \frac{\varphi(S_i, X_i)}{1-\varphi(S_i, X_i)}\frac{\rho(S_i, X_i)}{\rho(X_i)} (H_U(Y_i, q_{C_+}(S_i, X_i, O), \rho(S_i, X_i)) -\mu_{C_+, 1}(S_i, X_i, O)), \\
    m_5(Z_i, \tau_{C_+}, \eta) & = - \frac{\mathds{1}_{P_i = O}}{\varphi} \frac{\varphi(S_i, X_i)}{1-\varphi(S_i, X_i)} 
    \frac{1-\rho(S_i, X_i)}{1-\rho(X_i)} (H_L(Y_i, q_{C_+}(S_i, X_i, O), 1 - \rho(S_i, X_i)) - \mu_{C_+, 0}(S_i, X_i, O)), \\
    m_6(Z_i, \tau_{C_+}, \eta) & = \frac{\mathds{1}_{P_i = E}}{\varphi} \frac{1}{\rho(X_i)} \left[q_{C_+}(S_i, X_i, O) - \mu_{C_+, 1}(S_i, X_i, O)\right] \left(W_i - \rho(S_i, X_i)\right), \\
    m_7(Z_i, \tau_{C_+}, \eta) & =
    \frac{\mathds{1}_{P_i = E}}{\varphi} \frac{1}{1-\rho(X_i)} \left[q_{C_+}(S_i, X_i, O) - \mu_{C_+, 0}(S_i, X_i, O)\right] \left(W_i - \rho(S_i, X_i)\right).
\end{align*}
}
Note that 
\begin{align*}
    m_{C_+}(Z_i, \tau_{C_+}, \eta) = \sum_{j=1}^7 m_j(Z_i, \tau_{C_+}, \eta).
\end{align*}
Parts 1 to 4 below verify the conditions in \Cref{lemma:DML-asym-normality} 

\textbf{Part 1: Verification of Condition (a).} It holds because $\mathbb{E}_P[m^a(Z_i, \eta)] = I_2$.

\textbf{Part 2: Verification of Condition (b).} We verify conditions \eqref{eq:Lq-m_a} and \eqref{eq:Lq-m} in Parts 2-1 and 2-2 below.

\textbf{Part 2-1: Verification of Condition \eqref{eq:Lq-m_a}.} Condition \eqref{eq:Lq-m_a} holds because
\begin{align*}
    \sup_{\widetilde{\eta} \in \mathbb{R}_n} \| m_{C_+}^a(Z_i, \widetilde{\eta}) \| =   \sup_{\widetilde{\eta} \in \mathbb{R}_n} \left|\frac{\varphi}{\widetilde{\varphi}}\right| \le \frac{1-\epsilon}{\epsilon}.
\end{align*}

\textbf{Part 2-2: Verification of Condition \eqref{eq:Lq-m}.} 
We will look at $\|m_j(Z_i, \tau_{C_+}, \widetilde{\eta})\|_{P, q}$ for $j=1, \dotsc, 7$. Note that $|Y| < T$ for some absolute constant $T$ under boundedness of $Y$.
\begin{align*}
    \|m_1(Z_i, \tau_{C_+}, \widetilde{\eta})\|_{P, q} 
    & \le
    \epsilon^{-2}
    \left\|\widetilde{\mu}_{C_+, 1}(S_i, X_i, O) - \widetilde{\bar{\mu}}_{C_+, 1}(1, X_i)\right\|_{P, q} \\
    & \le
     \epsilon^{-2}
    \left\|\widetilde{\mu}_{C_+, 1}(S_i, X_i, O) - \mu_{C_+, 1}(S_i, X_i, O) \right\|_{P, q} \\
    & \quad +  \epsilon^{-2} \left\|\widetilde{\bar{\mu}}_{C_+, 1}(1, X_i) - \bar{\mu}_{C_+, 1}(1, X_i)\right\|_{P, q} \\
    & \quad +  \epsilon^{-2} \left\|\mu_{C_+, 1}(S_i, X_i, O) \right\|_{P, q} + \epsilon^{-2} \left\|\bar{\mu}_{C_+, 1}(1, X_i)\right\|_{P, q} \\
    & \le
    T_{\epsilon}
\end{align*}
for some positive constant $T_{\epsilon}$ which only depends on $T$ and $\epsilon$ because of Assumptions \ref{assumption:realization-set-wc} and the boundedness of $Y$, which implies that $\mu_{C_+, 1}(S_i, X_i, O)$ and $\bar{\mu}_{C_+, 1}(1, X_i)$ are bounded. Similar calculation shows that $\|m_j(Z_i, \tau_{C_+}, \widetilde{\eta})\|_{P, q} \le T_{\epsilon}$ by some constant $T_{\epsilon}$ which only depends on $T$ and $\epsilon$ for $j = 2, 3, 6, 7$.
{\footnotesize
\begin{align*}
    \|m_4(Z_i, \tau_{C_+}, \widetilde{\eta})\|_{P, q} 
    & \le
    \frac{(1-\epsilon)}{\epsilon^3}
    \left\|\widetilde{\rho}(S_i, X_i) \widetilde{q}_{C_+}(S_i, X_i, O) + [Y_i - \widetilde{q}_{C_+}(S_i, X_i, O)]_+ -\widetilde{\rho}(S_i, X_i) \widetilde{\mu}_{C_+, 1}(S_i, X_i, O)\right\|_{P, q} \\
    & \le 
    \frac{(1-\epsilon)^2}{\epsilon^3}
    \left\|\widetilde{q}_{C_+}(S_i, X_i, O) - q_{C_+}(S_i, X_i, O) \right\|_{P, q} \\
    & \quad +  
    \frac{(1-\epsilon)^2}{\epsilon^3}
    \left\|\widetilde{\mu}_{C_+, 1}(S_i, X_i, O) - \mu_{C_+, 1}(S_i, X_i, O)\right\|_{P, q} \\
    & \quad +  
    \frac{(1-\epsilon)}{\epsilon^3}
    \left\|[Y_i - \widetilde{q}_{C_+}(S_i, X_i, O)]_{+} - [Y_i - q_{C_+}(S_i, X_i, O)]_{+} \right\|_{P, q} \\
    & 
    \frac{(1-\epsilon)^2 + (1-\epsilon)}{\epsilon^3}
    \left\|\widetilde{q}_{C_+}(S_i, X_i, O) - q_{C_+}(S_i, X_i, O) \right\|_{P, q} \\
    & \quad +  
    \frac{(1-\epsilon)^2}{\epsilon^3}
    \left\|\widetilde{\mu}_{C_+, 1}(S_i, X_i, O) - \mu_{C_+, 1}(S_i, X_i, O)\right\|_{P, q} 
    \le
    T_{\epsilon},
\end{align*}
}
where $T_{\epsilon}$ is a positive constant which only depends on $T$ and $\epsilon$. Here, the second last equality holds because $r \mapsto [Y - r]_{+}$ is Lipchitz continuous. Similar calculation shows that $\|m_5(Z_i, \tau_{C_+}, \widetilde{\eta})\|_{P, q} \le T_{\epsilon}$. It concludes  $\|m_{C_+}(Z_i, \tau_{C_+}, \widetilde{\eta})\|_{P, q} \le T_{\epsilon}$.

\textbf{Part 3: Verification of Condition (b).}

\textbf{Part 3-1: Verification of Condition \eqref{eq:rate-diff-m_a}.}
\begin{align*}
        \sup_{\widetilde{\eta}  \in R_n} \| \mathbb{E}[m_{C_+}^{a}(Z_i, \widetilde{\eta})] - \mathbb{E}[m_{C_+}^{a}(Z_i, \eta)] \| 
        & = \mathbb{E}_P\left[\frac{\mathds{1}_{P_i=E}}{\varphi \widetilde{\varphi}}(\widetilde{\varphi} -\varphi)\right] 
        \le \epsilon^{-2} \| \hat{\varphi} -\hat{\varphi} \|_{P, 2}  \le \epsilon^{-2} \delta_n.
    \end{align*}

\textbf{Part 3-2: Verification of Condition \eqref{eq:rate-diff-L2}.}

Note that 
    \begin{align*}
        \|m_{C_+}(Z_i, \tau_{C_+}, \widetilde{\eta}) - m_{C_+}(Z_i, \tau_{C_+}, \eta) \|_{P, 2} \le \sum_{k=1}^{7} \| m_k(Z_i, \tau_{C_+}, \widetilde{\eta}) - m_k(Z_i, \tau_{C_+}, \eta) \|_{P, 2}
    \end{align*}
    We will show in Parts 3-2-1 to 3-2-4 that
    \begin{align*}
        \| m_k(Z_i, \tau_{C_+}, \widetilde{\eta}) - m_k(Z_i, \tau_{C_+}, \eta) \|_{P, 2} \le T_{\epsilon} \delta_n
    \end{align*}
    for some positive constant $T_{\epsilon}$ which depends on $T$ and $\epsilon$ only. This concludes Part 3-2. Before we derive the bounds for the right-hand side, we mention that the boundedness of $Y$ implies that 
    \begin{gather*}
        \mathbb{E}_P[((Y_i - q_{C_+}(S_i, X_i, O)]_{+})^2|S_i, X_i, P_i = 0] \le T_0, \\
        \mathbb{E}_P[((Y_i - q_{C_+}(S_i, X_i, O)]_{-})^2|S_i, X_i, P_i = 0] \le T_0, \\ \|\bar{\mu}_{C_+, 1}(1, X_i) - \bar{\mu}_{C_+, 0}(1, X_i) - \tau_{C_+}\|_{P, 2} \le T_0, \\ 
        \| \mu_{C_+, 1}(S_i, X_i, O) - \bar{\mu}_{C_+, 1}(1, X_i) \|_{P, 2} \le  C, \quad \| \mu_{C_+, 0}(S_i, X_i, O) - \bar{\mu}_{C_+, 0}(0, X_i) \|_{P, 2} \le  T_0.
    \end{gather*}
    for some positive constant $T_0$.

    \textbf{Part 3-2-1: Bounds of $\| m_k(Z_i, \tau_{C_+}, \widetilde{\eta}) - m_k(Z_i, \tau_{C_+}, \eta) \|_{P, 2}$ for $j=1, 2$.}

    We focus on $\| m_1(Z_i, \tau_{C_+}, \widetilde{\eta}) - m_1(Z_i, \tau_{C_+}, \eta) \|_{P, 2}$ since $\| m_2(Z_i, \tau_{C_+}, \widetilde{\eta}) - m_2(Z_i, \tau_{C_+}, \eta) \|_{P, 2}$ can be dealt in a similar way.
    {\footnotesize
    \begin{align*}
        & \| m_1(Z_i, \tau_{C_+}, \widetilde{\eta}) - m_1(Z_i, \tau_{C_+}, \eta) \|_{P, 2}
        \\
        & =
        \bigg\|\frac{\mathds{1}_{P_i = E}}{\hat{\varphi}} \frac{W_i}{\hat{\rho}(X_i)} (\tilde{\mu}_{C_+, 1}(S_i, X_i, O) - \hat{\bar{\mu}}_{C_+, 1}(1, X_i)) 
        - 
        \frac{\mathds{1}_{P_i = E}}{{\varphi}} \frac{W_i}{{\rho}(X_i)} (\mu_{C_+, 1}(S_i, X_i, O) - \bar{\mu}_{C_+, 1}(1, X_i))\bigg\|_{P, 2} \\
        & \le \epsilon^{-2}
        \bigg\| \varphi \frac{1}{\hat{\rho}(X_i)} (\tilde{\mu}_{C_+, 1}(S_i, X_i, O) - \hat{\bar{\mu}}_{C_+, 1}(1, X_i)) 
        - \hat{\varphi} \frac{1}{{\rho}(X_i)} (\mu_{C_+, 1}(S_i, X_i, O) - \bar{\mu}_{C_+, 1}(1, X_i))\bigg\|_{P, 2} \\
        & \le \epsilon^{-2}
        \bigg\| \frac{1}{\hat{\rho}(X_i)} (\tilde{\mu}_{C_+, 1}(S_i, X_i, O) - \tilde{\bar{\mu}}_{C+, 1}(1, X_i)) 
        - \frac{1}{{\rho}(X_i)} (\mu_{C_+, 1}(S_i, X_i, O) - \bar{\mu}_{C+, 1}(1, X_i))\bigg\|_{P, 2} \\
        & \quad + 
        \epsilon^{-2} \bigg\| (\varphi - \hat{\varphi}) \frac{1}{{\rho}(X_i)} (\mu_{C_+, 1}(S_i, X_i, O) - \bar{\mu}_{C_+, 1}(1, X_i))  \bigg\|_{P, 2} \\
        & \le \epsilon^{-4}
        \bigg\| \rho(X_i) (\widetilde{\mu}_{C_+, 1}(S_i, X_i, O) - \widetilde{\bar{\mu}}_{C_+, 1}(1, X_i)) 
        - \hat{\rho}(X_i) (\mu_{C_+, 1}(S_i, X_i, O) - \bar{\mu}_{C_+, 1}(1, X_i))\bigg\|_{P, 2} \\
        & \quad + 
        \epsilon^{-3}
        \bigg\| (\varphi - \hat{\varphi})  (\mu_{C_+, 1}(S_i, X_i, O) - \bar{\mu}_{C_+, 1}(1, X_i))  \bigg\|_{P, 2} \\
        & \le 
        \epsilon^{-4}
        \bigg\|  (\tilde{\mu}_{C_+, 1}(S_i, X_i, O) - \mu_{C_+, 1}(S_i, X_i, O)\bigg\|_{P, 2}
         + 
        \epsilon^{-4}
        \|  \widetilde{\bar{\mu}}_{C_+, 1}(1, X_i) - \bar{\mu}_{C_+, 1}(1, X_i)\bigg\|_{P, 2} \\
        & \quad 
        + \epsilon^{-4}
        \bigg\| 
        (\rho(X_i) - \hat{\rho}(X_i)) (\mu_{C_+, 1}(S_i, X_i, O) - \bar{\mu}_{C_+, 1}(1, X_i))\bigg\|_{P, 2} \\
        & \quad + 
        \epsilon^{-3}
        \Biggr(\mathbb{E}\Bigg[\bigg|   (\mu_{C_+, 1}(S_i, X_i, O) - \bar{\mu}_{C_+, 1}(1, X_i))  \bigg|^{2} \Bigg]\Biggr)^{1/2} |\varphi - \hat{\varphi}| \\
        & \le (2 \epsilon^{-4} + \sqrt{T_0} \epsilon^{-4} + \sqrt{T_0} \epsilon^{-3} ) \delta_n.
    \end{align*}
    }
    The first inequality comes from $|a/b - c/d| = |da - cb| / |bd|$. Similarly, we can show that $\|m_2(Z_i, \tau_{C_+}, \widetilde{\eta}) - m_2(Z_i, \tau_{C_+}, \eta) \|_{P, 2} \le T_2 \delta_n$ where $T_2$ depends on $T$ and $\epsilon$ only.

    \textbf{Part 3-2-2: Bounds of $\| m_3(Z_i, \tau_{C_+}, \widetilde{\eta}) - m_3(Z_i, \tau_{C_+}, \eta) \|_{P, 2}$.}
    {\footnotesize
    \begin{align*}
        &\| m_3(Z_i, \tau_{C_+}, \widetilde{\eta}) - m_3(Z_i, \tau_{C_+}, \eta) \|_{P, 2} \\
        & = 
        \left\|\frac{\mathds{1}_{P_i = E}}{\widetilde{\varphi}} (\widetilde{\bar{\mu}}_{C_+, 1}(1, X_i) - \widetilde{\bar{\mu}}_{C_+, 0}(0, X_i) - \tau_{C_+}) - \frac{\mathds{1}_{P_i = E}}{\varphi} (\bar{\mu}_{C_+, 1}(1, X_i) - \bar{\mu}_{C_+, 0}(0, X_i) - \tau_{C_+})\right\|_{P, 2} \\
        & \le
        \epsilon^{-2}
        \left\|\varphi (\widetilde{\bar{\mu}}_{C_+, 1}(1, X_i) - \widetilde{\bar{\mu}}_{C_+, 0}(0, X_i) - \tau_U) - \widetilde{\varphi} (\bar{\mu}_{C_+, 1}(1, X_i) - \bar{\mu}_{C_+, 0}(0, X_i) - \tau_{C_+})\right\|_{P, 2} \\
        & \le
        \epsilon^{-2}
        \left\|\varphi (\widetilde{\bar{\mu}}_{C_+, 1}(1, X_i) - \bar{\mu}_{C_+, 1}(1, X_i))\right\|_{P, 2} 
        +
        \epsilon^{-2}
        \left\|\varphi (\widetilde{\bar{\mu}}_{C_+, 0}(0, X_i) - \bar{\mu}_{C_+, 0}(0, X_i))\right\|_{P, 2}
        \\
        & \quad + \epsilon^{-2}
        \left\|(\varphi  - \tilde{\varphi})(\bar{\mu}_{C_+, 1}(1, X_i) - \bar{\mu}_{C_+, 0}(0, X_i) - \tau_{C_+})\right\|_{P, 2} \\
        & \le
        \epsilon^{-2}
        \left\|\widetilde{\bar{\mu}}_{C_+, 1}(1, X_i) - \bar{\mu}_{C_+, 1}(1, X_i)\right\|_{P, 2}
        +
        \epsilon^{-2}
        \left\|\widetilde{\bar{\mu}}_{C_+, 0}(0, X_i) - \bar{\mu}_{C_+, 0}(0, X_i)\right\|_{P, 2}
        \\
        & \quad + \epsilon^{-2}
        \left\|\bar{\mu}_{C_+, 1}(1, X_i) - \bar{\mu}_{C_+, 0}(0, X_i) - \tau_{C_+}\right\|_{P, 2} |\widetilde{\varphi} - \varphi|  \\
        & \le T_3 \delta_n
    \end{align*}
    }
    where $T_3$ only depends on $T$ and $\epsilon$.

    \textbf{Part 3-2-3: Bounds of $\| m_k(Z_i, \tau_{C_+}, \widetilde{\eta}) - m_k(Z_i, \tau_{C_+}, \eta) \|_{P, 2}$ for $j=4, 5$.}

    Let's define $V_1(S_i, X_i)$ and $V_2(S_i, X_i)$ by
    \begin{gather*}
        \frac{1}{V_1(S_i, X_i)} =  \frac{1}{{\varphi}} \frac{{\varphi}(S_i, X_i)}{1-{\varphi}(S_i, X_i)} \frac{1}{{\rho}(X_i)}, \quad  
        \frac{1}{\widetilde{V}_1(S_i, X_i)} =  \frac{1}{\tilde{\varphi}} \frac{\tilde{\varphi}(S_i, X_i)}{1-\tilde{\varphi}(S_i, X_i)} \frac{1}{\tilde{\rho}(X_i)} \\
        V_2(S_i, X_i) = V_1(S_i, X_i) / \rho(S_i, X_i), \quad
        \widetilde{V}_2(S_i, X_i) = \widetilde{V}_1(S_i, X_i) / \tilde{\rho}(S_i, X_i).
    \end{gather*}
    Under our assumption, we have $\min\{V_1(S_i, X_i), \widetilde{V}_1(S_i, X_i)\} \ge \epsilon^3 / (1 - \epsilon) := \epsilon_1$ and $\min\{V_2(S_i, X_i), \widetilde{V}_2(S_i, X_i)\} \ge \epsilon^3 / (1 - \epsilon)^2 := \epsilon_2$. The Mean-value theorem implies that 
    \begin{align*}
        | \widetilde{V}_1(S_i, X_i) - V_1(S_i, X_i) | & \le C_{\epsilon} \left(|\hat{\varphi} - \varphi| + |\tilde{\varphi}(S_i, X_i) - \varphi(S_i, X_i)| + |\tilde{\rho}(X_i) - \rho(X_i)|  \right) \\
         | \widetilde{V}_2(S_i, X_i) - V_2(S_i, X_i) | & \le T_{\epsilon} \Big(|\tilde{\varphi} - \varphi| + |\tilde{\varphi}(S_i, X_i) - \varphi(S_i, X_i)| \\
         & \quad \quad + |\tilde{\rho}(X_i) - \rho(X_i)| + |\tilde{\rho}(S_i, X_i) - \rho(S_i, X_i)| \Big),
    \end{align*}
    where $T_{\epsilon}$ is a positive number which only depends on $T$ and $\epsilon$.
    
    Therefore, the bound of $\| m_4(Z_i, \tau_{C_+}, \widetilde{\eta}) - m_4(Z_i, \tau_{C_+}, \eta) \|_{P, 2}$ can be derived as follows.
    \begin{align*}
        & \| m_4(Z_i, \tau_{C_+}, \widetilde{\eta}) - m_4(Z_i, \tau_{C_+}, \eta) \|_{P, 2} \\
        & = 
        \Bigg\|\frac{\mathds{1}_{P_i = O}}{\hat{\varphi}} \frac{\hat{\varphi}(S_i, X_i)}{1-\hat{\varphi}(S_i, X_i)} \frac{\hat{\rho}(S_i, X_i)}{\hat{\rho}(X_i)} (H_U(Y_i, \widetilde{q}_{C_+}(S_i, X_i, O), \tilde{\rho}(S_i, X_i)) -\tilde{\mu}_{C_+, 1}(S_i, X_i, O)) \\
        & \quad 
        -
        \frac{\mathds{1}_{P_i = O}}{{\varphi}} \frac{{\varphi}(S_i, X_i)}{1-{\varphi}(S_i, X_i)} \frac{{\rho}(S_i, X_i)}{{\rho}(X_i)} (H_U(Y_i, q_{C_+}(S_i, X_i, O), {\rho}(S_i, X_i)) - \mu_{C_+, 1}(S_i, X_i, O))
        \Bigg\|_{P, 2} \\
        & \le 
        \Bigg\|\frac{\mathds{1}_{P_i = O}}{\widetilde{V}_2(S_i, X_i)} (\widetilde{q}_{C_+}(S_i, X_i, O) -\tilde{\mu}_{C_+, 1}(S_i, X_i, O)) 
        -
        \frac{\mathds{1}_{P_i = O}}{{V}_2(S_i, X_i)} (q_{C_+}(S_i, X_i, O) -\mu_{C_+, 1}(S_i, X_i, O))
        \Bigg\|_{P, 2} \\
        & \quad +
        \Bigg\|\frac{\mathds{1}_{P_i = O}}{\widetilde{V}_1(S_i, X_i)} (Y_i - \widetilde{q}_{C_+}(S_i, X_i, O))_{+} 
        -
        \frac{\mathds{1}_{P_i = O}}{V_1(S_i, X_i)} (Y_i - {q}_{U}(S_i, X_i, O))_{+}
        \Bigg\|_{P, 2} \\
        & \le \epsilon_2^{-2}
        \Bigg\|V_2(S_i, X_i) (\widetilde{q}_{C_+}(S_i, X_i, O) -\tilde{\mu}_{C_+, 1}(S_i, X_i, O)) 
        -
        \widetilde{V}_2(S_i, X_i) (q_{C_+}(S_i, X_i, O) - \mu_{C_+, 1}(S_i, X_i, O))
        \Bigg\|_{P, 2} \\
        & \quad + \epsilon_1^{-2}
        \Bigg\| \mathds{1}_{P_i = O}\bigg|V_1(S_i, X_i) (Y_i - \widetilde{q}_{C_+}(S_i, X_i, O))_{+} 
        -
        \widehat{V}_1(S_i, X_i) (Y_i - q_{C_+}(S_i, X_i, O))_{+}\bigg|
        \Bigg\|_{P, 2} \\
        & \le
        \epsilon_2^{-2}
        \Bigg\|V_2(S_i, X_i) \bigg((\widetilde{q}_{C_+}(S_i, X_i, O) - q_{C_+}(S_i, X_i, O))-(\tilde{\mu}_{C_+, 1}(S_i, X_i, O)  - \mu_{C_+, 1}(S_i, X_i, O)\bigg) 
        \Bigg\|_{P, 2} \\
         & \quad +
        \epsilon_2^{-2}
        \Bigg\|(V_2(S_i, X_i)
        -
        \widetilde{V}_2(S_i, X_i)) (q_{C_+}(S_i, X_i, O) - \mu_{C_+, 1}(S_i, X_i, O))
        \Bigg\|_{P, 2} \\
        & \quad + \epsilon_1^{-2}
        \Bigg\|\mathds{1}_{P_i = O}\bigg|V_1(S_i, X_i) \bigg[(Y_i - \widetilde{q}_{C_+}(S_i, X_i, O))_{+} 
        -
         (Y_i - {q}_{U}(S_i, X_i, O))_{+}\bigg]
        \bigg|\Bigg\|_{P, 2} \\
         & \quad + \epsilon_1^{-2}
        \Bigg\| \mathds{1}_{P_i = O}\bigg|\bigg( V_1(S_i, X_i) -
        \widetilde{V}_1(S_i, X_i) \bigg) (Y_i - q_{C_+}(S_i, X_i, O))_{+}
        \bigg|^2\Bigg\|_{P, 2} \\
        & \le T_4 \delta_n
    \end{align*}
    where $T_4$ depends on $T$ and $\epsilon$ only. 
    We can similarly show that $\| m_5(Z_i, \tau_{C_+}, \widetilde{\eta}) - m_5(Z_i, \tau_{C_+}, \eta) \|_{P, 2} < T_5 \delta_n$ where $T_5 > 0$ depends only on $T$ and $\epsilon$. 

    \textbf{Part 3-2-4: Bounds of $\| m_k(Z_i, \tau_{C_+}, \widetilde{\eta}) - m_k(Z_i, \tau_{C_+}, \eta) \|_{P, 2}$ for $j=6, 7$.}
    \begin{align*}
        & \| m_6(Z_i, \tau_{C_+}, \widetilde{\eta}) - m_6(Z_i, \tau_{C_+}, \eta) \|_{P, 2}\\
        & =
        \Bigg\|\frac{\mathds{1}_{P_i = E}}{\tilde{\varphi}} \frac{1}{\hat{\rho}(X_i)} \left[\widetilde{q}_{C_+}(S_i, X_i, O) - \tilde{\mu}_{C_+, 1}(S_i, X_i, O)\right] \left(W_i - \tilde{\rho}(S_i, X_i)\right) \\
        & \quad \quad - \frac{\mathds{1}_{P_i = E}}{\varphi} \frac{1}{\rho(X_i)} \left[q_{C_+}(S_i, X_i, O) - \mu_{C_+, 1}(S_i, X_i, O)\right] \left(W_i - \rho(S_i, X_i)\right)\bigg\|_{P, 2} \\
        & \le 
        \Bigg\|\frac{\mathds{1}_{P_i = E}}{\hat{\varphi}} \frac{1}{\tilde{\rho}(X_i)} \left[\widetilde{q}_{C_+}(S_i, X_i, O) - \tilde{\mu}_{C_+, 1}(S_i, X_i, O)\right]  \\
        & \quad \quad - \frac{\mathds{1}_{P_i = E}}{\varphi} \frac{1}{\rho(X_i)} \left[q_{C_+}(S_i, X_i, O) - \mu_{C_+, 1}(S_i, X_i, O)\right] \Bigg\|_{P, 2} \\
        & \quad + 
        \Bigg\|\frac{\mathds{1}_{P_i = E}}{\tilde{\varphi}} \frac{\tilde{\rho}(S_i, X_i)}{\tilde{\rho}(X_i)} \left[\widetilde{q}_{C_+}(S_i, X_i, O) - \tilde{\mu}_{C_+, 1}(S_i, X_i, O)\right]  \\
        & \quad \quad - \frac{\mathds{1}_{P_i = E}}{\varphi} \frac{\rho(S_i, X_i)}{\rho(X_i)} \left[q_{C_+}(S_i, X_i, O) - \mu_{C_+, 1}(S_i, X_i, O)\right] \Bigg\|_{P, 2} \\
        & \le 
        \epsilon^{-4}
        \Bigg\| \varphi \rho(X_i) \left[\widetilde{q}_{C_+}(S_i, X_i, O) - \tilde{\mu}_{C_+, 1}(S_i, X_i, O)\right]  - \tilde{\varphi} \hat{\rho(X_i)} \left[q_{C_+}(S_i, X_i, O) - \mu_{C_+, 1}(S_i, X_i, O)\right] \Bigg\|_{P, 2}  \\
        & \quad + ((1-\epsilon)/\epsilon^2)^{2}
        \Bigg\| \frac{\varphi \rho(X_i)}{\rho(S_i, X_i)} \left[\widetilde{q}_{C_+}(S_i, X_i, O) - \tilde{\mu}_{C_+, 1}(S_i, X_i, O)\right]  \\
        & \quad \quad \quad \quad \quad \quad \quad \quad -  
        \frac{\tilde{\varphi}\tilde{\rho}(X_i)}{\tilde{\rho}(S_i, X_i)} \left[q_{C_+}(S_i, X_i, O) - \mu_{C_+, 1}(S_i, X_i, O)\right] \Bigg\|_{P, 2} \\
        & \le 
        \epsilon^{-4}
        \Bigg\| \varphi \rho(X_i) \left[(\widetilde{q}_{C_+}(S_i, X_i, O) - \tilde{\mu}_{C_+, 1}(S_i, X_i, O)) - (q_{C_+}(S_i, X_i, O) - \mu_{C_+, 1}(S_i, X_i, O))\right] \Bigg\|_{P, 2} \\
        & \quad + 
        \epsilon^{-4}
        \Bigg\| (\varphi \rho(X_i) - \tilde{\varphi} \tilde{\rho}(X_i)) \left[q_{C_+}(S_i, X_i, O) - \mu_{C_+, 1}(S_i, X_i, O)\right] \Bigg\|_{P, 2} \\
        & \quad + \frac{(1-\epsilon)^2}{\epsilon^4}
        \Bigg\| \frac{\varphi \rho(X_i)}{\rho(S_i, X_i)} \left[(\widetilde{q}_{C_+}(S_i, X_i, O) - \tilde{\mu}_{C_+, 1}(S_i, X_i, O)) - (q_{C_+}(S_i, X_i, O) - \mu_{C_+, 1}(S_i, X_i, O))\right] \Bigg|_{P, 2} \\
        & \quad + \frac{(1-\epsilon)^2}{\epsilon^4}
        \Bigg\| \left(\frac{\varphi \rho(X_i)}{\rho(S_i, X_i)} -  
        \frac{\hat{\varphi}\hat{\rho}(X_i)}{\hat{\rho}(S_i, X_i)} \right)\left[q_{C_+}(S_i, X_i, O) - \mu_{C_+, 1}(S_i, X_i, O)\right] \Bigg\|_{P, 2} \\
        & \le T_6 \delta_n
    \end{align*}
    where $T_6$ depends on $T$ and $\epsilon$ only. We can similarly show that $\| m_7(Z_i, \tau_{C_+}, \widetilde{\eta}) - m_7(Z_i, \tau_{C_+}, \eta) \|_{P, 2} \le T_7 \delta_n$ where $T_5 > 0$ depends only on $T$ and $\epsilon$.

\textbf{Part 3-3: Verification of Condition \eqref{eq:rate-diff-m}.}

\Cref{lemma:tech-remainer-wc} in the supplementary appendix implies that we have
\begin{align*}
    \mathbb{E}_P[m(W_i, \tau_{C_+}, \widetilde{\eta}) - m(W_i, \tau_{C_+}, \eta)] 
    & = 
    \sum_{j=1}^7 \mathbb{E}_P[m_j(W_i, \tau_{C_+}, \widetilde{\eta}) - m_j(W_i, \tau_{C_+}, \eta)] = \sum_{j=1}^8 \mathcal{J}_j,  
\end{align*} 
where
{\footnotesize
\begin{align*}
    \mathcal{J}_1 & = 
    - \mathbb{E}_P\Bigg[\bigg(\frac{\mathds{1}_{P_i = E}}{\widetilde{\varphi}} \frac{W_i}{\widetilde{\rho}(X_i)} - \frac{\mathds{1}_{P_i = E}}{\widetilde{\varphi}} \frac{W_i}{\rho(X_i)} \bigg)(\widetilde{\bar{\mu}}_{C_+, 1}(1, X_i) - \bar{\mu}_{C_+, 1}(1, X_i)) \Bigg] \\
    \mathcal{J}_2 & = 
    \mathbb{E}_P\Bigg[\bigg(\frac{\mathds{1}_{P_i = E}}{\widetilde{\varphi}} \frac{1-W_i}{1-\widetilde{\rho}(X_i)} - \frac{\mathds{1}_{P_i = E}}{\widetilde{\varphi}} \frac{1-W_i}{1-\rho(X_i)} \bigg) (\widetilde{\bar{\mu}}_{C_+, 0}(1, X_i) - \bar{\mu}_{C_+, 0}(1, X_i)) \Bigg] \\
    \mathcal{J}_3 & =
    \mathbb{E}_P\Bigg[
    \frac{\mathds{1}_{P_i = O}}{\widetilde{\varphi}} \frac{\widetilde{\varphi}(S_i, X_i)}{1-\widetilde{\varphi}(S_i, X_i)}\frac{1}{\widetilde{\rho}(X_i)} \\
    & \quad \quad \times \bigg[(\rho(S_i, X_i) \widetilde{q}_{C_+}(S_i, X_i, O) + [Y_i - \widetilde{q}_{C_+}(S_i, X_i, O)]_{+}) \\
    & \quad \quad \quad 
    - (\rho(S_i, X_i) q_{C_+}(S_i, X_i, O) + [Y_i - q_{C_+}(S_i, X_i, O)]_{+}) \bigg]  \Bigg] \\
    \mathcal{J}_4 & =
    -
    \mathbb{E}_P\Bigg[
    \left(\frac{\mathds{1}_{P_i = O}}{\widetilde{\varphi}} \frac{\widetilde{\varphi}(S_i, X_i)}{1-\widetilde{\varphi}(S_i, X_i)}\frac{1}{\widetilde{\rho}(X_i)} - \frac{\mathds{1}_{P_i = O}}{\widetilde{\varphi}} \frac{\varphi(S_i, X_i)}{1-\varphi(S_i, X_i)}\frac{1}{\widetilde{\rho}(X_i)}\right)
    \\
    & \quad \quad \times \bigg[\rho(S_i, X_i) (\widetilde{\mu}_{C_+, 1}(S_i, X_i, O)) - \mu_{C_+, 1}(S_i, X_i, O)) \bigg]  \Bigg] \\
    \mathcal{J}_5 & =
    \mathbb{E}_P\Bigg[
    \left(\frac{\mathds{1}_{P_i = O}}{\widetilde{\varphi}} \frac{\widetilde{\varphi}(S_i, X_i)}{1-\widetilde{\varphi}(S_i, X_i)}\frac{1}{\widetilde{\rho}(X_i)} - 
    \frac{\mathds{1}_{P_i = O}}{\widetilde{\varphi}} \frac{\varphi(S_i, X_i)}{1-\varphi(S_i, X_i)}\frac{1}{\widetilde{\rho}(X_i)} \right)\\
    & \quad \quad \times \bigg[(\widetilde{\rho}(S_i, X_i) - \rho(S_i, X_i)) (\widetilde{q}_{C_+}(S_i, X_i, O) -\widetilde{\mu}_{C_+, 1}(S_i, X_i, O))
     \Bigg] \\
    \mathcal{J}_6 & =
    -
    \mathbb{E}_P\Bigg[
    \frac{\mathds{1}_{P_i = O}}{\widetilde{\varphi}} \frac{\widetilde{\varphi}(S_i, X_i)}{1-\widetilde{\varphi}(S_i, X_i)}\frac{1}{1-\widetilde{\rho}(X_i)} \\
    & \quad \quad \times \bigg[\big((1-\rho(S_i, X_i)) \widetilde{q}_{C_+}(S_i, X_i, O) - [Y_i - \widetilde{q}_{C_+}(S_i, X_i, O)]_{-}) \\
    & \quad \quad \quad - \big((1-\rho(S_i, X_i)) q_{C_+}(S_i, X_i, O) - [Y_i - q_{C_+}(S_i, X_i, O)]_{-}\big) \bigg] \bigg| 
     \Bigg] \\
    \mathcal{J}_7 & =
    \mathbb{E}_P\Bigg[
    \left(\frac{\mathds{1}_{P_i = O}}{\widetilde{\varphi}} \frac{\widetilde{\varphi}(S_i, X_i)}{1-\widetilde{\varphi}(S_i, X_i)}\frac{1}{1-\widetilde{\rho}(X_i)} - \frac{\mathds{1}_{P_i = O}}{\widetilde{\varphi}} \frac{\varphi(S_i, X_i)}{1-\varphi(S_i, X_i)}\frac{1}{1-\widetilde{\rho}(X_i)}\right)
    \\
    & \quad \quad \times \bigg[(1-\rho(S_i, X_i)) (\widetilde{\mu}_{C_+, 0}(S_i, X_i, O)) - \mu_{C_+, 0}(S_i, X_i, O)) \bigg]  \Bigg] \\
    \mathcal{J}_8 & =
    \mathbb{E}_P\Bigg[
    \left(\frac{\mathds{1}_{P_i = O}}{\widetilde{\varphi}} \frac{\widetilde{\varphi}(S_i, X_i)}{1-\widetilde{\varphi}(S_i, X_i)}\frac{1}{1-\widetilde{\rho}(X_i)} - 
    \frac{\mathds{1}_{P_i = O}}{\widetilde{\varphi}} \frac{\varphi(S_i, X_i)}{1-\varphi(S_i, X_i)}\frac{1}{1-\widetilde{\rho}(X_i)} \right)\\
    & \quad \quad \times \bigg[(\widetilde{\rho}(S_i, X_i) - \rho(S_i, X_i)) (\widetilde{q}_{C_+}(S_i, X_i, O) -\widetilde{\mu}_{C_+, 0}(S_i, X_i, O))
     \Bigg].
    \end{align*}
}
Because $Y$ is bounded and $\mathcal{P}(\epsilon \le \varphi(S_i, X_i)\le 1-\epsilon) = 1$ and $\mathcal{P}(\epsilon \le \rho(S_i, X_i)\le 1-\epsilon) = 1$, we have the followings bounds for $\mathcal{J}_j$ for $j=1, \dotsc, 8$.

(1) Bounds for $\mathcal{J}_1$,  $\mathcal{J}_2$, $\mathcal{J}_4$, $\mathcal{J}_7$.
\begin{align*}
    |\mathcal{J}_1| & \le T_{\epsilon} \| \widetilde{\rho}(X_i) - \rho(X_i) \|_{P, 2} \times \| \widetilde{\bar{\mu}}_{C_+, 1}(1, X_i) - \bar{\mu}_{C_+, 1}(1, X_i) \|_{P, 2} \le T_{\epsilon} n^{-1/2} \delta_n, \\
    |\mathcal{J}_2| & \le T_{\epsilon} \| \widetilde{\rho}(X_i) - \rho(X_i) \|_{P, 2} \times \| \widetilde{\bar{\mu}}_{C_+, 0}(0, X_i) - \bar{\mu}_{C_+, 0}(0, X_i) \|_{P, 2} \le T_{\epsilon} n^{-1/2} \delta_n , \\
    |\mathcal{J}_4| & \le T_{\epsilon} \| \widetilde{\varphi}(S_i, X_i) - \varphi(S_i, X_i) \|_{P, 2} \times \| \widetilde{\mu}_{C_+, 1}(S_i, X_i, O) - \mu_{C_+, 1}(S_i, X_i, O) \|_{P, 2} \le T_{\epsilon} n^{-1/2} \delta_n , \\
    |\mathcal{J}_7| & \le T_{\epsilon} \| \widetilde{\varphi}(S_i, X_i) - \varphi(S_i, X_i) \|_{P, 2} \times \| \widetilde{\mu}_{C_+, 0}(S_i, X_i, O) - \mu_{C_+, 0}(S_i, X_i, O) \|_{P, 2} \le T_{\epsilon} n^{-1/2} \delta_n ,
\end{align*}
for some positive absolute constant $T_{\epsilon}$ that only depends on $\epsilon$ and $T$.

(2) Bounds for $\mathcal{J}_5$ and $\mathcal{J}_8$.
Note that
\begin{align*}
    \mathcal{J}_5 
    & =
    \mathbb{E}_P\Bigg[ \frac{\mathds{1}_{P_i = O}}{\widetilde{\varphi}}
    \left( \frac{\widetilde{\varphi}(S_i, X_i)}{1-\widetilde{\varphi}(S_i, X_i)}  - 
     \frac{\varphi(S_i, X_i)}{1-\varphi(S_i, X_i)}\right)\frac{1}{\widetilde{\rho}(X_i)}   \\
    & \quad \quad \times \bigg[(\widetilde{\rho}(S_i, X_i) - \rho(S_i, X_i)) (q_{C_+}(S_i, X_i, O) -\mu_{C_+, 1}(S_i, X_i, O))
      \Bigg] \\
     & \quad +
    \mathbb{E}_P\Bigg[ \frac{\mathds{1}_{P_i = O}}{\widetilde{\varphi}}
    \left( \frac{\widetilde{\varphi}(S_i, X_i)}{1-\widetilde{\varphi}(S_i, X_i)}  - 
     \frac{\varphi(S_i, X_i)}{1-\varphi(S_i, X_i)}\right)\frac{1}{\widetilde{\rho}(X_i)} (\widetilde{\rho}(S_i, X_i) - \rho(S_i, X_i))  \\
    & \quad \quad \quad \times \bigg[ \Big[(\widetilde{q}_{C_+}(S_i, X_i, O) -\widetilde{\mu}_{C_+, 1}(S_i, X_i, O)) - (q_{C_+}(S_i, X_i, O) -\mu_{C_+, 1}(S_i, X_i, O)) \Big]
     \Bigg].
\end{align*}
It implies that
\begin{align*}
    | \mathcal{J}_5 | 
    & \le 
    T_{\epsilon} \| \widetilde{\varphi}(S_i, X_i) - \varphi(S_i, X_i) \|_{P, 2} \times \| \widetilde{\rho}(S_i, X_i) - \rho(S_i, X_i) \|_{P, 2} \\
    & \quad + 
    T_{\epsilon} \| \widetilde{\varphi}(S_i, X_i) - \varphi(S_i, X_i) \|_{P, 2} \times \| \widetilde{q}_{C_+}(S_i, X_i, O) - q_{C_+}(S_i, X_i, O) \|_{P, 2} \\
    & \quad + 
    T_{\epsilon} \| \widetilde{\varphi}(S_i, X_i) - \varphi(S_i, X_i) \|_{P, 2} \times \| \widetilde{\mu}_{C_+, 1}(S_i, X_i, O) - \mu_{C_+, 1}(S_i, X_i, O) \|_{P, 2} \\ & \le T_{\epsilon} n^{-1/2} \delta_n ,
\end{align*}
where $T_{\epsilon}$ is a positive constant that depends only on $\epsilon$ and $T$.

Similarly, we have the following bounds for $\mathcal{J}_8$.
\begin{align*}
    | \mathcal{J}_8 | 
    & \le 
    T_{\epsilon} \| \widetilde{\varphi}(S_i, X_i) - \varphi(S_i, X_i) \|_{P, 2} \times \| \widetilde{\rho}(S_i, X_i) - \rho(S_i, X_i) \|_{P, 2} \\
    & \quad + 
    T_{\epsilon} \| \widetilde{\varphi}(S_i, X_i) - \varphi(S_i, X_i) \|_{P, 2} \times \| \widetilde{q}_{C_+}(S_i, X_i, O) - q_{C_+}(S_i, X_i, O) \|_{P, 2} \\
    & \quad + 
    T_{\epsilon} \| \widetilde{\varphi}(S_i, X_i) - \varphi(S_i, X_i) \|_{P, 2} \times \| \widetilde{\mu}_{C_+, 0}(S_i, X_i, O) - \mu_{C_+, 0}(S_i, X_i, O) \|_{P, 2} \\
    & \le T_{\epsilon} n^{-1/2} \delta_n ,
\end{align*}
where $T_{\epsilon}$ is a positive constant that depends only on $\epsilon$ and $T$.

(3) Bounds for $\mathcal{J}_3$ and $\mathcal{J}_6$. 
We focus on $\mathcal{J}_3$ since the bound of $\mathcal{J}_6$ can be derived in a similar way.
This part is motivated by the proof of Theorem 2 in \cite{dorn2024dvds}.

Note that
\begin{align*}
    \mathcal{J}_3 & = \mathbb{E}_P\Bigg[
    \frac{1 - \varphi}{\widetilde{\varphi}} \frac{\widetilde{\varphi}(S_i, X_i)}{1-\widetilde{\varphi}(S_i, X_i)}\frac{\rho(S_i, X_i)}{\widetilde{\rho}(X_i)} \mathcal{J}_{3}(S_i, X_i, O) \bigg| 
    P_i = O, \Bigg]. 
\end{align*}
where
\begin{align*}
   \mathcal{J}_{3}(S_i, X_i, O) & = \bigg[\left(\widetilde{q}_{C_+}(S_i, X_i, O) + \frac{\mathbb{E}_P[[Y_i - \widetilde{q}_{C_+}(S_i, X_i, O)]_{+}|S_i, X_i, P_i = O]}{\rho(S_i, X_i)} \right) \\
    & \quad \quad \quad - \left(q_{C_+}(S_i, X_i, O) + \frac{\mathbb{E}_P[[Y_i - q_{C_+}(S_i, X_i, O)]_{+})|S_i, X_i, P_i = O,]}{\rho(S_i, X_i)}\right) \bigg]  
\end{align*}

(Continuous Case:) Lemma 2 in \cite{dorn2024dvds} and $\mathbb{P}(\epsilon \le \rho(S_i, X_i) \le 1  - \epsilon)$ imply:
\begin{align*}
    |\mathcal{J}_3| \le T_{\epsilon} \| \widetilde{q}_{C_+}(S_i, X_i, O) - q_{C_+}(S_i, X_i, O) \|_{P, 2}^2 \le T_{\epsilon} n^{-1/2} \delta_n,
\end{align*}
where $T_{\epsilon}$ depends on $\epsilon$ and $T$ only.

(Binary Case:) 
Note that
\begin{align*}
    q_{C_+}(S_i, X_i, O) = 
    \begin{cases}
        1 & \text{ if } \mu(S_i, X_i, O) > \rho(S_i, X_i), \\
        0 & \text{ otherwise }
    \end{cases}
\end{align*}

The discussion in Section C.8.2 in the supplementary material of \cite{dorn2024dvds} implies that
\begin{align*}
    & \bigg|\left(\widetilde{q}_{C_+}(S_i, X_i, O) + \frac{\mathbb{E}_P[[Y_i - \widetilde{q}_{C_+}(S_i, X_i, O)]_{+}|S_i, X_i, P_i = O]}{\rho(S_i, X_i)} \right) \\
    & \quad - \left(q_{C_+}(S_i, X_i, O) + \frac{\mathbb{E}_P[[Y_i - q_{C_+}(S_i, X_i, O)]_{+})|S_i, X_i, P_i = O]}{\rho(S_i, X_i)}\right) \Bigg| \\
    & = \bigg|1 - \frac{\mu(S_i, X_i, O)}{\rho(S_i, X_i)}\bigg| \mathds{1}(\mu(S_i, X_i, O) \le \rho(S_i, X_i) < \widetilde{\mu}(S_i, X_i, O) \text{ or } \widetilde{\mu}(S_i, X_i, O) \le \rho(S_i, X_i) < \mu(S_i, X_i, O))
\end{align*}
If $\mu(S_i, X_i, O) \le \rho(S_i, X_i) \le \widetilde{\mu}(S_i, X_i, O) \text{ or } \widetilde{\mu}(S_i, X_i, O) \le \rho(S_i, X_i) \le \mu(S_i, X_i, O)$, then $|\mu(S_i, X_i, O) - \rho(S_i, X_i)| \le \| \widetilde{\mu}(S_i, X_i, O) - \mu(S_i, X_i, O)\|_{\infty}$. Therefore, we have
{\scriptsize
\begin{align*}
    |\mathcal{J}_3(S_i, X_i, O)| 
    & = \bigg|1 - \frac{\mu(S_i, X_i, O)}{\rho(S_i, X_i)}\bigg| \mathds{1}(\mu(S_i, X_i, O) \le \rho(S_i, X_i) \le \widetilde{\mu}(S_i, X_i, O) \text{ or } \widetilde{\mu}(S_i, X_i, O) \le \rho(S_i, X_i) \le \mu(S_i, X_i, O)) \\
    & \le \bigg|1 - \frac{\mu(S_i, X_i, O)}{\rho(S_i, X_i)}\bigg| \mathds{1}(|\mu(S_i, X_i, O) - \rho(S_i, X_i)| \le \| \widetilde{\mu}(S_i, X_i, O) - \mu(S_i, X_i, O)\|_{\infty}) \\
    & \le \frac{1}{\epsilon} | \widetilde{\mu}(S_i, X_i, O) - \mu(S_i, X_i, O)\|_{\infty} \mathds{1}(|\mu(S_i, X_i, O) - \rho(S_i, X_i)| \le \| \widetilde{\mu}(S_i, X_i, O) - \mu(S_i, X_i, O)\|_{\infty}).
\end{align*}
}
This implies that when $\mu(S_i, X_i, O) - \rho(S_i, X_i)$ has the bounded density, we have
{\footnotesize
\begin{align*}
    |\mathcal{J}_3| 
    & \le T_{\epsilon} | \widetilde{\mu}(S_i, X_i, O) - \mu(S_i, X_i, O)\|_{\infty} \mathbb{P}(\mathds{1}(|\mu(S_i, X_i, O) - \rho(S_i, X_i)| \le \| \widetilde{\mu}(S_i, X_i, O) - \mu(S_i, X_i, O)\|_{\infty}) \\
    & \le T_{\epsilon} | \widetilde{\mu}(S_i, X_i, O) - \mu(S_i, X_i, O)\|_{\infty}^2 \\
    & \le T_{\epsilon} n^{-1/2} \delta_n .
\end{align*}
}
Similar reasoning gives the bound for $\mathcal{J}_6$.
Note that 
\begin{align*}
    \mathcal{J}_6 = -
    \mathbb{E}_P\Bigg[\frac{\mathds{1}_{P_i = O}}{\widetilde{\varphi}} \frac{\widetilde{\varphi}(S_i, X_i)}{1-\widetilde{\varphi}(S_i, X_i)}\frac{1-\rho(S_i, X_i)}{1-\widetilde{\rho}(X_i)} \mathcal{J}_{6}(S_i,X_i, O) \bigg| P_i = O\Bigg]
\end{align*}
where
\begin{align*}
    \mathcal{J}_{6}(S_i,X_i, O) 
    & = \bigg[\big(\widetilde{q}_{C_+}(S_i, X_i, O) - \frac{\mathbb{E}_P[[Y_i - \widetilde{q}_{C_+}(S_i, X_i, O)]_{-}|S_i, X_i, P_i = O]}{1-\rho(S_i, X_i)}) \\
    & \quad \quad \quad - \big((1-\rho(S_i, X_i)) q_{C_+}(S_i, X_i, O) - \frac{\mathbb{E}_P[[Y_i - q_{C_+}(S_i, X_i, O)]_{-}|S_i, X_i, P_i = O]}{1-\rho(S_i, X_i)}\big) \bigg]  \Bigg].
\end{align*}
Similar to $|\mathcal{J}_3|$, $|\mathcal{J}_{6}(S_i,X_i, O)|$ have the following.
{\scriptsize
\begin{align*}
    |\mathcal{J}_{6}(S_i,X_i, O)| =
    \left|1 - \frac{1-\mu(S_i, X_i, O)}{1-\rho(S_i, X_i)}  \right| \mathds{1}(\mu(S_i, X_i, O) \le \rho(S_i, X_i) < \widetilde{\mu}(S_i, X_i, O) \text{ or } \widetilde{\mu}(S_i, X_i, O) \le \rho(S_i, X_i) < \mu(S_i, X_i, O)). 
\end{align*}
}
With similar approach, we have
\begin{align*}
    |\mathcal{J}_6| 
    & \le T_{\epsilon} | \widetilde{\mu}(S_i, X_i, O) - \mu(S_i, X_i, O)\|_{\infty}^2 \le T_{\epsilon} n^{-1/2} \delta_n .
\end{align*}
This implies that
\begin{align*}
    |\mathbb{E}_P[m(W_i, \tau_{C_+}, \widetilde{\eta}) - m(W_i, \tau_{C_+}, \eta)]| \le \sum_{j=1}^8 |\mathcal{J}_j| \le T_{\epsilon} n^{-1/2} \delta_n.
\end{align*}
where $T_{\epsilon}$ depends only on $T$ and $\epsilon$.

\textbf{Part 4: Verification of Condition \eqref{eq:pd-variance}.}
{\footnotesize
\begin{align*}
        & \mathbb{E}_P[m_{C_+}(Z_i, \tau_{C_+}, \eta)^2] \\
        & =
        \mathbb{E}_P\Bigg[\bigg(\sum_{j=1}^2 m_{j}(Z_i, \tau_{C_+}, \eta) + m_{3}(Z_i, \tau_{C_+}, \eta) + \sum_{j=6}^7 m_{j}(Z_i, \tau_{C_+}, \eta) \bigg)^2\Bigg] + 
        \mathbb{E}_P\Bigg[\bigg(\sum_{j=4}^5 m_{j}(Z_i, \tau_{C_+}, \eta)\bigg)^2\Bigg] \\
        & = 
        \mathbb{E}_P\Bigg[\bigg(\sum_{j=1}^2 m_{j}(Z_i, \tau_{C_+}, \eta)\bigg)^2\Bigg] + \mathbb{E}\Bigg[\bigg(\sum_{j=6}^7 m_{j}(Z_i, \tau_{C_+}, \eta) \bigg)^2\Bigg] \\
        & \quad + 2 \mathbb{E}\Bigg[\bigg(\sum_{j=1}^2 m_{j}(Z_i, \tau_{C_+}, \eta)\bigg)\bigg(\sum_{j=6}^7 m_{j}(Z_i, \tau_{C_+}, \eta) \bigg)\Bigg] + \mathbb{E}_P[m_{3}^2(Z_i, \tau_{C_+}, \eta)] + \mathbb{E}_P\Bigg[\bigg(\sum_{j=4}^5 m_{j}(Z_i, \tau_{C_+}, \eta)\bigg)^2\Bigg] \\
        & \ge \mathbb{E}_P[m_{3}^2(Z_i, \tau_{C_+}, \eta)] + \mathbb{E}_P\Bigg[\bigg(\sum_{j=4}^5 m_{j}(Z_i, \tau_{C_+}, \eta)\bigg)^2\Bigg] \\
        & \ge t.
    \end{align*}
}
    
\end{proof}

\section{Proofs for Section \ref{sec:est-copula}}

\label{sec:proofs-est-copula}

\subsection{ Proof of Dual Form of $\mu_{C_o, w}$} 
\label{sec:dicussion-dual-form}

\begin{proof}[Proof of \cref{lemma:dual-SpectralRiskMeasure}]
The proof of \cref{lemma:dual-SpectralRiskMeasure} comes from \cite{pichler2015} and Section 2.4.2 of  \cite{pflug2007ModelingMeasuringManagingRIsk} with the dual form of AVaR. The integration by parts can be applied for functions of bounded variation when one of them is continuous. (c.f. Theorem 5.3 of \cite{convertito2023StieltjesIntegral},  Theorem 6.2.2 of \cite{carter2000LebesgueStieltjesIntegral}).  Consequently, their proof works for our case. To be self-contained, we present it below.

Here, it is enough to show 
\begin{align}
    r 
    & = \sigma(0) \int_0^1 F_Y^{-1}(u) du + \int_0^1 \left(\int_{u}^1 F_Y^{-1}(s) ds \right) d \sigma(u) \label{eq:dual-SpectralRiskMeasure-sig0}\\
    & = \sigma(1) \int_0^1 F_Y^{-1}(u) du - \int_0^1 \left(\int_{0}^u F_Y^{-1}(s) ds \right) d \sigma(u). \label{eq:dual-SpectralRiskMeasure-sig1}
\end{align}

First, \cref{eq:dual-SpectralRiskMeasure-sig1} can be derived as follows.
\begin{align*}
    & \sigma(1) \int_0^1 F_Y^{-1}(s) ds - \int_0^1  \left(\int_0^{u} F_Y^{-1}(s) ds \right)  d \sigma(u) \\
    & = \sigma(1) \int_0^1 F_Y^{-1}(s) ds + \int_0^1 \sigma(u) d \left(\int_0^{u} F_Y^{-1}(s) ds \right) - 
        \sigma(1) \int_0^1 F_Y^{-1}(s) ds \\
    & = \int_0^1 F_Y^{-1}(u) \sigma(u) du.
\end{align*}
The first equality comes from the integration by parts with functions of bounded variation where one of them is continuous. (c.f., see Theorem 5.3 of \cite{convertito2023StieltjesIntegral}, Theorem 21.67 and Remark 21.68 of \cite{hewitt1965RealAbstractAnalysis}, or Theorem 6.2.2 of \cite{carter2000LebesgueStieltjesIntegral}) Note that $\sigma(u)$ is of bounded variation and $u \to \int_0^{u} F_Y^{-1}(s) ds$ is absolute continuous on $[0, 1]$.

\cref{eq:dual-SpectralRiskMeasure-sig0} and \cref{eq:dual-SpectralRiskMeasure-sig1} are identical because
\begin{align*}
    & \sigma(1) \int_0^1 F_Y^{-1}(s) ds - \int_0^1  \left(\int_0^{u} F_Y^{-1}(s) ds \right)  d \sigma(u) \\
    & =
    \sigma(1) \int_0^1 F_Y^{-1}(s) ds - \int_0^1  \left(\int_0^1 F_Y^{-1}(s) ds - \int_{u}^1 F_Y^{-1}(s) ds \right)  d \sigma(u) \\
    & = \sigma(1) \int_0^1 F_Y^{-1}(s) ds - (\sigma(1) - \sigma(0)) \int_0^1 F_Y^{-1}(s) ds + \int_0^1  \left(\int_{u}^{1} F_Y^{-1}(s) ds \right)  d \sigma(u) \\
    & = \sigma(0) \int_0^1 F_Y^{-1}(s) ds + \int_0^1  \left(\int_{u}^{1} F_Y^{-1}(s) ds \right)  d \sigma(u).
\end{align*}
It concludes the proof.
\end{proof}

\subsection{Main Proof for \cref{thm:CLT-for-copula}}

From \Cref{lemma:DML-asym-normality}, it is sufficient to verify that conditions in \Cref{lemma:DML-asym-normality} holds for $m_{C_o}(W_i, \tau_{C_o}, \eta)$. Note that we can write $m_{C_o}(W_i, \tau_{C_o}, \eta)$ in the following affine form.
\begin{align*}
    m_{C_o}(W_i, \tau, \eta) = m_{C_o}^a(Z_i, \eta) \tau + m_{C_o}^b(Z_i, \eta), \text{ where }
    m_{C_o}^a(Z_i, \eta) = 
    \mathds{1}_{P_i = E} / \varphi.
\end{align*}
\begin{lemma} \label{lemma:DML-verification-copula}
    Under Assumptions \ref{assumption: Random Sample}, \ref{assumption: Unconfounded}, \ref{assumption:Comparability}, \ref{assumption:knowncopula}, \ref{assumption:boundedvariation}, \ref{assumption:regularity-wc} (i), \ref{assumption:regularity-copula}, and  \ref{assumption:realization-set-copula}, Conditions in \Cref{lemma:DML-asym-normality} hold. In addition, $\mathbb{E}[m_{C_o}^2(Z_i, \tau_{C_o}, \eta)] > c_0$
    for some positive constant $c_0$ which only depends on $c$ and $\epsilon$.
\end{lemma}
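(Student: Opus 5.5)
The plan is to follow the architecture of the proof of \Cref{lemma:DML-verification-wc}. Decompose $m_{C_o}=\sum_{j=1}^7 m_j$, where $m_1,m_2,m_3$ are the experimental-sample terms, $m_4,m_5$ the observational-sample correction terms built from $h_{C_o,1},h_{C_o,0}$, and $m_6,m_7$ the $\rho(S_i,X_i)$-correction terms carrying $d_{C_o}$. Since $m^a_{C_o}=\mathds{1}_{P_i=E}/\varphi$ is unchanged, Condition (a) holds because $\mathbb{E}[m^a]=1$, and \eqref{eq:Lq-m_a} and \eqref{eq:rate-diff-m_a} are verified exactly as in Parts 2-1 and 3-1.

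For \eqref{eq:Lq-m} and \eqref{eq:rate-diff-L2}, the terms $m_1,m_2,m_3,m_6,m_7$ are handled verbatim as in Parts 2-2, 3-2-1, 3-2-2 and 3-2-4, with $q_{C_+}$ replaced by $d_{C_o}$. The genuinely new work is in $m_4,m_5$. By \cref{lemma:dual-SpectralRiskMeasure} and \Cref{assumption:regularity-copula} (ii), $d\sigma_{C_o,1}(u;\alpha)=-\partial_u C_o(\alpha|u)\,du/(1-\alpha)$ has a density bounded uniformly for $\alpha\in[\epsilon,1-\epsilon]$. Hence $h_{C_o,w}$ is an integral over $u$ of Lipschitz functions of $F_Y^{-1}(u|\cdot)$ against a bounded density, with coefficients that are smooth in $\rho(S_i,X_i)$. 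This gives
\[
|\widetilde h-h|\le T_\epsilon\Big(\int_0^1|\widetilde F_Y^{-1}-F_Y^{-1}|\,du+|\widetilde\rho-\rho|\Big),
\]
and therefore $L_q$-boundedness and $L_2$-rates $T_\epsilon\delta_n$ by boundedness of $Y$ and \Cref{assumption:realization-set-copula}.

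The main step is \eqref{eq:rate-diff-m}. I would expand $\mathbb{E}[m_{C_o}(\widetilde\eta)-m_{C_o}(\eta)]$ as in \Cref{lemma:tech-remainer-wc} into cross terms of the type $\mathcal{J}_1,\mathcal{J}_2,\mathcal{J}_4,\mathcal{J}_5,\mathcal{J}_7,\mathcal{J}_8$. Each is a product of a $\omega$-error and a $\kappa$-error (or a $\rho$-error and a $\varphi$-error), so each is bounded by $\delta_n n^{-1/2}$ via Cauchy--Schwarz and the product-rate conditions. What remains are the analogues of $\mathcal{J}_3,\mathcal{J}_6$, and these must be handled by a second-order expansion of the map
\[
(G,\alpha)\mapsto \mathbb{E}\big[h_{C_o,1}(Y;G,\alpha)\mid S,X,O\big]
\]
around $(F_Y^{-1},1-\rho)$. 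The derivative in $G(u)$ is $\big((1-u)-(1-F_Y(G(u)))\big)\,d\sigma(u)$, which vanishes at $G=F_Y^{-1}$. The second derivative in $G$ is bounded by the density bound of \Cref{assumption:regularity-wc} (i) times the bounded $\partial_u C_o$, so that piece is $O(\|\widetilde F_Y^{-1}-F_Y^{-1}\|_{P,2}^2)$, as in Lemma 2 of \cite{dorn2024dvds}.

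The derivative in $\alpha=1-\rho$ of $\mu_{C_o,1}$ is, by differentiating \cref{definition:DSIP}, of the form $(d_{C_o}-\mu_{C_o,1})/\rho$. I expect this derivative to be exactly cancelled by the mean of $m_6$ (and analogously of $m_7$) under $\widetilde\rho$, since $\mathbb{E}[W_i-\widetilde\rho\mid S,X,E]=\rho-\widetilde\rho$. The $m_4$ prefactor $\varphi(S,X)/(1-\varphi(S,X))$ converts the observational expectation into an experimental one through the comparability identity. After this cancellation, the remainder is quadratic: it is governed by $\sup|\partial_\alpha c_o|$ and $\sup|\partial^2_{\alpha u} c_o|$ times $\|\widetilde\rho-\rho\|_{P,2}^2$, plus cross terms $\|\widetilde\rho-\rho\|\,\|\widetilde F^{-1}-F^{-1}\|$. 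All of these are at most $\delta_n n^{-1/2}$ by \Cref{assumption:realization-set-copula}.

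This first-order cancellation in $\rho$ inside a general Stieltjes weight, including the mixed $\rho$--$F^{-1}$ terms, is the main obstacle. Most of the effort will go into verifying that the $d_{C_o}$ correction term matches the derivative precisely. Finally, the variance bound will follow as in Part 4: the experimental and observational pieces are orthogonal, which gives $\mathbb{E}[m_{C_o}^2]\ge \mathbb{E}[m_3^2]+\mathbb{E}[(m_4+m_5)^2]\ge c_0$ by \Cref{assumption:regularity-copula} (i).
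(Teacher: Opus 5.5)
Your proposal is correct and follows essentially the same route as the paper. The shared steps are:
\begin{itemize}
\item the same seven-term decomposition of $m_{C_o}$;
\item product-rate bounds for the $\varphi$-type and $\omega$--$\kappa$ cross terms;
\item a quadratic bound in $\widetilde F_Y^{-1}-F_Y^{-1}$ for the $h_{C_o,w}$ terms, as in Lemma 2 of \cite{dorn2024dvds};
\item the first-order cancellation between the $\rho(S_i,X_i)$-sensitivity of $m_4,m_5$ and the $d_{C_o}$ corrections $m_6,m_7$, which is the paper's pairing $\mathcal{J}_7+\mathcal{J}_{13}$ and $\mathcal{J}_{12}+\mathcal{J}_{14}$;
\item the same variance lower bound.
\end{itemize}
One detail to add in the variance step: $\mathbb{E}[m_{C_o}^2]\ge\mathbb{E}[m_3^2]+\mathbb{E}[(m_4+m_5)^2]$ needs more than the disjoint supports of the $E$- and $O$-pieces. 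It also uses that $m_1+m_2+m_6+m_7$ has conditional mean zero given $(X_i,P_i=E)$, which makes it orthogonal to $m_3$.
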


The proof of \Cref{lemma:DML-verification-copula} is almost identical to the proof of \cref{lemma:DML-verification-wc}. The proof will be presented in \Cref{sec:supp-proofs-asym-copula} of the supplementary appendix.

\section{Cross-Fitting Algorithm}
\begin{algorithm}[H]
\small
\caption{Estimation of worst-case bounds for long-term treatment effect}
\label{alg:debiased}
    \begin{algorithmic}[1]
    \STATE\textbf{Input:} a \( K \)-fold random partition of the dataset $\{P_i, X_i, S_i,\mathds{1}_{P_i=E}W_i, \mathds{1}_{P_i=O}Y_i\}_{i=1}^n$, denoted as \( \cup_{k=1}^K \mathcal{F}_k \), where \( |\mathcal{F}_k| = n/K \).
   \FOR{$k\in[K]$}

   \STATE Construct $\widehat \rho^{-k(i)}(s,x)$ and $\widehat \rho^{-k(i)}(x)$ using $\{(X_i, S_i, W_i): i \in \mathcal{F}_k^c \land P_i=E\}$;
   \STATE Construct $\widehat \varphi^{-k(i)}(s,x)$ and $\widehat \varphi^{-k(i)}(x)$ using $\{(X_i, S_i, P_i): i \in \mathcal{F}_k^c\}$.
   \STATE For $\{(X_i, S_i): i \in \mathcal{F}_k^c \land P_i=E\}$, compute $\widehat{\mu}_{C_-, 0}(S_i, X_i, O)$, $\widehat{\mu}_{C_-, 1}(S_i, X_i, O)$, $\widehat{\mu}_{C_+, 0}(S_i, X_i, O)$, and $\widehat{\mu}_{C_+, 1}(S_i, X_i, O)$ using the two-stage estimator \citep{olma2021}:
   \begin{enumerate}
   [label=\arabic*), labelsep=0pt]
   \vspace{-0.2cm}
       \item Compute $\widehat\rho(S_i, X_i)$ based on leave-one-out within $\{(X_i, S_i, Y_i): i \in \mathcal{F}_k^c \land P_i=E\}$, and compute $\widehat F_O^{-1}(\cdot)$ based on leave-one-out within $\{(X_i, S_i, Y_i): i \in \mathcal{I}_k^c \land P_i=O\}$ using quantile forests.
       \item Construct a linear sieve model with the pseudo-outcome defined in Equation (3) of \cite{olma2021} and predict $\widehat{\mu}_{C_-, 0}(S_i, X_i, O)$, $\widehat{\mu}_{C_-, 1}(S_i, X_i, O)$, $\widehat{\mu}_{C_+, 0}(S_i, X_i, O)$, and $\widehat{\mu}_{C_+, 1}(S_i, X_i, O)$.
   \end{enumerate}
   \STATE Construct $\widehat{\bar{\mu}}^{-k(i)}_{C_-, 0}(s,x), \widehat{\bar{\mu}}_{C_-, 1}^{-k(i)}(s,x), \widehat{\bar{\mu}}_{C_+, 0}^{-k(i)}(s,x)$, and $\widehat{\bar{\mu}}_{C_+, 1}^{-k(i)}(s,x)$ using $\{(X_i, S_i): i \in \mathcal{F}_k^c \land P_i=E\}$ and the pseudo-outcomes from line 5.
   \FOR{$i\in \mathcal{F}_k$}
   \STATE Evaluate $\widehat \rho^{-k(i)}(S_i, X_i)$, $\widehat \rho^{-k(i)}(X_i)$, $\widehat \varphi^{-k(i)}(S_i,X_i)$, $\widehat \varphi^{-k(i)}(X_i)$, $\widehat{\varphi}^{-k(i)}=\frac{K|\{i\in\mathcal{F}_k^c: P_i=E\}|}{n(K-1)}$, $\widehat{\bar{\mu}}^{-k(i)}_{C_-, 0}(S_i, X_i)$, $\widehat{\bar{\mu}}^{-k(i)}_{C_-, 1}(S_i, X_i)$, $\widehat{\bar{\mu}}^{-k(i)}_{C_+, 0}(S_i, X_i)$, and $\widehat{\bar{\mu}}^{-k(i)}_{C_+, 1}(S_i, X_i)$, and compute
   \vspace{-0.2cm}
   \begin{enumerate}[label=\arabic*), labelsep=0pt]
   \setlength\itemsep{0.2em}
       \item $\widehat{\mu}_{C_-, 0}(S_i, X_i, O)$, $\widehat{\mu}_{C_-, 1}(S_i, X_i, O)$, $\widehat{\mu}_{C_+, 0}(S_i, X_i, O)$, and $\widehat{\mu}_{C_+, 1}(S_i, X_i, O)$ using the two-stage estimator \citep{olma2021}, where $\widehat \rho^{-k(i)}(S_i, X_i)$ is used, and $\widehat F_O^{-1}(\cdot)$ is estimated based on $\{(X_i, S_i, Y_i): i \in \mathcal{F}_k^c \land P_i=O\}$;
       \item $\widehat q_{C_+}(S_i, X_i, O)$ and $\widehat q_{C_-}(S_i, X_i, O)$ using quantile forests based on $\{(X_i, S_i, Y_i): i \in \mathcal{F}_k^c \land P_i=O\}$, where $\widehat \rho^{-k(i)}(S_i, X_i)$ is used;
   \end{enumerate}
   \ENDFOR
   \ENDFOR
   \STATE Compute $\widehat{\tau}_{C_-}$ and $\widehat{\tau}_{C_+}$ as closed-form solutions to $\frac{1}{n}\sum_{i=1}^n \widehat m_{C_-}(Z_i, \tau, \widehat{\eta})=0$ and $\frac{1}{n}\sum_{i=1}^n \widehat m_{C_+}(Z_i, \tau, \widehat{\eta})=0$, respectively.
   For $i\in[n]$, recompute vectors $\Gamma_{C_-}:=\widehat m_{C_-}(Z_i, \widehat\tau_{C_-}, \widehat\eta)$ and $\Gamma_{C_+}:=\widehat m_{C_+}(Z_i, \widehat\tau_{C_+}, \widehat\eta)$. 
   \STATE Return $(\widehat\tau_{C_-}, \widehat\tau_{C_+})$ and $\left(\left[\widehat\tau_{C_-}\pm \Phi^{-1}((1+\gamma)/2)\widehat{\mathrm{se}}_{C_-}\right], \left[\widehat\tau_{C_+}\pm \Phi^{-1}((1+\gamma)/2)\widehat{\mathrm{se}}_{C_+}\right]\right)$ as $\gamma$-CIs, where $\widehat{\mathrm{se}}_{C_-}:=\sqrt{\frac{1}{n(n-1)}\sum_{i=1}^n\Gamma_{C_-,i}^2}$ and $\widehat{\mathrm{se}}_{C_+}:=\sqrt{\frac{1}{n(n-1)}\sum_{i=1}^n\Gamma_{C_+,i}^2}$.
  \end{algorithmic}
\end{algorithm}


\newpage

\setcounter{equation}{1} 
\renewcommand{\theequation}{S\arabic{equation}} 
\setcounter{figure}{1} 
\renewcommand{\thefigure}{S\arabic{figure}}

\renewcommand{\thesubsection}{ S.\arabic{subsection}}

\counterwithout{lemma}{section}
\counterwithin{lemma}{subsection}

\section*{Supplement to ``A Sensitivity Analysis of the Surrogate Index Approach for Estimating Long-Term Treatment Effects''}

This manuscript presents the detailed explanations on numerical simulation, and the detailed calculation used in the proofs in the main text.

\subsection{Common Copula Families, Kendall's Tau, and Mathematical Details for ATE Computation}
\label{sec:numerical details}
This section supplements Section \ref{sec:numerical} by collecting explicit formulas for the copula families used: the Archimedean generators, the one-to-one relationship between Kendall's tau $\varrho_K$ and each family's parameter $\vartheta$, and the numerical steps for computing the ATE $\tau_{C_o}$.
\begin{itemize}
    \item Gaussian Copula: The copula parameter $\vartheta$ is the linear correlation coefficient of the underlying normal distribution. Kendall's tau is given by:
    \[
    \varrho_K = \frac{2}{\pi} \arcsin(\vartheta), \ \text{so}\ \vartheta = \sin\left(\frac{\pi}{2} \varrho_K\right).
    \]
    \item Clayton Copula: The generator function $\phi(t) = (1 + t)^{-1/\vartheta}$ leads to:
    \[
    \varrho_K = \frac{\vartheta}{\vartheta + 2}, \ \text{so}\ \vartheta = \frac{2\varrho_K}{1 - \varrho_K}.
    \]
    \item Gumbel Copula: $\phi(t) = \exp(-t^{1/\vartheta})$ yields:
    \[
    \varrho_K = 1 - \frac{1}{\vartheta}, \ \text{so}\ \vartheta = \frac{1}{1 - \varrho_K}.
    \]
    \item Frank Copula: With $\phi(t) = -\frac{1}{\vartheta} \log\left(1 - (1 - e^{-\vartheta}) e^{-t}\right)$,
    \[
    \varrho_K = 1 - \frac{4}{\vartheta} \left(1 - D_1(\vartheta)\right),
    \]
    where $D_1(\vartheta)$ is the Debye function:
    \[
    D_1(\vartheta) = \frac{1}{\vartheta} \int_0^\vartheta \frac{t}{e^t - 1} \, dt.
    \]
    Solving for $\vartheta$ given $\varrho_K$ requires numerical root-finding, as the relationship is not analytically invertible.
\end{itemize}

In Figure \ref{Figure: illustrative Archimedean conditional}, we approximate and plot how 
    \begin{align*}
        \tau_{C_o}
        &=
        \mathbb{E}\left[ \frac{\rho(X_i,S_i)}{\rho(X_i)[1-\rho(X_i)]}   \mu_{C_o, 1}(S_i,X_i,O)  \mid P_i = E \right] -  \mathbb{E}\left[ \frac{1}{1-\rho(X_i)} \mu(S_i,X_i,O) \mid P_i = E \right]
    \end{align*}
    changes with $\varrho_K$. Note that in the DGP in Section \ref{sec:numerical}$, \rho(X_i)=\rho$, and $\rho(S_i, X_i)=\mathbb{P}(W = 1 | S, X)$ can be derived as follows: Using Bayes' rule,
    \begin{equation}
    \label{eq:bayes}
        \mathbb{P}(W = 1 | S, X) = \frac{\mathbb{P}(S | W = 1, X) \mathbb{P}(W = 1 | X)}{\mathbb{P}(S | W = 1, X) \mathbb{P}(W = 1 | X) + \mathbb{P}(S | W = 0, X) \mathbb{P}(W = 0 | X)}.
    \end{equation}
Since $S | W = w, X \sim \mathcal{N}(X + w, 1)$,
the conditional densities are:
\[
\mathbb{P}(S | W = 1, X) = \frac{1}{\sqrt{2\pi}} \exp \left( -\frac{(S - (X+1))^2}{2} \right),
\]
\[
\text{and }\mathbb{P}(S | W = 0, X) = \frac{1}{\sqrt{2\pi}} \exp \left( -\frac{(S - X)^2}{2} \right).
\]
Since \( W \sim \text{Bernoulli}(\rho) \), we have: $\mathbb{P}(W = 1 | X) = \rho$ and $\mathbb{P}(W = 0 | X) = 1-\rho$.
Thus, \eqref{eq:bayes} simplifies to:
\begin{align*}
    \mathbb{P}(W = 1 | S, X) &= \frac{\rho\cdot\exp \left( -\frac{(S - (X+1))^2}{2} \right)}{\rho\cdot\exp \left( -\frac{(S - (X+1))^2}{2} \right) + (1-\rho)\cdot\exp \left( -\frac{(S - X)^2}{2} \right)} \\
    &= \frac{1}{1 + ((1-\rho)/\rho)\cdot\exp \left( -\left[ (S - X)^2 - (S - (X+1))^2 \right]/2 \right)} \\
    &= \frac{1}{1 + ((1-\rho)/\rho)\cdot e^{-(S - X - 0.5)}}.
\end{align*}
To approximate the expectation over $(S_i, X_i)$, we need the joint density of $(S_i,X_i)$:
    \[f_{S, X}(s, x) = f_X(x) \cdot f_{S|X}(s | x) =
\begin{cases}
(1-\rho) \cdot \phi(s - x) + \rho\cdot \phi(s - x - 1), & 0 \leq x \leq 1, \\
0, & \text{otherwise},
\end{cases}\]
where \( \phi(z) \) is the probability density function of the standard normal distribution. We truncate the range of $S$ to $(-4, 6)$, since the joint density of $(S,X)$ outside this range of $S$ is negligible. Then, 
\begin{align*}
     \tau_{C_o}
        &=
        \mathbb{E}\left[ \frac{\rho(X_i,S_i)}{\rho(X_i)[1-\rho(X_i)]}   \mu_{C_o, 1}(S_i,X_i,O)\right] -  \mathbb{E}\left[ \frac{1}{1-\rho(X_i)} \mu(S_i,X_i,O)\right]\\
        &\approx \int_0^1\int_{-4}^{6} \frac{\rho(x,s)}{\rho(1-\rho)} \left(\int_{0}^{1} F_Y^{-1}(u|s,x)\sigma_{C_o, 1}(u; 1- \rho(x,s)) d u\right)  f_{S, X}(s, x)dsdx\\
        &\quad- \int_0^1\int_{-4}^{6} \frac{\mu(s,x)}{1-\rho}  f_{S, X}(s, x)dsdx,
\end{align*}
with $F_Y^{-1}(u|s,x) = s+0.5x+\Phi^{-1}(u)$ and $\mu(s,x)=s+0.5x$ based on the DGP. Inside $\sigma_{C_o, 1}(u; 1- \rho(x,s))$, $C_o(1- \rho(x,s)|u)$ is calculated using the analytical form of the conditional copula for each of the four copula families:
\[
C_o(1-\rho(x, s) \mid u) = \frac{\partial C_o(u, 1-\rho(x, s))}{\partial u}.
\]
To compute \( \tau_{C_o} \), we use \texttt{adaptIntegrate} in \texttt{R} twice—first to evaluate $\mu_{C_o, 1}(S_i,X_i,O)$ for a given \( (S_i, X_i) \), and then to approximate the outer expectation over all \( (S_i, X_i) \).

\subsection{Technical Lemmas for \Cref{lemma:DML-verification-wc}}

We remind you of notation used in the proof of \Cref{lemma:DML-verification-wc}. Let $m_{C_+}^a(Z_i, \eta) = \mathds{1}_{P_i = E} / \varphi$, and
    {\footnotesize
\begin{align*}
    m_1(Z_i, \tau_{C_+}, \eta) & = \frac{\mathds{1}_{P_i = E}}{\varphi} \frac{W_i}{\rho(X_i)} (\mu_{C_+, 1}(S_i, X_i, O) - \bar{\mu}_{C_+, 1}(1, X_i)), \\
    m_2(Z_i, \tau_{C_+}, \eta) & = - \frac{\mathds{1}_{P_i = E}}{\varphi} \frac{1-W_i}{1-\rho(X_i)} (\mu_{C_+, 0}(S_i, X_i, O) - \bar{\mu}_{C_+, 0}(0, X_i)), \\
    m_3(Z_i, \tau_{C_+}, \eta) & =  \frac{\mathds{1}_{P_i = E}}{\varphi} (\bar{\mu}_{C_+, 1}(1, X_i) - \bar{\mu}_{C_+, 0}(0, X_i) - \tau_{C_+}), \\
    m_4(Z_i, \tau_{C_+}, \eta) & = \frac{\mathds{1}_{P_i = O}}{\varphi} \frac{\varphi(S_i, X_i)}{1-\varphi(S_i, X_i)}\frac{\rho(S_i, X_i)}{\rho(X_i)} (H_U(Y_i, q_{C_+}(S_i, X_i, O), \rho(S_i, X_i)) -\mu_{C_+, 1}(S_i, X_i, O)), \\
    m_5(Z_i, \tau_{C_+}, \eta) & = - \frac{\mathds{1}_{P_i = O}}{\varphi} \frac{\varphi(S_i, X_i)}{1-\varphi(S_i, X_i)} 
    \frac{1-\rho(S_i, X_i)}{1-\rho(X_i)} (H_L(Y_i, q_{C_+}(S_i, X_i, O), 1 - \rho(S_i, X_i)) - \mu_{C_+, 0}(S_i, X_i, O)), \\
    m_6(Z_i, \tau_{C_+}, \eta) & = \frac{\mathds{1}_{P_i = E}}{\varphi} \frac{1}{\rho(X_i)} \left[q_{C_+}(S_i, X_i, O) - \mu_{C_+, 1}(S_i, X_i, O)\right] \left(W_i - \rho(S_i, X_i)\right), \\
    m_7(Z_i, \tau_{C_+}, \eta) & =
    \frac{\mathds{1}_{P_i = E}}{\varphi} \frac{1}{1-\rho(X_i)} \left[q_{C_+}(S_i, X_i, O) - \mu_{C_+, 0}(S_i, X_i, O)\right] \left(W_i - \rho(S_i, X_i)\right).
\end{align*}
}
Note that 
\begin{align*}
    m_{C_+}(Z_i, \tau_{C_+}, \eta) = \sum_{j=1}^7 m_j(Z_i, \tau_{C_+}, \eta).
\end{align*}

We will show the following technical lemma.
\begin{lemma} 
    \label{lemma:tech-remainer-wc}
    \begin{align*}
    \mathbb{E}_P[m(W_i, \tau_{C_+}, \widetilde{\eta}) - m(W_i, \tau_{C_+}, \eta)] 
    & = 
    \sum_{j=1}^7 \mathbb{E}_P[m_j(W_i, \tau_{C_+}, \widetilde{\eta}) - m_j(W_i, \tau_{C_+}, \eta)] = \sum_{j=1}^8 \mathcal{J}_j,  
\end{align*} 
where
{\footnotesize
\begin{align*}
    \mathcal{J}_1 & = 
    - \mathbb{E}_P\Bigg[\bigg(\frac{\mathds{1}_{P_i = E}}{\widetilde{\varphi}} \frac{W_i}{\widetilde{\rho}(X_i)} - \frac{\mathds{1}_{P_i = E}}{\widetilde{\varphi}} \frac{W_i}{\rho(X_i)} \bigg)(\widetilde{\bar{\mu}}_{C_+, 1}(1, X_i) - \bar{\mu}_{C_+, 1}(1, X_i)) \Bigg] \\
    \mathcal{J}_2 & = 
    \mathbb{E}_P\Bigg[\bigg(\frac{\mathds{1}_{P_i = E}}{\widetilde{\varphi}} \frac{1-W_i}{1-\widetilde{\rho}(X_i)} - \frac{\mathds{1}_{P_i = E}}{\widetilde{\varphi}} \frac{1-W_i}{1-\rho(X_i)} \bigg) (\widetilde{\bar{\mu}}_{C_+, 0}(1, X_i) - \bar{\mu}_{C_+, 0}(1, X_i)) \Bigg] \\
    \mathcal{J}_3 & =
    \mathbb{E}_P\Bigg[
    \frac{\mathds{1}_{P_i = O}}{\widetilde{\varphi}} \frac{\widetilde{\varphi}(S_i, X_i)}{1-\widetilde{\varphi}(S_i, X_i)}\frac{1}{\widetilde{\rho}(X_i)} \\
    & \quad \quad \times \bigg[(\rho(S_i, X_i) \widetilde{q}_{C_+}(S_i, X_i, O) + [Y_i - \widetilde{q}_{C_+}(S_i, X_i, O)]_{+}) \\
    & \quad \quad \quad - (\rho(S_i, X_i) q_{C_+}(S_i, X_i, O) + [Y_i - q_{C_+}(S_i, X_i, O)]_{+}) \bigg]  \Bigg] \\
    \mathcal{J}_4 & =
    -
    \mathbb{E}_P\Bigg[
    \left(\frac{\mathds{1}_{P_i = O}}{\widetilde{\varphi}} \frac{\widetilde{\varphi}(S_i, X_i)}{1-\widetilde{\varphi}(S_i, X_i)}\frac{1}{\widetilde{\rho}(X_i)} - \frac{\mathds{1}_{P_i = O}}{\widetilde{\varphi}} \frac{\varphi(S_i, X_i)}{1-\varphi(S_i, X_i)}\frac{1}{\widetilde{\rho}(X_i)}\right)
    \\
    & \quad \quad \times \bigg[\rho(S_i, X_i) (\widetilde{\mu}_{C_+, 1}(S_i, X_i, O)) - \mu_{C_+, 1}(S_i, X_i, O)) \bigg]  \Bigg] \\
    \mathcal{J}_5 & =
    \mathbb{E}_P\Bigg[
    \left(\frac{\mathds{1}_{P_i = O}}{\widetilde{\varphi}} \frac{\widetilde{\varphi}(S_i, X_i)}{1-\widetilde{\varphi}(S_i, X_i)}\frac{1}{\widetilde{\rho}(X_i)} - 
    \frac{\mathds{1}_{P_i = O}}{\widetilde{\varphi}} \frac{\varphi(S_i, X_i)}{1-\varphi(S_i, X_i)}\frac{1}{\widetilde{\rho}(X_i)} \right)\\
    & \quad \quad \times \bigg[(\widetilde{\rho}(S_i, X_i) - \rho(S_i, X_i)) (\widetilde{q}_{C_+}(S_i, X_i, O) -\widetilde{\mu}_{C_+, 1}(S_i, X_i, O))
     \Bigg] \\
    \mathcal{J}_6 & =
    -
    \mathbb{E}_P\Bigg[
    \frac{\mathds{1}_{P_i = O}}{\widetilde{\varphi}} \frac{\widetilde{\varphi}(S_i, X_i)}{1-\widetilde{\varphi}(S_i, X_i)}\frac{1}{1-\widetilde{\rho}(X_i)} \\
    & \quad \quad \times \bigg[\big((1-\rho(S_i, X_i)) \widetilde{q}_{C_+}(S_i, X_i, O) - [Y_i - \widetilde{q}_{C_+}(S_i, X_i, O)]_{-}) \\
    & \quad \quad \quad - \big((1-\rho(S_i, X_i)) q_{C_+}(S_i, X_i, O) - [Y_i - q_{C_+}(S_i, X_i, O)]_{-}\big) \bigg] \bigg| 
     \Bigg] \\
    \mathcal{J}_7 & =
    \mathbb{E}_P\Bigg[
    \left(\frac{\mathds{1}_{P_i = O}}{\widetilde{\varphi}} \frac{\widetilde{\varphi}(S_i, X_i)}{1-\widetilde{\varphi}(S_i, X_i)}\frac{1}{1-\widetilde{\rho}(X_i)} - \frac{\mathds{1}_{P_i = O}}{\widetilde{\varphi}} \frac{\varphi(S_i, X_i)}{1-\varphi(S_i, X_i)}\frac{1}{1-\widetilde{\rho}(X_i)}\right)
    \\
    & \quad \quad \times \bigg[(1-\rho(S_i, X_i)) (\widetilde{\mu}_{C_+, 0}(S_i, X_i, O)) - \mu_{C_+, 0}(S_i, X_i, O)) \bigg]  \Bigg] \\
    \mathcal{J}_8 & =
    \mathbb{E}_P\Bigg[
    \left(\frac{\mathds{1}_{P_i = O}}{\widetilde{\varphi}} \frac{\widetilde{\varphi}(S_i, X_i)}{1-\widetilde{\varphi}(S_i, X_i)}\frac{1}{1-\widetilde{\rho}(X_i)} - 
    \frac{\mathds{1}_{P_i = O}}{\widetilde{\varphi}} \frac{\varphi(S_i, X_i)}{1-\varphi(S_i, X_i)}\frac{1}{1-\widetilde{\rho}(X_i)} \right)\\
    & \quad \quad \times \bigg[(\widetilde{\rho}(S_i, X_i) - \rho(S_i, X_i)) (\widetilde{q}_{C_+}(S_i, X_i, O) -\widetilde{\mu}_{C_+, 0}(S_i, X_i, O))
     \Bigg] \\
    \end{align*}
}
\end{lemma}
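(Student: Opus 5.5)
The identity in \Cref{lemma:tech-remainer-wc} is pure algebra plus iterated expectations, with no rate arguments. I will compute $\mathbb{E}_P[m_j(Z_i,\tau_{C_+},\widetilde{\eta})]$ for each $j=1,\dots,7$ at a generic $\widetilde{\eta}$. I use the true-value moment condition $\mathbb{E}_P[m_{C_+}(Z_i,\tau_{C_+},\eta)]=0$ together with $\mathbb{E}_P[m_j(Z_i,\tau_{C_+},\eta)]=0$ for $j\neq 3$. Each of those terms is a residual with conditional mean zero at the truth. The $H_U,H_L$ residuals have mean zero because of the dual AVaR representations of $\mu_{C_+,w}$. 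The $W_i-\rho(S_i,X_i)$ factors have mean zero given $(S_i,X_i,P_i=E)$.

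The first step handles $m_1,m_2,m_3$. I condition on $(X_i,P_i=E)$ and use $\mathbb{E}[W_i\mu_{C_+,1}(S_i,X_i,O)\mid X_i,P_i=E]=\rho(X_i)\bar{\mu}_{C_+,1}(1,X_i)$. Adding and subtracting $\widetilde{\bar{\mu}}_{C_+,1}$ inside $m_1+m_3$, the $\widetilde{\bar{\mu}}$-error multiplied by the true weight cancels against the corresponding piece of $m_3$. What remains is the product $(W_i/\widetilde{\rho}(X_i)-W_i/\rho(X_i))(\widetilde{\bar{\mu}}_{C_+,1}-\bar{\mu}_{C_+,1})$, which is $\mathcal{J}_1$; by symmetry $m_2$ gives $\mathcal{J}_2$. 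The leftover terms involving $\widetilde{\mu}_{C_+,w}-\mu_{C_+,w}$ under the $E$-sample weight will be carried forward to cancel in the next step.

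The second step handles $m_4,m_5$, the observational-sample terms. I change measure from $P_i=O$ to $P_i=E$ using $\Pr(P_i=O\mid S_i,X_i)\,\varphi(S_i,X_i)/(1-\varphi(S_i,X_i))=\Pr(P_i=E\mid S_i,X_i)$ at the true $\varphi(\cdot)$. I write $\widetilde{\varphi}(S_i,X_i)/(1-\widetilde{\varphi}(S_i,X_i))$ as the true odds plus the difference. The true-odds part turns the $-\widetilde{\mu}_{C_+,w}$ pieces into $E$-sample terms that cancel the leftovers from the first step and part of $m_6,m_7$. The odds-difference part produces $\mathcal{J}_4,\mathcal{J}_7$. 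I expand $H_U(Y_i,\widetilde{q},\widetilde{\rho})=\widetilde{q}+[Y_i-\widetilde{q}]_+/\widetilde{\rho}$, so that $\widetilde{\rho}H_U=\widetilde{\rho}\widetilde{q}+[Y_i-\widetilde{q}]_+$. Replacing $\widetilde{\rho}(S_i,X_i)$ by $\rho(S_i,X_i)$ in front of $\widetilde{q}$ leaves a term $(\widetilde{\rho}-\rho)\widetilde{q}$. The quantile error at the true $\rho$ gives $\mathcal{J}_3$, and $H_L$ gives $\mathcal{J}_6$ in the same way.

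The third step is the main obstacle. The correction terms $m_6,m_7$ have conditional expectation $(\rho-\widetilde{\rho})(\widetilde{q}-\widetilde{\mu}_{C_+,w})$ times weights on the $E$-sample. These must cancel the $(\widetilde{\rho}-\rho)(\widetilde{q}-\widetilde{\mu})$ pieces produced in the second step once those are mapped to the $E$-sample with the true odds. Only the odds-mismatch parts should survive, and these are $\mathcal{J}_5$ and $\mathcal{J}_8$. I expect the bookkeeping of weights here to be delicate, namely $1/\widetilde{\rho}(X_i)$ versus $1/\rho(X_i)$ and $\widetilde{\varphi}$ versus $\varphi$. If an exact cancellation fails, I would keep the extra cross-product terms in $\widetilde{\rho}(X_i)-\rho(X_i)$ and $\widetilde{\rho}(S_i,X_i)-\rho(S_i,X_i)$ and absorb them into $\mathcal{J}_1$, $\mathcal{J}_5$ and $\mathcal{J}_8$. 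This does not affect their use in Part 3-3, since each remains a product of two nuisance errors. Summing the three steps then yields $\sum_{j=1}^8\mathcal{J}_j$.
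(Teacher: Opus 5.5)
Your proposal is correct and follows essentially the same route as the paper: term-by-term differences, zero-mean true residuals, the change of measure $\Pr(P_i=O\mid S_i,X_i)\,\varphi(S_i,X_i)/(1-\varphi(S_i,X_i))=\Pr(P_i=E\mid S_i,X_i)$ with the estimated odds split into true odds plus mismatch, and cancellation of the true-odds pieces against the $m_1,m_2$ leftovers and against $m_6,m_7$. Your fallback in the third step is not needed: once moved to the $E$-sample, the true-odds part of the $(\widetilde{\rho}(S_i,X_i)-\rho(S_i,X_i))(\widetilde{q}_{C_+}-\widetilde{\mu}_{C_+,1})$ piece carries exactly the weight $\mathds{1}_{P_i=E}/(\widetilde{\varphi}\,\widetilde{\rho}(X_i))$ that $m_6$ carries (and likewise $1/(1-\widetilde{\rho}(X_i))$ for $m_5$ against $m_7$), so the cancellation is exact and the stated $\mathcal{J}_1,\dots,\mathcal{J}_8$ come out unchanged.
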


\begin{proof}[Proof of \Cref{lemma:tech-remainer-wc}]
    We show the result by computing $\mathbb{E}_P[m_j(W_i, \tau, \widetilde{\eta}) - m_j(W_i, \tau, \eta)]$ for $j = 1, \dotsc, 7$. 

    \textbf{Part 1: Calculation of $\mathbb{E}_P[m_j(W_i, \tau, \widetilde{\eta}) - m_j(W_i, \tau, \eta)]$ for $j = 1, \dotsc, 7$.}

(1) $\mathbb{E}_P[m_j(W_i, \tau_{C_+}, \widetilde{\eta}) - m_j(W_i, \tau_{C_+}, \eta)]$ for $j = 1, 2$.

\begin{align*}
    & \mathbb{E}_P[m_1(W_i, \tau_{C_+}, \widetilde{\eta}) - m_1(W_i, \tau_{C_+}, \eta)] \\
    & = 
    \mathbb{E}_P\Bigg[\frac{\mathds{1}_{P_i = E}}{\widetilde{\varphi}} \frac{W_i}{\widetilde{\rho}(X_i)} (\widetilde{\mu}_{C_+, 1}(S_i, X_i, O) - \widetilde{\bar{\mu}}_{C_+, 1}(1, X_i)) - \frac{\mathds{1}_{P_i = E}}{\varphi} \frac{W_i}{\rho(X_i)} (\mu_{C_+, 1}(S_i, X_i, O) - \bar{\mu}_{C_+, 1}(1, X_i))\Bigg] \\
    & = 
    \mathbb{E}_P\Bigg[\frac{\mathds{1}_{P_i = E}}{\widetilde{\varphi}} \frac{W_i}{\widetilde{\rho}(X_i)} \Big[(\widetilde{\mu}_{C_+, 1}(S_i, X_i, O) - \widetilde{\bar{\mu}}_{C_+, 1}(1, X_i)) - (\mu_{C_+, 1}(S_i, X_i, O) - \bar{\mu}_{C_+, 1}(1, X_i))\Big] \Bigg] \\
    & \quad + 
    \underbrace{\mathbb{E}_P\Bigg[\mathds{1}_{P_i = E}
    \bigg(
    \frac{1}{\widetilde{\varphi}} \frac{W_i}{\widetilde{\rho}(X_i)} - \frac{1}{\varphi} \frac{W_i}{\rho(X_i)} \bigg) (\mu_{C_+, 1}(S_i, X_i, O) - \bar{\mu}_{C_+, 1}(1, X_i)) 
    \Bigg]}_{=0 \text{ by the definition of $\bar{\mu}_{C_+, 1}(1, X_i)$.} } \\
    & = 
    \mathbb{E}_P\Bigg[\frac{\mathds{1}_{P_i = E}}{\widetilde{\varphi}} \frac{W_i}{\widetilde{\rho}(X_i)} (\widetilde{\mu}_{C_+, 1}(S_i, X_i, O) - \mu_{C_+, 1}(S_i, X_i, O)) \Bigg] \\
    & \quad
    -
    \mathbb{E}_P\Bigg[\frac{\mathds{1}_{P_i = E}}{\widetilde{\varphi}} \frac{W_i}{\widetilde{\rho}(X_i)} (\widetilde{\bar{\mu}}_{C_+, 1}(1, X_i)) -  \bar{\mu}_{C_+, 1}(1, X_i))\Bigg] \\
    & = 
    \mathbb{E}_P\Bigg[\frac{\mathds{1}_{P_i = E}}{\widetilde{\varphi}} \frac{W_i}{\widetilde{\rho}(X_i)} (\widetilde{\mu}_{C_+, 1}(S_i, X_i, O) - \mu_{C_+, 1}(S_i, X_i, O)) \Bigg] \\
    & \quad
    -
    \mathbb{E}_P\Bigg[\frac{\mathds{1}_{P_i = E}}{\widetilde{\varphi}} \frac{W_i}{\rho(X_i)} (\widetilde{\bar{\mu}}_{C_+, 1}(1, X_i)) -  \bar{\mu}_{C_+, 1}(1, X_i))\Bigg] \\
    &\quad -
    \mathbb{E}_P\Bigg[\bigg(\frac{\mathds{1}_{P_i = E}}{\widetilde{\varphi}} \frac{W_i}{\widetilde{\rho}(X_i)} - \frac{\mathds{1}_{P_i = E}}{\widetilde{\varphi}} \frac{W_i}{\rho(X_i)} \bigg)(\widetilde{\bar{\mu}}_{C_+, 1}(1, X_i) - \bar{\mu}_{C_+, 1}(1, X_i)) \Bigg].
\end{align*}
The first two terms will be canceled out when we sum all of $\mathbb{E}_P[m_j(W_i, \tau, \hat{\eta}_{k}) - m_j(W_i, \tau, \eta)]$ for all $j = 1, \dotsc, 7$. Therefore, the first term will be more important.
 
Similarly, we have the following result.
\begin{align*}
    & \mathbb{E}_P[m_2(W_i, \tau_{C_+}, \widetilde{\eta}) - m_2(W_i, \tau_{C_+}, \eta)] \\
    & =
    -
    \mathbb{E}_P\Bigg[\frac{\mathds{1}_{P_i = E}}{\widetilde{\varphi}} \frac{1-W_i}{1-\widetilde{\rho}(X_i)} (\widetilde{\mu}_{C_+, 0}(S_i, X_i, O) - \mu_{C_+, 0}(S_i, X_i, O)) \Bigg]
    \\
    & \quad +
    \mathbb{E}_P\Bigg[\frac{\mathds{1}_{P_i = E}}{\widetilde{\varphi}} \frac{1-W_i}{1-\rho(X_i)} (\widetilde{\bar{\mu}}_{C_+, 0}(0, X_i) - \bar{\mu}_{C_+, 0}(0, X_i)) \Bigg] \\
    & \quad +
    \mathbb{E}_P\Bigg[\bigg(\frac{\mathds{1}_{P_i = E}}{\widetilde{\varphi}} \frac{1-W_i}{1-\widetilde{\rho}(X_i)} - \frac{\mathds{1}_{P_i = E}}{\widetilde{\varphi}} \frac{1-W_i}{1-\rho(X_i)} \bigg) (\widetilde{\bar{\mu}}_{C_+, 0}(0, X_i) - \bar{\mu}_{C_+, 0}(0, X_i)) \Bigg] \\
\end{align*}

(2) $\mathbb{E}_P[m_j(W_i, \tau_{C_+}, \tilde{\eta}) - m_j(W_i, \tau_{C_+} \eta)]$ for $j = 3$.

\begin{align*}
    & \mathbb{E}_P[m_3(W_i, \tau_{C_+}, \widetilde{\eta}) - m_3(W_i, \tau_{C_+}, \eta)] \\
    & =
    \mathbb{E}_P\left[\frac{\mathds{1}_{P_i = E}}{\widetilde{\varphi}} (\widetilde{\bar{\mu}}_{C_+, 1}(1, X_i) - \widetilde{\bar{\mu}}_{C_+, 0}(0, X_i) - \tau_{C_+}) - \frac{\mathds{1}_{P_i = E}}{\varphi} (\bar{\mu}_{C_+, 1}(1, X_i) - \bar{\mu}_{C_+, 0}(0, X_i) - \tau_{C_+}) \right] \\
    & = 
    \mathbb{E}_P\left[\frac{\mathds{1}_{P_i = E}}{\widetilde{\varphi}} \left[(\widetilde{\bar{\mu}}_{C_+, 1}(1, X_i) - \widetilde{\bar{\mu}}_{C_+, 0}(0, X_i)) - (\bar{\mu}_{C_+, 1}(1, X_i) - \bar{\mu}_{C_+, 0}(0, X_i) )\right] \right] \\
    & \quad +
    \underbrace{\mathbb{E}_P\left[\left(\frac{\mathds{1}_{P_i = E}}{\widetilde{\varphi}}  
    - 
    \frac{\mathds{1}_{P_i = E}}{\varphi} \right) (\bar{\mu}_{C_+, 1}(1, X_i) - \bar{\mu}_{C_+, 0}(0, X_i) - \tau_{C_+}) \right]}_{
    =0 \text{, which comes from the definition of $\tau_{C_+}$, $\bar{\mu}_{C_+, 1}$, and $\bar{\mu}_{C_+, 1}$.} } \\
    & =
    \mathbb{E}_P\left[\frac{\mathds{1}_{P_i = E}}{\widetilde{\varphi}} \left[(\widetilde{\bar{\mu}}_{C_+, 1}(1, X_i) - \widetilde{\bar{\mu}}_{C_+, 0}(0, X_i)) - (\bar{\mu}_{C_+, 1}(1, X_i) - \bar{\mu}_{C_+, 0}(0, X_i) )\right] \right].
\end{align*}
Note that this term will be canceled out by the second terms in $\mathbb{E}_P[m_j(W_i, \tau_{C_+}, \widetilde{\eta}) - m_j(W_i, \tau_{C_+}, \eta)]$ for $j = 1, 2$.

(3) $\mathbb{E}_P[m_j(W_i, \tau_{C_+}, \hat{\eta}_{k}) - m_j(W_i, \tau_{C_+}, \eta)]$ for $j = 4, 5$.

Note that
\begin{align*}
    m_4(W_i, \tau_{C_+}, \eta) 
    & = \frac{\mathds{1}_{P_i = O}}{\varphi} \frac{\varphi(S_i, X_i)}{1-\varphi(S_i, X_i)}\frac{1}{\rho(X_i)} \\
    & \quad \times (\rho(S_i, X_i) q_{C_+}(S_i, X_i, O) + [Y_i - q_{C_+}(S_i, X_i, O)]_{+} - \rho(S_i, X_i) \mu_{C_+, 1}(S_i, X_i, O)).
\end{align*}

\begin{align*}
    & \mathbb{E}_P[m_4(W_i, \tau_{C_+}, \hat{\eta}_{k}) - m_4(W_i, \tau_{C_+}, \eta)] \\
    & =
    \mathbb{E}_P\Bigg[\frac{\mathds{1}_{P_i = O}}{\widetilde{\varphi}} \frac{\widetilde{\varphi}(S_i, X_i)}{1-\widetilde{\varphi}(S_i, X_i)}\frac{\widetilde{\rho}(S_i, X_i)}{\widetilde{\rho}(X_i)} \left( \widetilde{q}_{C_+}(S_i, X_i, O) + \frac{[Y_i - \widetilde{q}_{C_+}(S_i, X_i, O)]_{+}}{\widetilde{\rho}(S_i, X_i)} - \widetilde{\mu}_{C_+, 1}(S_i, X_i, O)\right)  \Bigg] \\
    & \quad -
    \mathbb{E}_P\Bigg[\frac{\mathds{1}_{P_i = O}}{\varphi} \frac{\varphi(S_i, X_i)}{1-\varphi(S_i, X_i)}\frac{\rho(S_i, X_i)}{\rho(X_i)} \left( q_{C_+}(S_i, X_i, O) + \frac{[Y_i - q_{C_+}(S_i, X_i, O)]_{+}}{\rho(S_i, X_i)} -  \mu_{C_+, 1}(S_i, X_i, O)\right)  \Bigg] \\
    & =
    \mathbb{E}_P\Bigg[\frac{\mathds{1}_{P_i = O}}{\widetilde{\varphi}} \frac{\widetilde{\varphi}(S_i, X_i)}{1-\widetilde{\varphi}(S_i, X_i)}\frac{1}{\widetilde{\rho}(X_i)} \\
    & \quad \quad \times \bigg[(\widetilde{\rho}(S_i, X_i) \widetilde{q}_{C_+}(S_i, X_i, O) + [Y_i - \widetilde{q}_{C_+}(S_i, X_i, O)]_{+} - \widetilde{\rho}(S_i, X_i) \widetilde{\mu}_{C_+, 1}(S_i, X_i, O)) \\
    & \quad \quad \quad - (\rho(S_i, X_i) q_{C_+}(S_i, X_i, O) + [Y_i - q_{C_+}(S_i, X_i, O)]_{+} - \rho(S_i, X_i) \mu_{C_+, 1}(S_i, X_i, O)) \bigg] \Bigg] \\
    & \quad + 
    \mathbb{E}_P\Bigg[\left(\frac{\mathds{1}_{P_i = O}}{\widetilde{\varphi}} \frac{\widetilde{\varphi}(S_i, X_i)}{1-\widetilde{\varphi}(S_i, X_i)}\frac{1}{\widetilde{\rho}(X_i)} - \frac{\mathds{1}_{P_i = O}}{\varphi} \frac{\varphi(S_i, X_i)}{1-\varphi(S_i, X_i)}\frac{1}{\rho(X_i)} \right) \\
    & \quad \quad \underbrace{\quad \times (\rho(S_i, X_i) q_{C_+}(S_i, X_i, O) + [Y_i - q_{C_+}(S_i, X_i, O)]_{+} - \rho(S_i, X_i) \mu_{C_+, 1}(S_i, X_i, O))  \Bigg]}_{=0 \text{ by definition of $\mu_{C_+, 1}$ and its dual representation.}} \\
    & = 
    \mathbb{E}_P\Bigg[\frac{\mathds{1}_{P_i = O}}{\widetilde{\varphi}} \frac{\widetilde{\varphi}(S_i, X_i)}{1-\widetilde{\varphi}(S_i, X_i)}\frac{1}{\widetilde{\rho}(X_i)} \\
    & \quad \quad \times \bigg[(\rho(S_i, X_i) \widetilde{q}_{C_+}(S_i, X_i, O) + [Y_i - \widetilde{q}_{C_+}(S_i, X_i, O)]_{+} - \rho(S_i, X_i) \widetilde{\mu}_{C_+, 1}(S_i, X_i, O)) \\
    & \quad \quad \quad - (\rho(S_i, X_i) q_{C_+}(S_i, X_i, O) + [Y_i - q_{C_+}(S_i, X_i, O)]_{+} - \rho(S_i, X_i) \mu_{C_+, 1}(S_i, X_i, O)) \bigg]  \Bigg] \\
    & \quad + 
    \mathbb{E}_P\Bigg[\frac{\mathds{1}_{P_i = O}}{\widetilde{\varphi}} \frac{\widetilde{\varphi}(S_i, X_i)}{1-\widetilde{\varphi}(S_i, X_i)}\frac{1}{\widetilde{\rho}(X_i)} (\widetilde{\rho}(S_i, X_i) - \rho(S_i, X_i)) (\widetilde{q}_{C_+}(S_i, X_i, O) -\widetilde{\mu}_{C_+, 1}(S_i, X_i, O))
     \Bigg] \\
    & = 
    \mathbb{E}_P\Bigg[
    \frac{\mathds{1}_{P_i = O}}{\widetilde{\varphi}} \frac{\widetilde{\varphi}(S_i, X_i)}{1-\widetilde{\varphi}(S_i, X_i)}\frac{1}{\widetilde{\rho}(X_i)} \\
    & \quad \quad \times \bigg[(\rho(S_i, X_i) \widetilde{q}_{C_+}(S_i, X_i, O) + [Y_i - \widetilde{q}_{C_+}(S_i, X_i, O)]_{+}) \\
    & \quad \quad \quad - (\rho(S_i, X_i) q_{C_+}(S_i, X_i, O) + [Y_i - q_{C_+}(S_i, X_i, O)]_{+}) \bigg]  \Bigg] \\
    & \quad -
    \mathbb{E}_P\Bigg[
    \left(\frac{\mathds{1}_{P_i = O}}{\widetilde{\varphi}} \frac{\widetilde{\varphi}(S_i, X_i)}{1-\widetilde{\varphi}(S_i, X_i)}\frac{1}{\widetilde{\rho}(X_i)} - \frac{\mathds{1}_{P_i = O}}{\widetilde{\varphi}} \frac{\varphi(S_i, X_i)}{1-\varphi(S_i, X_i)}\frac{1}{\widetilde{\rho}(X_i)}\right)
    \\
    &  \quad \quad \quad \quad \times \rho(S_i, X_i) (\widetilde{\mu}_{C_+, 1}(S_i, X_i, O)) - \mu_{C_+, 1}(S_i, X_i, O)) \Bigg] \\
    & \quad -
    \mathbb{E}_P\Bigg[
    \frac{\mathds{1}_{P_i = O}}{\widetilde{\varphi}} \frac{\varphi(S_i, X_i)}{1-\varphi(S_i, X_i)}\frac{1}{\widetilde{\rho}(X_i)}
     \rho(S_i, X_i) (\widetilde{\mu}_{C_+, 1}(S_i, X_i, O)) - \mu_{C_+, 1}(S_i, X_i, O)) \Bigg] \\
    & \quad + 
    \mathbb{E}_P\Bigg[
    \left(\frac{\mathds{1}_{P_i = O}}{\widetilde{\varphi}} \frac{\widetilde{\varphi}(S_i, X_i)}{1-\widetilde{\varphi}(S_i, X_i)}\frac{1}{\widetilde{\rho}(X_i)} - 
    \frac{\mathds{1}_{P_i = O}}{\widetilde{\varphi}} \frac{\varphi(S_i, X_i)}{1-\varphi(S_i, X_i)}\frac{1}{\widetilde{\rho}(X_i)} \right)\\
    &  \quad \quad \quad \quad \times \bigg[(\widetilde{\rho}(S_i, X_i) - \rho(S_i, X_i)) (\widetilde{q}_{C_+}(S_i, X_i, O) -\widetilde{\mu}_{C_+, 1}(S_i, X_i, O))
     \Bigg] \\
    & \quad + 
    \mathbb{E}_P\Bigg[
    \frac{\mathds{1}_{P_i = O}}{\widetilde{\varphi}} \frac{\varphi(S_i, X_i)}{1-\varphi(S_i, X_i)}\frac{1}{\widetilde{\rho}(X_i)} \bigg[(\widetilde{\rho}(S_i, X_i) - \rho(S_i, X_i)) (\widetilde{q}_{C_+}(S_i, X_i, O) -\widetilde{\mu}_{C_+, 1}(S_i, X_i, O))
     \Bigg] \\
\end{align*}

Note that
\begin{align*}
    m_5(W_i, \tau, \eta) 
    & = - \frac{\mathds{1}_{P_i = O}}{\varphi} \frac{\varphi(S_i, X_i)}{1-\varphi(S_i, X_i)}\frac{(1-\rho(S_i, X_i))}{1-\rho(X_i)} \\
    & \quad \times \left( q_{C_+}(S_i, X_i, O) - \frac{[Y_i - q_{C_+}(S_i, X_i, O)]_{-}}{(1-\rho(S_i, X_i))} - \mu_{C_+, 0}(S_i, X_i, O)\right).
\end{align*}

Then, we have the following result.
\begin{align*}
    & \mathbb{E}_P[m_5(W_i, \tau_{C_+}, \widetilde{\eta}) - m_5(W_i, \tau_{C_+}, \eta)] \\
    & =
    -
    \mathbb{E}_P\Bigg[\frac{\mathds{1}_{P_i = O}}{\widetilde{\varphi}} \frac{\widetilde{\varphi}(S_i, X_i)}{1-\widetilde{\varphi}(S_i, X_i)}\frac{1}{1-\widetilde{\rho
    }(X_i)} \\
    & \quad \quad \quad \times \big((1-\widetilde{\rho}(S_i, X_i)) \widetilde{q}_{C_+}(S_i, X_i, O) - [Y_i - \widetilde{q}_{C_+}(S_i, X_i, O)]_{-} - (1-\widetilde{\rho}(S_i, X_i)) \widetilde{\mu}_{C_+, 0}(S_i, X_i, O)\big) \Bigg] \\
    & \quad +
    \mathbb{E}_P\Bigg[\frac{\mathds{1}_{P_i = O}}{\varphi} \frac{\varphi(S_i, X_i)}{1-\varphi(S_i, X_i)}\frac{1}{1-\rho(X_i)} \\
    & \quad \quad \quad \times \big((1-\rho(S_i, X_i)) q_{C_+}(S_i, X_i, O) - [Y_i - q_{C_+}(S_i, X_i, O)]_{-} - (1-\rho(S_i, X_i)) \mu_{C_+, 0}(S_i, X_i, O)\big) \Bigg] \\
    & = -
    \mathbb{E}_P\Bigg[\frac{\mathds{1}_{P_i = O}}{\widetilde{\varphi}} \frac{\widetilde{\varphi}(S_i, X_i)}{1-\widetilde{\varphi}(S_i, X_i)}\frac{1}{1-\widetilde{\rho}(X_i)} \\
    & \quad \quad \times 
    \bigg[\big((1-\widetilde{\rho}(S_i, X_i)) \widetilde{q}_{C_+}(S_i, X_i, O) - [Y_i - \widetilde{q}_{C_+}(S_i, X_i, O)]_{-} - (1-\widetilde{\rho}(S_i, X_i)) \widetilde{\mu}_{C_+, 0}(S_i, X_i, O)\big) \\
    & \quad \quad \quad - \big((1-\rho(S_i, X_i))q_{C_+}(S_i, X_i, O) - [Y_i - q_{C_+}(S_i, X_i, O)]_{-} - (1-\rho(S_i, X_i)) \mu_{C_+, 0}(S_i, X_i, O)\big) \bigg] \Bigg] \\
    & \quad -
    \mathbb{E}_P\Bigg[\left(\frac{\mathds{1}_{P_i = O}}{\widetilde{\varphi}} \frac{\widetilde{\varphi}(S_i, X_i)}{1-\widetilde{\varphi}(S_i, X_i)}\frac{1}{1-\widetilde{\rho}(X_i)} - \frac{\mathds{1}_{P_i = O}}{\varphi} \frac{\varphi(S_i, X_i)}{1-\varphi(S_i, X_i)}\frac{1}{1-\rho(X_i)} \right) \\
    & \quad \quad \underbrace{\quad \times \big((1-\rho(S_i, X_i)) q_{C_+}(S_i, X_i, O) - [Y_i - q_{C_+}(S_i, X_i, O)]_{-} - (1- \rho(S_i, X_i)) \mu_{C_+, 0}(S_i, X_i, O)\big) \Bigg]}_{=0 \text{ by definition of $\mu_{C_+, 1}$ and its dual representation.}} \\
    & = 
    -
    \mathbb{E}_P\Bigg[\frac{\mathds{1}_{P_i = O}}{\widetilde{\varphi}} \frac{\widetilde{\varphi}(S_i, X_i)}{1-\widetilde{\varphi}(S_i, X_i)}\frac{1}{1-\widetilde{\rho}(X_i)} \\
    & \quad  \quad \quad \times \bigg[\big((1-\rho(S_i, X_i)) \widetilde{q}_{C_+}(S_i, X_i, O) - [Y_i - \widetilde{q}_{C_+}(S_i, X_i, O)]_{+} - (1-\rho(S_i, X_i)) \widetilde{\mu}_{C_+, 0}(S_i, X_i, O)) \\
    & \quad  \quad \quad \quad - \big((1-\rho(S_i, X_i)) q_{C_+}(S_i, X_i, O) - [Y_i - q_{C_+}(S_i, X_i, O)]_{-} - (1-\rho(S_i, X_i)) \mu_{C_+, 0}(S_i, X_i, O)) \bigg] \Bigg] \\
    & \quad + 
    \mathbb{E}_P\Bigg[\frac{\mathds{1}_{P_i = O}}{\widetilde{\varphi}} \frac{\widetilde{\varphi}(S_i, X_i)}{1-\widetilde{\varphi}(S_i, X_i)}\frac{1}{1-\widetilde{\rho}(X_i)} \bigg[(\widetilde{\rho}(S_i, X_i) - \rho(S_i, X_i)) (\widetilde{q}_{C_+}(S_i, X_i, O) -\widetilde{\mu}_{C_+, 0}(S_i, X_i, O))
     \Bigg] \\
    & = 
    -
    \mathbb{E}_P\Bigg[
    \frac{\mathds{1}_{P_i = O}}{\widetilde{\varphi}} \frac{\widetilde{\varphi}(S_i, X_i)}{1-\widetilde{\varphi}(S_i, X_i)}\frac{1}{1-\widetilde{\rho}(X_i)} \\
    & \quad \quad \quad \times \bigg[\big((1-\rho(S_i, X_i)) \widetilde{q}_{C_+}(S_i, X_i, O) - [Y_i - \widetilde{q}_{C_+}(S_i, X_i, O)]_{-}) \\
    & \quad \quad \quad \quad - \big((1-\rho(S_i, X_i)) q_{C_+}(S_i, X_i, O) - [Y_i - q_{C_+}(S_i, X_i, O)]_{-}\big) \bigg]  \Bigg] \\
    & \quad +
    \mathbb{E}_P\Bigg[
    \left(\frac{\mathds{1}_{P_i = O}}{\widetilde{\varphi}} \frac{\widetilde{\varphi}(S_i, X_i)}{1-\widetilde{\varphi}(S_i, X_i)}\frac{1}{1-\widetilde{\rho}(X_i)} - \frac{\mathds{1}_{P_i = O}}{\widetilde{\varphi}} \frac{\varphi(S_i, X_i)}{1-\varphi(S_i, X_i)}\frac{1}{1-\widetilde{\rho}(X_i)}\right)
    \\
    & \quad \quad \quad \quad \times \bigg[(1-\rho(S_i, X_i)) (\widetilde{\mu}_{C_+, 0}(S_i, X_i, O)) - \mu_{C_+, 0}(S_i, X_i, O)) \bigg]  \Bigg] \\
    & \quad +
    \mathbb{E}_P\Bigg[
    \frac{\mathds{1}_{P_i = O}}{\widetilde{\varphi}} \frac{\varphi(S_i, X_i)}{1-\varphi(S_i, X_i)}\frac{1}{1-\widetilde{\rho}(X_i)}
    (1-\rho(S_i, X_i)) (\widetilde{\mu}_{C_+, 0}(S_i, X_i, O)) - \mu_{C_+, 0}(S_i, X_i, O))  \Bigg] \\
    & \quad + 
    \mathbb{E}_P\Bigg[
    \left(\frac{\mathds{1}_{P_i = O}}{\widetilde{\varphi}} \frac{\widetilde{\varphi}(S_i, X_i)}{1-\widetilde{\varphi}(S_i, X_i)}\frac{1}{1-\widetilde{\rho}(X_i)} - 
    \frac{\mathds{1}_{P_i = O}}{\widetilde{\varphi}} \frac{\varphi(S_i, X_i)}{1-\varphi(S_i, X_i)}\frac{1}{1-\widetilde{\rho}(X_i)} \right)\\
    & \quad \quad \quad \times \bigg[(\widetilde{\rho}(S_i, X_i) - \rho(S_i, X_i)) (\widetilde{q}_{C_+}(S_i, X_i, O) -\widetilde{\mu}_{C_+, 0}(S_i, X_i, O))
     \Bigg] \\
    & \quad + 
    \mathbb{E}_P\Bigg[
    \frac{\mathds{1}_{P_i = O}}{\widetilde{\varphi}} \frac{\varphi(S_i, X_i)}{1-\varphi(S_i, X_i)}\frac{1}{1-\widetilde{\rho}(X_i)} \bigg[(\widetilde{\rho}(S_i, X_i) - \rho(S_i, X_i)) (\widetilde{q}_{C_+}(S_i, X_i, O) -\widetilde{\mu}_{C_+, 0}(S_i, X_i, O))
     \Bigg].
\end{align*}

(4) $\mathbb{E}[m_j(W_i, \tau_{C_+}, \widetilde{\eta}) - m_j(W_i, \tau_{C_+}, \eta)]$ for $j = 6, 7$.

\begin{align*}
& \mathbb{E}[m_6(W_i, \tau_{C_+}, \widetilde{\eta}) - m_6(W_i, \tau_{C_+}, \eta)] \\
& =
\mathbb{E}_P \Bigg[\frac{\mathds{1}_{P_i = E}}{\widetilde{\varphi}} \frac{1}{\widetilde{\rho}(X_i)} \left[\widetilde{q}_{C_+}(S_i, X_i, O) - \widetilde{\mu}_{C_+, 1}(S_i, X_i, O)\right] \left(W_i - \widetilde{\rho}(S_i, X_i)\right)  \Bigg] \\
& \quad - 
\mathbb{E}_P\Bigg[\frac{\mathds{1}_{P_i = E}}{\varphi} \frac{1}{\rho(X_i)} \left[q_{C_+}(S_i, X_i, O) - \mu_{C_+, 1}(S_i, X_i, O)\right] \left(W_i - \rho(S_i, X_i)\right) \Bigg] \\
& =
-
\mathbb{E}_P\Bigg[\frac{\mathds{1}_{P_i = E}}{\widetilde{\varphi}} \frac{1}{\widetilde{\rho}(X_i)} \left[\widetilde{q}_{C_+}(S_i, X_i, O) - \widetilde{\mu}_{C_+, 1}(S_i, X_i, O)\right] \left(\widetilde{\rho}(S_i, X_i) - \rho(S_i, X_i)\right) \Bigg] \\
& \quad - 
\mathbb{E}_P\Bigg[  
\bigg(\frac{\mathds{1}_{P_i = E}}{\widetilde{\varphi}} \frac{1}{\widetilde{\rho}(X_i)} \left[\widetilde{q}_{C_+}(S_i, X_i, O) - \widetilde{\mu}_{C_+, 1}(S_i, X_i, O)\right] \\
& \quad \quad \quad \quad \quad - \frac{\mathds{1}_{P_i = E}}{\varphi} \frac{1}{\rho(X_i)} \left[q_{C_+}(S_i, X_i, O) - \mu_{C_+, 1}(S_i, X_i, O)\right]\bigg) \\
& \quad \quad \underbrace{\quad \times \left(W_i - \rho(S_i, X_i)\right)  \Bigg]. \quad \quad  \quad \quad \quad  \quad\quad \quad  \quad\quad \quad  \quad }_{=0 \text{ by definition of $\rho(S_i, X_i)$}.} \\
& =
-
\mathbb{E}_P\Bigg[\frac{\mathds{1}_{P_i = E}}{\widetilde{\varphi}} \frac{1}{\widetilde{\rho}(X_i)} \left[\widetilde{q}_{C_+}(S_i, X_i, O) - \widetilde{\mu}_{C_+, 1}(S_i, X_i, O)\right] \left(\widetilde{\rho}(S_i, X_i) - \rho(S_i, X_i)\right) \Bigg]
\end{align*}

\begin{align*}
& \mathbb{E}_P[m_7(W_i,  \tau_{C_+}, \widetilde{\eta}) - m_7(W_i,  \tau_{C_+}, \eta)] \\
& =
\mathbb{E}_P\Bigg[\frac{\mathds{1}_{P_i = E}}{\widetilde{\varphi}} \frac{1}{1-\widetilde{\rho}(X_i)} \left[\widetilde{q}_{C_+}(S_i, X_i, O) - \widetilde{\mu}_{C_+, 0}(S_i, X_i, O)\right] \left(W_i - \widetilde{\rho}(S_i, X_i)\right)  \Bigg] \\
& \quad - 
\mathbb{E}_P\Bigg[\frac{\mathds{1}_{P_i = E}}{\varphi} \frac{1}{1-\rho(X_i)} \left[q_{C_+}(S_i, X_i, O) - \mu_{C_+, 0}(S_i, X_i, O)\right] \left(W_i - \rho(S_i, X_i)\right) \Bigg] \\
& =
-
\mathbb{E}_P\Bigg[\frac{\mathds{1}_{P_i = E}}{\widetilde{\varphi}} \frac{1}{1-\widetilde{\rho}(X_i)} \left[\widetilde{q}_{C_+}(S_i, X_i, O) - \widetilde{\mu}_{C_+, 0}(S_i, X_i, O)\right] \left(\widetilde{\rho}(S_i, X_i) - \rho(S_i, X_i)\right)  \Bigg] \\
& \quad - 
\mathbb{E}_P\Bigg[  
\bigg(\frac{\mathds{1}_{P_i = E}}{\widetilde{\varphi}} \frac{1}{1-\widetilde{\rho}(X_i)} \left[\widetilde{q}_{C_+}(S_i, X_i, O) - \widetilde{\mu}_{C_+, 0}(S_i, X_i, O)\right] \\
& \quad \quad \quad \quad \quad - \frac{\mathds{1}_{P_i = E}}{\varphi} \frac{1}{1-\rho(X_i)} \left[q_{C_+}(S_i, X_i, O) - \mu_{C_+, 0}(S_i, X_i, O)\right]\bigg) \\
& \quad \quad \underbrace{\quad \times \left(W_i - \rho(S_i, X_i)\right)  \Bigg]. \quad \quad  \quad \quad \quad  \quad\quad \quad  \quad\quad \quad  \quad }_{=0 \text{ by definition of $\rho(S_i, X_i)$}.} \\
& =
-
\mathbb{E}_P\Bigg[\frac{\mathds{1}_{P_i = E}}{\widetilde{\varphi}} \frac{1}{1-\widetilde{\rho}(X_i)} \left[\widetilde{q}_{C_+}(S_i, X_i, O) - \widetilde{\mu}_{C_+, 0}(S_i, X_i, O)\right] \left(\widetilde{\rho}(S_i, X_i) - \rho(S_i, X_i)\right)  \Bigg]
\end{align*}

\textbf{Part 2.} By combining all results in Part 1, we have
\begin{align*}
    \mathbb{E}_P[m(W_i, \tau_{C_+}, \widetilde{\eta}) - m(W_i, \tau_{C_+}, \eta)] 
    & = 
    \sum_{j=1}^7 \mathbb{E}_P[m_j(W_i, \tau_{C_+}, \widetilde{\eta}) - m_j(W_i, \tau_{C_+}, \eta)] = \sum_{j=1}^8 \mathcal{J}_j,  
\end{align*} 

where
\begin{align*}
    \mathcal{J}_1 & = 
    - \mathbb{E}_P\Bigg[\bigg(\frac{\mathds{1}_{P_i = E}}{\widetilde{\varphi}} \frac{W_i}{\widetilde{\rho}(X_i)} - \frac{\mathds{1}_{P_i = E}}{\widetilde{\varphi}} \frac{W_i}{\rho(X_i)} \bigg)(\widetilde{\bar{\mu}}_{C_+, 1}(1, X_i) - \bar{\mu}_{C_+, 1}(1, X_i)) \Bigg] \\
    \mathcal{J}_2 & = 
    \mathbb{E}_P\Bigg[\bigg(\frac{\mathds{1}_{P_i = E}}{\widetilde{\varphi}} \frac{1-W_i}{1-\widetilde{\rho}(X_i)} - \frac{\mathds{1}_{P_i = E}}{\widetilde{\varphi}} \frac{1-W_i}{1-\rho(X_i)} \bigg) (\widetilde{\bar{\mu}}_{C_+, 0}(1, X_i) - \bar{\mu}_{C_+, 0}(1, X_i)) \Bigg] \\
    \mathcal{J}_3 & =
    \mathbb{E}_P\Bigg[
    \frac{\mathds{1}_{P_i = O}}{\widetilde{\varphi}} \frac{\widetilde{\varphi}(S_i, X_i)}{1-\widetilde{\varphi}(S_i, X_i)}\frac{1}{\widetilde{\rho}(X_i)} \\
    & \quad \quad \times \bigg[(\rho(S_i, X_i) \widetilde{q}_{C_+}(S_i, X_i, O) + [Y_i - \widetilde{q}_{C_+}(S_i, X_i, O)]_{+}) \\
    & \quad \quad \quad - (\rho(S_i, X_i) q_{C_+}(S_i, X_i, O) + [Y_i - q_{C_+}(S_i, X_i, O)]_{+}) \bigg]  \Bigg] \\
    \mathcal{J}_4 & =
    -
    \mathbb{E}_P\Bigg[
    \left(\frac{\mathds{1}_{P_i = O}}{\widetilde{\varphi}} \frac{\widetilde{\varphi}(S_i, X_i)}{1-\widetilde{\varphi}(S_i, X_i)}\frac{1}{\widetilde{\rho}(X_i)} - \frac{\mathds{1}_{P_i = O}}{\widetilde{\varphi}} \frac{\varphi(S_i, X_i)}{1-\varphi(S_i, X_i)}\frac{1}{\widetilde{\rho}(X_i)}\right)
    \\
    & \quad \quad \times \bigg[\rho(S_i, X_i) (\widetilde{\mu}_{C_+, 1}(S_i, X_i, O)) - \mu_{C_+, 1}(S_i, X_i, O)) \bigg]  \Bigg] \\
    \mathcal{J}_5 & =
    \mathbb{E}_P\Bigg[
    \left(\frac{\mathds{1}_{P_i = O}}{\widetilde{\varphi}} \frac{\widetilde{\varphi}(S_i, X_i)}{1-\widetilde{\varphi}(S_i, X_i)}\frac{1}{\widetilde{\rho}(X_i)} - 
    \frac{\mathds{1}_{P_i = O}}{\widetilde{\varphi}} \frac{\varphi(S_i, X_i)}{1-\varphi(S_i, X_i)}\frac{1}{\widetilde{\rho}(X_i)} \right)\\
    & \quad \quad \times \bigg[(\widetilde{\rho}(S_i, X_i) - \rho(S_i, X_i)) (\widetilde{q}_{C_+}(S_i, X_i, O) -\widetilde{\mu}_{C_+, 1}(S_i, X_i, O))
     \Bigg] \\
    \mathcal{J}_6 & =
    -
    \mathbb{E}_P\Bigg[
    \frac{\mathds{1}_{P_i = O}}{\widetilde{\varphi}} \frac{\widetilde{\varphi}(S_i, X_i)}{1-\widetilde{\varphi}(S_i, X_i)}\frac{1}{1-\widetilde{\rho}(X_i)} \\
    & \quad \quad \times \bigg[\big((1-\rho(S_i, X_i)) \widetilde{q}_{C_+}(S_i, X_i, O) - [Y_i - \widetilde{q}_{C_+}(S_i, X_i, O)]_{-}) \\
    & \quad \quad \quad - \big((1-\rho(S_i, X_i)) q_{C_+}(S_i, X_i, O) - [Y_i - q_{C_+}(S_i, X_i, O)]_{-}\big) \bigg] \bigg| 
     \Bigg] \\
    \mathcal{J}_7 & =
    \mathbb{E}_P\Bigg[
    \left(\frac{\mathds{1}_{P_i = O}}{\widetilde{\varphi}} \frac{\widetilde{\varphi}(S_i, X_i)}{1-\widetilde{\varphi}(S_i, X_i)}\frac{1}{1-\widetilde{\rho}(X_i)} - \frac{\mathds{1}_{P_i = O}}{\widetilde{\varphi}} \frac{\varphi(S_i, X_i)}{1-\varphi(S_i, X_i)}\frac{1}{1-\widetilde{\rho}(X_i)}\right)
    \\
    & \quad \quad \times \bigg[(1-\rho(S_i, X_i)) (\widetilde{\mu}_{C_+, 0}(S_i, X_i, O)) - \mu_{C_+, 0}(S_i, X_i, O)) \bigg]  \Bigg] \\
    \mathcal{J}_8 & =
    \mathbb{E}_P\Bigg[
    \left(\frac{\mathds{1}_{P_i = O}}{\widetilde{\varphi}} \frac{\widetilde{\varphi}(S_i, X_i)}{1-\widetilde{\varphi}(S_i, X_i)}\frac{1}{1-\widetilde{\rho}(X_i)} - 
    \frac{\mathds{1}_{P_i = O}}{\widetilde{\varphi}} \frac{\varphi(S_i, X_i)}{1-\varphi(S_i, X_i)}\frac{1}{1-\widetilde{\rho}(X_i)} \right)\\
    & \quad \quad \times \bigg[(\widetilde{\rho}(S_i, X_i) - \rho(S_i, X_i)) (\widetilde{q}_{C_+}(S_i, X_i, O) -\widetilde{\mu}_{C_+, 0}(S_i, X_i, O))
     \Bigg] \\
    \end{align*}
\end{proof}

\subsection{Proof of \Cref{lemma:DML-verification-copula}}

\label{sec:supp-proofs-asym-copula}

\subsubsection{Technical Lemma}

With the abuse of the notation, we define 
{\footnotesize
\begin{align*}
        m_1(Z_i, \tau, \widetilde{\eta}) & = \frac{\mathds{1}_{P_i = E}}{\widetilde{\varphi}} \frac{W_i}{\widetilde{\rho}(X_i)} (\widetilde{\mu}_{C_o, 1}(S_i, X_i, O) - \widetilde{\bar{\mu}}_{C_0, 1}(1, X_i)), \\
        m_2(Z_i, \tau, \widetilde{\eta}) & = -\frac{\mathds{1}_{P_i = E}}{\widetilde{\varphi}} \frac{1-W_i}{1-\widetilde{\rho}(X_i)} (\widetilde{\mu}_{C_o, o}(S_i, X_i, O) - \widetilde{\bar{\mu}}_{C_o, 0}(0, X_i)), \\
        m_3(Z_i, \tau, \widetilde{\eta}) & = \frac{\mathds{1}_{P_i = E}}{\widetilde{\varphi}}(\widetilde{\bar{\mu}}_{C_o, 1}(1, X_i) - \widetilde{\bar{\mu}}_{C_o, 0}(0, X_i) - \tau) \\
        m_4(Z_i, \tau, \widetilde{\eta}) 
        & = \frac{\mathds{1}_{P_i = O}}{\widetilde{\varphi}} \frac{\widetilde{\varphi}(S_i, X_i)}{1-\widetilde{\varphi}(S_i, X_i)} \frac{\widetilde{\rho}(S_i, X_i)}{\widetilde{\rho}(X_i)} (h_{C_o, 1}(Y_i, \widetilde{F}_Y^{-1}(\cdot|S_i, X_i, O), 1 - \widetilde{\rho}(S_i, X_i)) -\widetilde{\mu}_{C_o, 1}(S_i, X_i, O)), \\
        & = \frac{\mathds{1}_{P_i = O}}{\widetilde{\varphi}} \frac{\widetilde{\varphi}(S_i, X_i)}{1-\widetilde{\varphi}(S_i, X_i)} \frac{1}{\widetilde{\rho}(X_i)} (\widetilde{h}_{C_o, 1}(Y_i, \widetilde{F}_Y^{-1}(\cdot|S_i, X_i, O), \widetilde{\rho}(S_i, X_i)) -\widetilde{\rho}(S_i, X_i) \widetilde{\mu}_{C_o, 1}(S_i, X_i, O)),
        \\
        m_5(Z_i, \tau, \widetilde{\eta}) & = - \frac{\mathds{1}_{P_i = O}}{\widetilde{\varphi}} \frac{\widetilde{\varphi}(S_i, X_i)}{1-\widetilde{\varphi}(S_i, X_i)}\frac{1-\widetilde{\rho}(S_i, X_i)}{1-\widetilde{\rho}(X_i)} (h_{C_o, 0}(Y_i, \widetilde{F}_Y^{-1}(\cdot|S_i, X_i, O), 1 - \widetilde{\rho}(S_i, X_i)) -\widetilde{\mu}_{C_o, 0}(S_i, X_i, O)), \\
        & = - \frac{\mathds{1}_{P_i = O}}{\widetilde{\varphi}} \frac{\widetilde{\varphi}(S_i, X_i)}{1-\widetilde{\varphi}(S_i, X_i)} \frac{1}{1-\widetilde{\rho}(X_i)}  \\
    & \quad \times (\tilde{h}_{C_o, 0}(Y_i, F_Y^{-1}(\cdot|S_i, X_i, O), 1-\rho(S_i, X_i)) -(1-\widetilde{\rho}(S_i, X_i))\widetilde{\mu}_{C_o, 0}(S_i, X_i, O)), \\
        m_6(Z_i, \tau, \widetilde{\eta}) & = \frac{\mathds{1}_{P_i = E}}{\widetilde{\varphi}} \frac{1}{\widetilde{\rho}(X_i)} (\widetilde{d}_{C_o, 1}(S_i, X_i) - \mu_{C_o, 1}(S_i, X_i, O)) \left(W_i - \widetilde{\rho}(S_i, X_i)\right), \\
        m_7(Z_i, \tau, \eta) & = \frac{\mathds{1}_{P_i = E}}{\widetilde{\varphi}} \frac{1}{1 - \widetilde{\rho}(X_i)} (\widetilde{d}_{C_o, 0}(S_i, X_i) - \mu_{C_o, 0}(S_i, X_i, O))\left(W_i - \widetilde{\rho}(S_i, X_i)\right).
    \end{align*}
}
where
{\footnotesize
\begin{align*}
    & \tilde{h}_{C_o, 1}(Y_i, F_Y^{-1}(\cdot|S_i, X_i, O), \widetilde{\rho}(S_i, X_i)) \\
    & = (1-C_o(1-\widetilde{\rho}(S_i, X_i)|0)) Y_i + \int_0^1 \left((1-u)F_Y^{-1}(u|S_i, X_i, O) + [Y_i - F_Y^{-1}(u|S_i, X_i, O)]_{+}\right)d(1-C_o(1-\widetilde{\rho}(S_i, X_i)|u), \\
    & \tilde{h}_{C_o, 0}(Y_i, F_Y^{-1}(\cdot|S_i, X_i, O), 1-\widetilde{\rho}(S_i, X_i)) \\
    & = C_o(1-\widetilde{\rho}(S_i, X_i)|1) Y_i  - \int_0^1 \left(u F_Y^{-1}(u|S_i, X_i, O) - [Y_i - F_Y^{-1}(u|S_i, X_i, O)]_{-}\right)d C_o(1-\widetilde{\rho}(S_i, X_i)|u)
\end{align*}
}
Note that
\begin{align*}
    m_{C_o}(Z_i, \tau_{C_o}, \eta) = \sum_{j=1}^7 m_{j}(Z_i, \tau_{C_o}, \eta).
\end{align*}

\begin{lemma}
    \label{lemma:tech-remainer-copula}
    \begin{align*}
    \mathbb{E}_P[m(W_i, \tau_{C_o}, \widetilde{\eta}) - m(W_i, \tau_{C_o}, \eta)] 
    & = 
    \sum_{j=1}^7 \mathbb{E}_P[m_j(W_i, \tau_{C_o}, \widetilde{\eta}) - m_j(W_i, \tau_{C_o}, \eta)] = \sum_{j=1}^{14} \mathcal{J}_j,  
\end{align*} 
where
{\footnotesize
\begin{align*}
    \mathcal{J}_1 & = 
    - \mathbb{E}_P\Bigg[\bigg(\frac{\mathds{1}_{P_i = E}}{\widetilde{\varphi}} \frac{W_i}{\widetilde{\rho}(X_i)} - \frac{\mathds{1}_{P_i = E}}{\widetilde{\varphi}} \frac{W_i}{\rho(X_i)} \bigg)(\widetilde{\bar{\mu}}_{C_+, 1}(1, X_i) - \bar{\mu}_{C_+, 1}(1, X_i)) \Bigg] \\
    \mathcal{J}_2 & = 
    \mathbb{E}_P\Bigg[\bigg(\frac{\mathds{1}_{P_i = E}}{\widetilde{\varphi}} \frac{1-W_i}{1-\widetilde{\rho}(X_i)} - \frac{\mathds{1}_{P_i = E}}{\widetilde{\varphi}} \frac{1-W_i}{1-\rho(X_i)} \bigg) (\widetilde{\bar{\mu}}_{C_+, 0}(1, X_i) - \bar{\mu}_{C_+, 0}(1, X_i)) \Bigg]
    \end{align*}

\begin{align*}
 \mathcal{J}_3   & =
\mathbb{E}_P\Bigg[\frac{\mathds{1}_{P_i = O}}{\widetilde{\varphi}} \frac{\widetilde{\varphi}(S_i, X_i)}{1-\widetilde{\varphi}(S_i, X_i)} \frac{1}{\widetilde{\rho}(X_i)}  \\
& \quad \quad \times (\tilde{h}_{C_o, 1}(Y_i, \widetilde{F}_Y^{-1}(\cdot|S_i, X_i, O), \widetilde{\rho}(S_i, X_i)) - \tilde{h}_{C_o, 1}(Y_i, F_Y^{-1}(\cdot|S_i, X_i, O), \widetilde{\rho}(S_i, X_i))\Bigg] 
\\
\mathcal{J}_4 &  =
\mathbb{E}_P\Bigg[\frac{\mathds{1}_{P_i = O}}{\widetilde{\varphi}} \bigg(\frac{\widetilde{\varphi}(S_i, X_i)}{1-\widetilde{\varphi}(S_i, X_i)} - \frac{\varphi(S_i, X_i)}{1-\varphi(S_i, X_i)} \bigg) \frac{1}{\widetilde{\rho}(X_i)}  \\
& \quad \quad \times (\tilde{h}_{C_o, 1}(Y_i, F_Y^{-1}(\cdot|S_i, X_i, O), \widetilde{\rho}(S_i, X_i)) - \tilde{h}_{C_o, 1}(Y_i, F_Y^{-1}(\cdot|S_i, X_i, O), \rho(S_i, X_i)))\Bigg] 
\\
\mathcal{J}_5 & =  - 
\mathbb{E}_P\Bigg[\frac{\mathds{1}_{P_i = O}}{\widetilde{\varphi}} \bigg(\frac{\widetilde{\varphi}(S_i, X_i)}{1-\widetilde{\varphi}(S_i, X_i)} - \frac{\varphi(S_i, X_i)}{1-\varphi(S_i, X_i)} \bigg) \frac{1}{\widetilde{\rho}(X_i)}  (\widetilde{\rho}(S_i, X_i)  - \rho(S_i, X_i))\widetilde{\mu}_{C_o, 1}(S_i, X_i, O)\Bigg]
\\
\mathcal{J}_6 & = - 
\mathbb{E}_P\Bigg[\frac{\mathds{1}_{P_i = O}}{\widetilde{\varphi}} \bigg(\frac{\widetilde{\varphi}(S_i, X_i)}{1-\widetilde{\varphi}(S_i, X_i)} - \frac{\varphi(S_i, X_i)}{1-\varphi(S_i, X_i)} \bigg) \frac{1}{\widetilde{\rho}(X_i)} \rho(S_i, X_i) (\widetilde{\mu}_{C_o, 1}(S_i, X_i, O) -  \mu_{C_o, 1}(S_i, X_i, O))\Bigg] \\
\mathcal{J}_7 & = 
\mathbb{E}_P\Bigg[\frac{\mathds{1}_{P_i = O}}{\widetilde{\varphi}} \frac{\varphi(S_i, X_i)}{1-\varphi(S_i, X_i)} \frac{1}{\widetilde{\rho}(X_i)}  \\
& \quad \quad \times (\tilde{h}_{C_o, 1}(Y_i, F_Y^{-1}(\cdot|S_i, X_i, O), \widetilde{\rho}(S_i, X_i)) - \tilde{h}_{C_o, 1}(Y_i, F_Y^{-1}(\cdot|S_i, X_i, O), \rho(S_i, X_i)))\Bigg] 
\end{align*}

\begin{align*}
\mathcal{J}_8 & = 
- \mathbb{E}_P\Bigg[\frac{\mathds{1}_{P_i = O}}{\widetilde{\varphi}} \frac{\widetilde{\varphi}(S_i, X_i)}{1-\widetilde{\varphi}(S_i, X_i)} \frac{1}{1-\widetilde{\rho}(X_i)}  \\
& \quad \quad \times (\tilde{h}_{C_o, 0}(Y_i, \widetilde{F}_Y^{-1}(\cdot|S_i, X_i, O), 1-\widetilde{\rho}(S_i, X_i)) - \tilde{h}_{C_o, 0}(Y_i, F_Y^{-1}(\cdot|S_i, X_i, O), 1-\widetilde{\rho}(S_i, X_i))\Bigg] 
\\
\mathcal{J}_9 & =  -
\mathbb{E}_P\Bigg[\frac{\mathds{1}_{P_i = O}}{\widetilde{\varphi}} \bigg(\frac{\widetilde{\varphi}(S_i, X_i)}{1-\widetilde{\varphi}(S_i, X_i)} - \frac{\varphi(S_i, X_i)}{1-\varphi(S_i, X_i)} \bigg) \frac{1}{1-\widetilde{\rho}(X_i)}  \\
& \quad \quad \times (\tilde{h}_{C_o, 0}(Y_i, F_Y^{-1}(\cdot|S_i, X_i, O), 1-\widetilde{\rho}(S_i, X_i)) - \tilde{h}_{C_o, 0}(Y_i, F_Y^{-1}(\cdot|S_i, X_i, O), 1-\rho(S_i, X_i)))\Bigg] 
\\
\mathcal{J}_{10} & =  -
\mathbb{E}_P\Bigg[\frac{\mathds{1}_{P_i = O}}{\widetilde{\varphi}} \bigg(\frac{\widetilde{\varphi}(S_i, X_i)}{1-\widetilde{\varphi}(S_i, X_i)} - \frac{\varphi(S_i, X_i)}{1-\varphi(S_i, X_i)} \bigg) \frac{1}{1-\widetilde{\rho}(X_i)}  (\widetilde{\rho}(S_i, X_i)  - \rho(S_i, X_i))\widetilde{\mu}_{C_o, 0}(S_i, X_i, O)\Bigg]
\\
\mathcal{J}_{11} & = 
\mathbb{E}_P\Bigg[\frac{\mathds{1}_{P_i = O}}{\widetilde{\varphi}} \bigg(\frac{\widetilde{\varphi}(S_i, X_i)}{1-\widetilde{\varphi}(S_i, X_i)} - \frac{\varphi(S_i, X_i)}{1-\varphi(S_i, X_i)} \bigg) \frac{1-\rho(S_i, X_i)}{1-\widetilde{\rho}(X_i)} (\widetilde{\mu}_{C_o, 0}(S_i, X_i, O) -  \mu_{C_o, 0}(S_i, X_i, O))\Bigg] \\
\mathcal{J}_{12} & =  -
\mathbb{E}_P\Bigg[\frac{\mathds{1}_{P_i = O}}{\widetilde{\varphi}} \frac{\varphi(S_i, X_i)}{1-\varphi(S_i, X_i)} \frac{1}{\widetilde{\rho}(X_i)}  \\
& \quad \quad \times (\tilde{h}_{C_o, 0}(Y_i, F_Y^{-1}(\cdot|S_i, X_i, O), 1-\widetilde{\rho}(S_i, X_i)) - \tilde{h}_{C_o, 0}(Y_i, F_Y^{-1}(\cdot|S_i, X_i, O), 1-\rho(S_i, X_i)))\Bigg] 
\end{align*}

\begin{align*}
    \mathcal{J}_{13} & = 
    - 
    \mathbb{E}_P\Bigg[\frac{\mathds{1}_{P_i = E}}{\widetilde{\varphi}} \frac{1}{\widetilde{\rho}(X_i)} \widetilde{d}_{C_o, 1}(S_i, X_i) \left(\widetilde{\rho}(S_i, X_i) - \rho(S_i, X_i)\right)\Bigg] \\
    \mathcal{J}_{14} & =  - 
    \mathbb{E}_P\Bigg[\frac{\mathds{1}_{P_i = E}}{\widetilde{\varphi}} \frac{1}{1-\widetilde{\rho}(X_i)} \widetilde{d}_{C_o, 0}(S_i, X_i) \left(\widetilde{\rho}(S_i, X_i) - \rho(S_i, X_i)\right)\Bigg].
\end{align*}
}

\end{lemma}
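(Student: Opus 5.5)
The decomposition in \cref{lemma:tech-remainer-copula} is an exact algebraic identity, so I would prove it by computing each $\mathbb{E}_P[m_j(Z_i,\tau_{C_o},\widetilde{\eta})-m_j(Z_i,\tau_{C_o},\eta)]$, $j=1,\dots,7$, separately and then summing. The template is the proof of \cref{lemma:tech-remainer-wc}. The AVaR dual pieces $\rho q+[Y-q]_+$ are replaced by $\tilde h_{C_o,1}$ and $\tilde h_{C_o,0}$, and the correction factor $q_{C_+}$ in $m_6,m_7$ is replaced by $d_{C_o}$.

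\textbf{Terms $j=1,2,3$.} These are handled exactly as in Part 1 (1)--(2) of the proof of \cref{lemma:tech-remainer-wc}. By the definition of $\bar\mu_{C_o,w}$ as $\mathbb{E}[\mu_{C_o,w}(S_i,X_i,O)\mid W_i=w,X_i,P_i=E]$, the terms evaluated at the true $\mu_{C_o,w}-\bar\mu_{C_o,w}$ vanish. One is then left with:
\begin{itemize}
\item terms linear in $\widetilde\mu_{C_o,w}-\mu_{C_o,w}$;
\item terms linear in $\widetilde{\bar\mu}_{C_o,w}-\bar\mu_{C_o,w}$ weighted by the true $1/\rho(X_i)$, which cancel against $m_3$;
\item the product remainders $\mathcal{J}_1,\mathcal{J}_2$.
\end{itemize}
For $m_3$, the centered term vanishes by the definition of $\tau_{C_o}$ together with \cref{theorem: known copula}.

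\textbf{Terms $j=4,5$.} Write $m_4$ in the form $\tilde h_{C_o,1}(Y_i,\widetilde F_Y^{-1},\widetilde\rho)-\widetilde\rho\,\widetilde\mu_{C_o,1}$, i.e.\ with the factor $\rho(S_i,X_i)$ absorbed into $\tilde h$. Then telescope through three intermediate points:
\begin{align*}
&\tilde h(\widetilde F^{-1},\widetilde\rho)\;\to\;\tilde h(F^{-1},\widetilde\rho)\;\to\;\tilde h(F^{-1},\rho),\\
&\widetilde\rho\,\widetilde\mu\;\to\;\rho\,\widetilde\mu\;\to\;\rho\,\mu,\\
&\text{weight }\widetilde\varphi(s,x)/(1-\widetilde\varphi(s,x))\;\to\;\varphi(s,x)/(1-\varphi(s,x)).
\end{align*}
The term evaluated at all true nuisances, $\tilde h(F^{-1},\rho)-\rho\mu$, has zero conditional mean given $(S_i,X_i,P_i=O)$ by the dual representation \cref{eq:dual} (i.e.\ \cref{lemma:dual-SpectralRiskMeasure} applied with $\sigma=\sigma_{C_o,1}$). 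Hence it vanishes after multiplication by any function of $(S_i,X_i)$.

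Splitting the remaining differences according to whether they multiply the estimated or the true odds weight produces $\mathcal{J}_3$ through $\mathcal{J}_7$. Two further pieces appear: a linear term in $\widetilde\mu_{C_o,1}-\mu_{C_o,1}$ with true weights, which cancels the first term from $m_1$ after using comparability and the identity $\mathbb{E}[\mathds{1}_{P=O}\varphi(S,X)/(1-\varphi(S,X))\,g]=\mathbb{E}[\mathds{1}_{P=E}g]$; and a linear term in $\widetilde\rho-\rho$ coming from $\widetilde\rho\,\widetilde\mu$. The $m_5$ term is treated symmetrically with $\tilde h_{C_o,0}$ and yields $\mathcal{J}_8$ through $\mathcal{J}_{12}$.

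\textbf{Terms $j=6,7$.} Since $\mathbb{E}[W_i-\rho(S_i,X_i)\mid S_i,X_i,P_i=E]=0$, only $-\mathbb{E}[\ldots(\widetilde d-\widetilde\mu)(\widetilde\rho-\rho)]$ survives. Its $\widetilde\mu$ part cancels the $\widetilde\rho-\rho$ linear terms left over from $m_4,m_5$ (transported to the experimental sample via the same odds identity), and the remainder gives $\mathcal{J}_{13}$ and $\mathcal{J}_{14}$.

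\textbf{Main obstacle.} The hard part is the bookkeeping that makes every first-order term cancel exactly while leaving precisely the fourteen listed remainders. This is especially delicate for $\mathcal{J}_7$ and $\mathcal{J}_{12}$, which retain the linear-in-$(\widetilde\rho-\rho)$ increment of $\tilde h$ under the true odds weight; these have to be left intact in the identity and are only controlled later through the derivative of $\tilde h$ in $\rho$, which is where $d_{C_o}$ enters. I would first carefully verify the transport identity between the $O$ and $E$ samples, and check that the $d_{C_o}$ terms pair correctly with these increments.
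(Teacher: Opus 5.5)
Your proposal is correct and follows essentially the same route as the paper: you compute each $\mathbb{E}_P[m_j(Z_i,\tau_{C_o},\widetilde\eta)-m_j(Z_i,\tau_{C_o},\eta)]$ separately, add and subtract the zero-conditional-mean true-nuisance dual term, telescope $\tilde h_{C_o,w}$ first in $F_Y^{-1}$ and then in $\rho$, and split by estimated versus true odds weight. The leftover pieces linear in $\widetilde\mu_{C_o,w}-\mu_{C_o,w}$ and in $(\widetilde\rho-\rho)\widetilde\mu_{C_o,w}$ then cancel, via the $O$-to-$E$ transport identity, against $m_1,m_2$ and against the $\widetilde\mu$ part of $m_6,m_7$. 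One minor correction: the transport identity uses only the definition of $\varphi(S_i,X_i)$, not comparability; comparability enters only through \cref{theorem: known copula} for the $m_3$ term.
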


\begin{proof}[Proof of \Cref{lemma:tech-remainer-copula}]

We show the result by computing $\mathbb{E}_P[m_j(W_i, \tau, \widetilde{\eta}) - m_j(W_i, \tau, \eta)]$ for $j = 1, \dotsc, 7$.

\textbf{Part 1: Calculation of $\mathbb{E}_P[m_j(Z_i, \tau_{C_o}, \widetilde{\eta}) - m_j(Z_i, \tau_{C_o}, \eta)]$ for $j = 1, \dotsc, 7$}

(a) Calculation of $\mathbb{E}_P[m_j(Z_i, \tau_{C_o}, \widetilde{\eta}) - m_j(Z_i, \tau_{C_o}, \eta)]$ for $j = 1, 2, 3$.

It is identical to the computation in the proof of \Cref{lemma:tech-remainer-wc}.

(b) Calculation of $\mathbb{E}_P[m_j(Z_i, \tau_{C_o}, \widetilde{\eta}) - m_j(Z_i, \tau_{C_o}, \eta)]$ for $j = 4, 5$.

\begin{align*}
&\mathbb{E}_P[m_4(Z_i, \tau_{C_o}, \widetilde{\eta}) - m_4(Z_i, \tau_{C_o}, \eta)] \\
& = 
\mathbb{E}_P\Bigg[\frac{\mathds{1}_{P_i = O}}{\widetilde{\varphi}} \frac{\widetilde{\varphi}(S_i, X_i)}{1-\widetilde{\varphi}(S_i, X_i)} \frac{1}{\widetilde{\rho}(X_i)}  (\tilde{h}_{C_o, 1}(Y_i, \widetilde{F}_Y^{-1}(\cdot|S_i, X_i, O), \widetilde{\rho}(S_i, X_i)) -\widetilde{\rho}(S_i, X_i) \widetilde{\mu}_{C_o, 1}(S_i, X_i, O))\Bigg]
\\
& \quad - 
\mathbb{E}_P\Bigg[\frac{\mathds{1}_{P_i = O}}{\varphi} \frac{\varphi(S_i, X_i)}{1-\varphi(S_i, X_i)} \frac{1}{\rho(X_i)}  (\tilde{h}_{C_o, 1}(Y_i, F_Y^{-1}(\cdot|S_i, X_i, O), \rho(S_i, X_i)) - \rho(S_i, X_i) \mu_{C_o, 1}(S_i, X_i, O))\Bigg] \\
& = 
\mathbb{E}_P\Bigg[\frac{\mathds{1}_{P_i = O}}{\widetilde{\varphi}} \frac{\widetilde{\varphi}(S_i, X_i)}{1-\widetilde{\varphi}(S_i, X_i)} \frac{1}{\widetilde{\rho}(X_i)}  (\tilde{h}_{C_o, 1}(Y_i, \widetilde{F}_Y^{-1}(\cdot|S_i, X_i, O), \widetilde{\rho}(S_i, X_i)) -\widetilde{\rho}(S_i, X_i) \widetilde{\mu}_{C_o, 1}(S_i, X_i, O))\Bigg]
\\
& \quad - 
\mathbb{E}_P\Bigg[\frac{\mathds{1}_{P_i = O}}{\widetilde{\varphi}} \frac{\widetilde{\varphi}(S_i, X_i)}{1-\widetilde{\varphi}(S_i, X_i)} \frac{1}{\widetilde{\rho}(X_i)}  (\tilde{h}_{C_o, 1}(Y_i, F_Y^{-1}(\cdot|S_i, X_i, O), \rho(S_i, X_i)) - \rho(S_i, X_i) \mu_{C_o, 1}(S_i, X_i, O))\Bigg]
\\
& \quad + 
\mathbb{E}_P\Bigg[\bigg(\frac{\mathds{1}_{P_i = O}}{\widetilde{\varphi}} \frac{\widetilde{\varphi}(S_i, X_i)}{1-\widetilde{\varphi}(S_i, X_i)} \frac{1}{\widetilde{\rho}(X_i)} - \frac{\mathds{1}_{P_i = O}}{\varphi} \frac{\varphi(S_i, X_i)}{1-\varphi(S_i, X_i)} \frac{1}{\rho(X_i)}\bigg)  \\
& \quad \quad \underbrace{\quad \times
(\tilde{h}_{C_o, 1}(Y_i, F_Y^{-1}(\cdot|S_i, X_i, O), \rho(S_i, X_i)) - \rho(S_i, X_i) \mu_{C_o, 1}(S_i, X_i, O))\Bigg]}_{=0 \text{ because of the definition of $\mu_{C_o, 1}(S_i, X_i, O)$ and its dual representation.}} \\
& =
\mathbb{E}_P\Bigg[\frac{\mathds{1}_{P_i = O}}{\widetilde{\varphi}} \frac{\widetilde{\varphi}(S_i, X_i)}{1-\widetilde{\varphi}(S_i, X_i)} \frac{1}{\widetilde{\rho}(X_i)}  \\
& \quad \quad \times (\tilde{h}_{C_o, 1}(Y_i, \widetilde{F}_Y^{-1}(\cdot|S_i, X_i, O), \widetilde{\rho}(S_i, X_i)) - \tilde{h}_{C_o, 1}(Y_i, F_Y^{-1}(\cdot|S_i, X_i, O), \rho(S_i, X_i)))\Bigg]
\\
& \quad - 
\mathbb{E}_P\Bigg[\frac{\mathds{1}_{P_i = O}}{\widetilde{\varphi}} \frac{\widetilde{\varphi}(S_i, X_i)}{1-\widetilde{\varphi}(S_i, X_i)} \frac{1}{\widetilde{\rho}(X_i)}  (\widetilde{\rho}(S_i, X_i) \widetilde{\mu}_{C_o, 1}(S_i, X_i, O) - \rho(S_i, X_i) \mu_{C_o, 1}(S_i, X_i, O))\Bigg]
\\
& =
\mathbb{E}_P\Bigg[\frac{\mathds{1}_{P_i = O}}{\widetilde{\varphi}} \frac{\widetilde{\varphi}(S_i, X_i)}{1-\widetilde{\varphi}(S_i, X_i)} \frac{1}{\widetilde{\rho}(X_i)}  \\
& \quad \quad \times (\tilde{h}_{C_o, 1}(Y_i, \widetilde{F}_Y^{-1}(\cdot|S_i, X_i, O), \widetilde{\rho}(S_i, X_i)) - \tilde{h}_{C_o, 1}(Y_i, F_Y^{-1}(\cdot|S_i, X_i, O), \widetilde{\rho}(S_i, X_i))\Bigg] 
\\
& \quad +
\mathbb{E}_P\Bigg[\frac{\mathds{1}_{P_i = O}}{\widetilde{\varphi}} \frac{\widetilde{\varphi}(S_i, X_i)}{1-\widetilde{\varphi}(S_i, X_i)} \frac{1}{\widetilde{\rho}(X_i)}  \\
& \quad \quad \times (\tilde{h}_{C_o, 1}(Y_i, F_Y^{-1}(\cdot|S_i, X_i, O), \widetilde{\rho}(S_i, X_i)) - \tilde{h}_{C_o, 1}(Y_i, F_Y^{-1}(\cdot|S_i, X_i, O), \rho(S_i, X_i)))\Bigg] 
\\
& \quad - 
\mathbb{E}_P\Bigg[\frac{\mathds{1}_{P_i = O}}{\widetilde{\varphi}} \frac{\widetilde{\varphi}(S_i, X_i)}{1-\widetilde{\varphi}(S_i, X_i)} \frac{1}{\widetilde{\rho}(X_i)}  (\widetilde{\rho}(S_i, X_i)  - \rho(S_i, X_i))\widetilde{\mu}_{C_o, 1}(S_i, X_i, O)\Bigg]
\\
& \quad - 
\mathbb{E}_P\Bigg[\frac{\mathds{1}_{P_i = O}}{\widetilde{\varphi}} \frac{\widetilde{\varphi}(S_i, X_i)}{1-\widetilde{\varphi}(S_i, X_i)} \frac{1}{\widetilde{\rho}(X_i)} \rho(S_i, X_i) (\widetilde{\mu}_{C_o, 1}(S_i, X_i, O) -  \mu_{C_o, 1}(S_i, X_i, O))\Bigg] \\
& =
\mathbb{E}_P\Bigg[\frac{\mathds{1}_{P_i = O}}{\widetilde{\varphi}} \frac{\widetilde{\varphi}(S_i, X_i)}{1-\widetilde{\varphi}(S_i, X_i)} \frac{1}{\widetilde{\rho}(X_i)}  \\
& \quad \quad \times (\tilde{h}_{C_o, 1}(Y_i, \widetilde{F}_Y^{-1}(\cdot|S_i, X_i, O), \widetilde{\rho}(S_i, X_i)) - \tilde{h}_{C_o, 1}(Y_i, F_Y^{-1}(\cdot|S_i, X_i, O), \widetilde{\rho}(S_i, X_i))\Bigg] 
\\
& \quad +
\mathbb{E}_P\Bigg[\frac{\mathds{1}_{P_i = O}}{\widetilde{\varphi}} \bigg(\frac{\widetilde{\varphi}(S_i, X_i)}{1-\widetilde{\varphi}(S_i, X_i)} - \frac{\varphi(S_i, X_i)}{1-\varphi(S_i, X_i)} \bigg) \frac{1}{\widetilde{\rho}(X_i)}  \\
& \quad \quad \times (\tilde{h}_{C_o, 1}(Y_i, F_Y^{-1}(\cdot|S_i, X_i, O), \widetilde{\rho}(S_i, X_i)) - \tilde{h}_{C_o, 1}(Y_i, F_Y^{-1}(\cdot|S_i, X_i, O), \rho(S_i, X_i)))\Bigg] 
\\
& \quad - 
\mathbb{E}_P\Bigg[\frac{\mathds{1}_{P_i = O}}{\widetilde{\varphi}} \bigg(\frac{\widetilde{\varphi}(S_i, X_i)}{1-\widetilde{\varphi}(S_i, X_i)} - \frac{\varphi(S_i, X_i)}{1-\varphi(S_i, X_i)} \bigg) \frac{1}{\widetilde{\rho}(X_i)}  (\widetilde{\rho}(S_i, X_i)  - \rho(S_i, X_i))\widetilde{\mu}_{C_o, 1}(S_i, X_i, O)\Bigg]
\\
& \quad - 
\mathbb{E}_P\Bigg[\frac{\mathds{1}_{P_i = O}}{\widetilde{\varphi}} \bigg(\frac{\widetilde{\varphi}(S_i, X_i)}{1-\widetilde{\varphi}(S_i, X_i)} - \frac{\varphi(S_i, X_i)}{1-\varphi(S_i, X_i)} \bigg) \frac{1}{\widetilde{\rho}(X_i)} \rho(S_i, X_i) (\widetilde{\mu}_{C_o, 1}(S_i, X_i, O) -  \mu_{C_o, 1}(S_i, X_i, O))\Bigg] \\
& \quad +
\mathbb{E}_P\Bigg[\frac{\mathds{1}_{P_i = O}}{\widetilde{\varphi}} \frac{\varphi(S_i, X_i)}{1-\varphi(S_i, X_i)} \frac{1}{\widetilde{\rho}(X_i)}  \\
& \quad \quad \times (\tilde{h}_{C_o, 1}(Y_i, F_Y^{-1}(\cdot|S_i, X_i, O), \widetilde{\rho}(S_i, X_i)) - \tilde{h}_{C_o, 1}(Y_i, F_Y^{-1}(\cdot|S_i, X_i, O), \rho(S_i, X_i)))\Bigg] 
\\
& \quad - 
\mathbb{E}_P\Bigg[\frac{\mathds{1}_{P_i = O}}{\widetilde{\varphi}} \frac{\varphi(S_i, X_i)}{1-\varphi(S_i, X_i)} \frac{1}{\widetilde{\rho}(X_i)}  (\widetilde{\rho}(S_i, X_i)  - \rho(S_i, X_i))\widetilde{\mu}_{C_o, 1}(S_i, X_i, O)\Bigg]
\\
& \quad - 
\mathbb{E}_P\Bigg[\frac{\mathds{1}_{P_i = O}}{\widetilde{\varphi}} \frac{\varphi(S_i, X_i)}{1-\varphi(S_i, X_i)} \frac{1}{\widetilde{\rho}(X_i)} \rho(S_i, X_i) (\widetilde{\mu}_{C_o, 1}(S_i, X_i, O) -  \mu_{C_o, 1}(S_i, X_i, O))\Bigg].
\end{align*}

Similarly,

\begin{align*}
    & \mathbb{E}_P[m_5(Z_i, \tau_{C_o}, \widetilde{\eta}) - m_5(Z_i, \tau_{C_o}, \eta)] \\
    & =
- \mathbb{E}_P\Bigg[\frac{\mathds{1}_{P_i = O}}{\widetilde{\varphi}} \frac{\widetilde{\varphi}(S_i, X_i)}{1-\widetilde{\varphi}(S_i, X_i)} \frac{1}{1-\widetilde{\rho}(X_i)}  \\
& \quad \quad \times (\tilde{h}_{C_o, 0}(Y_i, \widetilde{F}_Y^{-1}(\cdot|S_i, X_i, O), 1-\widetilde{\rho}(S_i, X_i)) - \tilde{h}_{C_o, 0}(Y_i, F_Y^{-1}(\cdot|S_i, X_i, O), 1-\widetilde{\rho}(S_i, X_i))\Bigg] 
\\
& \quad -
\mathbb{E}_P\Bigg[\frac{\mathds{1}_{P_i = O}}{\widetilde{\varphi}} \bigg(\frac{\widetilde{\varphi}(S_i, X_i)}{1-\widetilde{\varphi}(S_i, X_i)} - \frac{\varphi(S_i, X_i)}{1-\varphi(S_i, X_i)} \bigg) \frac{1}{1-\widetilde{\rho}(X_i)}  \\
& \quad \quad \times (\tilde{h}_{C_o, 0}(Y_i, F_Y^{-1}(\cdot|S_i, X_i, O), 1-\widetilde{\rho}(S_i, X_i)) - \tilde{h}_{C_o, 0}(Y_i, F_Y^{-1}(\cdot|S_i, X_i, O), 1-\rho(S_i, X_i)))\Bigg] 
\\
& \quad -
\mathbb{E}_P\Bigg[\frac{\mathds{1}_{P_i = O}}{\widetilde{\varphi}} \bigg(\frac{\widetilde{\varphi}(S_i, X_i)}{1-\widetilde{\varphi}(S_i, X_i)} - \frac{\varphi(S_i, X_i)}{1-\varphi(S_i, X_i)} \bigg) \frac{1}{1-\widetilde{\rho}(X_i)}  (\widetilde{\rho}(S_i, X_i)  - \rho(S_i, X_i))\widetilde{\mu}_{C_o, 0}(S_i, X_i, O)\Bigg]
\\
& \quad +
\mathbb{E}_P\Bigg[\frac{\mathds{1}_{P_i = O}}{\widetilde{\varphi}} \bigg(\frac{\widetilde{\varphi}(S_i, X_i)}{1-\widetilde{\varphi}(S_i, X_i)} - \frac{\varphi(S_i, X_i)}{1-\varphi(S_i, X_i)} \bigg) \frac{1-\rho(S_i, X_i)}{1-\widetilde{\rho}(X_i)} (\widetilde{\mu}_{C_o, 0}(S_i, X_i, O) -  \mu_{C_o, 0}(S_i, X_i, O))\Bigg] \\
& \quad -
\mathbb{E}_P\Bigg[\frac{\mathds{1}_{P_i = O}}{\widetilde{\varphi}} \frac{\varphi(S_i, X_i)}{1-\varphi(S_i, X_i)} \frac{1}{\widetilde{\rho}(X_i)}  \\
& \quad \quad \times (\tilde{h}_{C_o, 0}(Y_i, F_Y^{-1}(\cdot|S_i, X_i, O), 1-\widetilde{\rho}(S_i, X_i)) - \tilde{h}_{C_o, 0}(Y_i, F_Y^{-1}(\cdot|S_i, X_i, O), 1-\rho(S_i, X_i)))\Bigg] 
\\
& \quad -
\mathbb{E}_P\Bigg[\frac{\mathds{1}_{P_i = O}}{\widetilde{\varphi}} \frac{\varphi(S_i, X_i)}{1-\varphi(S_i, X_i)} \frac{1}{1-\widetilde{\rho}(X_i)}  (\widetilde{\rho}(S_i, X_i)  - \rho(S_i, X_i))\widetilde{\mu}_{C_o, 0}(S_i, X_i, O)\Bigg]
\\
& \quad +
\mathbb{E}_P\Bigg[\frac{\mathds{1}_{P_i = O}}{\widetilde{\varphi}} \frac{\varphi(S_i, X_i)}{1-\varphi(S_i, X_i)} \frac{1-\rho(S_i, X_i)}{1-\widetilde{\rho}(X_i)}  (\widetilde{\mu}_{C_o, 0}(S_i, X_i, O) -  \mu_{C_o, 0}(S_i, X_i, O))\Bigg]
\end{align*}

(7) Calculation of $\mathbb{E}_P[m_j(Z_i, \tau_{C_o}, \widetilde{\eta}) - m_j(Z_i, \tau_{C_o}, \eta)]$ for $j = 6, 7$.

\begin{align*}
    & \mathbb{E}_P[m_6(Z_i, \tau_{C_o}, \widetilde{\eta})-m_6(Z_i, \tau_{C_o}, \eta)] \\
    & = 
    \mathbb{E}_P\Bigg[\frac{\mathds{1}_{P_i = E}}{\widetilde{\varphi}} \frac{1}{\widetilde{\rho}(X_i)} (\widetilde{d}_{C_o, 1}(S_i, X_i) - \widetilde{\mu}_{C_o, 1}(S_i, X_i, O))\left(W_i - \widetilde{\rho}(S_i, X_i)\right)\Bigg] \\
    & \quad - 
    \mathbb{E}_P\Bigg[\frac{\mathds{1}_{P_i = E}}{\varphi} \frac{1}{\rho(X_i)} (d_{C_o, 1}(S_i, X_i) - \mu_{C_o, 1}(S_i, X_i, O) )\left(W_i - \rho(S_i, X_i)\right)\Bigg] \\
    & = 
    - 
    \mathbb{E}_P\Bigg[\frac{\mathds{1}_{P_i = E}}{\widetilde{\varphi}} \frac{1}{\widetilde{\rho}(X_i)} (\widetilde{d}_{C_o, 1}(S_i, X_i) - \widetilde{\mu}_{C_o, 1}(S_i, X_i, O)) \left(\widetilde{\rho}(S_i, X_i) - \rho(S_i, X_i)\right)\Bigg] \\
    & \quad +
    \mathbb{E}_P\Bigg[\bigg(\frac{\mathds{1}_{P_i = E}}{\widetilde{\varphi}} \frac{1}{\widetilde{\rho}(X_i)} (\widetilde{d}_{C_o, 1}(S_i, X_i) - \widetilde{\mu}_{C_o, 1}(S_i, X_i, O)) - \frac{\mathds{1}_{P_i = E}}{\varphi} \frac{1}{\rho(X_i)} (d_{C_o, 1}(S_i, X_i)  - \mu_{C_o, 1}(S_i, X_i, O))\bigg) \\
    & \quad \quad \quad \underbrace{ \times \left(W_i - \rho(S_i, X_i)\right)\Bigg] \hspace{12cm} }_{=0 \text{ by the definition of $\rho(S_i, X_i)$.}} \\
    & = 
    - 
    \mathbb{E}_P\Bigg[\frac{\mathds{1}_{P_i = E}}{\widetilde{\varphi}} \frac{1}{\widetilde{\rho}(X_i)} \widetilde{d}_{C_o, 1}(S_i, X_i) \left(\widetilde{\rho}(S_i, X_i) - \rho(S_i, X_i)\right)\Bigg] \\
    & \quad +
    \mathbb{E}_P\Bigg[\frac{\mathds{1}_{P_i = E}}{\widetilde{\varphi}} \frac{1}{\widetilde{\rho}(X_i)} \widetilde{\mu}_{C_o, 1}(S_i, X_i, O) \left(\widetilde{\rho}(S_i, X_i) - \rho(S_i, X_i)\right) \Bigg]
\end{align*}  

Similarly, we have
\begin{align*}
    & \mathbb{E}_P[m_7(Z_i, \tau_{C_o}, \widetilde{\eta})-m_7(Z_i, \tau_{C_o}, \eta)] \\
    & = 
    - 
    \mathbb{E}_P\Bigg[\frac{\mathds{1}_{P_i = E}}{\widetilde{\varphi}} \frac{1}{1-\widetilde{\rho}(X_i)} \widetilde{d}_{C_o, 0}(S_i, X_i) \left(\widetilde{\rho}(S_i, X_i) - \rho(S_i, X_i)\right)\Bigg] \\
    & \quad +
    \mathbb{E}_P\Bigg[\frac{\mathds{1}_{P_i = E}}{\widetilde{\varphi}} \frac{1}{1-\widetilde{\rho}(X_i)} \widetilde{\mu}_{C_o, 0}(S_i, X_i, O) \left(\widetilde{\rho}(S_i, X_i) - \rho(S_i, X_i)\right) \Bigg]
\end{align*}

\textbf{Part 2.} By combining the results in Part 1, we have
\begin{align*}
    \mathbb{E}_P[m(W_i, \tau_{C_o}, \widetilde{\eta}) - m(W_i, \tau_{C_o}, \eta)] 
    & = 
    \sum_{j=1}^7 \mathbb{E}_P[m_j(W_i, \tau_{C_o}, \widetilde{\eta}) - m_j(W_i, \tau_{C_o}, \eta)] = \sum_{j=1}^{14} \mathcal{J}_j,  
\end{align*} 
where
\begin{align*}
    \mathcal{J}_1 & = 
    - \mathbb{E}_P\Bigg[\bigg(\frac{\mathds{1}_{P_i = E}}{\widetilde{\varphi}} \frac{W_i}{\widetilde{\rho}(X_i)} - \frac{\mathds{1}_{P_i = E}}{\widetilde{\varphi}} \frac{W_i}{\rho(X_i)} \bigg)(\widetilde{\bar{\mu}}_{C_+, 1}(1, X_i) - \bar{\mu}_{C_+, 1}(1, X_i)) \Bigg] \\
    \mathcal{J}_2 & = 
    \mathbb{E}_P\Bigg[\bigg(\frac{\mathds{1}_{P_i = E}}{\widetilde{\varphi}} \frac{1-W_i}{1-\widetilde{\rho}(X_i)} - \frac{\mathds{1}_{P_i = E}}{\widetilde{\varphi}} \frac{1-W_i}{1-\rho(X_i)} \bigg) (\widetilde{\bar{\mu}}_{C_+, 0}(1, X_i) - \bar{\mu}_{C_+, 0}(1, X_i)) \Bigg]
    \end{align*}

\begin{align*}
 \mathcal{J}_3   & =
\mathbb{E}_P\Bigg[\frac{\mathds{1}_{P_i = O}}{\widetilde{\varphi}} \frac{\widetilde{\varphi}(S_i, X_i)}{1-\widetilde{\varphi}(S_i, X_i)} \frac{1}{\widetilde{\rho}(X_i)}  \\
& \quad \quad \times (\tilde{h}_{C_o, 1}(Y_i, \widetilde{F}_Y^{-1}(\cdot|S_i, X_i, O), \widetilde{\rho}(S_i, X_i)) - \tilde{h}_{C_o, 1}(Y_i, F_Y^{-1}(\cdot|S_i, X_i, O), \widetilde{\rho}(S_i, X_i))\Bigg] 
\\
\mathcal{J}_4 &  =
\mathbb{E}_P\Bigg[\frac{\mathds{1}_{P_i = O}}{\widetilde{\varphi}} \bigg(\frac{\widetilde{\varphi}(S_i, X_i)}{1-\widetilde{\varphi}(S_i, X_i)} - \frac{\varphi(S_i, X_i)}{1-\varphi(S_i, X_i)} \bigg) \frac{1}{\widetilde{\rho}(X_i)}  \\
& \quad \quad \times (\tilde{h}_{C_o, 1}(Y_i, F_Y^{-1}(\cdot|S_i, X_i, O), \widetilde{\rho}(S_i, X_i)) - \tilde{h}_{C_o, 1}(Y_i, F_Y^{-1}(\cdot|S_i, X_i, O), \rho(S_i, X_i)))\Bigg] 
\\
\mathcal{J}_5 & =  - 
\mathbb{E}_P\Bigg[\frac{\mathds{1}_{P_i = O}}{\widetilde{\varphi}} \bigg(\frac{\widetilde{\varphi}(S_i, X_i)}{1-\widetilde{\varphi}(S_i, X_i)} - \frac{\varphi(S_i, X_i)}{1-\varphi(S_i, X_i)} \bigg) \frac{1}{\widetilde{\rho}(X_i)}  (\widetilde{\rho}(S_i, X_i)  - \rho(S_i, X_i))\widetilde{\mu}_{C_o, 1}(S_i, X_i, O)\Bigg]
\\
\mathcal{J}_6 & = - 
\mathbb{E}_P\Bigg[\frac{\mathds{1}_{P_i = O}}{\widetilde{\varphi}} \bigg(\frac{\widetilde{\varphi}(S_i, X_i)}{1-\widetilde{\varphi}(S_i, X_i)} - \frac{\varphi(S_i, X_i)}{1-\varphi(S_i, X_i)} \bigg) \frac{1}{\widetilde{\rho}(X_i)} \rho(S_i, X_i) (\widetilde{\mu}_{C_o, 1}(S_i, X_i, O) -  \mu_{C_o, 1}(S_i, X_i, O))\Bigg] \\
\mathcal{J}_7 & = 
\mathbb{E}_P\Bigg[\frac{\mathds{1}_{P_i = O}}{\widetilde{\varphi}} \frac{\varphi(S_i, X_i)}{1-\varphi(S_i, X_i)} \frac{1}{\widetilde{\rho}(X_i)}  \\
& \quad \quad \times (\tilde{h}_{C_o, 1}(Y_i, F_Y^{-1}(\cdot|S_i, X_i, O), \widetilde{\rho}(S_i, X_i)) - \tilde{h}_{C_o, 1}(Y_i, F_Y^{-1}(\cdot|S_i, X_i, O), \rho(S_i, X_i)))\Bigg] 
\end{align*}

\begin{align*}
\mathcal{J}_8 & = 
- \mathbb{E}_P\Bigg[\frac{\mathds{1}_{P_i = O}}{\widetilde{\varphi}} \frac{\widetilde{\varphi}(S_i, X_i)}{1-\widetilde{\varphi}(S_i, X_i)} \frac{1}{1-\widetilde{\rho}(X_i)}  \\
& \quad \quad \times (\tilde{h}_{C_o, 0}(Y_i, \widetilde{F}_Y^{-1}(\cdot|S_i, X_i, O), 1-\widetilde{\rho}(S_i, X_i)) - \tilde{h}_{C_o, 0}(Y_i, F_Y^{-1}(\cdot|S_i, X_i, O), 1-\widetilde{\rho}(S_i, X_i))\Bigg] 
\\
\mathcal{J}_9 & =  -
\mathbb{E}_P\Bigg[\frac{\mathds{1}_{P_i = O}}{\widetilde{\varphi}} \bigg(\frac{\widetilde{\varphi}(S_i, X_i)}{1-\widetilde{\varphi}(S_i, X_i)} - \frac{\varphi(S_i, X_i)}{1-\varphi(S_i, X_i)} \bigg) \frac{1}{1-\widetilde{\rho}(X_i)}  \\
& \quad \quad \times (\tilde{h}_{C_o, 0}(Y_i, F_Y^{-1}(\cdot|S_i, X_i, O), 1-\widetilde{\rho}(S_i, X_i)) - \tilde{h}_{C_o, 0}(Y_i, F_Y^{-1}(\cdot|S_i, X_i, O), 1-\rho(S_i, X_i)))\Bigg] 
\\
\mathcal{J}_{10} & =  -
\mathbb{E}_P\Bigg[\frac{\mathds{1}_{P_i = O}}{\widetilde{\varphi}} \bigg(\frac{\widetilde{\varphi}(S_i, X_i)}{1-\widetilde{\varphi}(S_i, X_i)} - \frac{\varphi(S_i, X_i)}{1-\varphi(S_i, X_i)} \bigg) \frac{1}{1-\widetilde{\rho}(X_i)}  (\widetilde{\rho}(S_i, X_i)  - \rho(S_i, X_i))\widetilde{\mu}_{C_o, 0}(S_i, X_i, O)\Bigg]
\\
\mathcal{J}_{11} & = 
\mathbb{E}_P\Bigg[\frac{\mathds{1}_{P_i = O}}{\widetilde{\varphi}} \bigg(\frac{\widetilde{\varphi}(S_i, X_i)}{1-\widetilde{\varphi}(S_i, X_i)} - \frac{\varphi(S_i, X_i)}{1-\varphi(S_i, X_i)} \bigg) \frac{1-\rho(S_i, X_i)}{1-\widetilde{\rho}(X_i)} (\widetilde{\mu}_{C_o, 0}(S_i, X_i, O) -  \mu_{C_o, 0}(S_i, X_i, O))\Bigg] \\
\mathcal{J}_{12} & =  -
\mathbb{E}_P\Bigg[\frac{\mathds{1}_{P_i = O}}{\widetilde{\varphi}} \frac{\varphi(S_i, X_i)}{1-\varphi(S_i, X_i)} \frac{1}{\widetilde{\rho}(X_i)}  \\
& \quad \quad \times (\tilde{h}_{C_o, 0}(Y_i, F_Y^{-1}(\cdot|S_i, X_i, O), 1-\widetilde{\rho}(S_i, X_i)) - \tilde{h}_{C_o, 0}(Y_i, F_Y^{-1}(\cdot|S_i, X_i, O), 1-\rho(S_i, X_i)))\Bigg] 
\end{align*}

\begin{align*}
    \mathcal{J}_{13} & = 
    - 
    \mathbb{E}_P\Bigg[\frac{\mathds{1}_{P_i = E}}{\widetilde{\varphi}} \frac{1}{\widetilde{\rho}(X_i)} \widetilde{d}_{C_o, 1}(S_i, X_i) \left(\widetilde{\rho}(S_i, X_i) - \rho(S_i, X_i)\right)\Bigg] \\
    \mathcal{J}_{14} & =  - 
    \mathbb{E}_P\Bigg[\frac{\mathds{1}_{P_i = E}}{\widetilde{\varphi}} \frac{1}{1-\widetilde{\rho}(X_i)} \widetilde{d}_{C_o, 0}(S_i, X_i) \left(\widetilde{\rho}(S_i, X_i) - \rho(S_i, X_i)\right)\Bigg].
\end{align*}
\end{proof}

\subsubsection{Proof of \Cref{lemma:DML-verification-copula}}

\begin{proof}[Proof of \Cref{lemma:DML-verification-copula}]
The proof is almost identical to the proof of \Cref{lemma:DML-verification-wc}. Therefore, we look at \cref{eq:Lq-m,eq:rate-diff-L2,eq:rate-diff-m} and asymptotic variance.

With abuse of the notation, we define 
{\footnotesize
\begin{align*}
        m_1(Z_i, \tau, \widetilde{\eta}) & = \frac{\mathds{1}_{P_i = E}}{\widetilde{\varphi}} \frac{W_i}{\widetilde{\rho}(X_i)} (\widetilde{\mu}_{C_o, 1}(S_i, X_i, O) - \widetilde{\bar{\mu}}_{C_0, 1}(1, X_i)), \\
        m_2(Z_i, \tau, \widetilde{\eta}) & = -\frac{\mathds{1}_{P_i = E}}{\widetilde{\varphi}} \frac{1-W_i}{1-\widetilde{\rho}(X_i)} (\widetilde{\mu}_{C_o, o}(S_i, X_i, O) - \widetilde{\bar{\mu}}_{C_o, 0}(0, X_i)), \\
        m_3(Z_i, \tau, \widetilde{\eta}) & = \frac{\mathds{1}_{P_i = E}}{\widetilde{\varphi}}(\widetilde{\bar{\mu}}_{C_o, 1}(1, X_i) - \widetilde{\bar{\mu}}_{C_o, 0}(0, X_i) - \tau) \\
        m_4(Z_i, \tau, \widetilde{\eta}) 
        & = \frac{\mathds{1}_{P_i = O}}{\widetilde{\varphi}} \frac{\widetilde{\varphi}(S_i, X_i)}{1-\widetilde{\varphi}(S_i, X_i)} \frac{\widetilde{\rho}(S_i, X_i)}{\widetilde{\rho}(X_i)} (h_{C_o, 1}(Y_i, \widetilde{F}_Y^{-1}(\cdot|S_i, X_i, O), 1 - \widetilde{\rho}(S_i, X_i)) -\widetilde{\mu}_{C_o, 1}(S_i, X_i, O)), \\
        & = \frac{\mathds{1}_{P_i = O}}{\widetilde{\varphi}} \frac{\widetilde{\varphi}(S_i, X_i)}{1-\widetilde{\varphi}(S_i, X_i)} \frac{1}{\widetilde{\rho}(X_i)} (\widetilde{h}_{C_o, 1}(Y_i, \widetilde{F}_Y^{-1}(\cdot|S_i, X_i, O), \widetilde{\rho}(S_i, X_i)) -\widetilde{\rho}(S_i, X_i) \widetilde{\mu}_{C_o, 1}(S_i, X_i, O)),
        \\
        m_5(Z_i, \tau, \widetilde{\eta}) & = - \frac{\mathds{1}_{P_i = O}}{\widetilde{\varphi}} \frac{\widetilde{\varphi}(S_i, X_i)}{1-\widetilde{\varphi}(S_i, X_i)}\frac{1-\widetilde{\rho}(S_i, X_i)}{1-\widetilde{\rho}(X_i)} (h_{C_o, 0}(Y_i, \widetilde{F}_Y^{-1}(\cdot|S_i, X_i, O), 1 - \widetilde{\rho}(S_i, X_i)) -\widetilde{\mu}_{C_o, 0}(S_i, X_i, O)), \\
        & = - \frac{\mathds{1}_{P_i = O}}{\widetilde{\varphi}} \frac{\widetilde{\varphi}(S_i, X_i)}{1-\widetilde{\varphi}(S_i, X_i)} \frac{1}{1-\widetilde{\rho}(X_i)}  \\
    & \quad \times (\tilde{h}_{C_o, 0}(Y_i, F_Y^{-1}(\cdot|S_i, X_i, O), 1-\rho(S_i, X_i)) -(1-\widetilde{\rho}(S_i, X_i))\widetilde{\mu}_{C_o, 0}(S_i, X_i, O)), \\
        m_6(Z_i, \tau, \widetilde{\eta}) & = \frac{\mathds{1}_{P_i = E}}{\widetilde{\varphi}} \frac{1}{\widetilde{\rho}(X_i)} (\widetilde{d}_{C_o, 1}(S_i, X_i) - \mu_{C_o, 1}(S_i, X_i, O)) \left(W_i - \widetilde{\rho}(S_i, X_i)\right), \\
        m_7(Z_i, \tau, \eta) & = \frac{\mathds{1}_{P_i = E}}{\widetilde{\varphi}} \frac{1}{1 - \widetilde{\rho}(X_i)} (\widetilde{d}_{C_o, 0}(S_i, X_i) - \mu_{C_o, 0}(S_i, X_i, O))\left(W_i - \widetilde{\rho}(S_i, X_i)\right).
    \end{align*}
}
where
{\footnotesize
\begin{align*}
    & \tilde{h}_{C_o, 1}(Y_i, F_Y^{-1}(\cdot|S_i, X_i, O), \widetilde{\rho}(S_i, X_i)) \\
    & = (1-C_o(1-\widetilde{\rho}(S_i, X_i)|0)) Y_i + \int_0^1 \left((1-u)F_Y^{-1}(u|S_i, X_i, O) + [Y_i - F_Y^{-1}(u|S_i, X_i, O)]_{+}\right)d(1-C_o(1-\widetilde{\rho}(S_i, X_i)|u), \\
    & \tilde{h}_{C_o, 0}(Y_i, F_Y^{-1}(\cdot|S_i, X_i, O), 1-\widetilde{\rho}(S_i, X_i)) \\
    & = C_o(1-\widetilde{\rho}(S_i, X_i)|1) Y_i  - \int_0^1 \left(u F_Y^{-1}(u|S_i, X_i, O) - [Y_i - F_Y^{-1}(u|S_i, X_i, O)]_{-}\right)d C_o(1-\widetilde{\rho}(S_i, X_i)|u)
\end{align*}
}
\textbf{Verification of \cref{eq:Lq-m}}

We focus on $\|m_j(Z_i, \tau_{C_o}, \widetilde{\eta}\|_{P, q} \le T_{\epsilon}$ for $j = 4, \dotsc, 7$ since \cref{lemma:DML-verification-wc} shows that $\|m_j(Z_i, \tau_{C_o}, \widetilde{\eta}\|_{P, q} \le T_{\epsilon}$ for $j = 1, 2, 3$. 

Under \cref{assumption:realization-set-copula,assumption:realization-set-copula} with boundedness of $Y$, we can show that
\begin{align*}
    & \| m_4(Z_i, \tau_{C_o}, \widetilde{\eta} \|_{P, q} \\
    & \le 
    T_{\epsilon} 
    \| \mathds{1}_{P_i = E} | \widetilde{h}_{C_o, 1}(Y_i, \widetilde{F}_Y^{-1}(\cdot|S_i, X_i, O), 1-\widetilde{\rho}(S_i, X_i)) - \widetilde{h}_{C_o, 1}(Y_i, F_Y^{-1}(\cdot|S_i, X_i, O), 1-\widetilde{\rho}(S_i, X_i)) |\|_{P, q} \\
    & \quad + T_{\epsilon} \| \mathds{1}_{P_i = E} \widetilde{h}_{C_o, 1}(Y_i, F_Y^{-1}(\cdot|S_i, X_i, O), 1-\widetilde{\rho}(S_i, X_i))| \|_{P, q} \\
    & \quad + T_{\epsilon} \| \widetilde{\mu}_{C_o, 1}(S_i, X_i, O) - \mu_{C_o, 1}(S_i, X_i, O) \|_{P, q} +  T_{\epsilon} \|  \mu_{C_o, 1}(S_i, X_i, O) \|_{P, q} \\
    & \le 
    T_{\epsilon} 
    \| \int_0^1 | \widetilde{F}^{-1}_Y(u|S_i, X_i, O) - \widetilde{F}^{-1}_Y(u|S_i, X_i, O) | d C_o(1-\widetilde{\rho}(S_i, X_i)|u) \|_{P, q} \\
    & \quad + T_{\epsilon} \| \widetilde{h}_{C_o, 1}(Y_i, F_Y^{-1}(\cdot|S_i, X_i, O), 1-\widetilde{\rho}(S_i, X_i))| \|_{P, q} \\
    & \quad +  T_{\epsilon} \| \widetilde{\mu}_{C_o, 1}(S_i, X_i, O) - \mu_{C_o, 1}(S_i, X_i, O) \|_{P, q} +  T_{\epsilon} \|  \mu_{C_o, 1}(S_i, X_i, O) \|_{P, q} \\
    & \le T_{\epsilon}.
\end{align*}
We can show $\|m_4(Z_i, \tau_{C_o}, \widetilde{\eta} \|_{P, q} \le T_{\epsilon}$ in a similar way.

\begin{align*}
    \| m_6(Z_i, \tau_{C_o}, \widetilde{\eta} \|_{P, q} 
    & \le T_{\epsilon} \| \widetilde{d}_{C_o, 1}(S_i, X_i) - \widetilde{\mu}_{C_o, 1}(S_i, X_i) \|_{P, q} \\
    & \le T_{\epsilon} \| \int_0^1 F_Y^{-1}(u|S_i, X_i, O) c_o(1-\widetilde{\rho}(S_i, X_i|u)du - \mu_{C_o, 1}(S_i, X_i) \|_{P, q} \\
    & \quad + T_{\epsilon} \| \int_0^1 \left| F_Y^{-1}(u|S_i, X_i, O) - F_Y^{-1}(u|S_i, X_i, O) \right| du  \|_{P, q} \\
    & \quad + T_{\epsilon} \|  \widetilde{\mu}_{C_o, 1}(S_i, X_i) - \mu_{C_o, 1}(S_i, X_i) \|_{P, q} \\
    & \le T_{\epsilon}.
\end{align*}    
We can show $\|m_7(Z_i, \tau_{C_o}, \widetilde{\eta} \|_{P, q} \le T_{\epsilon}$ in a similar way.

\textbf{Verification of \cref{eq:rate-diff-L2}}

We consider $\|m_j(Z_i, \tau_{C_o}, \widetilde{\eta}) - m_j(Z_i, \tau_{C_o}, \eta) \|_{P. 2}$ for $j=4,5,6,7$ because $\| \|_{P. 2}$ for $j=1, 2, 3$ are identical.

 Let's define $V_1(S_i, X_i)$ and $V_2(S_i, X_i)$ by
    \begin{gather*}
        \frac{1}{V_1(S_i, X_i)}  =  \frac{1}{{\varphi}} \frac{{\varphi}(S_i, X_i)}{1-{\varphi}(S_i, X_i)} \frac{1}{{\rho}(X_i)}, \quad 
        \frac{1}{\widehat{V}_1(S_i, X_i)}  =  \frac{1}{\hat{\varphi}} \frac{\hat{\varphi}(S_i, X_i)}{1-\hat{\varphi}(S_i, X_i)} \frac{1}{\hat{\rho}(X_i)} \\
        V_2(S_i, X_i) = V_1(S_i, X_i) / \rho(S_i, X_i), \quad 
        \widehat{V}_2(S_i, X_i) = \widehat{V}_1(S_i, X_i) / \hat{\rho}(S_i, X_i).
    \end{gather*}
    Under our assumption, we have $\min{V_1(S_i, X_i), \widehat{V}_1(S_i, X_i)} \ge \epsilon^3 / (1 - \epsilon) := \epsilon_1$ and $\min{V_2(S_i, X_i), \widehat{V}_2(S_i, X_i)} \ge \epsilon^3 / (1 - \epsilon)^2 := \epsilon_2$. Mean-value theorem implies that 
    \begin{align*}
        | \widehat{V}_1(S_i, X_i) - {V}_1(S_i, X_i) | & \le T_{\epsilon} \left(|\hat{\varphi} - \varphi| + |\hat{\varphi}(S_i, X_i) - \varphi(S_i, X_i)| + |\hat{\rho}(X_i) - \rho(X_i)|  \right) \\
         | \widehat{V}_2(S_i, X_i) - {V}_2(S_i, X_i) | & \le T_{\epsilon} \Big(|\hat{\varphi} - \varphi| + |\hat{\varphi}(S_i, X_i) - \varphi(S_i, X_i)| \\
         & \quad \quad + |\hat{\rho}(X_i) - \rho(X_i)| + |\hat{\rho}(S_i, X_i) - \rho(S_i, X_i)| \Big)
    \end{align*}
    where $T_{\epsilon}$ is a positive number which only depends on $\epsilon$.
    
    Then, the bound for $\|m_4(Z_i, \tau_{C_o}, \widetilde{\eta}) - m_4(Z_i, \tau_{C_o}, \eta)\|_{P, 2}$ is given as follows.
    {\footnotesize
    \begin{align*}
        & \|m_4(Z_i, \tau_{C_o}, \widetilde{\eta}) - m_4(Z_i, \tau_{C_o}, \eta)\|_{P, 2} \\
        & = 
        \Biggr\|\frac{\mathds{1}_{P_i = O}}{\widetilde{\varphi}} \frac{\widetilde{\varphi}(S_i, X_i)}{1-\widetilde{\varphi}(S_i, X_i)} \frac{\widetilde{\rho}(S_i, X_i)}{\widetilde{\rho}(X_i)} (h_{C_o, 1}(Y_i, \widetilde{F}_Y^{-1}(\cdot|S_i, X_i, O), 1 - \widetilde{\rho}(S_i, X_i)) -\widetilde{\mu}_{C_o, 1}(S_i, X_i, O)) \\
        & \quad 
        -
        \frac{\mathds{1}_{P_i = O}}{{\varphi}} \frac{{\varphi}(S_i, X_i)}{1-{\varphi}(S_i, X_i)} \frac{{\rho}(S_i, X_i)}{{\rho}(X_i)} (h_{C_o, 1}(Y_i, F_Y^{-1}(\cdot | S_i, X_i, O), 1 - {\rho}(S_i, X_i)) - \mu_{C_o, 1}(S_i, X_i, O))
        \Biggr\|_{P, 2}\\
        & \le
        \Biggr\|\frac{\mathds{1}_{P_i = O}}{\widetilde{V}_1(S_i, X_i)}  \widetilde{h}_{C_o, 1}(Y_i, \widetilde{F}_Y^{-1}(\cdot|S_i, X_i, O), \widetilde{\rho}(S_i, X_i)) - \frac{\mathds{1}_{P_i = O}}{V_1(S_i, X_i)}  h_{C_o, 1}(Y_i, F_Y^{-1}(\cdot|S_i, X_i, O), \rho(S_i, X_i)))  \Biggr\|_{P, 2} \\
        & \quad 
        +
        \Biggr\|
        \frac{\mathds{1}_{P_i = O}}{\widetilde{V}_2(S_i, X_i)} \widetilde{\mu}_{C_o, 1}(S_i, X_i, O)) -\frac{\mathds{1}_{P_i = O}}{V_2(S_i, X_i)} \mu_{C_o, 1}(S_i, X_i, O))
        \Biggr\|_{P, 2}\\
        & \le \epsilon_2^{-2}
        \Biggr(\mathbb{E}\Bigg[\mathds{1}_{P_i = O}\bigg|V_1(S_i, X_i) \widetilde{h}_{C_o, 1}(Y_i, \widetilde{F}_Y^{-1}(\cdot|S_i, X_i, O), \widetilde{\rho}(S_i, X_i) - 
        \widetilde{V}_1(S_i, X_i) \widetilde{h}_{C_o, 1}(Y_i, F_Y^{-1}(\cdot | S_i, X_i, O), \rho(S_i, X_i))
        \bigg|^2\Bigg]\Biggr)^{1/2} \\
        & \quad +
        \epsilon_2^{-2}
        \Biggr(\mathbb{E}\Bigg[\mathds{1}_{P_i = O}\bigg|V_2(S_i, X_i) \widetilde{\mu}_{C_o, 1}(S_i, X_i, O))  -
        \widetilde{V}_2(S_i, X_i) \mu_{C_o, 1}(S_i, X_i, O))
        \bigg|^2\Bigg]\Biggr)^{1/2} \\
        & \le
        \epsilon_2^{-2}
        \Biggr(\mathbb{E}\Bigg[\mathds{1}_{P_i = O} \bigg|V_1(S_i, X_i) \bigg(\widetilde{h}_{C_o, 1}(Y_i, \widetilde{F}_Y^{-1}(\cdot|S_i, X_i, O), \widetilde{\rho}(S_i, X_i)) 
        - \widetilde{h}_{C_o, 1}(Y_i, {F}_Y^{-1}(\cdot | S_i, X_i, O),{\rho}(S_i, X_i))\bigg) 
        \bigg|^2\Bigg]\Biggr)^{1/2} \\
         & \quad +
        \epsilon_2^{-2}
        \Biggr(\mathbb{E}\Bigg[\mathds{1}_{P_i = O} \bigg|(V_1(S_i, X_i)
        -
        \widehat{V}_1(S_i, X_i)) \widetilde{h}_{C_o, 1}(Y_i, F_{Y}^{-1}(\cdot | S_i, X_i, O), \rho(S_i, X_i))
        \bigg|^2\Bigg]\Biggr)^{1/2} \\
        & \quad +
        \epsilon_2^{-2}
        \Biggr(\mathbb{E}\Bigg[\bigg|V_2(S_i, X_i) \bigg(\widetilde{\mu}_{C_o, 1}(S_i, X_i, O)  - \mu_{C_o, 1}(S_i, X_i, O\bigg) 
        \bigg|^2\Bigg]\Biggr)^{1/2} \\
         & \quad +
        \epsilon_2^{-2}
        \Biggr(\mathbb{E}\Bigg[ \bigg|(V_2(S_i, X_i) -
        \widehat{V}_2(S_i, X_i)) \mu_{C_o, 1}(S_i, X_i, O))
        \bigg|^2\Bigg]\Biggr)^{1/2} \\
        & \le
        \epsilon_2^{-2}
        \Biggr(\mathbb{E}\Bigg[\mathds{1}_{P_i = O} \bigg|V_1(S_i, X_i) \bigg(\widetilde{h}_{C_o, 1}(Y_i, \widetilde{F}_Y^{-1}(\cdot|S_i, X_i, O), \widetilde{\rho}(S_i, X_i)) 
        - \widetilde{h}_{C_o, 1}(Y_i, F_Y^{-1}(\cdot | S_i, X_i, O), \widetilde{\rho}(S_i, X_i))\bigg) 
        \bigg|^2\Bigg]\Biggr)^{1/2} \\
        & \quad + 
        \epsilon_2^{-2}
        \Biggr(\mathbb{E}\Bigg[\mathds{1}_{P_i = O} \bigg|V_1(S_i, X_i) \bigg(\widetilde{h}_{C_o, 1}(Y_i, F_Y^{-1}(\cdot|S_i, X_i, O), \widetilde{\rho}(S_i, X_i)) 
        - \widetilde{h}_{C_o, 1}(Y_i, F_Y^{-1}(\cdot | S_i, X_i, O),\rho(S_i, X_i))\bigg) 
        \bigg|^2\Bigg]\Biggr)^{1/2} \\
         & \quad +
        \epsilon_2^{-2}
        \Biggr(\mathbb{E}\Bigg[\mathds{1}_{P_i = O} \bigg|(V_1(S_i, X_i)
        -
        \widehat{V}_1(S_i, X_i)) \widetilde{h}_{C_o, 1}(Y_i, F_Y^{-1}(\cdot | S_i, X_i, O), \rho(S_i, X_i))
        \bigg|^2\Bigg]\Biggr)^{1/2} \\
        & \quad +
        \epsilon_2^{-2}
        \Biggr(\mathbb{E}\Bigg[\bigg|V_2(S_i, X_i) \bigg(\widetilde{\mu}_{C_o, 1}(S_i, X_i, O)  - \mu_{C_o, 1}(S_i, X_i, O\bigg) 
        \bigg|^2\Bigg]\Biggr)^{1/2} \\
         & \quad +
        \epsilon_2^{-2}
        \Biggr(\mathbb{E}\Bigg[ \bigg|(V_2(S_i, X_i) -
        \widehat{V}_2(S_i, X_i)) \mu_{C_o, 1}(S_i, X_i, O))
        \bigg|^2\Bigg]\Biggr)^{1/2} \\
        & \le
        \epsilon_2^{-2}
        \Biggr(\mathbb{E}\Bigg[\mathds{1}_{P_i = O} \bigg|V_1(S_i, X_i) \bigg(\int_0^1 |\widetilde{F}_Y^{-1}(u|S_i, X_i, O)- F_Y^{-1}(u|S_i, X_i, O)| d (1 - C_o(1-\widetilde{\rho}(S_i, X_i)) )\bigg) 
        \bigg|^2\Bigg]\Biggr)^{1/2} \\
        & \quad + 
        \epsilon_2^{-2} C_{\epsilon}
        \Biggr(\mathbb{E}\Bigg[\mathds{1}_{P_i = O} \bigg|V_1(S_i, X_i) \bigg(|\widetilde{\rho}(S_i, X_i) - \rho(S_i, X_i)\bigg) 
        \bigg|^2\Bigg]\Biggr)^{1/2} \\
         & \quad +
        \epsilon_2^{-2}
        \Biggr(\mathbb{E}\Bigg[\mathds{1}_{P_i = O} \bigg|(V_1(S_i, X_i)
        -
        \widehat{V}_1(S_i, X_i)) \widetilde{h}_{C_o, 1}(Y_i, F_Y^{-1}(\cdot | S_i, X_i, O), \rho(S_i, X_i))
        \bigg|^2\Bigg]\Biggr)^{1/2} \\
        & \quad +
        \epsilon_2^{-2}
        \Biggr(\mathbb{E}\Bigg[\bigg|V_2(S_i, X_i) \bigg(\widetilde{\mu}_{C_o, 1}(S_i, X_i, O)  - \mu_{C_o, 1}(S_i, X_i, O\bigg) 
        \bigg|^2\Bigg]\Biggr)^{1/2} \\
         & \quad +
        \epsilon_2^{-2}
        \Biggr(\mathbb{E}\Bigg[ \bigg|(V_2(S_i, X_i) -
        \widehat{V}_2(S_i, X_i)) \mu_{C_o, 1}(S_i, X_i, O))
        \bigg|^2\Bigg]\Biggr)^{1/2} \\
        & \le T_4 \delta_n
    \end{align*}
    }
    where $T_4$ depends on $\epsilon$ and $T$ only. Note that in the last inequality, we can show that
    \begin{align*}
        |\widetilde{h}_{C_o, 1}(Y_i, F_Y^{-1}(\cdot|S_i, X_i, O), \widetilde{\rho}(S_i, X_i))
        - \widetilde{h}_{C_o, 1}(Y_i, F_Y^{-1}(\cdot | S_i, X_i, O),\rho(S_i, X_i))| \le T_{\epsilon} |\widetilde{\rho}(S_i, X_i) - \rho(S_i, X_i)|
    \end{align*}
    under \cref{assumption:regularity-copula}. We can similarly show that $\left(\mathbb{E}[|m_5(1) - m_5(0)|^2]\right)^{1/2} < T_5 \delta_n$ where $T_5 > 0$ depends only on $T$ and $\epsilon$. 

    The bound for $\| m_6(Z_i, \tau_{C_o}, \widetilde{\eta}) - m_6(Z_i, \tau_{C_o}, \eta) \|_{P, 2}$ is given as follows.
    \begin{align*}
        & \| m_6(Z_i, \tau_{C_o}, \widetilde{\eta}) - m_6(Z_i, \tau_{C_o}, \eta) \|_{P, 2} \\
        & =
        \Biggr(\mathbb{E}\Bigg[\bigg|\frac{\mathds{1}_{P_i = E}}{\widetilde{\varphi}} \frac{\widetilde{\rho}(S_i, X_i)}{\widetilde{\rho}(X_i)} (\widetilde{d}_{C_o}(S_i, X_i, O) - \widetilde{\mu}_{C_o, 1}(S_i, X_i, O)) \left(W_i - \widetilde{\rho}(S_i, X_i)\right) \\
        & \quad - \frac{\mathds{1}_{P_i = E}}{\varphi} \frac{\rho(S_i, X_i)}{\rho(X_i)} (d_{C_o, 1}(S_i, X_i, O) - \mu_{C_o, 1}(S_i, X_i, O)\left(W_i - \rho(S_i, X_i)\right)\bigg|^2\Bigg]\Biggr)^{1/2} \\
        & \le 
        \Biggr(\mathbb{E}\Bigg[\bigg|\frac{\mathds{1}_{P_i = E}}{\widetilde{\varphi}} \frac{\widetilde{\rho}(S_i, X_i)}{\widetilde{\rho}(X_i)} \widehat{d}_{C_o, 1}(S_i, X_i, O)  - \frac{\mathds{1}_{P_i = E}}{\varphi} \frac{\rho(S_i, X_i)}{\rho(X_i)} d_{C_o, 1}(S_i, X_i, O) \bigg|^2\Bigg]\Biggr)^{1/2} \\
        & \quad + 
        \Biggr(\mathbb{E}\Bigg[\bigg|\frac{\mathds{1}_{P_i = E}}{\widetilde{\varphi}} \frac{\widetilde{\rho}^2(S_i, X_i)}{\widetilde{\rho}(X_i)} \widehat{d}_{C_o, 1}(S_i, X_i, O)  - \frac{\mathds{1}_{P_i = E}}{\varphi} \frac{\rho^2(S_i, X_i)}{\rho(X_i)} d_{C_o, 1}(S_i, X_i, O) \bigg|^2\Bigg]\Biggr)^{1/2} \\
        & +
        \Biggr(\mathbb{E}\Bigg[\bigg|\frac{\mathds{1}_{P_i = E}}{\widetilde{\varphi}} \frac{\widetilde{\rho}(S_i, X_i)}{\widetilde{\rho}(X_i)} \widetilde{\mu}_{C_o, 1}(S_i, X_i, O)  - \frac{\mathds{1}_{P_i = E}}{\varphi} \frac{\rho(S_i, X_i)}{\rho(X_i)} \mu_{C_o, 1}(S_i, X_i, O) \bigg|^2\Bigg]\Biggr)^{1/2} \\
        & \quad + 
        \Biggr(\mathbb{E}\Bigg[\bigg|\frac{\mathds{1}_{P_i = E}}{\widetilde{\varphi}} \frac{\widetilde{\rho}^2(S_i, X_i)}{\widetilde{\rho}(X_i)} \widetilde{\mu}_{C_o, 1}(S_i, X_i, O)  - \frac{\mathds{1}_{P_i = E}}{\varphi} \frac{\rho^2(S_i, X_i)}{\rho(X_i)} \mu_{C_o, 1}(S_i, X_i, O) \bigg|^2\Bigg]\Biggr)^{1/2}
        \\
        & \le
         ((1-\epsilon)/\epsilon^2)^{2}
        \Biggr(\mathbb{E}\Bigg[\bigg|\ \frac{\varphi \rho(X_i)}{\rho(S_i, X_i)} \widetilde{d}_{C_o, 1}(S_i, X_i, O)   -  
        \frac{\widetilde{\varphi}\widetilde{\rho}(X_i)}{\widetilde{\rho}(S_i, X_i)} d_{C_o, 1}(S_i, X_i, O)\bigg|^2\Bigg]\Biggr)^{1/2} \\
        & \quad + ((1-\epsilon)^2/\epsilon^2)^{2}
        \Biggr(\mathbb{E}\Bigg[\bigg|\ \frac{\varphi \rho(X_i)}{\rho^2(S_i, X_i)} \widetilde{d}_{C_o, 1}(S_i, X_i, O)   -  
        \frac{\widetilde{\varphi}\widetilde{\rho}(X_i)}{\widetilde{\rho}^2(S_i, X_i)} d_{C_o, 1}(S_i, X_i, O)\bigg|^2\Bigg]\Biggr)^{1/2} \\
         & +
         ((1-\epsilon)/\epsilon^2)^{2}
        \Biggr(\mathbb{E}\Bigg[\bigg|\ \frac{\varphi \rho(X_i)}{\rho(S_i, X_i)} \widetilde{\mu}_{C_o, 1}(S_i, X_i, O)   -  
        \frac{\widetilde{\varphi}\widetilde{\rho}(X_i)}{\widetilde{\rho}(S_i, X_i)} \mu_{C_o, 1}(S_i, X_i, O)\bigg|^2\Bigg]\Biggr)^{1/2} \\
        & \quad + ((1-\epsilon)^2/\epsilon^2)^{2}
        \Biggr(\mathbb{E}\Bigg[\bigg|\ \frac{\varphi \rho(X_i)}{\rho^2(S_i, X_i)} \widetilde{\mu}_{C_o, 1}(S_i, X_i, O)   -  
        \frac{\widetilde{\varphi}\widetilde{\rho}(X_i)}{\widetilde{\rho}^2(S_i, X_i)} \mu_{C_o, 1}(S_i, X_i, O)\bigg|^2\Bigg]\Biggr)^{1/2} \\
        & \le 
         ((1-\epsilon)/\epsilon^2)^{2}
        \Biggr(\mathbb{E}\Bigg[\bigg| \frac{\varphi \rho(X_i)}{\rho(S_i, X_i)} \left[\widetilde{d}_{C_o,1}(S_i, X_i, O) -  d_{C_o,1}(S_i, X_i, O) \right] \bigg|^2\Bigg]\Biggr)^{1/2} \\
        & \quad + ((1-\epsilon)/\epsilon^2)^{2}
        \Biggr(\mathbb{E}\Bigg[\bigg| \left(\frac{\varphi \rho(X_i)}{\rho(S_i, X_i)} -  
        \frac{\widetilde{\varphi}\widetilde{\rho}(X_i)}{\widetilde{\rho}(S_i, X_i)} \right) d_{C_o, 1}(S_i, X_i, O) \bigg|^2\Bigg]\Biggr)^{1/2} \\
        & \quad + 
        ((1-\epsilon)^2/\epsilon^2)^{2}
        \Biggr(\mathbb{E}\Bigg[\bigg| \frac{\varphi \rho(X_i)}{\rho^2(S_i, X_i)} \left[\widetilde{d}_{C_o,1}(S_i, X_i, O) -  d_{C_o,1}(S_i, X_i, O) \right] \bigg|^2\Bigg]\Biggr)^{1/2} \\
        & \quad + ((1-\epsilon)/\epsilon^2)^{2}
        \Biggr(\mathbb{E}\Bigg[\bigg| \left(\frac{\varphi \rho(X_i)}{\rho^2(S_i, X_i)} -  
        \frac{\widetilde{\varphi}\widetilde{\rho}(X_i)}{\widetilde{\rho}^2(S_i, X_i)} \right) d_{C_o, 1}(S_i, X_i, O) \bigg|^2\Bigg]\Biggr)^{1/2} \\
        & +
         ((1-\epsilon)/\epsilon^2)^{2}
        \Biggr(\mathbb{E}\Bigg[\bigg| \frac{\varphi \rho(X_i)}{\rho(S_i, X_i)} \left[\widetilde{\mu}_{C_o,1}(S_i, X_i, O) -  \mu_{C_o,1}(S_i, X_i, O) \right] \bigg|^2\Bigg]\Biggr)^{1/2} \\
        & \quad + ((1-\epsilon)/\epsilon^2)^{2}
        \Biggr(\mathbb{E}\Bigg[\bigg| \left(\frac{\varphi \rho(X_i)}{\rho(S_i, X_i)} -  
        \frac{\widetilde{\varphi}\widetilde{\rho}(X_i)}{\widetilde{\rho}(S_i, X_i)} \right) \mu_{C_o, 1}(S_i, X_i, O) \bigg|^2\Bigg]\Biggr)^{1/2} \\
        & \quad + 
        ((1-\epsilon)^2/\epsilon^2)^{2}
        \Biggr(\mathbb{E}\Bigg[\bigg| \frac{\varphi \rho(X_i)}{\rho^2(S_i, X_i)} \left[\widetilde{\mu}_{C_o,1}(S_i, X_i, O) -  \mu_{C_o,1}(S_i, X_i, O) \right] \bigg|^2\Bigg]\Biggr)^{1/2} \\
        & \quad + ((1-\epsilon)/\epsilon^2)^{2}
        \Biggr(\mathbb{E}\Bigg[\bigg| \left(\frac{\varphi \rho(X_i)}{\rho^2(S_i, X_i)} -  
        \frac{\widetilde{\varphi}\widetilde{\rho}(X_i)}{\widetilde{\rho}^2(S_i, X_i)} \right) \mu_{C_o, 1}(S_i, X_i, O) \bigg|^2\Bigg]\Biggr)^{1/2} \\
        & \le T_6 \delta_n
    \end{align*}
    where $T_6$ depends on $\epsilon$ and $T$ only. We can similarly show that $\| m_7(Z_i, \tau_{C_o}, \widetilde{\eta}) - m_7(Z_i, \tau_{C_o}, \eta) \|_{P, 2} < T_7 \delta_n$ where $T_7 > 0$ depends only on $T$ and $\epsilon$.

\textbf{Part 3: Verification of \cref{eq:rate-diff-m}}

\Cref{lemma:tech-remainer-copula} in the supplementary appendix implies that we have
\begin{align*}
    \mathbb{E}_P[m(W_i, \tau_{C_o}, \widetilde{\eta}) - m(W_i, \tau_{C_o}, \eta)] 
    & = 
    \sum_{j=1}^7 \mathbb{E}_P[m_j(W_i, \tau_{C_o}, \widetilde{\eta}) - m_j(W_i, \tau_{C_o}, \eta)] = \sum_{j=1}^{14} \mathcal{J}_j,  
\end{align*} 
where
{\footnotesize
\begin{align*}
    \mathcal{J}_1 & = 
    - \mathbb{E}_P\Bigg[\bigg(\frac{\mathds{1}_{P_i = E}}{\widetilde{\varphi}} \frac{W_i}{\widetilde{\rho}(X_i)} - \frac{\mathds{1}_{P_i = E}}{\widetilde{\varphi}} \frac{W_i}{\rho(X_i)} \bigg)(\widetilde{\bar{\mu}}_{C_+, 1}(1, X_i) - \bar{\mu}_{C_+, 1}(1, X_i)) \Bigg] \\
    \mathcal{J}_2 & = 
    \mathbb{E}_P\Bigg[\bigg(\frac{\mathds{1}_{P_i = E}}{\widetilde{\varphi}} \frac{1-W_i}{1-\widetilde{\rho}(X_i)} - \frac{\mathds{1}_{P_i = E}}{\widetilde{\varphi}} \frac{1-W_i}{1-\rho(X_i)} \bigg) (\widetilde{\bar{\mu}}_{C_+, 0}(1, X_i) - \bar{\mu}_{C_+, 0}(1, X_i)) \Bigg]
    \end{align*}

\begin{align*}
 \mathcal{J}_3   & =
\mathbb{E}_P\Bigg[\frac{\mathds{1}_{P_i = O}}{\widetilde{\varphi}} \frac{\widetilde{\varphi}(S_i, X_i)}{1-\widetilde{\varphi}(S_i, X_i)} \frac{1}{\widetilde{\rho}(X_i)}  \\
& \quad \quad \times (\tilde{h}_{C_o, 1}(Y_i, \widetilde{F}_Y^{-1}(\cdot|S_i, X_i, O), \widetilde{\rho}(S_i, X_i)) - \tilde{h}_{C_o, 1}(Y_i, F_Y^{-1}(\cdot|S_i, X_i, O), \widetilde{\rho}(S_i, X_i))\Bigg] 
\\
\mathcal{J}_4 &  =
\mathbb{E}_P\Bigg[\frac{\mathds{1}_{P_i = O}}{\widetilde{\varphi}} \bigg(\frac{\widetilde{\varphi}(S_i, X_i)}{1-\widetilde{\varphi}(S_i, X_i)} - \frac{\varphi(S_i, X_i)}{1-\varphi(S_i, X_i)} \bigg) \frac{1}{\widetilde{\rho}(X_i)}  \\
& \quad \quad \times (\tilde{h}_{C_o, 1}(Y_i, F_Y^{-1}(\cdot|S_i, X_i, O), \widetilde{\rho}(S_i, X_i)) - \tilde{h}_{C_o, 1}(Y_i, F_Y^{-1}(\cdot|S_i, X_i, O), \rho(S_i, X_i)))\Bigg] 
\\
\mathcal{J}_5 & =  - 
\mathbb{E}_P\Bigg[\frac{\mathds{1}_{P_i = O}}{\widetilde{\varphi}} \bigg(\frac{\widetilde{\varphi}(S_i, X_i)}{1-\widetilde{\varphi}(S_i, X_i)} - \frac{\varphi(S_i, X_i)}{1-\varphi(S_i, X_i)} \bigg) \frac{1}{\widetilde{\rho}(X_i)}  (\widetilde{\rho}(S_i, X_i)  - \rho(S_i, X_i))\widetilde{\mu}_{C_o, 1}(S_i, X_i, O)\Bigg]
\\
\mathcal{J}_6 & = - 
\mathbb{E}_P\Bigg[\frac{\mathds{1}_{P_i = O}}{\widetilde{\varphi}} \bigg(\frac{\widetilde{\varphi}(S_i, X_i)}{1-\widetilde{\varphi}(S_i, X_i)} - \frac{\varphi(S_i, X_i)}{1-\varphi(S_i, X_i)} \bigg) \frac{1}{\widetilde{\rho}(X_i)} \rho(S_i, X_i) (\widetilde{\mu}_{C_o, 1}(S_i, X_i, O) -  \mu_{C_o, 1}(S_i, X_i, O))\Bigg] \\
\mathcal{J}_7 & = 
\mathbb{E}_P\Bigg[\frac{\mathds{1}_{P_i = O}}{\widetilde{\varphi}} \frac{\varphi(S_i, X_i)}{1-\varphi(S_i, X_i)} \frac{1}{\widetilde{\rho}(X_i)} (\tilde{h}_{C_o, 1}(Y_i, F_Y^{-1}(\cdot|S_i, X_i, O), \widetilde{\rho}(S_i, X_i)) - \tilde{h}_{C_o, 1}(Y_i, F_Y^{-1}(\cdot|S_i, X_i, O), \rho(S_i, X_i)))\Bigg] 
\end{align*}

{\scriptsize
\begin{align*}
\mathcal{J}_8 & = 
- \mathbb{E}_P\Bigg[\frac{\mathds{1}_{P_i = O}}{\widetilde{\varphi}} \frac{\widetilde{\varphi}(S_i, X_i)}{1-\widetilde{\varphi}(S_i, X_i)} \frac{1}{1-\widetilde{\rho}(X_i)}  \\
& \quad \quad \times (\tilde{h}_{C_o, 0}(Y_i, \widetilde{F}_Y^{-1}(\cdot|S_i, X_i, O), 1-\widetilde{\rho}(S_i, X_i)) - \tilde{h}_{C_o, 0}(Y_i, F_Y^{-1}(\cdot|S_i, X_i, O), 1-\widetilde{\rho}(S_i, X_i))\Bigg] 
\\
\mathcal{J}_9 & =  -
\mathbb{E}_P\Bigg[\frac{\mathds{1}_{P_i = O}}{\widetilde{\varphi}} \bigg(\frac{\widetilde{\varphi}(S_i, X_i)}{1-\widetilde{\varphi}(S_i, X_i)} - \frac{\varphi(S_i, X_i)}{1-\varphi(S_i, X_i)} \bigg) \frac{1}{1-\widetilde{\rho}(X_i)}  \\
& \quad \quad \times (\tilde{h}_{C_o, 0}(Y_i, F_Y^{-1}(\cdot|S_i, X_i, O), 1-\widetilde{\rho}(S_i, X_i)) - \tilde{h}_{C_o, 0}(Y_i, F_Y^{-1}(\cdot|S_i, X_i, O), 1-\rho(S_i, X_i)))\Bigg] 
\\
\mathcal{J}_{10} & =  -
\mathbb{E}_P\Bigg[\frac{\mathds{1}_{P_i = O}}{\widetilde{\varphi}} \bigg(\frac{\widetilde{\varphi}(S_i, X_i)}{1-\widetilde{\varphi}(S_i, X_i)} - \frac{\varphi(S_i, X_i)}{1-\varphi(S_i, X_i)} \bigg) \frac{1}{1-\widetilde{\rho}(X_i)}  (\widetilde{\rho}(S_i, X_i)  - \rho(S_i, X_i))\widetilde{\mu}_{C_o, 0}(S_i, X_i, O)\Bigg]
\\
\mathcal{J}_{11} & = 
\mathbb{E}_P\Bigg[\frac{\mathds{1}_{P_i = O}}{\widetilde{\varphi}} \bigg(\frac{\widetilde{\varphi}(S_i, X_i)}{1-\widetilde{\varphi}(S_i, X_i)} - \frac{\varphi(S_i, X_i)}{1-\varphi(S_i, X_i)} \bigg) \frac{1-\rho(S_i, X_i)}{1-\widetilde{\rho}(X_i)} (\widetilde{\mu}_{C_o, 0}(S_i, X_i, O) -  \mu_{C_o, 0}(S_i, X_i, O))\Bigg] \\
\mathcal{J}_{12} & =  -
\mathbb{E}_P\Bigg[\frac{\mathds{1}_{P_i = O}}{\widetilde{\varphi}} \frac{\varphi(S_i, X_i)}{1-\varphi(S_i, X_i)} \frac{1}{\widetilde{\rho}(X_i)} (\tilde{h}_{C_o, 0}(Y_i, F_Y^{-1}(\cdot|S_i, X_i, O), 1-\widetilde{\rho}(S_i, X_i)) - \tilde{h}_{C_o, 0}(Y_i, F_Y^{-1}(\cdot|S_i, X_i, O), 1-\rho(S_i, X_i)))\Bigg] 
\end{align*}
}

\begin{align*}
    \mathcal{J}_{13} & = 
    - 
    \mathbb{E}_P\Bigg[\frac{\mathds{1}_{P_i = E}}{\widetilde{\varphi}} \frac{1}{\widetilde{\rho}(X_i)} \widetilde{d}_{C_o, 1}(S_i, X_i) \left(\widetilde{\rho}(S_i, X_i) - \rho(S_i, X_i)\right)\Bigg] \\
    \mathcal{J}_{14} & =  - 
    \mathbb{E}_P\Bigg[\frac{\mathds{1}_{P_i = E}}{\widetilde{\varphi}} \frac{1}{1-\widetilde{\rho}(X_i)} \widetilde{d}_{C_o, 0}(S_i, X_i) \left(\widetilde{\rho}(S_i, X_i) - \rho(S_i, X_i)\right)\Bigg].
\end{align*}
}

Because $Y$ is bounded and $\mathcal{P}(\epsilon \le \varphi(S_i, X_i)\le 1-\epsilon) = 1$ and $\mathcal{P}(\epsilon \le \rho(S_i, X_i)\le 1-\epsilon) = 1$, we have the followings bounds for $\mathcal{J}_j$ for $j=1, \dotsc, 14$.

(1) Bounds for $\mathcal{J}_1$,  $\mathcal{J}_2$, $\mathcal{J}_6$, $\mathcal{J}_{11}$.

\begin{align*}
    |\mathcal{J}_1| & \le M \| \widetilde{\rho}(X_i) - \rho(X_i) \|_{P, 2} \times \| \widetilde{\bar{\mu}}_{C_+, 1}(1, X_i) - \bar{\mu}_{C_+, 1}(1, X_i) \|_{P, 2} \le M n^{-1/2} \delta_n , \\
    |\mathcal{J}_2| & \le M \| \widetilde{\rho}(X_i) - \rho(X_i) \|_{P, 2} \times \| \widetilde{\bar{\mu}}_{C_+, 0}(0, X_i) - \bar{\mu}_{C_+, 0}(0, X_i) \|_{P, 2} \le M n^{-1/2} \delta_n , \\
    |\mathcal{J}_6| & \le M \| \widetilde{\varphi}(S_i, X_i) - \varphi(S_i, X_i) \|_{P, 2} \times \| \widetilde{\mu}_{C_+, 1}(S_i, X_i, O) - \mu_{C_+, 1}(S_i, X_i, O) \|_{P, 2} \le M n^{-1/2} \delta_n , \\
    |\mathcal{J}_{11}| & \le M \| \widetilde{\varphi}(S_i, X_i) - \varphi(S_i, X_i) \|_{P, 2} \times \| \widetilde{\mu}_{C_+, 0}(S_i, X_i, O) - \mu_{C_+, 0}(S_i, X_i, O) \|_{P, 2} \le M n^{-1/2} \delta_n ,
\end{align*}
for some absolute constant $M$ that only depends on $\epsilon$ and $C$.

(2) Bounds for $\mathcal{J}_5$,  $\mathcal{J}_{10}$.

Note that
{\footnotesize
\begin{align*}
    \mathcal{J}_5 & = -  
\mathbb{E}_P\Bigg[\frac{\mathds{1}_{P_i = O}}{\widetilde{\varphi}} \bigg(\frac{\widetilde{\varphi}(S_i, X_i)}{1-\widetilde{\varphi}(S_i, X_i)} - \frac{\varphi(S_i, X_i)}{1-\varphi(S_i, X_i)} \bigg) \frac{1}{\widetilde{\rho}(X_i)}  (\widetilde{\rho}(S_i, X_i)  - \rho(S_i, X_i)) \mu_{C_o, 1}(S_i, X_i, O)\Bigg] \\
& \quad  - 
\mathbb{E}_P\Bigg[\frac{\mathds{1}_{P_i = O}}{\widetilde{\varphi}} \bigg(\frac{\widetilde{\varphi}(S_i, X_i)}{1-\widetilde{\varphi}(S_i, X_i)} - \frac{\varphi(S_i, X_i)}{1-\varphi(S_i, X_i)} \bigg) \frac{1}{\widetilde{\rho}(X_i)}  (\widetilde{\rho}(S_i, X_i)  - \rho(S_i, X_i))(\widetilde{\mu}_{C_o, 1}(S_i, X_i, O) - \mu_{C_o, 1}(S_i, X_i, O))\Bigg].
\end{align*}
}
Therefore, we have
\begin{align*}
    |\mathcal{J}_5| 
    & \le T_{\epsilon} \| \widetilde{\varphi}(S_i, X_i) - \varphi(S_i, X_i) \|_{P, 2} \times \| \widetilde{\mu}_{C_+, 1}(S_i, X_i, O) - \mu_{C_+, 1}(S_i, X_i, O) \|_{P, 2}  \\
    & \quad +  T_{\epsilon} \| \widetilde{\varphi}(S_i, X_i) - \varphi(S_i, X_i) \|_{P, 2} \times \| \widetilde{\rho}(S_i, X_i) - \rho(S_i, X_i) \|_{P, 2} \\
    & \le T_{\epsilon} n^{-1/2} \delta_n,
\end{align*}
where $T_{\epsilon}$ is a generic absolute constant which only depends on $T$ and $\epsilon$. 

Similarly, we can show that 
\begin{align*}
    |\mathcal{J}_{10}| 
    & \le T_{\epsilon} \| \widetilde{\varphi}(S_i, X_i) - \varphi(S_i, X_i) \|_{P, 2} \times \| \widetilde{\mu}_{C_+, 0}(S_i, X_i, O) - \mu_{C_+, 0}(S_i, X_i, O) \|_{P, 2}  \\
    & \quad +  T_{\epsilon} \| \widetilde{\varphi}(S_i, X_i) - \varphi(S_i, X_i) \|_{P, 2} \times \| \widetilde{\rho}(S_i, X_i) - \rho(S_i, X_i) \|_{P, 2} \\
    & \le T_{\epsilon} n^{-1/2} \delta_n .
\end{align*}

(3) Bounds for $\mathcal{J}_3$,  $\mathcal{J}_{8}$.

Similar to the proof in Part 3-3-2-(3) \cref{lemma:DML-verification-wc}, we can show that
\begin{align*}
    |\mathcal{J}_3| 
    & \le T_{\epsilon} \mathbb{E}\Bigg[\int_0^1 \left(\widetilde{F}_Y^{-1}(u|S_i, X_i, O) - F_Y^{-1}(u|S_i, X_i, O)\right)^2 d(1-C_o(1-\widetilde{\rho}(S_i, X_i)|u))  \Bigg] \\
    & \le T_{\epsilon} \mathbb{E}\Bigg[\int_0^1 \left(\widetilde{F}_Y^{-1}(u|S_i, X_i, O) - F_Y^{-1}(u|S_i, X_i, O)\right)^2 du  \Bigg] \\
    \le T_{\epsilon} n^{-1/2} \delta_n, 
\end{align*}
where the last inequality works since $\sup_{\alpha \in [\epsilon, 1-\epsilon], u \in [0, 1]}\left|\frac{\partial C_o(\alpha|u)}{\partial u}\right|$ are bounded by \cref{assumption:regularity-copula} (3).

Similarly, we can show that
\begin{align*}
    |\mathcal{J}_8| 
    & \le T_{\epsilon} \mathbb{E}\Bigg[\int_0^1 \left(\widetilde{F}_Y^{-1}(u|S_i, X_i, O) - F_Y^{-1}(u|S_i, X_i, O)\right)^2 d(1-C_o(1-\widetilde{\rho}(S_i, X_i)|u))  \Bigg] \\
    & \le T_{\epsilon} \mathbb{E}\Bigg[\int_0^1 \left(\widetilde{F}_Y^{-1}(u|S_i, X_i, O) - F_Y^{-1}(u|S_i, X_i, O)\right)^2 du  \Bigg] \\
    & \le T_{\epsilon} n^{-1/2} \delta_n 
\end{align*}

(4) Bounds for $\mathcal{J}_4$,  $\mathcal{J}_{9}$.

Note that we can write $\mathcal{J}_4$ as follows.
{\footnotesize
\begin{align*}
    \mathcal{J}_4 
    &  =
\mathbb{E}_P\Bigg[\frac{\mathds{1}_{P_i = O}}{\widetilde{\varphi}} \bigg(\frac{\widetilde{\varphi}(S_i, X_i)}{1-\widetilde{\varphi}(S_i, X_i)} - \frac{\varphi(S_i, X_i)}{1-\varphi(S_i, X_i)} \bigg) \frac{1}{\widetilde{\rho}(X_i)}  \\
& \quad \quad \times \bigg(\int_0^1 F_Y^{-1}(u|S_i, X_i, O)(1-C_o(1-\widetilde{\rho}(S_i, X_i)|u))du - \int_0^1 F_Y^{-1}(u|S_i, X_i, O)(1- C_o(1-\rho(S_i, X_i)|u)) du\bigg)\Bigg].
\end{align*}
}
by \cref{lemma:dual-SpectralRiskMeasure}. 

Then, under \cref{assumption:regularity-copula}, we have the following.
\begin{align*}
    \mathcal{J}_4 & \le T_{\epsilon} \| \widetilde{\varphi}(S_i, X_i) - \varphi(S_i, X_i) \|_{P, 2} \times \| \widetilde{\rho}(S_i, X_i) - \rho(S_i, X_i) \|_{P, 2} \le M n^{-1/2} \delta_n. 
\end{align*}

(5) Bounds for $\mathcal{J}_7+\mathcal{J}_{13}$,  $\mathcal{J}_{12}+\mathcal{J}_{14}$.

We can write $\mathcal{J}_7+\mathcal{J}_{13}$ as follows.
{\footnotesize
\begin{align*}
    & \mathcal{J}_7+\mathcal{J}_{13} \\
    & =
    \mathbb{E}_P \Bigg[\frac{\mathds{1}_{P_i = E}}{\widehat{\varphi}}\frac{1}{\widetilde{\rho}(S_i, X_i)} \\
    & \quad \times \bigg(- \int_0^1 F_Y^{-1}(u|S_i, X_i, O) C_o(1-\widetilde{\rho}(S_i, X_i)|u)du  + \int_0^1 F_Y^{-1}(u|S_i, X_i, O) C_o(1-\rho(S_i, X_i)|u) du \\
    & \quad \quad \quad - (\widetilde{\rho}(S_i, X_i) - \rho(S_i, X_i)) \int_0^1 F_Y^{-1}(u|S_i, X_i, O) c_o(1-\widetilde{\rho}(S_i, X_i)|u) du  \bigg) \\
    & \quad \quad \quad - (\widetilde{\rho}(S_i, X_i) - \rho(S_i, X_i)) \int_0^1 (\widetilde{F}_Y^{-1}(u|S_i, X_i, O) - F_Y^{-1}(u|S_i, X_i, O)) c_o(1-\widetilde{\rho}(S_i, X_i)|u) du \bigg)
    \Bigg].
\end{align*}
}
Under \cref{assumption:regularity-copula}, we can see that
\begin{align*}
    |\mathcal{J}_7+\mathcal{J}_{13}|
    & \le T_{\epsilon} \| \widehat{\rho}(S_i, X_i) - \rho(S_i, X_i) \|_{P, 2}^2 \\
    & \quad + T_{\epsilon} \| \widehat{\rho}(S_i, X_i) - \rho(S_i, X_i) \|_{P, 2} \left(\mathbb{E}\left[\int_0^1 (\widetilde{F}_Y^{-1}(u|S_i, X_i, O) - F_Y^{-1}(u|S_i, X_i, O))^2 du\right]\right)^{1/2} \\
    & \le T_{\epsilon} n^{-1/2} \delta_n 
\end{align*}
Similarly, we can show that
\begin{align*}
    |\mathcal{J}_{12}+\mathcal{J}_{14}|
    & \le M \| \widehat{\rho}(S_i, X_i) - \rho(S_i, X_i) \|_{P, 2}^2 \\
    & \quad + M \| \widehat{\rho}(S_i, X_i) - \rho(S_i, X_i) \|_{P, 2} \left(\mathbb{E}\left[\int_0^1 (\widetilde{F}_Y^{-1}(u|S_i, X_i, O) - F_Y^{-1}(u|S_i, X_i, O))^2 du\right]\right)^{1/2} \\
    & \le T_{\epsilon} n^{-1/2} \delta_n 
\end{align*}

All computations in (1)-(5) in Part 3-2 implies that
\begin{align*}
    \| \mathbb{E}_P[m_{C_o}(Z_i, \tau_{C_o}, \widetilde{\eta}) - m_{C_o}(Z_i, \tau_{C_o}, \eta)] \| \le \left\| \sum_{j=1}^{14} \mathcal{J}_{j} \right\| \le T_{\epsilon} n^{-1/2} \delta_n  
\end{align*}
for some positive constant $T_{\epsilon}$ which only depends on $T$ and $\epsilon$.

\textbf{Step 4: Asymptotic Variance}

With the similar calculation as in Part 4 in the proof of \Cref{lemma:DML-verification-wc}, we can show that
\begin{align*}
    \mathbb{E}[m_{C_o}^2(Z_i, \tau, \eta)] \ge C_{\epsilon} \mathbb{E}[(\bar{\mu}_{C_o, 1}(1, X_i) - \bar{\mu}_{C_o, 0}(0, X_i) - \tau_{C_o})^2] + C_{\epsilon}
     \mathbb{E}[V(S_i, X_i, O)] > 0,
\end{align*}
where $C_{\epsilon}$ depends on $\epsilon$ and
\begin{align*}
    V(S_i, X_i, O) = \mathrm{Var}\left(h_{C_o, 1}(Y_i, F_Y^{-1}, 1-\rho(S_i, X_i)) - h_{C_o, 0}(Y_i, F_Y^{-1}, 1-\rho(S_i, X_i)) \right).
\end{align*}

\end{proof}


\end{document}